\renewcommand{\headrulewidth}{0pt} 
\renewcommand{\footrulewidth}{0pt}}
\newtheorem{definition}{Definition}[section]
\newtheorem{theorem}[definition]{Theorem}
\newtheorem{proposition}[definition]{Proposition}
\newtheorem{example}[definition]{Example}
\newtheorem{remark}[definition]{Remark}
\newcommand{\N}{\mathbb{N}}
\newcommand{\Ks}{\mathbb{K}_s}
\newcommand{\Kt}{\mathbb{K}_t}
\newcommand{\Ktt}{\mathbb{K}_{tt}}
\newcommand{\Ktmp}{\mathbb{K}_{tmp}}
\newcommand{\agrave}{``}
\DeclareMathOperator{\alw}{alw}
\DeclareMathOperator{\nev}{nev}
\DeclareMathOperator{\ev}{ev}
\begin{document}

\begin{titlepage}
\setlength{\textwidth}{145mm} \setlength{\oddsidemargin}{12mm} \setlength{\marginparwidth}{15mm}
\addtolength{\headheight}{0.6cm}
  \begin{center}
  \rule{\linewidth}{2pt}\\[0.6cm]
    {\Huge Attribute Exploration\\[0.6cm]
    of Gene Regulatory Processes}\\[1.2cm]
    {\Large Dissertation}\\[0.6cm]
    zur Erlangung des akademischen Grades\\
    doctor rerum naturalium (Dr. rer. nat.)\\[1.2cm]
    {\Large vorgelegt dem Rat der Fakult\"at f\"ur Mathematik und Informatik}\\[0.6cm]
    {\Large der Friedrich-Schiller-Universit\"at Jena}\\ [1.2cm]
    \rule{\linewidth}{2pt}\\[3.6cm]
  \end{center}
  \begin{tabular}{l l l}

  Eingereicht von                 &: & Dipl. Theol. / Dipl. Math. Johannes Wollbold\\[0.3cm]
  Geboren am                     &: & 28.10.1958 in Saarbr\"ucken\\[0.3 cm]
  \end{tabular}
\end{titlepage}

\begin{titlepage}
$\qquad$ \\[18cm]
\textbf{Gutachter}                      
\begin{enumerate}
\item PD Dr. Peter Dittrich (Universit\"at Jena)
\item Prof. Dr. Bernhard Ganter (Technische Universit\"at Dresden)
\item PD Dr. Reinhard Guthke (Universit\"at Jena)
\end{enumerate}
\vspace{0.3cm}
\textbf{Tag der \"offentlichen Verteidigung:} 15. Juli 2011
\end{titlepage}

\setcounter{page}{1}
\thispagestyle{empty}
\tableofcontents

\renewcommand{\glossaryname}{\numberline{}Symbols and abbreviations}
\fancyhf{}
\fancyhead[LE,RO]{\fancyplain{}{\thepage}}
\printglossary
\fancyhf{}
\fancyhead[LE,RO]{\fancyplain{}{\thepage}}
\newpage

\fancyhf{}
\fancyhead[LE,RO]{\fancyplain{}{\thepage}}
\section*{\centerline{Abstract}}

The present thesis aims at the logical analysis of discrete processes, in particular of such generated by gene regulatory networks. States, transitions and operators from temporal logics are expressed in the language of Formal Concept Analysis (FCA). This mathematical discipline is a branch of the theory of ordered sets. It has practical applications in various fields including data and text mining, knowledge management, semantic web, software engineering, economics or biology. By the attribute exploration algorithm, an expert or a computer program is enabled to validate a minimal and complete set of implications, e.g. by comparison of predictions derived from literature with observed data. Within gene regulatory networks, the rules of this knowledge base represent temporal dependencies, e.g. coexpression of genes, reachability of states, invariants or possible causal relationships.

This new approach is embedded into the theory of universal coalgebras, particularly automata, Kripke structures and Labelled Transition Systems. A comparison with the temporal expressivity of Description Logics (DL) is made, since there are applications of attribute exploration to the construction of DL knowledge bases. The main theoretical results concern the integration of background knowledge into the successive exploration of the defined data structures (formal contexts).

In the practical part of this work, a Boolean network from literature modelling the initiation of sporulation in \textit{Bacillus subtilis} is examined. Coregulation and mutual exclusion of genes were checked systematically, also dependent from specific initial states. Conditions for sporulation were clarified by queries to the knowledge base generated by attribute exploration.

Finally, by interdisciplinary collaboration, we extracted literature information to develop an interaction network containing 18 genes important for extracellular matrix formation and destruction in the context of rheumatoid arthritis. Subsequently, we constructed an asynchronous Boolean network with biologically plausible time intervals for mRNA and protein production, secretion and inactivation. Experimental gene expression data was obtained from synovial fibroblast cells stimulated by transforming growth factor beta I (TGF$\beta$1) or by tumor necrosis factor alpha  (TNF$\alpha$) and discretised thereafter. The Boolean functions of the initial network were improved iteratively by the comparison of the simulation runs to the observed time series and by exploitation of expert knowledge. This resulted in adapted networks for both cytokine stimulation conditions. 

The simulations were further analysed by the attribute exploration algorithm of FCA, integrating the observed time series in a fine-tuned and automated manner. The resulting temporal rules yielded new contributions to controversially discussed aspects of fibroblast biology (e.g. considerable expression of TNF and MMP9 by fibroblasts following TNF$\alpha$ stimulation) and corroborated previously known facts (e.g. co-expression of collagens and MMPs after TNF$\alpha$ stimulation), but also generated new hypotheses regarding literature knowledge (e.g. MMP1 expression in the absence of FOS). 

\addcontentsline{toc}{chapter}{\numberline{}Abstract (eng./dt.)}
\newpage
\fancyhf{}
\fancyhead[LE,RO]{\fancyplain{}{\thepage}}
\section*{\centerline{Zusammenfassung}}

Ziel dieser Doktorarbeit ist die logische Analyse diskreter Prozesse, insbesondere von genregulatorischen Netzwerken. Zust\"ande, Transitionen und Operatoren der temporalen Logik werden in der Sprache der Formalen Begriffsanalyse (FCA) ausgedr\"uckt. Diese mathematische Disziplin ist ein Teilgebiet der Ordnungstheorie. Sie findet in vielf\"altigen Bereichen praktische Anwendung, wie Data und Text mining, Wissensmanagement, Semantic Web, Softwareentwicklung, Wirtschaft oder Biologie. Mittels des Merkmalexplorations-Algorithmus kann ein Experte oder ein Computerprogramm eine minimale und vollst\"andige Menge von Implikationen validieren, beispielsweise durch Vergleich von aus der Literatur abgeleiteten Vorhersagen mit Beobachtungsdaten. Im Rahmen genregulatorischer Netzwerke dr\"ucken die Regeln dieser Wissensbasis zeitliche Abh\"angigkeiten aus, etwa Koexpression von Genen, Erreichbarkeit von Zust\"anden, Invarianten oder m\"ogliche kausale Zusammenh\"ange.

Dieser neue Ansatz wird in die Theorie der Universellen Coalgebren eingebettet, die insbesondere Automatentheorie, Kripkestrukturen und Labelled Transition Systems einschlie{\ss}t. Au{\ss}erdem wird ein Vergleich zu temporalen Aspekten von Beschreibungslogiken gezogen; Anwendungen der Merkmalexploration auf die Konstruktion beschreibungslogischer Wissensbasen stellen ein neues Forschungsgebiet dar. Die wichtigsten theoretischen Resultate der vorliegenden Arbeit betreffen die Integration von Hintergrundwissen in die sukzessive Exploration der definierten Datenstrukturen (formalen Kontexte).

Im praktischen Teil der Arbeit wird zun\"achst ein Boolesches Netzwerk aus der Literatur untersucht, das die Einleitung der Sporenbildung in \textit{Bacillus subtilis} modelliert. Koregulation und gegenseitiger Ausschluss von Genen werden systematisch untersucht, auch in Abh\"angigkeit von spezifischen Ausgangszust\"anden. Bedingungen f\"ur die Einleitung der Sporenbildung werden durch Abfragen der Wissensbasis gekl\"art, die mittels Merkmalexploration erzeugt wurde.

Schlie{\ss}lich wurde in interdisziplin\"arer Zusammenarbeit und nach umfassender Literaturrecherche ein Netzwerk entwickelt, das 18 Gene enth\"alt, die f\"ur die Bildung und den Abbau extrazellul\"arer Matrix im Kontext rheumatoider Arthritis Bedeutung haben. Daraus wurde ein asynchrones Boolesches Netzwerk konstruiert mit biologisch plausiblen Zeitintervallen f\"ur mRNA- und Proteinsynthese, Sekretion und Inaktivierung. Experimentelle Genexpressionsdaten stammten von synovialen Fibroblasten, die mit Transforming growth factor beta I (TGF$\beta$1) beziehungsweise Tumor necrosis factor alpha (TNF$\alpha$) stimuliert wurden. Die Booleschen Funktionen des anf\"anglichen Netzwerks wurden in mehreren Durchl\"aufen optimiert, indem Simulationen mit den beobachteten Zeitverl\"aufen verglichen wurden. Dabei wurde zus\"atzliches Expertenwissen auch zur Signaltransduktion eingebracht. Daraus resultierten zwei Netzwerke, die jeweils an eine der beiden Bedingungen angepasst waren (Stimulation durch die beiden Zytokine).

Die endg\"ultigen Simulationen wurden mittels Merkmalexploration untersucht, wobei die gemessenen Zeitreihen weiter und automatisch integriert wurden. Die erhaltenen Regeln bringen neue Aspekte in kontrovers diskutierte Fragen der Biologie von Fibroblasten ein (z.B. betr\"achtliche Expression von TNF und MMP9 nach Stimulation durch TNF$\alpha$). Sie best\"atigen bekannte Tatsachen wie die Koexpression von Kollagenen und Matrix-Metalloproteasen (MMPs) nach TNF$\alpha$-Stimulation, erzeugten aber auch bez\"uglich des Literaturwissens neue Hypothesen (z.B. Expression von MMP1 in Abwesenheit des Transkriptionsfaktors FOS). 


\newpage
\fancypagestyle{plain}{
\fancyhf{}
\renewcommand{\headrulewidth}{0pt} 
\renewcommand{\footrulewidth}{0pt}}
\cleardoublepage

\fancyhead[RE]{\fancyplain{}{\leftmark}}
\fancyhead[LO]{\fancyplain{}{\rightmark}}
\fancyhead[LE,RO]{\fancyplain{}{\thepage}}
\fancypagestyle{plain}{
\fancyhf{}
\renewcommand{\headrulewidth}{0pt} 
\renewcommand{\footrulewidth}{0pt}}

\chapter{Introduction}
\thispagestyle{empty}
During the early 1980s, the mathematical methodology of \textit{Formal Concept Analysis (FCA)} 
emerged within the community of set and order theorists, algebraists and discrete mathematicians.
The aim was to find a new, concrete and meaningful approach to the understanding of complete
lattices (ordered sets such that for every subset the supremum and infimum exist). The
following discovery proved fruitful: Every complete lattice\index{Lattice!complete} is representable as a hierarchy of concepts, which were conceived as sets of objects sharing a maximal set of attributes. This paved the way for using the field of lattice theory for a transparent and complete representation of very different types of knowledge. 

Originally FCA was inspired by the educationalist Hartmut von
Hentig \cite{Hen72} and his program of restructuring sciences aiming at interdisciplinary
collaboration and democratic control. The philosophical background traces back to Charles S. Peirce
(1839 - 1914), who condensed some of his main ideas to the pragmatic maxim:
\begin{quote}
\emph{Consider what effects, that might conceivably have practical bearings, we
conceive the objects of our conception to have. Then, our conception of these effects is the whole
of our conception of the object.} \cite[5.402]{Pei35}
\end{quote}
In that tradition, FCA aims at unfolding the observable, elementary properties defining the objects
subsumed by scientific concepts. If applied to temporal transitions, effects of specific combinations of state attributes can be modelled and predicted in a clear and concise manner. Thus, FCA seems to be appropriate to describe causality -- and the limits of its understanding.

At present, FCA is a well developed mathematical theory and there are practical applications in
various fields such as data and text mining, knowledge management, semantic web, software engineering or economics \cite{GSW05}. The main application of this thesis is related to molecular and systems biology. Due to the rapid accumulation of data about molecular inter-relationships, there is an increasing demand for approaches to analyse the resulting regulatory network models (for a short introduction see Section \ref{sec:geneRegProc}, an example is represented in Figure \ref{fig:ECMNetwork}). Therefore, we developed a formal representation of processes, especially biological processes. The purpose was to construct knowledge bases of rules expressing temporal dependencies within gene regulatory (or signal transduction and metabolic) networks.

As algorithm, \textit{attribute exploration}\index{Attribute exploration} was employed: For a given set of
interesting properties, it builds a sound, complete and non-redundant set of implications (logically strict rules). During this process, each implication can be approved or rejected by an expert or a computer program, e.g. by comparison of knowledge-based predictions with data. Attribute exploration provides a mathematically strict framework for the validation of rules, and the resulting implicational base presents the related domain knowledge to the expert in a compact manner. This \textit{stem base}\index{Stem base} is open to intuitive human discoveries, activating resonance effects with the whole knowledge of a scientist. Its completeness related to the explored context ensures that the validity of an arbitrary implication of interest can be decided by logical derivations from the stem base (automatic reasoning). 

Corresponding to the discrete, logical and interactive focus of FCA, we selected classical \textit{Boolean networks}\index{Boolean network} \cite{Kau69} for modelling, which are easy to interpret. They consist of sets of Boolean functions, i.e. the value of one variable (e.g. gene expressed or not) after one time step depends on the present values of a subset of the variables. It is also possible to use mathematical and logical derivations in order to decide many implications automatically. Furthermore, sets of Boolean rules are applied as knowledge bases in decision support or expert systems.

FCA was used for the analysis of gene expression data in \cite{Mot08} and \cite{Kay11}. The present study is the first approach of applying it to the dynamics of (gene) regulatory network models. With this application domain in mind, we developed a formal structure as general as possible, since discrete temporal transitions occur in a variety of domains: control of engineering processes, development of the values of variables in a computer program, change of interactions in social networks, a piece of music, etc. Within the domain of discrete and symbolic process modelling, the present work aims at providing a framework that may be useful to validate and further analyse models formulated in very general classes of languages, the most important being the $\mu$-calculus (comprising as a subset \textit{Linear Temporal Logic (LTL)} and \textit{Computation Tree Logic (CTL)}) on the syntactic,\index{Syntax} and several types of \textit{universal coalgebras} on the semantic level.\index{Semantics} Simulations and analyses by \textit{Petri nets}\index{Petri net} may be integrated as well; for an example see Chapter \ref{ch:bSubtilis} and page \pageref{Petri net}.

The thesis is structured as follows: In Chapter \ref{ch:FCA}, the basic data structures of FCA, the attribute exploration algorithm and \textit{Temporal Concept Analysis (TCA)} (applied to the analysis of gene expression data in \cite{Wol11}) are introduced. In Chapter \ref{ch:logic}, \textit{automata}, \textit{Kripke structures} and \textit{Labelled Transition Systems (LTS)} with attributes are presented as universal coalgebras, furthermore \textit{Propositional Tense Logic}, CTL and \textit{Description Logics (DL)}. The DL notion of a role may be interpreted temporally. In addition, an explicit temporal extension is presented \cite{Art07}.

The starting point of the thesis consisted in further developing an FCA language for discrete dynamic systems sketched in \cite{Rud01}. Results of this modelling will be presented in Chapter \ref{ch:FCAModelling}, and the application of attribute exploration to the defined data structures (formal contexts) in Chapter \ref{ch:useAttrEx}. They express knowledge concerning states, transitions and attributes from temporal logics. Chapter \ref{ch:bg} presents a method to derive inference rules integrating already acquired knowledge into the successive exploration of the four formal contexts. This method uses attribute exploration on a higher level. Section \ref{sec:rulesBooleAttr} is a revised part of my paper \cite{Wol08}. It investigates inference rules for Boolean attributes. 

The second major part of the present work develops a systems biology method to analyse the dynamics of gene regulatory networks. The rules of the knowledge bases generated by attribute exploration represent temporal relationships within gene regulatory networks, e.g. coexpression of genes. Reachability of states is mainly expressed by rules with the temporal operators \texttt{eventually} and \texttt{never}, invariants by \texttt{always}. We focus on the corresponding semantic level, i.e. implications pertaining to transitions (compare Remark \ref{rem:semanticsKtt} or Proposition \ref{prop:KtKs}). Rules pointing at possible causal relations have a structure like:
\begin{quote}\emph{If gene 1 is expressed and gene 2 is not expressed at some state, then at the next state /
at all following states / eventually / always gene 3 will be expressed.}
\end{quote}
In Chapters \ref{ch:bSubtilis} and \ref{ch:geneRegNets}, the background, methods and results of two published papers of own work are reported:
\begin{enumerate}
 \item Sporulation of \textit{Bacillus subtilis}: A simulation from literature was further analysed by concept lattices and attribute exploration \cite{Wol08}.
 \item  In \cite{Wol09}, we developed by interdisciplinary collaboration a Boolean network model for \textit{extracellular matrix (ECM)} formation and degradation within the context of rheumatic diseases. It is based on interactions reported in literature and was adapted to gene expression time series for fibroblast cells following transforming growth factor beta I (TGF$\beta$1) and tumor necrosis factor alpha (TNF$\alpha$) stimulation, respectively. The resulting simulations were further analysed by attribute exploration, integrating the observed time series in a fine-tuned and automated manner.
\end{enumerate}

In Chapter \ref{ch:outlook}, the results of the biological applications are discussed. Possibilities of further research aiming at an even better usability of this approach are sketched. Among other mathematical and logical questions I will outline how a state, transition or temporal context may be expressed by DL. This is the basis so that results applying to the computation and extension of DL knowledge bases by attribute exploration can be used  as well as fast DL reasoners  \cite{Baa09b} \cite{Baa09a}. I will also point at reasons for concentrating on a classical FCA framework and on parts of temporal logic being particularly meaningful to human experts in real world applications.

\section{Acknowledgements}
First, I am grateful to PD Dr. Reinhard Guthke for posing the biomedical question and for many critical hints aiming at biological meaning, realistic applicability and suggestion of hypotheses by modelling. He offered to me great freedom for developing a method to integrate knowledge and data. Since Prof. Dr. Bernhard Ganter was equally broad-minded regarding the application of FCA to biology, a large search space was  opened, ranging from pure mathematics (lattice and FCA theory, categories) over FCA applications, different logics, gene expression data analysis, systems biology up to specialised questions of molecular biology and genetics. I am very grateful to Bernhard Ganter for creative discussions which generated the main ideas of this thesis to structure the large area, to solve mathematical and logical questions and to show the applicability on two biologically relevant questions. PD Dr. Peter Dittrich accepted to resume the final supervision and made many useful remarks aiming in particular at a better comprehensibility of the FCA framework.

The collaboration with Ulrike Gausmann, Ren\'{e} Huber, Raimund Kinne and Reinhard Guthke resulting in \cite{Wol09} was very inspiring to me. At all steps of the work a feedback was given between biological knowledge, data and formal abstractions. Beyond programming and developing the methods for discretisation, simulation and attribute exploration, I participated in structuring the comprehensive literature search and constructed the Boolean network accordingly. After intense discussions, we adapted the Boolean functions to the data. I analysed the final temporal rules together with Ren\'{e} Huber.

I thank Christian Hummert (HKI Jena), Michael Hecker (HKI Jena and STZ for Proteome Analysis Rostock), Felix Steinbeck (STZ for Proteome Analysis Rostock), Mike Behrisch and Daniel Borchmann (Institute of Algebra of Dresden University) as well as other colleagues for fruitful discussions, for some programming (see Section \ref{sec:scripts}) and for reading parts of the manuscript.

I got encouragement from my supervisors as well as from many other people, to whom I expressed my gratitude personally.

\chapter[Formal Concept Analysis]{Formal Concept Analysis and the attribute exploration algorithm}\label{ch:FCA}
\index{Formal Concept Analysis (FCA)}
\glossary{name={FCA}, description={Formal concept analysis}}
\thispagestyle{empty}
This chapter provides formal definitions, a theorem as well as more intuitive introductions to known FCA notions used from Chapter \ref{ch:FCAModelling}. There, I will give examples of new formal structures and applications, which mostly should be sufficient to follow the argumentation of this thesis. Instead of reading this chapter in advance, it might therefore serve as a reference in order to clarify notions as needed. For more detailed questions, I refer to the textbook \cite{GW99}. 

\section{Formal contexts and concept lattices}
\begin{table}
\begin{center}
\begin{tabular}{|l||c|c|c|}\hline
&MMP1 &TIMP1 &MMP9\\ \hline \hline
(190,0) &&&\\ \hline
(190,1) &&&\\ \hline
(190,2) &$\times$&&\\ \hline
(190,4) &$\times$&&\\ \hline
(190,12) &$\times$&&\\ \hline
(202,0) &&&\\ \hline
(202,1) &&&\\ \hline
(202,2) &$\times$&&\\ \hline
(202,4) &$\times$&&\\ \hline
(202,12) &$\times$&&$\times$\\ \hline
(205,0) &&&\\ \hline
(205,1) &&&\\ \hline
(205,2) &$\times$&&$\times$\\ \hline
(205,4) &$\times$&&$\times$\\ \hline
(205,12) &&&\\ \hline
(220,0) &&$\times$&$\times$\\ \hline
(220,1) &&&$\times$\\ \hline
(220,2) &&&$\times$\\ \hline
(220,4) &$\times$&$\times$&$\times$\\ \hline
(220,12) &$\times$&$\times$&$\times$\\ \hline
(221,0) &&$\times$&\\ \hline
(221,1) &&$\times$&\\ \hline
(221,2) &$\times$&$\times$&\\ \hline
(221,4) &$\times$&$\times$&\\ \hline
(221,12) &$\times$&$\times$&$\times$\\ \hline
(87,0) &&$\times$&\\ \hline
(87,1) &$\times$&&\\ \hline
(87,2) &$\times$&$\times$&\\ \hline
(87,4) &$\times$&$\times$&\\ \hline
(87,12) &$\times$&$\times$&$\times$\\ \hline
\end{tabular}
\end{center}
\label{tab:3GenesCxt}
\caption{(One-valued) formal context representing a part of a gene expression data set for cells from osteoarthritis patients \textit{190, 202, 205} and rheumatoid arthritis patients \textit{220, 221, 87}. The objects (rows) designate measurements for a cell culture at \textit{t=0, 1, 2, 4} and \textit{12} hours after stimulation with TNF$\alpha$. A cross in the column for the attribute \textit{MMP1, TIMP1} or \textit{MMP9} means: The gene collagenase, tissue inhibitor of metalloproteaseses 1 or matrix metalloprotease 9 is expressed above a threshold set by the discretisation method proposed in \cite[Section 2.3]{Wol11}. For the biomedical background see Section \ref{sec:ECMBg}.}
\end{table}

One of the classical aims of FCA is the structured, compact but complete visualisation of a data
set by a conceptual hierarchy. The subsequent basic definitions of formal contexts, scaling and formal concepts are applied in Sections
\ref{sec:runEx}, \ref{sec:stateCxt} and \ref{simNoStress}. 

A data table with binary attributes is called a (one-valued) formal context (Table \ref{tab:3GenesCxt}):
\begin{definition}\label{def:context}\index{Formal context}\index{Object}\index{Attribute} \cite[Definitions 18 and 19]{GW99}
A \textsc{Formal Context} $(G,M,I)$ defines a relation $I \subseteq G \times M$ between objects
from a set $G$ and attributes from a set $M$. The set of the attributes common to all objects in $A
\subseteq G$ is denoted by the $'\,$-operator:\glossary{name={$'$}, description={Derivation operator for object or attribute sets}, sort=$.Der$}
\[A' := \{m \in M \mid (g,m) \in I \text{ for all } g \in A\}.\]
The set of the objects sharing all attributes in $B \subseteq M$ is
\[B' := \{g \in G \mid (g,m) \in I \text{ for all } m \in B\}.\]
\end{definition}
If the derivation operators $'$ are ambiguous, they will be denoted by the relation of the respective formal context, e.g. $B^I$.\glossary{name={$.^I$}, description={Derivation operator for a formal context $(G,M,I)$}, sort=$Id$} Also $gIm$\glossary{name={$gIm$}, description={$(g,m) \in I$ for a formal context $(G,M,I)$},sort=$Gim$} will be used instead of $(g,m) \in I$. 

A formal context is called \textit{clarified},\index{Formal context!clarified} if there are no objects with the same attribute set (object intent, see Definition \ref{def:concept}) and no attributes with the same extent, i.e. the context does not contain rows / columns identical except for object / attribute names. A formal context is \textit{row reduced},\index{Formal context!reduced} if all objects are deleted of which the intent is an intersection of other object intents. The definition of a \textit{column reduced} context is analogous. With the exception of the test context in Section \ref{sec:ruleEx}, the formal contexts will not be reduced, since then the information regarding a part of the objects (for instance gene expression measurements) or attributes (e.g. genes) is lost.

\begin{definition}\label{def:MVcontext}
\cite[Definition 27]{GW99}
A \textsc{Many-Valued Context} $(G,M,W,J)$ consists of sets $G$, $M$ and $W$ and a ternary relation $J \subseteq G \times M \times W$ for which it holds that 
\[(g,m,w) \in J \wedge (g,m,v) \in J \Rightarrow w=v.\]
\index{Formal context!many-valued}
\end{definition}
\glossary{name={$J$}, description={Relation of a many-valued context}}
$(g,m,w) \in J$ means ``for the object $g$, the attribute $m$ has the value $w$''. Thus, the many-valued attributes are identifiable with partial maps $m\colon G \rightarrow W,$ where $m(g) = w \Leftrightarrow (g,m,w) \in I$. A many-valued context represents a specifically formalised view on an arbitrary table in a relational database. It is translated into a \textit{derived}\index{Formal context!derived} ordinary, one-valued formal context by a process called \textit{conceptual scaling}. Scaling makes the mathematical results of FCA applicable to many-valued contexts and offers manifold possibilities of data discretisation.\index{Discretisation}
\begin{definition}\cite[Definition 28]{GW99}
A \textsc{Scale} for the attribute $m$ of a many-valued context $(G,M,W,J)$ is a (one-valued) context $\mathbb{S}_m:= (G_m,M_m,I_m)$ with $m(G) \subseteq G_m$. The objects of a scale are called \textsc{Scale Values}, the attributes \textsc{Scale Attributes}.
\end{definition}

\begin{table}
\begin{center}a) 
\begin{tabular}{|l||c|c|}\hline
&$m_1$ &$m_2$\\ \hline \hline
$g_1$ &0&1\\ \hline
$g_2$ &1&1\\ \hline
\end{tabular}$\qquad$
b)
\begin{tabular}{|l||c|c|}\hline
$\mathbb{N}_2$ &$0$ &$1$\\ \hline \hline
$0$ &$\times$&\\ \hline
$1$ &&$\times$\\ \hline
\end{tabular}$\qquad$
c)
\begin{tabular}{|l||c|c|c|c|}\hline
&$m_1.0$  &$m_1.1$ &$m_2.0$ &$m_2.1$\\ \hline \hline
$g_1$ &$\times$&&&$\times$\\ \hline
$g_2$ &&$\times$&&$\times$\\ \hline
\end{tabular}
\end{center}
\caption{a) A many-valued context with object set $G:=\{g_1,g_2\}$, attribute set $M:=\{m_1,m_2\}$ and values $W:=\{0,1\}$, b) the nominal scale with two values, i.e. the dichotomic scale\index{Scaling!dichotomic} for each of the two attributes, c) the scaled context. For instance, in Table \ref{tab:3GenesCxt} the attribute MMP1 could be replaced by MMP1.0 and MMP1.1. This provides explicit information whether a gene is not expressed and has practical relevance mainly for attribute exploration (see Table \ref{tab:subtilisTransCxt} and the implications on page \pageref{impSubtilisGrowth}).}
\label{tab:DichotomicScaling}
\end{table}
By (plain) scaling, an attribute $m$ is replaced by the respective row of the scale context $\mathbb{S}_m$.
\begin{definition}\index{Scaling}\label{def:Scaling}
\cite[Definition 29]{GW99}
If $(G,M,W,J)$ is a many-valued context and $\mathbb{S}_m,\: m \in M$ are scale contexts, then the \textsc{Derived Context With Respect To Plain Scaling} is the context $(G,N,I)$ with
\[N:= \bigcup_{m \in M} \{m\} \times M_m,\]
and 
\[gI(m,n): \Leftrightarrow m(g) = w \text{ and } wI_m n.\] 
\end{definition}

The most elementary example is \textit{nominal scaling}\index{Scaling!nominal}\label{nomScaling} \cite[Definition 31]{GW99}: The scale context for an attribute $m$ of the many-valued context is a diagonal matrix, i.e. each attribute value $w \in G_m \subseteq W$ is represented by itself. Then, $m$ is replaced by derived attributes which can be mapped bijectively to the (possible) attribute values $G_m = M_m$. Nominal scaling with two scale attributes is called \textit{dichotomic scaling} (Table \ref{tab:DichotomicScaling}). In the following, mostly a variant of a dichotomic scale is applied, where the threshold discretisation - for each gene separately - presupposed in Table \ref{tab:3GenesCxt} is made explicit, e.g.:
\begin{center}
\begin{tabular}{|l||c|c|}\hline
MMP1 &$0$ &$1$\\ \hline \hline
$\leq 5710$ &$\times$&\\ \hline
$>5710$ &&$\times$\\ \hline
\end{tabular}$\qquad$
\begin{tabular}{|l||c|c|}\hline
MMP9 &$0$ &$1$\\ \hline \hline
$\leq 144$ &$\times$&\\ \hline
$>144$ &&$\times$\\ \hline
\end{tabular}
$\qquad$
\begin{tabular}{|l||c|c|}\hline
TIMP1 &$0$ &$1$\\ \hline \hline
$\leq 34890$ &$\times$&\\ \hline
$>34890$ &&$\times$\\ \hline
\end{tabular}
\end{center}
Further scaling methods will be introduced in Section \ref{sec:stateCxt}.

In this thesis, two context constructions are needed. $\overset{.}{M}_1 \cup \overset{.}{M}_2$ denotes the disjoint union of two attribute sets $M_1$ and $M_2$, i.e. $\overset{.}{M}_j := \{j\} \times M_j$ for $j \in \{1,2\}$. Analogously, $\overset{\:.}{I}_j := \{(g,(j,m))\mid (g,m) \in I_j\}$ for two relations $I_j \subseteq G \times M_j$.
\begin{definition}
\textnormal{\cite[Definition 30]{GW99}}
The \textsc{Apposition} of two formal contexts $\mathbb{K}_1:= (G,M_1,I_1)$ and $\mathbb{K}_2 := (G, M_2, I_2)$ is defined by
\index{Formal context!apposition}\glossary{name={$\mathbb{K}_1 \mid \mathbb{K}_2$}, description={Apposition of formal contexts},sort=$K1$}
\[\mathbb{K}_1 \mid \mathbb{K}_2:= (G, \overset{.}{M}_1 \cup \overset{.}{M}_2, \overset{\:.}{I}_1 \cup \overset{\:.}{I}_2).\]
\end{definition}

\begin{definition}\label{def:semiproduct}
\textnormal{\cite[Definition 33]{GW99}}
The \textsc{Semiproduct} of two formal contexts $\mathbb{K}_1:= (G_1,M_1,I_1)$ and $\mathbb{K}_2 := (G_2, M_2, I_2)$ is defined by
\index{Semiproduct of formal contexts}\glossary{name={$\begin{sideways}$\bowtie$\end{sideways}$}, description={Semiproduct of formal contexts}, sort=$.Semiproduct$}
\[\mathbb{K}_1\, \begin{sideways}$\bowtie$\end{sideways}\: \mathbb{K}_2 := (G_1 \times G_2, \overset{.}{M}_1 \cup \overset{.}{M}_2, \nabla) \]
with
\[(g_1,g_2) \nabla (j,m) :\Leftrightarrow g_jI_jm \qquad \text{for } j \in \{1,2\}.\]
\end{definition}

\begin{figure}
 \centering
 \includegraphics[height=10.5cm]{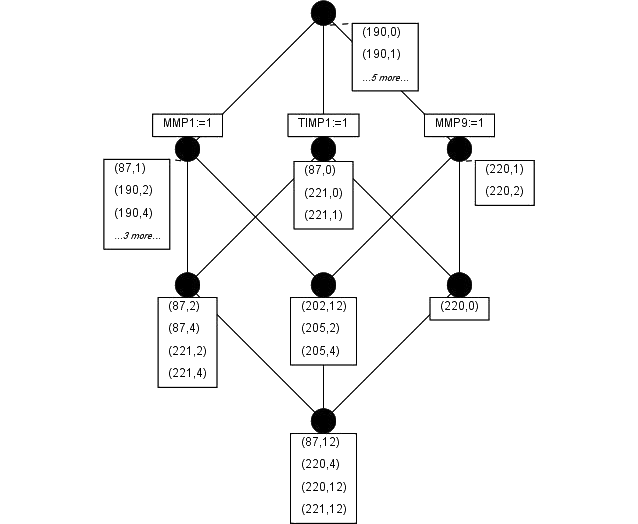}
\caption{Concept lattice (Hasse diagram)\index{Concept lattice (hierarchy)}\index{Lattice} of the formal context in Table \ref{tab:3GenesCxt}. Formal concepts are represented by black circles and the order relation by lines. The extent\index{Extent} of a formal concept is the set of all objects (measurements at specific time points) at or below the circle, following the lines. Dually, the intent\index{Intent} is the set of all attributes above. Thus, the semantics of the bottom concept is: all three genes are expressed for patients 87 at 12~h, 220 at 4 h and 12 h, and 221 at 12 h after stimulation with TNF$\alpha$.}
\label{fig:TNFMMPLattice}
\end{figure}
The central notion of FCA is a \textit{formal concept}, a maximal set of objects $A \subseteq G$ together with all attributes $B \subseteq M$  shared by them, or a maximal rectangle in a formal context, after row and column permutation.
\begin{definition}\label{def:concept}\index{Formal concept}
\cite[Definition 20]{GW99}
A \textsc{Formal Concept} of the context $(G,M,I)$ is a pair $(A,B)$ with $A \subseteq G, \:
B\subseteq M,\: A'=B$ and $B'=A$. $A$ is the \textsc{Extent}, $B$ the \textsc{Intent} of the
concept $(A,B)$.
\end{definition}
For a context $(G,M,I)$,
\begin{align*}
\mathcal{C}_M\colon \mathfrak{P}(M) &\rightarrow \mathfrak{P}(M)\\
B \subseteq M &\mapsto B''\text{, and}\\
\mathcal{C}_G\colon \mathfrak{P}(G) &\rightarrow \mathfrak{P}(G)\\
A \subseteq G &\mapsto A'' 
\end{align*}
are \textit{closure operators}\index{Closure operator}, i.e. operators with the properties monotony, extension and idempotency \cite[Definition
14]{GW99}. From \cite[Theorem 1]{GW99} it follows that the set of all extents and intents, respectively, of a formal context is a \textit{closure system},\index{Closure system, closed set} i.e. it is closed under arbitrary intersections. Hence, an intent is sometimes called a \textit{closed set} \cite{Pfa02}, \cite{Hol07}.

Formal concepts\index{Concept lattice (hierarchy)} can be ordered by set inclusion of the extents or -- dually and with the inverse order relation -- of the intents. With this order, the set of all concepts of a given formal context
is a complete lattice,\index{Lattice!complete} i.e. a partially ordered set, where supremum (\textit{join}) and infimum (\textit{meet}) exist for any subset. It is visualised by a \textit{Hasse diagram} (Figure \ref{fig:TNFMMPLattice}).\index{Hasse diagram}

\section{Attribute exploration}
\index{Implication}Compared to a concept lattice, logical implications offer an even more compact possibility of representing a row reduced formal context without loss of information. An implication $A \rightarrow B$ between attribute sets \textit{holds}\index{Implication!holds} in a formal context $(G,M,I)$, if an object $g \in G$ having all attributes $a \in A$ (premise) has also the attributes $b \in B$ (conclusion). This is expressed by \agrave every object intent \textit{respects}\index{Implication!respected} $A\rightarrow B$" or $\forall g \in G\colon g' \models A \rightarrow B$.\glossary{name={$\models$}, description={Modelling relation}, sort=$.Models$} The attribute exploration algorithm generates a special set of attribute implications, the \textit{stem base}.\index{Stem base} Their premises are pseudo-intents:
\index{Pseudo-intent}
\glossary{name={$P$}, description={Pseudo-intent}}
\begin{definition}
\cite[Definition 40]{GW99}
Let $M$ be a finite set. $P\subseteq M$ is called a \textsc{Pseudo-Intent} of $(G,M,I)$ if and only if $P\neq P''$ and
$Q''\subseteq P$ holds for every pseudo-intent $Q\subseteq P$, $Q\neq P$.
\end{definition}
This recursive definition starts with the empty set. Only the first condition has to be checked:
$\emptyset' = G$ (no distinctive attribute is required for an object), thus the closure
$\emptyset''$ contains the attributes common to all $g \in G$. Often they have not been made
explicit in the context under consideration, and the empty set is closed. In this case also for the sets with one element only the closure must be tested, and so on. The first non-closed set in a linear order of set inclusion is a pseudo-intent $Q$. Then the second condition $Q''\subset P$ has to be tested for every superset $P$. For an example of pseudo-intents (respectively pseudo-closed sets) within the context of attribute exploration see p. \pageref{ex:pseudoclosed}.

The following theorem gives the theoretical foundation of attribute exploration. 
\begin{theorem}[Duquenne-Guigues]\label{theorem:dg}\index{Theorem!of Duquenne-Guigues}
Given a formal context $(G,M,I)$, the set of implications \[\mathcal{L}:= \{P\rightarrow P''\,|\, P \text{ pseudo-intent}\}\] is sound, complete and non-redundant.
\end{theorem}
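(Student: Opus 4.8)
The plan is to verify the three asserted properties separately, dispatching soundness and non-redundancy by direct computation with the Galois connection between the two derivation operators, and concentrating the real effort on completeness, where a single lemma about closed sets does all the work.

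First I would settle \emph{soundness}, i.e. that each $P \rightarrow P''$ holds in $(G,M,I)$; this uses nothing about $P$ being a pseudo-intent, only that $B \mapsto B''$ is a closure operator. Fix any $B \subseteq M$ and any object $g$. If $B \subseteq g'$, then by the Galois connection $g \in B'$, that is $\{g\} \subseteq B'$, and applying the antitone operator $'$ gives $B'' = (B')' \subseteq \{g\}' = g'$. Hence $B \subseteq g'$ forces $B'' \subseteq g'$, so every object intent respects $B \rightarrow B''$ and the implication holds; in particular each $P \rightarrow P''$ with $P$ a pseudo-intent holds.

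The heart of the theorem is \emph{completeness}, and the hard part is the following characterisation: a set $S \subseteq M$ is closed under all implications of $\mathcal{L}$ — call $S$ \emph{$\mathcal{L}$-closed}, meaning $Q'' \subseteq S$ for every pseudo-intent $Q \subseteq S$ — if and only if $S$ is an intent, $S = S''$. The implication ``$S = S''$ $\Rightarrow$ $\mathcal{L}$-closed'' is routine: a pseudo-intent $Q \subseteq S$ gives $S' \subseteq Q'$ and hence $Q'' \subseteq S'' = S$. The delicate converse I would argue by contradiction, invoking the recursive definition of pseudo-intent directly. Suppose $S$ is $\mathcal{L}$-closed but $S \neq S''$. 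By the definition of $\mathcal{L}$-closedness, $Q'' \subseteq S$ holds for every pseudo-intent $Q \subseteq S$, in particular for every such $Q$ properly contained in $S$; together with $S \neq S''$ this is precisely the defining condition for $S$ itself to be a pseudo-intent. But then $S \rightarrow S''$ belongs to $\mathcal{L}$, so $\mathcal{L}$-closedness of $S$ yields $S'' \subseteq S$, contradicting $S \neq S''$. I expect this step — recognising that an $\mathcal{L}$-closed non-intent would be forced to qualify as a pseudo-intent in its own right — to be the main obstacle, as it is the one place where the self-referential structure of the pseudo-intent definition is used essentially.

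With the lemma available, completeness is immediate: any implication $X \rightarrow Y$ holding in $(G,M,I)$ satisfies $X' \subseteq Y'$, hence $Y \subseteq X''$; and the $\mathcal{L}$-closure $X^{\mathcal{L}}$ (the least $\mathcal{L}$-closed set containing $X$) is an intent by the lemma, so $X^{\mathcal{L}} = (X^{\mathcal{L}})'' \supseteq X''$, giving $Y \subseteq X'' \subseteq X^{\mathcal{L}}$, which by the standard criterion (an implication follows from a set of implications exactly when its conclusion lies in the closure of its premise) means $X \rightarrow Y$ follows from $\mathcal{L}$. Finally, for \emph{non-redundancy} I would fix a pseudo-intent $P$ and compute the closure of $P$ under $\mathcal{L}' := \mathcal{L} \setminus \{P \rightarrow P''\}$: any implication of $\mathcal{L}'$ that can act on $P$ has a premise $Q$ that is a pseudo-intent with $Q \subseteq P$ and $Q \neq P$, whence $Q'' \subseteq P$ by the pseudo-intent condition, so no new attribute is ever added and $P$ is already $\mathcal{L}'$-closed. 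Thus $P^{\mathcal{L}'} = P$, and since $P \neq P''$ we get $P'' \not\subseteq P^{\mathcal{L}'}$, so $P \rightarrow P''$ does not follow from the remaining implications, completing the proof.
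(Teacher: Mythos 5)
Your proof is correct, and it does more than the paper's own proof: the paper only establishes soundness inline (in essentially the same way as your Galois-connection computation, namely that $B'' \subseteq g'$ whenever $B \subseteq g'$) and then defers completeness and non-redundancy entirely to \cite[Theorem 8]{GW99}. What you have written out is, in effect, a self-contained version of that cited textbook argument. Your key lemma --- that the sets closed under $\mathcal{L}$ are exactly the intents, proved by observing that an $\mathcal{L}$-closed set $S$ with $S \neq S''$ would itself satisfy the recursive defining condition of a pseudo-intent, so that $S \rightarrow S'' \in \mathcal{L}$ forces $S'' \subseteq S$, a contradiction --- is precisely the crux of the classical Duquenne--Guigues/Ganter--Wille proof, and you correctly identify it as the one place where the self-referential structure of the definition is used. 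Your non-redundancy argument is also the standard one and is sound: the only implication of $\mathcal{L}$ with premise $P$ is $P \rightarrow P''$ itself, every other implication applicable to $P$ has a pseudo-intent premise $Q \subsetneq P$ with $Q'' \subseteq P$ by the definition of pseudo-intent, hence $P^{\mathcal{L}'} = P \not\supseteq P''$. Two points you rely on tacitly deserve a word: the finiteness of $M$ (assumed in the paper's definition of pseudo-intents) is what makes the recursive definition well-founded and guarantees that the least $\mathcal{L}$-closed set $X^{\mathcal{L}}$ exists (the family of $\mathcal{L}$-closed sets contains $M$ and is stable under intersection); and the ``standard criterion'' that $X \rightarrow Y$ follows from $\mathcal{L}$ iff $Y \subseteq X^{\mathcal{L}}$ is itself a small lemma, justified by the fact that the models of $\mathcal{L}$ are exactly the $\mathcal{L}$-closed subsets of $M$. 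Neither gap affects correctness; they are routine to fill.
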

\begin{proof}
By definition of the closure operator $''$, the implications in $\mathcal{L}$ respect all object intents. Thus, they hold in the underlying context $(G,M,I)$ and $\mathcal{L}$ is sound. For the proof of completeness and non-redundancy see \cite[Theorem 8]{GW99}. 
\end{proof} 

The set $\mathcal{L}$ is called stem base of a formal context. In general, its implications are noted in the short form $P \rightarrow P'' \setminus P$, as $P \rightarrow P$ is trivial.\index{Stem base} Completeness\index{Completeness (logics)} means that every implication holding in a given formal context $(G,M,I)$ can be derived logically from  $\mathcal{L}$. This property is lost, if a single implication is removed from the stem base (non-redundancy)\index{Non-redundancy (logics)}. For complete syntactic inference, the \textit{Armstrong rules}\index{Armstrong rules} 1, 2 (\ref{eq:2ndArmstrong}) and 6 \eqref{eq:premConclusion} can be used \cite[Proposition 21]{GW99}. They are sound\index{Soundness (logics)} in the sense that every implication proven by the implications in $\mathcal{L}$ and the Armstrong rules is semantically\index{Semantics} valid, i.e. holds in $(G,M,I)$.


By reason of these strong properties (where non-redundancy is not necessary), an object reduced formal context can be reconstructed from its stem base as well as the order relation of the corresponding concept lattice. Thus, Figure \ref{fig:TNFMMPLattice} represents a Boolean lattice\index{Lattice!Boolean}, i.e.~its order relation is given by set inclusion of the elements in the power set $\mathfrak{P}(M)$,\glossary{name=$\mathfrak{P}(M)$, description={Power set of $M$},sort=$PM$} $M:=\{\text{MMP1:=1, TIMP1:=1, MMP9:=1}\}$. Hence, there is no restriction on intents and the stem base is empty. For examples of correlations between implications and a concept lattice see Section \ref{simNoStress}.


\index{Attribute exploration}
During the interactive attribute exploration algorithm \cite[p.~85ff.]{GW99}, an expert is asked about the general validity of basic implications $A \rightarrow B$ between the attributes of a given formal context $(G,M,I)$. If the expert rejects the statement, (s)he must provide a
counterexample, i.e. a new object of the context. If she accepts, the implication is added to the
stem base\index{Stem base} of the -- possibly enlarged -- context, which at the end is precisely the set $\mathcal{L}$
of Theorem \ref{theorem:dg}. In many applications, one is only interested in the set of all implications of a fixed formal context. Then, no expert is needed for a confirmation of the implications. Sometimes I refer to this algorithm as ``computing the stem base" of a formal context.


A counterexample\index{Counterexample}\label{maxCounterEx} has to be chosen carefully, since its object intent defines a new closed set. It must correspond to the explored (mathematical or other) reality, i.e. either a single object with this attribute set exists, or a class of objects has exactly these attributes in common. Otherwise a valid implication may be precluded between a pseudo-closed set and the larger, correct intent. If the counterexample intent is chosen too large, this can be corrected by new counterexamples. A counterexample contradicting already accepted implications is immediately rejected by the implementations of the algorithm.

In this work, mostly the Java implementation \texttt{Concept Explorer}\index{Concept Explorer} \cite{ConExp} was used. It handles large contexts, offers the possibility of lattice visualisation with highlighting of filters and ideals, reads -- among other formats -- tabulator separated \texttt{*.txt} or \texttt{*.csv} files, and its graphical user interface is easy to use. The DOS and Linux command line tool \texttt{ConImp} \cite{Bur03}\index{ConImp} is restricted to 255 objects and attributes, respectively, but offers enlarged possibilities like handling incomplete or background (cf. Chapter \ref{ch:bg}) knowledge. 

\section{Temporal Concept Analysis (TCA)}\index{Temporal Concept Analysis (TCA)}
\begin{figure}
 \centering
 \includegraphics[height=10.5cm]{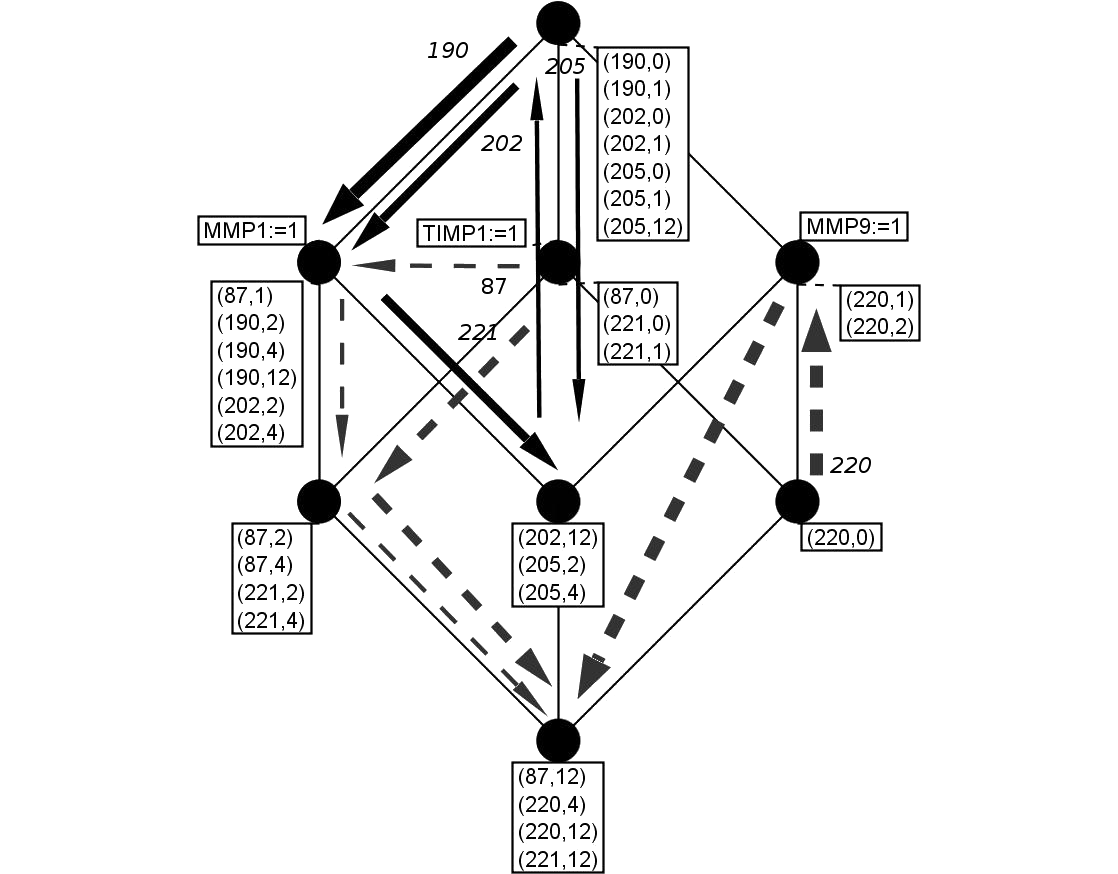}
\caption{Visualisation by \textit{life tracks} of the gene expression time series from Table \ref{tab:3GenesCxt}, for cells from osteoarthritis patients 190, 202, 205 (\textit{solid arrows}) and rheumatoid arthritis patients  220, 221, 87 (\textit{dashed arrows}), at $t=0, 1, 2, 4$ and $12$ hours after stimulation with TNF. The cubus represents the Hasse diagram $\mathfrak{B}(\mathbb{K}_{C})$ of the formal context \textbf{C}, the space part of a CTSOT; it is identical to the concept lattice in Figure \ref{fig:TNFMMPLattice}. States -- indicated by black circles -- are object concepts of $\mathfrak{B}(\mathbb{K}_{C})$, i.e. sets of measurements $(o,t), o \in O, t \in T$ of cell cultures $O$ at specific time points $T$, where a common set of genes is expressed and the remaining genes are not. For data analyses based on similar TCA diagrams see \cite{Wol11}.}
\label{fig:TNFMMP}
\end{figure}
\glossary{name={TCA}, description={Temporal Concept Analysis}}
In this section only a short impression of TCA is given, in order to compare it with our independent approach with partly different purposes. Therefore, this section may be skipped, or an intuitive understanding of Figure \ref{fig:TNFMMP} may be sufficient.

K.E.~Wolff developed an FCA based \agrave temporal conceptual granularity theory for movements of general objects in abstract or 'real' space and time'' \cite[p.~127]{Wol05}. \textit{Temporal Concept Analysis} (TCA) is based on many-valued contexts representing, e.g., several observed time series: row entries are \textit{actual objects}, i.e. pairs $(o,t) \in \Pi \subseteq O \times T$,\glossary{name={$O$}, description={Set of objects in TCA}}\glossary{name={$T$}, description={Set of time granules in TCA}} where $o$ is called an \textit{object} (interpreted for example as a manufacturing machine, a metereological station or a person), and $t$ a \textit{time granule} (interpreted for example as a time point or a time interval). 

\glossary{name={CTSOT}, description={Conceptual Time System with Actual Objects and a Time Relation}}
More exactly, a \textit{Conceptual Time System with Actual Objects and a Time Relation  (CTSOT)} is defined as a pair $(\mathbf{T}, \mathbf{C})$ of two many-valued contexts on the same set $\Pi$ of actual objects, together with a relation $R \subseteq \Pi \times \Pi$. The attributes discriminate between the \textit{time part} $\mathbf{T}$ and the \textit{event part} or \textit{space part} $\mathbf{C}$. $\mathbb{K}_{TC} := \mathbb{K}_T \mid \mathbb{K}_C$ is the apposition \index{Formal context!apposition} of the respective derived, one-valued contexts. This leads to the definition of a \textit{state}\index{State} as an object concept of $\mathbb{K}_C$, and of a \textit{situation} as an object concept of $\mathbb{K}_{TC}$.

The principal aim of TCA is the visualisation of temporal data within the corresponding concept lattices $\mathfrak{B}(\mathbb{K}_{C})$ and $\mathfrak{B}(\mathbb{K}_{TC})$. The object concept mapping \mbox{$\gamma\colon \Pi \rightarrow \mathfrak{B}(\mathbb{K}_{TC})$} yields the directed graph of \textit{life tracks}, connecting the object concepts $\gamma(o,t)$ of a single object $o$ according to $R$.
The \textit{Life Track Lemma} \cite[p.~139]{Wol05} gives attention to a mapping $\gamma_C\colon \Pi \rightarrow \mathfrak{B}(\mathbb{K}_{C})$ onto the state instead of the situation lattice (compare Figure \ref{fig:TNFMMP}), as well as to mappings onto lattices according to any other restriction of the attribute set (\textit{view}). It specifies how life tracks in $\mathfrak{B}(\mathbb{K}_{TC})$ may be mapped onto life tracks in the sublattices, which are meet-preservingly embedded into $\mathfrak{B}(\mathbb{K}_{TC})$.

\textit{Conceptual Semantic Systems (CSS)} \cite{Wol05c} include spatially distributed objects covering yet quantum incertainty, but they are much too general for our approach.

In \cite{Wol11}, TCA is applied to the graphical analysis of gene expression data, while the present work principally aims at temporal logic. Life tracks are a structure supplementary to the underlying formal context and concept lattice. In contrast, a single context representing the time relation $R$ is required in order to apply attribute exploration. We also wanted to start from the most general FCA framework to take advantage of the broad range of mathematical results and of existing software.
Therefore, we constructed a parallel modelling approach that is based on automata theory (and similar approaches) just like TCA. The mutual relation will be specified in Section \ref{sec:TCALTSA}.

\chapter{Algebraic and logic process modelling}\label{ch:logic}
The present work aims at providing a framework that may be useful to validate and further analyse process models formulated in very general classes of languages. On the semantic level\index{Semantics}, this chapter gives an overview on \textit{automata}, \textit{Labelled Transition Systems with Attributes} (LTSA, an extension of semiautomata) and \textit{Kripke structures}. In order to reveal connnections, they are presented as different types of \textit{universal coalgebras}. Focusing on the syntactic level,\index{Syntax} I concentrate on A. S. Priors logic of time and Computation Tree Logic (CTL). Since there is important research concerning connections of FCA and Description Logics (DL), and DL relations may be considered as temporal, the basic definitons and an explicitly temporal extension are presented in Section \ref{sec:DL}. Section \ref{sec:compDL} discusses how the defined formal contexts may be translated to a DL. In Section \ref{sec:sysBiol}, further modelling languages used in systems biology are mentioned, in particular Boolean networks.

The present approach is based on LTSA, more exactly on Kripke structures, since different actions are not distinguished. For simulations, Boolean networks are used, and the temporal logic CTL for dynamic assertions.

Within this chapter, it is not possible to give a self-contained introduction to the broad range of theories. It aims more at drawing connections which might be interesting for readers familiar with a theory, thus at anchoring my own approach defined without special presuppositions in Chapter \ref{ch:FCAModelling}. There, references to the present chapter might be overread. To understand the immediate background of Chapter \ref{ch:FCAModelling}, it should be sufficient to read the introduction to Boolean networks in Section \ref{sec:BooleNets}, the definitions of an LTSA (\ref{def:LTSA}), of a Kripke structure (\ref{def:Kripke}) and of the CTL operators (Section \ref{sec:CTL}), possibly also the paragraphs concerning their origin in propositional tense logic (Section \ref{sec:Prior}). The main part of this section discusses philosophical ideas of A.S. Prior regarding the flux of events, their fixation in a data frame, open future, freedom and limits of temporal knowledge. It is an excursus fitting well to my view of FCA as a method aiming to support human understanding and responsible discussion of the reach of data analyses. 

\section{An unifying approach: Universal coalgebras}\label{sec:coalg}\index{Coalgebra}
In computer science the mathematical discipline of Universal Coalgebra achieved large success as a common theory of state\index{State} based systems, generalising, e.g., automata and Kripke structures. The observable output of such dynamic systems depends on an input as well as on an internal state, where the input may change the state. As will be shown in important special cases, this can be described by a set $S$ of states and a mapping from $S$ to a combination of states and outputs. 

An (universal) coalgebra is defined in the language of category theory, which aims to describe structural similarities between mathematical theories. A \textit{category}\index{Category} $\mathfrak{C}$ consists of a class of \textit{objects} (e.g. the class of sets, groups or vector spaces) and a class of \textit{morphisms}\index{Morphism} (e.g. homomorphisms): For two objects $A,B \in \mathfrak{C}$, a set $\operatorname{Mor}(A,B)$ is defined, and the axioms are satisfied (the very natural first axiom is omitted) \cite[p. 53]{Lan02}:
\begin{itemize}
\item[] \textbf{CAT 2}: For each object $A \in \mathfrak{C}$ there is a morphism $\operatorname{id}_A \in \operatorname{Mor}(A,A)$, the identical map on $A$.
\item[] \textbf{CAT 3}: The class of morphisms is closed against composition. The law of composition is associative: If $A,B,C,D \in \mathfrak{C}$ and $f \in \operatorname{Mor}(A,B),\: g \in \operatorname{Mor}(B,C)$ and $h \in \operatorname{Mor}(C,D)$, then
\[(h \circ g) \circ f = h \circ (g \circ f).\]
\end{itemize}

A \textit{functor}\index{Functor} defines exactly how objects and morphisms of one category can be transferred to another category. We only need covariant functors respecting the direction of morphisms:
\begin{definition}\cite[p. 62]{Lan02}
A \textsc{covariant functor} $\Omega$ of a category $\mathfrak{C}$ into a category $\mathfrak{D}$ is a rule which to each object $A \in \mathfrak{C}$ associates an object $\Omega(A) \in \mathfrak{D}$, and to each morphism $f: A \rightarrow B$ associates a morphism $\Omega(f)\colon \Omega(A) \rightarrow \Omega(B)$ so that:
\begin{itemize}
\item[] \textnormal{\textbf{FUN 1.}} For all $A \in \mathfrak{C}$ we have $\Omega(\operatorname{id}_A) = \operatorname{id}_{\Omega(A)}$.
\item[] \textnormal{\textbf{FUN 2.}} If $f\colon A \rightarrow B$ and $g\colon B \rightarrow C$ are two morphisms of $\mathfrak{C}$ then
\[\Omega(g \circ f) = \Omega(g) \circ \Omega(f).\]
\end{itemize}
\end{definition}
 
\glossary{name={$\mathcal{A}$}, description={Coalgebra}, sort=$A$}
\glossary{name={$\Omega$}, description={Endofunctor of an (universal) coalgebra}, sort=$O$}
Now a coalgebra is defined by means of a functor on the category of sets alone, i.e. an \textit{endofunctor}\index{Endofunctor} $\Omega\colon \text{\textbf{Set}} \rightarrow \text{\textbf{Set}}$. It maps sets of states $S$ to sets of (in general) higher complexity, including the Cartesian product of sets or sets of functions, like $S^\Sigma := \{\Sigma \rightarrow S\}$,\glossary{name={$S^\Sigma$}, description={Set of mappings from an alphabet $\Sigma$ to the state set $S$},sort=$S^S$} for two sets $\Sigma$ and $S$. 
\glossary{name={$\alpha_S$}, description={$S \rightarrow \Omega(S)$, mapping of a coalgebra},sort=$AlphaS$}
\begin{definition}\index{Coalgebra}
\cite[Definition 3.0.1]{Gum03} Let a \textsc{Type} be an endofunctor $\Omega\colon \textnormal{\textbf{Set}} \rightarrow \textnormal{\textbf{Set}}$. Then a \textsc{Coalgebra of Type} $\Omega$ is a pair $\mathcal{A} = (S, \alpha_S)$ consisting of a set $S$ and a mapping
\[S \overset{\alpha_S}{\longrightarrow} \Omega(S).\]
\end{definition}

In the following subsections, automata, Kripke structures and LTSA will be presented as universal coalgebras. Since only basic structural similarities are highlighted, set-theoretic morphisms mostly are not considered explicitly.

\subsection{Automata theory}\label{sec:automata}\index{Automaton}

\begin{definition}{\cite[1.4]{Gum03}} An \textsc{Automaton} is a tuple
\[A:=\{S, \Sigma, \delta, D, \gamma\}, \text{ with:}\]
\begin{enumerate}
\item A set of \textsc{States} $S$.\glossary{name={$S$}, description={State set}}\index{State}
\item A finite set of \textsc{Input Symbols} $\Sigma$.\glossary{name={$\Sigma$}, description={Input symbols (automaton)},sort=$Sigma$}
\item A \textsc{Transition Function} $\delta\colon S \times \Sigma \rightarrow \mathfrak{P}(S)$. \glossary{name={$\delta$}, description={Transition function}, sort=$Delta$}
\item A set of \textsc{Data} $D$.\glossary{name={$D$}, description={Data (automaton)}}
\item An \textsc{Output Function} $\gamma\colon S \rightarrow D$.\glossary{name={$\gamma$}, description={Output function},sort=$Gamma$}
\end{enumerate}
In the case of a \textsc{Deterministic Automaton}\index{Process!deterministic}, the transition function $\delta$ maps to the set of singletons identifiable with $S$. A \textsc{Finite Automaton} has a finite set $S$.
\label{def:automaton}
\end{definition}
The value of the transition function\index{Transition function} $\delta(s,e)$ can be interpreted as denoting the possible states the automaton is in after reading the input $e$ while in the state $s$. However, states often cannot be observed directly, but by means of the output function\index{Output function} $\gamma$: Each internal state $s \in S$ can only be observed by an external attribute $\gamma(s) \in D$. 

In a main field of application only the paths are interesting that lead from a fixed start state $s_0$ to a \textit{final} (or accepting) \textit{state}\index{State!final} $s \in F \subseteq S$ ($S$ finite). $F$ may be coded by its characteristic function $\gamma\colon S \rightarrow D$ where $D:= \{0,1\}$. This type of automaton
is also called \textit{acceptor}\index{Automaton!Acceptor} \cite[p.~165]{Gum03}. Then, an automaton defines a language of all successful words $(a_1,...,a_n) \in \Sigma^n,\: n \in \mathbb{N}$, corresponding to a path of transitions from $s_0$ to an accepting state $s_n$.

\begin{figure}
 \centering
 \includegraphics[width=10cm]{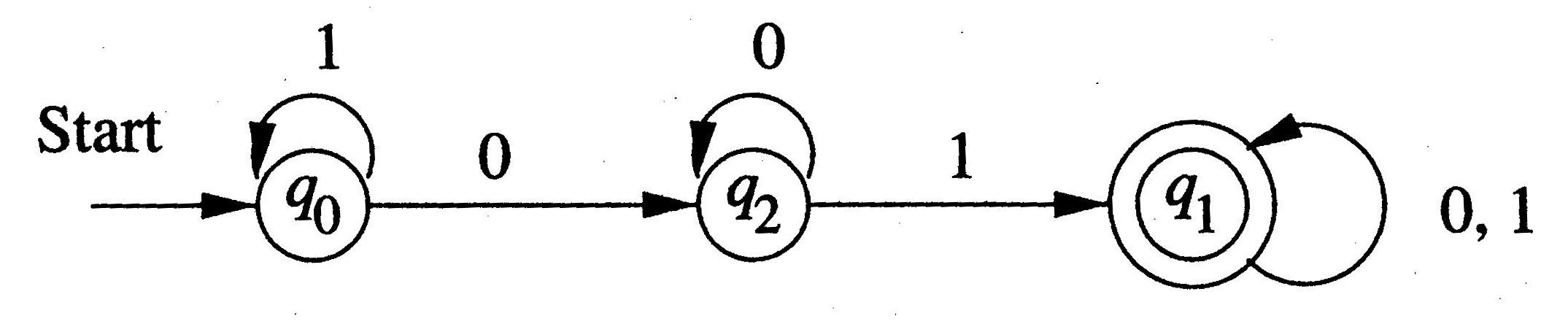}
 \caption{Transition diagram for the automaton example \ref{ex:01automaton} (recognition of a 01-substring), from \cite[p.~48]{Hop01}.}
 \label{fig:01automaton}
\end{figure}
\begin{example}\label{ex:01automaton}
\upshape \cite[p.~46-49]{Hop01}. A finite deterministic automaton accepts strings over an alphabet $\Sigma := \{0,1\}$, and the aim is to decide whether the string contains the sequence $01$. Besides the final state $q_1$, indicating that the substring has been found, there is the initial state $q_0$ (no input or last input 1) and an intermediate state $q_2$\index{State} (most recent input 0). The graph of Figure \ref{fig:01automaton} represents the possible transitions depending on the next input symbol. Explicitly, three types of input words are distinguished, each being preceded by
the corresponding state of the automaton:

\[q_0\colon \underset{\geq 0}{\underbrace{1...1}}, 
 \qquad q_2\colon\underset{\geq 0}{\underbrace{1...1}}\underset{> 0}{\underbrace{0...0}},
 \qquad q_1\colon(...)01(...)\]
\end{example}

In order to make a clear difference to Kripke structures (Section \ref{sec:Kripke}), a deterministic automaton is represented as a coalgebra: Let a functor $\Omega\colon \textnormal{\textbf{Set}} \rightarrow \textnormal{\textbf{Set}}$ be defined by $\Omega(S) := D \times S^\Sigma$ for a set $S$. For $f\colon S_1 \rightarrow S_2$, define $\Omega(f)\colon \Omega(S_1) \rightarrow \Omega(S_2)$ by $\Omega(f)(d, \delta') := (d, f \circ \delta')$, for all $d \in D,\: \delta' \in X^\Sigma$. The properties of a set functor are fulfilled, since the constant functor, the power functor and the cross product of functors are functors \cite[Beispiel 2.4.2, 2.4.3 and 2.4.6]{Gum03}). Then a deterministic automaton is a coalgebra $\mathcal{A} = (S, \alpha_S)$ of type $\Omega$ with
\glossary{name={$S^\Sigma$}, description={Set of mappings from an alphabet $\Sigma$ to the state set $S$},sort=$S^S$}
\[\alpha_S\colon S \rightarrow \Omega(S):= D \times S^\Sigma,\]
where $\alpha_S (s) = (\gamma(s), \delta(s, \cdot))$, for all $s \in S$.

A nondeterministic automaton is identifiable with a coalgebra where
\[\alpha_S\colon S \rightarrow \Omega(S):= D \times (\mathfrak{P}(S))^\Sigma.\]

\subsection{Kripke structures}\label{sec:Kripke}
\index{Kripke structure}
\begin{definition}\label{def:Kripke}
A \textsc{Kripke structure} consists of a set of states $S$, a set of atomic propositions $M$, an output function $\gamma\colon S \rightarrow \mathfrak{P}(M)$ and a relation $R \subseteq S \times S$.
\end{definition}

A Kripke structure may be considered as a special case of a nondeterministic automaton:\index{Process!nondeterministic} With the trivial alphabet $\Sigma = \{\text{update}\}$, i.e., without a special input, the transition function\index{Transition function} gets $\delta_R\colon S \rightarrow \mathfrak{P}(S)$, which can be identified with a relation $R_\delta \subseteq S \times S$ as follows:
\begin{align*}
R_\delta &:= \{(s,s')\mid s \in S, s' \in \delta_R(s)\}\\
\delta_R &: \quad s \mapsto \{s' \in S\mid (s,s') \in R_\delta\} =: [s]R_\delta
\end{align*}

The output function\index{Output function} is given over a set $M$ by letting $D:= \mathfrak{P}(M)$, i.e. $\gamma\colon S \rightarrow \mathfrak{P}(M)$. An arbitrary set $M$ may be considered as a set of atomic propositions. Then $\gamma(s)$ is the set of atomic propositions being true in the state $s$. 

Accordingly, a Kripke structure is a universal coalgebra with
\[\alpha_S\colon S \rightarrow \Omega(S):= \mathfrak{P}(M) \times \mathfrak{P}(S).\]

\begin{example}\label{ex:Kripke2}\index{State}\index{Transition}
\upshape (compare \cite[p.~167]{Gum03}).
Due to the number of interacting measuring or control devices with observable output $M$, a technical or engineering system may not be predictable in detail, but the set of allowed transitions $R$ may be restricted indirectly. Thus, in a computer program with parallel processes, $R$ constrains the transitions between states within and between processes. States are defined by an assignment of the variables declared for a process. Such variable assignments are an example of atomic propositions $m \in M$ (compare Table \ref{tab:01automatonKripke}), and allowed transitions are given by Boolean expressions over $M$ (\textit{compound attributes} in the language of FCA \cite[p.~101]{Gan05}) relating an \textit{input}\index{State!input} to an \textit{output}\index{State!output} state. If the precondition applies to the input state, then the postcondition has to be true in the subsequent state, as in the example describing the main steps of the German legislation process:
\begin{verbatim}bundestag.vote AND bundesrat.vote AND NOT president.veto ==> law.published.\end{verbatim}
\end{example}

\begin{table}
\centering
\begin{tabular}{|c||c|c|c|c|}\hline
&input.0 &input.1 &position.start &position.final\\ \hline \hline
$q_0^a$ &&&x&\\ \hline
$q_0^b$ &&x&x&\\ \hline
$q_1^a$ &x&&&x\\ \hline
$q_1^b$ &&x&&x\\ \hline 
$q_2$ &x&&&\\ \hline
\end{tabular}
\caption{Assignment of atomic propositions $\gamma\colon S \rightarrow \mathfrak{P}(M)$ for the Kripke structure modelling Example \ref{ex:01automaton} represented as a formal context $(S, M, I)$. It is derived by nominal scaling\index{Scaling!nominal} from the many-valued context $(S, \tilde{M}, W, J)$, where $\tilde{M} := \{\text{input}, \text{position}\}$ and $W:=\{0,1,\text{start},\text{final}\}$. Atomic propositions are mappings $m \in M \subseteq \{\tilde{M} \rightarrow W\}$, i.e. variable assignments.}
\label{tab:01automatonKripke}
\end{table}
Compared to the acceptor type of an automaton, in a Kripke structure a state may be mapped to a set of output attributes, not only to single values like \textit{final state}\index{State!final} or \textit{no final state}. An advantage of a Kripke structure is the possibility of a differentiate state description by a large number of attributes, which will be important for our main biological application. In Example \ref{ex:01automaton}, instead of ``colouring" a state by $D=\{\text{final, not final}\}$, attribute combinations from $M:=\{0, 1, \text{start}, \text{final}\}$ may be assigned. Table \ref{tab:01automatonKripke} represents the output function $\gamma: S \rightarrow \mathfrak{P}(M)$ as a formal context, which will be named \textit{state context} (Definition \ref{def:stateCxt}). $0$ and $1$ describe the input leading to a state, hence only the most recently read value of the input string is relevant. In Kripke structures, input strings are not considered explicitly, since $\Sigma$ contains only one element. However, a part of this information could be preserved by remembering two or more input values as attributes, for instance by $\bigcup_{i \in I}\{0_i,1_i\} \subseteq M$. In our example, the defined set of atomic propositions $M$ is sufficient to distinguish the three states and to get valuable information regarding the dynamic system (cf.\ the interpretation of the states $q_0, q_1, q_2$ in Example \ref{ex:01automaton}).

A relation $R'$ is given by the transitions defined by the automaton graph (Figure \ref{fig:01automaton}): 
\[R' = \{(q_0^a, q_0^b), (q_0^a, q_2), (q_0^b, q_0^b), (q_0^b, q_2), (q_2, q_2), (q_2, q_1^b),\]
\[(q_1^a, q_1^a), (q_1^a, q_1^b), (q_1^b, q_1^a), (q_1^b, q_1^b)\}.\]
Inversely, in this example the transition graph may be reconstructed from the Kripke structure, together with some drawing conventions. For this purpose, it might be desirable to have an equivalent of the original automaton states: $\{q_0\}$ and $\{q_1\}$ are concept extents, whereas $\{q_2\}$ is given as extent of a supplementary attribute \agrave intermediate".

\subsection{Labelled Transition Systems with Attributes (LTSA)}\label{sec:LTSA}
\index{Labelled Transition System with Attributes (LTSA)}
For our FCA process modelling, we start from the definition of LTSA in \cite[Definition 1]{Rud01} generalising abstract automata. \textit{Labelled Transition Systems (LTS)} or \textit{State transition systems} are finite state automata, where all states are final states,\index{State!final} or semiautomata.\index{Semiautomaton} They are also used in \textit{operational semantics} and may be described by \textit{process algebras}.\footnote{Keijo Heljanko, Networks and Processes: Process Algebra, 2004. \texttt{www.fmi.uni-stuttgart.de/} \texttt{szs/teaching/}\texttt{ws0304/nets/slides20.ps}} The notion of LTSA complements this structure by state attributes or atomic propositions in the language of Kripke structures:
\begin{definition}\label{def:LTSA}\glossary{name={LTSA}, description={Labelled Transition System with Attributes}}
A \textsc{Labelled Transition System with Attributes (LTSA)} is a 5-tuple $(S,M,I,A,R)$ with
\begin{enumerate}
\item $S$ being a set of states.
\item $M$ being a set of state attributes.
\item $I \subseteq S \times M$ being a relation.
\item $A$ being a finite set of actions.\glossary{name={$A$}, description={Set of actions}}
\item $R \subseteq (S \times A \times S)$ being a set of transitions, where $(s_1, a, s_2) \in R$ means \agrave action $a$ can cause the transition from state $s_1$ to state $s_2$". \end{enumerate}
\end{definition}

\begin{proposition}\label{prop:LTSAAutomaton}
Every automaton can be represented by an LTSA, and vice versa.
\end{proposition}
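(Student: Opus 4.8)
The plan is to give two explicit constructions, one in each direction, and to verify in each case that the constructed object satisfies the defining clauses of its target structure and that it faithfully encodes the transition and output data of the source.

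First I would treat the direction from automata to LTSA. Given an automaton $\{S,\Sigma,\delta,D,\gamma\}$, I keep the state set $S$ unchanged, take the set of actions to be the (finite) input alphabet $A := \Sigma$, and take the attribute set to be the data set $M := D$. The transition function $\delta\colon S\times\Sigma\to\mathfrak{P}(S)$ is turned into a transition relation by uncurrying,
\[R := \{(s_1,a,s_2)\mid s_2\in\delta(s_1,a)\}\subseteq S\times A\times S,\]
and the single-valued output function $\gamma\colon S\to D$ is turned into the attribute relation via its graph,
\[I := \{(s,\gamma(s))\mid s\in S\}\subseteq S\times M.\]
One checks immediately that $(S,M,I,A,R)$ satisfies all five clauses of Definition \ref{def:LTSA}, with finiteness of $A=\Sigma$ inherited from the automaton. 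Since $\delta(s,a)=\{s'\mid(s,a,s')\in R\}$ and $\gamma(s)$ is recovered as the unique $m$ with $(s,m)\in I$, no information is lost.

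Conversely, given an LTSA $(S,M,I,A,R)$ I would build an automaton with the same states, input alphabet $\Sigma := A$ (again finite), and data set $D := \mathfrak{P}(M)$. The transition relation is curried back into a power-set-valued transition function $\delta(s,a) := \{s'\mid(s,a,s')\in R\}$, and the relational attribute assignment is collapsed into one output value per state by recording the whole object intent,
\[\gamma(s) := \{m\in M\mid (s,m)\in I\}=\{s\}'\in\mathfrak{P}(M)=D.\]
Then $(S,\Sigma,\delta,D,\gamma)$ satisfies the clauses of Definition \ref{def:automaton}, and the transition and output data are recovered from $\delta$ and $\gamma$ exactly as above, so the LTSA is faithfully represented.

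The only genuine point requiring care is the mismatch between the single-valued output function $\gamma\colon S\to D$ and the relational attribute assignment $I\subseteq S\times M$: a state may carry several attributes (or none), whereas $\gamma$ assigns each state exactly one value. This is the same asymmetry already resolved for Kripke structures in Section \ref{sec:Kripke}, and it is handled here by the choice $D:=\mathfrak{P}(M)$ in the LTSA-to-automaton direction, so that an arbitrary set of attributes becomes a single datum, and dually by $M:=D$ together with the graph of $\gamma$ in the reverse direction. The transition sides match without any such adjustment, since a function $S\times\Sigma\to\mathfrak{P}(S)$ and a relation on $S\times A\times S$ are interdefinable by (un)currying. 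I therefore expect the construction itself to be routine, with the encoding of outputs being the one step that deserves explicit comment.
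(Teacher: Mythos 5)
Your proposal is correct and coincides with the paper's own proof: both directions use exactly the same constructions, namely $\Sigma := A$ with $\delta(s,a) := \{s' \mid (s,a,s') \in R\}$ and $D := \mathfrak{P}(M)$, $\gamma(s) := \{m \in M \mid (s,m) \in I\}$ for the LTSA-to-automaton direction, and $A := \Sigma$, $M := D$, the uncurried relation $R$, and the graph of $\gamma$ (identifying $d$ with $\{d\} \subseteq \mathfrak{P}(D)$) for the converse. Your explicit remark that the only non-trivial point is absorbing the relational attribute assignment into a single-valued output via the power set is precisely the specialisation the paper relies on as well.
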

\begin{proof}
An LTSA is a an automaton according to Definition \ref{def:automaton}, with the specialisation $D:=\mathfrak{P}(M)$ for the output function $\gamma\colon S \rightarrow D,$ \mbox{$s \mapsto \{m \in M \mid (s,m) \in I\}$}. For the rest, $\Sigma:=A$, and $\delta\colon S \times A \rightarrow \mathfrak{P}(S)$ is given by $(s,a) \mapsto \{s' \in S\mid (s,a,s') \in R\}$. (Compare \cite[p.~343]{Wol02}.)

On the other hand, each automaton can be translated into an LTSA via $A:= \Sigma$, $M:=D$ and $R:= \{(s^{in},a,s^{out})\mid s^{in} \in S,\: a \in \Sigma,\: s^{out} \in \delta(s^{in},a) \subseteq\mathfrak{P}(S)\}$; the relation $I$ is given by the function $\gamma\colon S \rightarrow D \cong \{\{d\} \mid d \in D\} \subseteq \mathfrak{P}(D)$.
\end{proof}

By the proposition, an LTSA is also a universal coalgebra:
\[S \rightarrow \Omega(S) := \mathfrak{P}(M) \times (\mathfrak{P}(S))^A \]

In our FCA model, start and final states\index{State!final} are not considered explicitly. Investigating the attribute logic we are not mainly interested in automata as acceptors, i.e. in testing allowed languages. The definition in \cite{Rud01} is slightly changed assuming a finite set of actions according to automata theory. Infinite words over the alphabet $\Sigma := A$ are allowed, but an infinite set of actions is not very meaningful.

\subsection{TCA -- LTSA -- automata theory}\label{sec:TCALTSA}
There is also a strong relation to TCA: An LTSA and especially an automaton can be described by and reconstructed from a CTSOT. In \cite[2.2]{Wol02}, an automaton is defined like an acceptor, but slightly more general: a set of start states is admitted, thus an output function $\gamma\colon S \rightarrow D := \{\text{start, intermediate, final}\}$. The main idea of the \textit{Map Reconstruction Theorem}\index{Theorem!Map Reconstruction} \cite[4.2]{Wol02} is taking the set of actions (plus ``missing value'') as attributes of the time part $\mathbf{T}$ with nominal scaling. Then given an LTSA $L$, there is an isomorphism from $L$ onto a \textit{state-LTSA} derived from a CTSOT, so that each path of $L$ (given by the transitions $R$) is mapped onto a life track.

\section{Temporal logics}
The atomic propositions $M$ of a Kripke structure and the transition relation $R$ define the semantics of a dynamic system. Based on it, a multitude of logics have been developed in order to reason about temporal properties of the system. Within the framework of the present thesis two classical approaches and a temporal extension of description logics are outlined, which have been developed and investigated in many directions more recently.

\subsection{Propositional tense logic}\label{sec:Prior}\index{Propositional tense logic}
A very important contribution to the modern logic of time -- including concepts and reasoning -- was made by A.N. Prior (1914 - 1969) based on philosophical traditions of antiquity and the Middle Ages. He started from J.M.E. McTaggarts (1866 - 1925) distinction between the A- and B-series conceptions of time\index{A- and B-concepts of time} (which lead McTaggart to a famous paradox and the refutation of the reality of time): The A-concepts \textit{past}, \textit{present} and \textit{future} are more fundamental for a proper understanding of time than the B-series conception of a set of instants organised by the earlier-later relation. Prior also rejected the latter static view of time, intending to substantiate the notion of freedom:
\begin{quote}
\textit{I believe that what we see as a progress of events is a progress of events, a 
coming to pass of one thing after another, and not just a timeless tapestry with everything stuck there for good and all. (A.N. Prior, cited after \cite[p.~69]{Ohr09}})
\end{quote}
Events that have become past are \agrave out of our reach" and unchangeable, whereas the future is to some extent open and depends on the decision of a free agent. Furthermore, Prior considered B-theory as a reduction of reality since the notion of the present, the Now, disappears.

In both views, well formed formula are composed by arbitrary propositional variables, $\wedge$, $\sim$ (negation), $F$ (``in the future'')\glossary{name={$F$}, description={Eventually}} and $P$ (``in the past'').
Yet like in one tradition of modern philosophy of language, Prior did not assume a sharp distinction between an object language and a metalanguage. Accordingly, there is no model,\index{Model} e.g. a Kripke structure (denoted by $(\text{TIME}, <, \nu)$) as a second level. It would be a tenseless metalanguage or ``timeless tapestry'' since it is based on a set of instants or durations. Instead of such a reification of instants, Prior introduced a special type of propositions, \textit{Instant-propositions} or \textit{World-state propositions} $t$. This makes it possible to define formula $T(t,\varphi)$ meaning that a tense-logical formula $\varphi$ is true at time $t$. Instant propositions are defined axiomatically in terms of the tense-logical language itself, together with a necessity operator $\Box$\glossary{name={$\Box$}, description={Necessity operator},sort=$.Nec$} and a possibility operator $\Diamond:= \,\sim \!\Box \sim$,\glossary{name={$\Diamond$}, description={Possibility operator},sort=$.Pos$}\glossary{name={$\sim$}, description={Negation},sort=$.Neg$} as well as standard quantification $\exists$ (and thus $\forall$). In this way, instants or times are treated as artificial constructs. They are replaced by the conjunction of a maximal consistent set of propositions that may be said to be true at $t$. Thus, Prior adapts the notion of possible worlds to time.\footnote{Compare John Wood, \textit{Course Modal Logic}, University of British Columbia, Spring 2007, Note 23 \texttt{[http://www.johnwoods.ca/Courses/Phil322-07/].}} His approach has also been followed by hybrid logics. \cite[p.~70f.]{Ohr09} 

He developed a theory of possibility and indeterminism\index{Process!nondeterministic} based on the notion of \textit{branching time}\index{Branching time}, which had been suggested in a letter from Saul Kripke in 1958. Branching time is representable by a tree, where the present is a node of \agrave rank 0", and the possible future states at the following moments have a higher rank, i.e.~depth. Finally, Prior incorporated this idea into the concept of time itself by introducing the notion of \textit{chronicles}\index{Chronicle} or \textit{histories}\index{History}, i.e. maximal linearly ordered subsets in $(\text{TIME}, <)$, where TIME is the set of instant-propositions.

Prior distinguished two models of branching time, inspired by William of Ockham (ca. 1285 - 1349) and Charles S. Peirce (1839 - 1914), respectively. In the Ockhamistic model, the operators $F$ (\agrave tomorrow"), $\Diamond F$ (\agrave possibly tomorrow") and $\Box F$ (\agrave necessarily tomorrow") are distinguishable. In the Peircean view, however, \agrave tomorrow" is identified with \agrave necessarily tomorrow", since Peirce emphasised the difference between future and past. There is no \agrave plain" or \agrave true future" and no factual, non-necessary statements concerning the future make sense. Consequently, for an arbitrary formula $q$ only in the Ockhamistic system $q \rightarrow HF q$ is a theorem, with $H := \,\sim \! P\! \sim$ (\agrave at all times in the past"). In both systems, there are no alternative pasts but a single chronicle in the past of an instant-proposition $t$. Hence, the theorem does not hold in the Peircean system but only for $F \neq \Box F$, since $\Box F$ refers to all possible chronicles. \cite[p.~72-77]{Ohr09}

Priors favorised A-theory is \agrave politically correct" within the context of contemporary philosophy, but it is too sophisticated within the framework of this study. Instead we remain with the usual model-theoretic, B-theoretic approach to time, since FCA is based on data, a fundamental concept of modern science. A data frame of observations representing -- or statistically interpreted as -- a deterministic time series is a \agrave timeless tapestry". For the observation time, future is considered as retrospective or \agrave plain", and the observations are extrapolated to future by postulating natural laws. Prior reminds us that this is a simplification. Sometimes the whole background of data analysis and theory building should be made conscious again, like abduction, induction, falsification, paradigm change and historical development of scientific notions. 
\begin{quote}
\textit{But even in this flux there is a pattern, and this pattern I try to trace with my tense-logic; and it is because this pattern exists that men have been able to construct their seemingly timeless frame of dates. Dates, like classes, are a wonderful
and tremendously useful invention, but they are an invention; the reality is
things acting.} (A.N. Prior, Bodleian Library, MS in box 6, 1 sheet, no title. Cited after \cite[p.~78]{Ohr09})\end{quote} 

We follow an intermediate approach: Within the FCA model for transitions that will be developed in Chapter \ref{ch:FCAModelling} we include nondeterminism, hence future is regarded as open. Furthermore, we decided for throughout quantification over paths according to the Peircean future operators (but also for an Ockhamistic model, the notion of a chronicle could be defined by a supplementary attribute of the state context (Definition \ref{def:stateCxt}). In this way, there is no \agrave true future", only possibility and necessity are considered. On the other side, we follow a data-driven approach but try to be cautious: Analysis results depend on specific experimental conditions, preprocessing, definition of thresholds (e.g. for gene expression up- or downregulation) or choice of algorithms. Hence, necessity should not be judged by given data only, but by all existing knowledge. Attribute exploration enforces the role of the expert, who ideally should interprete necessity before the background of all events possible according to the state of science. In a further step of reflection, the resulting temporal stem base can be considered as a clear and concise knowledge representation, which by determining the domain of interest also defines its limits as well as those of the present understanding of temporal reality. Against this background, a clear definition and validation of a temporal data frame implies the respect for the potentially infinite complexity of nature and life.

\subsection{Computation Tree Logic (CTL)}\label{sec:CTL}
\glossary{name={CTL}, description={Computation tree logic}}
\index{Computation Tree Logic (CTL)}
Current temporal logics include \textit{Interval Temporal Logic (ITL)} and \textit{$\mu$-calculus}, of which important subsets are \textit{Linear Temporal Logic (LTL)}\glossary{name={LTL}, description={Linear temporal logic}} and \textit{Computation Tree Logic (CTL)}. Thus, like all approaches considered here, CTL abstracts from duration values; the basic unity is an event corresponding to a state. CTL is able to describe properties of nondeterministic transition systems (with branching time)\index{Process!nondeterministic}\index{Branching time} and extends propositional or first order logic with the following path quantifiers and temporal operators defined formally in Section \ref{sec:tmpCxt}:
\begin{itemize}
 \item $A$: \agrave for \textbf{a}ll transition paths" (corresponding to Priors necessity operator $\Box$)\glossary{name={$\Box$}, description={Necessity operator},sort=$.Nec$} 
 \item $E$: \agrave for some (\textbf{e}xisting) transition path" (corresponding to $\Diamond:= \,\sim \!\Box \sim$)\glossary{name={$\Diamond$}, description={Possibility operator},sort=$.Pos$}
 \item $F$: ``eventually (\textbf{f}inally) in the future"\glossary{name={$F$}, description={Eventually}}
 \item $G$: \agrave always (\textbf{g}enerally) in the future''\glossary{name={$G$}, description={Always}}
 \item $X$: ``ne\textbf{x}t time"\glossary{name={$X$}, description={Next}}
 \item $U$: ``\textbf{u}ntil''\glossary{name={$U$}, description={Until}}
\end{itemize}

A \textit{safety}\index{Safety} property specifying that some situation described by a formula $\phi$ can never happen is expressed by the CTL formula $AG\neg \phi$,\glossary{name={$\neg$}, description={Negation},sort=$.Neg$} i.e. on all paths $\phi$ is always false. A \textit{liveness}\index{Liveness} property specifying that something good $\phi$ will eventually happen is expressed by the formula $AF\phi$.

\textit{State formulas} $\phi$ are evaluated on (arbirary) states, whereas \textit{path formulas} $\psi$ are evaluated on single paths. With a set of atomic propositions $AP$, CTL has the following grammar, including ordinary Boolean connectives \cite[4.1]{Cha04}:
\glossary{name={$\phi$}, description={State formula (CTL)}, sort=$Phi$}
\glossary{name={$\psi$}, description={Path formula (CTL)}, sort=$Psi$}
\begin{align*}
  \phi & := \alpha \in AP \mid E\psi \mid A\psi\\
  \psi & := \phi \mid X\psi \mid F\psi \mid G\psi \mid \psi U\psi
\end{align*}
While in CTL$^*$ arbitrary state and path formulas are admitted, in CTL a path formula $\psi$ has to be preceeded by a path quantifier $E$ or $A$. Hence, an Ockhamistic model of time (in Priors understanding) cannot be expressed in CTL.


\subsection{Description logics}\label{sec:DL}
\subsubsection{General definitions}
Description Logics (DLs) are a family of knowledge representation formalisms with a broad range of applications, such as data mining, natural language processing, semantic web or ontologies. Similar to FCA, a principal aim of a DL is to define conceptual hierarchies, and there are attempts aiming at the application of attribute exploration to construct DL knowledge bases \cite{Baa09b} \cite{Baa09a}.

\glossary{name={DL}, description={Description logics}}\index{Description logics (DL)}
\glossary{name={$N_C$}, description={Concept names (DL)}, sort=$NC$}
\glossary{name={$N_R$}, description={Role names (DL)}, sort=$NR$}
\textit{Concept descriptions} are built starting from a set $N_C$ of \textit{concept names} (unary predicates) and a set $N_R$ of \textit{role names}\index{Role (DL)} (binary predicates) with the aid of \textit{concept constructors} specific for the language. Important constructors are the conjunction $C \sqcap D$ and the disjunction $C \sqcup D$, where $C,D$ are concept names or more complex concept descriptions. If DL concepts are given as FCA concepts, constructors are identical to the infimum or supremum of two formal concepts. Besides negation, role restrictions are used, e.g. $\exists r.C\: (r \in N_R)$ meaning, e.g., the concept of having a French team member for a role $r$: (\texttt{FootballTeam, Member}) and a concept description $C$ \agrave French nationality". Finally, \textit{individual names} are assembled in a set $N_I$ referring to elements of the \textit{domain} $\Delta^I$ by means of which the semantics of a DL is given (Table \ref{tab:DLSemantics}).

\index{TBox}\index{ABox}
A DL knowledge base consists of a \textit{TBox} and a \textit{ABox}. The \textit{TBox} is a finite set of \textit{general concept inclusions (GCIs)}\glossary{name={GCI}, description={General concept inclusion (DL)}} $C \sqsubseteq D$ expressing a subconcept-superconcept relation. Concept definitions $C \equiv D$ abbreviate two GCIs holding for both directions. The $ABox$ assigns concepts and roles to individual names.

\begin{table}
\index{Semantics}\index{Syntax}
\centering
\glossary{name={$\bot$}, description={Complete attribute set}, sort=$.Complete$}
\begin{tabular}{lll}
\textit{\textbf{Standard DL}} &Syntax &Semantics\\ \hline \hline
\textit{Basic sets} &&\\
$\quad$ Concept &$C \in N_C$ &$C^\mathcal{I} \subseteq \Delta^\mathcal{I}$\\
$\quad$ Empty concept &$\bot$ &$\bot^\mathcal{I} = \emptyset \subset \Delta^\mathcal{I}$\\
$\quad$ Most general concept &$\top$ &$\top^\mathcal{I} = \Delta^\mathcal{I}$\\
$\quad$ Role &$r \in N_R$ &$r^\mathcal{I} \subseteq \Delta^\mathcal{I} \times \Delta^\mathcal{I}$\\
$\quad$ Individual name &$a \in N_I$ &$a^\mathcal{I} \in \Delta^\mathcal{I}$\\ \hline
\textit{Constructors} &&\\
$\quad$ Negation &$\neg C$ &$\Delta^\mathcal{I} \setminus C^\mathcal{I}$\\
$\quad$ Conjunction &$C \sqcap D$ &$C^\mathcal{I} \cap D^\mathcal{I}$\\
$\quad$ Disjunction &$C \sqcup D$ &$C^\mathcal{I} \cup D^\mathcal{I}$\\
$\quad$ Existential restriction &$\exists r.C$ &$\{x \in \Delta^\mathcal{I} \mid \exists y \in \Delta^\mathcal{I}\colon (x,y) \in r^\mathcal{I} \wedge y \in C^\mathcal{I}\}$\\ 
$\quad$ General restriction &$\forall r.C$ &$\{x \in \Delta^\mathcal{I} \mid \forall (x,y) \in r^\mathcal{I}\colon y \in C^\mathcal{I}\}$\\ \hline
\textit{TBox} &&\\
$\quad$ GCI &$C \sqsubseteq D$ &$C^\mathcal{I} \subseteq D^\mathcal{I}$\\
\textit{ABox} &&\\
$\quad$ Concept assertion &$C(a)$ &$a^\mathcal{I} \in C^\mathcal{I}$\\\glossary{name={$C(a)$}, description={Concept assertion (DL)}}
$\quad$ Role assertion &$r(a,b)$ &$(a^\mathcal{I}, b^\mathcal{I}) \in r^\mathcal{I}$\\ \hline
\glossary{name={$r(a,b)$}, description={Role assertion (DL)}, sort=$Rab$}
\end{tabular}
\caption{Semantics of DL concept descriptions, TBoxes and ABoxes in terms of an interpretation $\mathcal{I}:= (\Delta^\mathcal{I}, \cdot^\mathcal{I})$, where $\Delta^I$ is a nonempty set, and the interpretation function $\cdot^\mathcal{I}$ is defined as above (compare \cite[p.~4f.]{Baa09a}).}
\label{tab:DLSemantics}
\end{table}

\subsubsection{Temporal extensions of description logics}\label{sec:tempDL}
A basic possibility of applying a DL to dynamic processes is to interprete the domain as a set of states and to describe transitions by roles \texttt{nextState} or \texttt{reachableState} (see Section \ref{sec:compDL}, where a translation of our approach into the language of DL will be discussed.)

There are also several extensions of DL by operators from the temporal logic LTL together with the definition of an appropriate semantics, e.g. in \cite[6.2.4]{Baa03}. I will delineate briefly the more detailed proposal of the temporal extension $\mathcal{TDL}-Lite_{bool}$ \cite{Art07}.\glossary{name={$\mathcal{TDL}-Lite_{bool}$}, description={Temporally extended DL}, sort=$TDL$}

Besides the usual \textit{global roles} $T_i,\:i \in I$, time-dependent \textit{local roles} $P_i$\index{Role (DL)!local} are introduced, denoting a one step transition like \texttt{nextState}. The semantics of a role $P_i^-$ is the inverse relation on the domain (the state set) $\Delta$. Furthermore, the temporal operators $\bigcirc$ (\agrave at the next moment") and $\mathcal{U}$ (\agrave until") are introduced. The semantics of the non-standard operators is listed in Table \ref{tab:TDLSemantics} with an interpretation according to Definition \ref{def:TDLInt}. The complete syntax is defined as follows:
\begin{align*}
R :=\: &P_i \mid P_i^- \mid T_i \mid T_i^-\\
B :=\: &\bot \mid A_i \mid \, \geq q\, R\\
C :=\: &B \mid \neg C \mid C_1 \sqcap C_2 \mid \bigcirc C \mid C_1\, \mathcal{U}\, C_2.
\end{align*}

\begin{definition}\label{def:TDLInt}
\glossary{name={$\Delta$}, description={Domain (DL)}, sort=$Delta$}
Given a nonempty set (domain) $\Delta$, object names $a_i$, concept names $C_i$, local role names $P_i$ and global role names $T_i$, $i \in I$, the semantics of $\mathcal{TDL}-Lite_{bool}$ is given by an \textsc{Interpretation Function} $\mathcal{I}$:
\[\mathcal{I}(n) := (\Delta, a_0^{\mathcal{I}(n)},..., A_0^{\mathcal{I}(n)},..., P_0^{\mathcal{I}(n)},..., T_0^{\mathcal{I}(n)},...),\]
where $n \in \N,\: a_i^{\mathcal{I}(n)} \in \Delta,\: A_i^{\mathcal{I}(n)} \subseteq \Delta,\: P_i^{\mathcal{I}(n)} \subseteq \Delta \times \Delta,\: (i \in I)$, and $a_i^{\mathcal{I}(n)} = a_i^{\mathcal{I}(m)},\: T_i^{\mathcal{I}(n)} = T_i^{\mathcal{I}(m)}$ for all $n,m \in \N$.
\end{definition}
Finally, the usual unique name assumption is made: $a_i^{\mathcal{I}(n)} \neq a_j ^{\mathcal{I}(n)}$ for all $i \neq j$. The two operators $\bigcirc$ and $\mathcal{U}$ are sufficient to define other temporal operators: $F\, C \equiv \top\, \mathcal{U}\, C$ (\agrave some time in the future") and $G\, C \equiv \neg F\, \neg C$ (\agrave always in the future").
\begin{table}
\centering
\begin{tabular}{lll}
$\mathbf{TDL-Lite_{bool}}$ &Syntax &Semantics\\ \hline \hline
At the next moment &$\bigcirc$\,$C$ &$\bigcirc\, C^{\mathcal{I}(n)} := C^{\mathcal{I}(n+1)}$\\
Until &$C_1\, \mathcal{U}\, C_2$ &$(C_1\, \mathcal{U}\, C_2)^{\mathcal{I}(n)} := \bigcup_{k>n} \huge( C_2^{\mathcal{I}(k)} \cap\, \bigcap_{n<m<k} C_1^{\mathcal{I}(m)} \huge)$\\ 
Inverse role &$R^-$ &$(R^-)^{\mathcal{I}(n)} := \{(y,x) \mid (x,y)\in R^{\mathcal{I}(n)}\}$\\ 
Related states &$\geq q\, R$ &$(\geq q R)^{\mathcal{I}(n)} := \{x \in \Delta \mid |\{y \mid (x,y) \in R^{\mathcal{I}(n)}\}| \geq q\}$\\ \hline
\end{tabular}
\caption{Semantics of the supplementary operators in $TDL-Lite_{bool}$.}
\label{tab:TDLSemantics}
\end{table}


\section{Systems biology}\label{sec:sysBiol}
As models for regulatory networks, linear or nonlinear ordinary differential equations are often used. Partial differential equations additionally make it possible to model spatial behaviour, e.g. cell differentiation and movement. Those models are used primarily for simulation and prediction and offer possibilities of subsequent analyses, e.g. of stability or bifurcation.

\subsection{Discrete models}\label{sec:discreteModels}

Methods of symbolic computation allow for further and differentiated analyses by logical queries. They are closer to the thinking of human experts and can also be applied if quantitative data is sparse or only the qualitative behaviour is known \cite[Introduction]{Cha04} \cite{Hec09}. For instance, methods of software or hardware verification have been adapted to systems biology, like the \textit{$\pi$-calculus},\index{$\pi$-calculus} a process algebra for concurrent computation. Molecules are represented as processes in which they participate and interactions as communication channels. The $\pi$-calculus is used for simulation and verification of assertions like \agrave Will a signal reach a particular molecule{?}`` \cite{Reg04}.

CTL was used in \cite{Cha04} to analyse protein-protein and protein-DNA interaction networks. The approach developed in the following chapters is based on CTL and Boolean networks.

\subsection{Boolean networks}\label{sec:BooleNets}
Boolean networks (BN)\index{Boolean network}\glossary{name={BN}, description={Boolean network}}
 are often applied to the analysis of gene regulation. By abstraction only two expression levels \textit{off} and \textit{on} (0/1 or -/+) are considered. This is justified since there exist relatively fixed thresholds of activation for many genes \cite{Ros03}. Also in continuous models the dynamics are often approximated by a steep sigmoid function (e.g. $f(t) := \frac{1}{1+\operatorname{e}^{-t}})$. Moreover, a switch-like behaviour may be strengthened by a positive feedback of a transcription factor on its own expression \cite[p. 14797]{Kau03}. The classical approach of Boolean networks\index{Boolean network} \cite{Kau69} \cite{Kau93} is able to capture essential dynamic aspects of regulatory networks and scales up well to larger sets of genes. Boolean networks require time-series data as input (\textit{reverse engineering}) and generate such data as output (\textit{simulation}). They can be represented as directed graphs with nodes labelled by Boolean functions, which determine one of two attribute values 0 or 1 for each entity (e.g. gene) after one time step (output) given the values of the entities\index{Entity} at a given moment (input). Boolean networks are widely used in molecular biology for logical analysis and simulation of medium or large scale networks \cite{Kla06} \cite{Ste07}. For example, Kervizic et al. developed a method for the cholesterol regulatory pathway in 33 species which eliminates spurious cycles in a synchronous Boolean network model \cite{Ker08}. A formal definition within our conceptual framework will be given in \ref{def:BoolNet}.

\chapter{Modelling discrete temporal transitions by FCA}\label{ch:FCAModelling}

Our intention was to develop an FCA approach into which different types of process models may be translated. In the following will be demonstrated how the types of universal coalgebras presented in Section \ref{sec:coalg} are representable by (a family of) \textit{transition contexts} (Section \ref{sec:trCxt}). First, a basic \textit{state context} will be defined, then in Section \ref{sec:transCxt} a \textit{transitive context} which makes the information related to reachable states explicit. Finally, attributes from the temporal logic CTL are integrated into a \textit{temporal context}.

\section{Example: Installing a wireless device}\label{sec:runEx}
In order to illustrate the definitions, the method and possible applications, I introduce a very simple example. More realistic applications to systems biology will be described in Chapters \ref{ch:bSubtilis} and \ref{ch:geneRegNets}. 

A Linux (Ubuntu) help page aims at guiding a user through the process of installing a wireless card and establishing an internet connection. The definition of formal contexts and attribute exploration (which will be described in Chapter \ref{ch:useAttrEx}) may support a good structure of this page (e.g. by hyperlinks to subsequent steps) and can prevent to forget occurring cases. It might even be used to determine the process logic with the purpose of establishing an expert system. 

The formal context of Table \ref{tab:wlanStateCxt} relates states to attributes indicating which of four main steps of the installation process (including an alternative) are accomplished. In Definition \ref{def:stateCxt} this type of contexts will be called state context and referred to an LTSA.
\begin{table}
\begin{center}
\begin{tabular}{|l|c|c|c|c|}\hline
&\texttt{driver.linux} &\texttt{ndiswrapper} &\texttt{driver.windows} &\texttt{connection}\\ \hline \hline
$s_0$ &&&&\\ \hline
$s_{10}$ &$\times$&&&\\ \hline
$s_{11}$ &&$\times$&&\\ \hline
$s_{20}$ &$\times$&&&$\times$\\ \hline
$s_{21}$ &&$\times$&$\times$&\\ \hline
$s_{31}$ &&$\times$&$\times$&$\times$\\ \hline
\end{tabular}
\end{center} 
\caption{Formal context indicating in the columns (attributes) which of four main steps of a wireless card installation process are accomplished. The row names (objects) denote succeeding states. The indices suggest a branching after the initial state $s_0$ into procedures for a native Linux driver (states $s_{i0}$) or for a driver originally developed for Windows operating systems (states $s_{i1}$).}
\label{tab:wlanStateCxt}
\end{table}
\begin{itemize}
 \item \texttt{driver.linux}: Newer versions of Ubuntu provide full ``out of the box'' support for several wireless cards. In other cases, the driver has to be downloaded manually, unpacked to an appropriate directory and compiled.
 \item \texttt{driver.windows}: Often, no open source driver exists; then the Windows driver can be used.
 \item \texttt{ndiswrapper}: The Linux module \textit{ndiswrapper} has been developed with the purpose of using a Windows driver. Since it does not belong to the basic Ubuntu distributions, it must be installed first.
 \item \texttt{connection}: Finally, the individual connection data is entered (usually ESSID and password for the router). At this basic stage of process modelling, \texttt{connection} is the final state\index{State!final} and signifies overall success, i.e. an established internet connection.
\end{itemize}

\section{The state context $\Ks$ and some useful scalings}\label{sec:stateCxt}
In order to investigate a process, the occurring states have to be defined first. We do this by means of their attributes, i.e. by a formal context.
\index{State context}
\glossary{name={$\Ks$}, description={State context}, sort=$Ks$}
\glossary{name={$M$}, description={Attribute set of a state context}}
\glossary{name={$I$}, description={Relation of a state context}}
\begin{definition}\label{def:stateCxt} \cite[p.~147]{Rud01}
The formal context $\mathbb{K}_s:=(S,M,I)$ with the state set $S$, the attribute set $M$ and relation $I \subseteq S \times M$ from an LTSA $(S,M,I,A,R)$ is called the \textsc{State Context} of this LTSA. 
\end{definition}
For an example see Table \ref{tab:wlanStateCxt}. A non clarified state context may contain diverse states indiscernable by the attributes, related to different time points or observations. Then, the information regarding time granules may be coded in the object names (compare Table \ref{tab:3GenesCxt}). As for the attributes, we are focusing primarily on the logic of the state space, i.e.~of $\mathfrak{B}(\mathbb{K}_{C})$ in TCA. In (biological) regulatory networks, one is more interested in what happens before or after a certain class of states, and less in exact time points. However, a coarse granularity of time could be useful to describe, e.g., early and late activation of gene expression. For this purpose, our framework may be easily applied to the situation space $\mathfrak{B}(\mathbb{K}_{TC})$ by introducing a supplementary many-valued attribute ``time point'' or ``time interval''.

If the state context is clarified, a state is attribute defined (i.e. unambigously identifiable by its attributes). If further nominal scaling (p. \pageref{nomScaling})\index{Scaling!nominal} is applied, a unique value is assigned to each attribute or variable. Then, a state may be identified with a function \mbox{$s \in F^E := \{E \rightarrow F\}$}
\glossary{name={$F^E$}, description={Set of mappings $\{E \rightarrow F\}$}} with
\begin{itemize}
\item The universe $E$. The elements of $E$ will be called \textit{entities}.\index{Entity}
They represent the objects of the world which we are interested in\glossary{name={$E$}, description={Entities, universe}} (installation steps, measuring devices, genes, etc.).
\item The set $F$ (\textit{fluents}) denotes changing properties of the entities.\glossary{name={$F$}, description={Fluents}}\index{Fluent} It is the union of the scale values, for all $e \in E$.
This descriptive term is adopted from the \textit{fluent calculus}, an agent based modelling and reasoning method \cite{Thi05}.
\end{itemize}
With these restrictions, Definition \ref{def:stateCxt} is equivalent to 
\begin{definition}\label{def:stateCxt1}
Given sets $S$ (states), $E$ (entities), $F$ (fluents) and a function \mbox{$\gamma: S \rightarrow F^E$}, a \textsc{state context} is a formal context
$(S,M,I)$ with $M \subseteq E \times F$. Its
relation $I$ is given as $s\, I\,(e,f) \Leftrightarrow \gamma(s)(e) = f$, for all $s \in
S,\: e \in E$ and $f \in F$.
\end{definition}
Thus, in the language of automata theory and Kripke structures, the output function\index{Output function} $\gamma: S \rightarrow D$ maps to the data set $D:= F^E \subseteq \mathfrak{P}(E \times F)$. A state is a function name and a row of the context defines the mapping.

Besides nominal or dichotomic (for $F :=\{0,1\}$) scaling as in Definition \ref{def:stateCxt1},  different scalings\index{Scaling} may be useful, if the state context is derived from a many-valued context $(S,E,F,J)$. \textit{Biordinal scaling} (\cite[Definition 31]{GW99}, Table \ref{tab:biordScaling}) differentiates low and high measured values into several classes according to thresholds. Simultaneously a coarser and finer ``clustering'' of observed values may be expressed, as well as imprecise knowledge: Intermediate scale values can be represented without loss of information for the extreme values. This is biologically relevant, if for instance a transcription factor activates or inhibits different genes at different expression levels.

A similar scale \cite[Figure 3]{Gan05} may also be appropriate in the case of possible imprecise measuring or if no precise threshold of effectiveness (\agrave high") is known. In addition to \agrave low" (e.g. $\leq 300$) and \agrave high" ($\geq 600$), this scale contains the attributes \agrave not low" ($> 300$) and \agrave not high" ($<600$) expressing intermediate values. Of course, a scale can have even more discretisation steps (scale attributes).
\begin{table}
\begin{center}a)
\begin{tabular}[t]{|l||c|c|}\hline
&ETS1 &SMAD4\\ \hline \hline
$s_0$ &280&305\\ \hline
$s_1$ &345&567\\ \hline
$s_2$ &628&410\\ \hline
\end{tabular}
$\qquad$
b)
\begin{tabular}[t]{|l||c|c|c|c|}\hline
 &$\leq 300$ &$\leq 450$ &$> 450$&$\geq 600$ \\ \hline \hline
280 &$\times$&$\times$&&\\ \hline
305 &&$\times$&&\\ \hline
345 &&$\times$&&\\ \hline
410 &&$\times$&&\\ \hline
567 &&&$\times$&\\ \hline
628 &&&$\times$&$\times$\\ \hline
\end{tabular}\\[5mm]

c)
\begin{tabular}{|l||c|c|c|c|c|c|c|c|}\hline
&\begin{sideways}ETS1$\leq 300$\end{sideways}
&\begin{sideways}ETS1$\leq 450$\end{sideways}
&\begin{sideways}ETS1$\geq 450$\end{sideways}
&\begin{sideways}ETS1$\geq 600$\end{sideways}
&\begin{sideways}SMAD4$\leq 300$\end{sideways}
&\begin{sideways}SMAD4$\leq 450$\end{sideways}
&\begin{sideways}SMAD4$> 450$\end{sideways}
&\begin{sideways}SMAD4$\geq 600$\end{sideways}\\ \hline \hline
$s_0$ &$\times$&$\times$&&&&$\times$&&\\ \hline
$s_1$ &&$\times$&&&&&$\times$&\\ \hline
$s_2$ &&&$\times$&$\times$&&$\times$&&\\ \hline
\end{tabular}
\end{center}\label{tab:biordScaling}
\caption{A small part of a single gene expression time series for cells stimulated with TGF$\beta$1; the complete data set will be analysed in Chapter \ref{ch:geneRegNets}. The states $s_0, s_1$ and $s_2$ represent mRNA measurements of the transcription factors ETS1 and SMAD4 at three time points. \textit{a)}: original data, considered as a many-valued context (Definition \ref{def:MVcontext}).\index{Formal context!many-valued} \textit{b)}: discretising biordinal scale (Definition \ref{def:Scaling}).\index{Scaling!biordinal} \textit{c)}: derived one-valued context.}
\end{table}

\section{The transition context $\Kt$}\label{sec:trCxt}
In order to express dynamics, we need a supplementary structure: a relation $R \subseteq S \times S$\glossary{name={$R$},description={Transition relation}} indicating temporal transitions between the states. The output function $\gamma\colon S \rightarrow \mathfrak{P}(M)$ is representable by a state context $(S, M, I)$ (each row defines $s \mapsto \gamma(s)$, for $s \in S$). Since $R$ is in one-to-one correspondence to a transition function\index{Transition function} $\delta\colon S \rightarrow \mathfrak{P}(S)$, we have a Kripke structure. It is expressed as a unique mathematical structure (more integrated than a tuple of sets and maps like in Definition \ref{def:automaton}), following an approach in \cite {Rud01}. In this work, an \textit{action context} of an LTSA has been introduced, representing the Kripke structure for each action $a \in A$ by a unique formal context and allowing attribute exploration of dynamic properties. Moreover, it follows from the definitions in Chapter \ref{ch:logic} that automata and LTSA are representable by a family of action contexts ${(\Kt^a)}_{a \in A}$, which here are called \textit{transition contexts} (Theorem \ref{theorem:coAlg}). We concentrate on single transition contexts where the LTSA relation $R \subseteq S \times \{a\} \times S$ is identifiable with $R \subseteq S \times S$.

\index{Transition context}\glossary{name={$\Kt$}, description={Transition context}, sort=$Kt$}
\glossary{name={$\nabla$}, description={Relation of a transition context}, sort=$Nabla$}
\glossary{name={$s^{in}$}, description={Input state}, sort=$S^in$}\index{State!input}
\glossary{name={$s^{out}$}, description={Output state}, sort=$S^out$}\index{State!output}
\begin{definition}\label{def:trCxt}
Given a state context $\mathbb{K}_s = (S,M,I)$ and a relation \mbox{$R \subseteq S\times S$}, the \textsc{transition context} $\mathbb{K}_t$ of $\mathbb{K}_s$ with respect to $R$ is the context $(R, M\times\{in,out\},\nabla)$ with relation $\nabla$:
\[(s^{in},s^{out}) \nabla (m,i) :\Leftrightarrow s^i I\, m \qquad \text{for all } m \in M,\: i \in \{in,out\} \text{ and } (s^{in},s^{out}) \in R.\]
\end{definition}
\begin{table}
\centering
\begin{tabular}{|l||c|c|c|c||c|c|c|c|}
\hline
&\begin{sideways}\texttt{driver.linux.in}\end{sideways}
&\begin{sideways}\texttt{ndiswrapper.in}\end{sideways}
&\begin{sideways}\texttt{driver.windows.in}\end{sideways}
&\begin{sideways}\texttt{connection.in}\end{sideways}
&\begin{sideways}\texttt{driver.linux.out}\end{sideways}
&\begin{sideways}\texttt{ndiswrapper.out}\end{sideways}
&\begin{sideways}\texttt{driver.windows.out}\end{sideways}
&\begin{sideways}\texttt{connection.out}\end{sideways}\\
\hline \hline
$(s_0,s_{10})$ &&&&&$\times$&&&\\ \hline 
$(s_0,s_{11})$ &&&&&&$\times$&&\\ \hline
$(s_{10},s_{20})$ &$\times$&&&&$\times$&&&$\times$\\ \hline
$(s_{11},s_{21})$ &&$\times$&&&&$\times$&$\times$&\\ \hline
$(s_{21},s_{31})$ &&$\times$&$\times$&&&$\times$&$\times$&$\times$\\ \hline
$(s_{20},s_{20})$ &$\times$&&&$\times$&$\times$&&&$\times$\\ \hline
$(s_{31},s_{31})$ &&$\times$&$\times$&$\times$&&$\times$&$\times$&$\times$\\ \hline
counterEx$_1$ &&&&&&&$\times$&\\ \hline
counterEx$_2$ &&&$\times$&&&$\times$&$\times$&\\ \hline
\end{tabular}
\caption{Transition context related to four main steps of the Linux installation process for a wireless card. The objects are pairs of succeeding states from Table \ref{tab:wlanStateCxt}. The four left columns denote attributes of the input, the right columns of the output states. The last two rows are counterexamples introduced during attribute exploration (cf. Section \ref{sec:exEx}).}\label{tab:wlanTrCxt}
\end{table}
Thus, a transition context is a subcontext of the semiproduct\index{Semiproduct of formal contexts}\label{subCxtScaling} $\mathbb{K}_s\: \begin{sideways}$\bowtie$\end{sideways}\; \mathbb{K}_s$ (Definition \ref{def:semiproduct}) of a state context with itself. It may be regarded as the context derived from the many-valued context $(R, \{\text{in,out}\}, S, J)$:\label{scaleKsKt}
\begin{center}
\begin{tabular}{|c|c|c|}\hline
&in &out\\ \hline
$t_0$ &$s_0$ &$s_1$\\ \hline
...&&\\ \hline
$t_n$ &$s_i$ &$s_j$\\ \hline
\end{tabular}
\end{center}
by scaling both attributes with $\mathbb{K}_s$ ($t_0, ..., t_n,\: n \in \N_0$ are transitions, $i,j \in \{0,1,...,{|S|-1}\}$). Hence, a transition context is derived by replacing the attributes by the rows of $\Ks$ for the input and output state of the respective transition.

Transitions may reflect observations repeated at different time points, or they may be generated by a dynamic model. In this respect, we focus on BN (Section \ref{sec:BooleNets}). Dichotomic scaling\index{Scaling!dichotomic} will be applied, if $0$ is regarded as explicit attribute and implications with this attribute are meaningful. For instance, both low and high expression values of different genes or even of the same gene can have effects.
\begin{definition}\label{def:BoolNet}
Let $E$ be an arbitrary set of entities and $F:=\{0,1\}$ a set of fluents. A transition function\index{Transition function} $\delta\colon F^E \rightarrow F^E$ is called a \textsc{Boolean Network}.
\index{Boolean network}\glossary{name={BN}, description={Boolean network}}
\end{definition}
Given $E$ and $F$, $F^E$ is the set of all possible state descriptions (compare Definition \ref{def:stateCxt1}). Together with an injective (i.e., $s \in S$ is attribute defined) output function\index{Output function} $\gamma\colon S \rightarrow F^E$, a BN defines a transition function $\delta'\colon S \rightarrow S$ by $\delta' (s) := (\gamma^{-1} \circ \delta \circ \gamma)(s)\:(s \in S)$, hence a deterministic\index{Process!deterministic} Kripke structure.\index{Kripke structure} $\delta'$ is well-defined, if the state set is chosen large enough, i.e. if all state descriptions generated by $\delta$ correspond again to a state $s \in S$: $\forall s\in S\colon (\delta \circ \gamma)(s) \in \gamma(S)$. 

For $|E|=n$, a BN is an $n$-ary Boolean function and $F^E \cong \{0,1\}^n$. For ease of notation, $s \in S$ is identified with $\gamma(s) \in F^E$. With states $s \in \{0,1\}^n$ and coordinate functions $\delta_j \colon \{0,1\}^n \rightarrow \{0,1\},\: j=1, \dots n$, the transition function $\delta$ is given by
\[\delta(s) := 
\begin{pmatrix}
\delta_1(s)\\
\vdots\\
\delta_n(s)
\end{pmatrix} \]

A BN is representable by a directed graph where only the edges $(i,j)$ from influencing entities to an output node are drawn: $\exists\, s \in \{0,1\}^n\colon \delta_j(s \mid s_i = 0) \neq \delta_j(s \mid s_i = 1)$.\footnote{The notation $(s \mid s_i = 0)$ means: In the tuple $s=(s_0,\dots, s_{n-1}) \in \{0,1\}^n$ the entry $s_i, i \in \{0, \dots, n\}$ is replaced by 0.} The nodes are labelled by the coordinate functions $\delta_j$.

A BN generates a dynamic simulation, i.e. a process, by repeated application of $\delta'$ to a set of start states $S^{start} \subseteq S$. After each discrete time step, all component functions may be updated simultaneously or with specific time delays (synchronous or asynchronous BN).
Boolean networks may be generalised in order to include nondeterminism. Then, different output states\index{State!output} are generated from a single input state (compare \cite{Wol07}, Section \ref{sec:sporulation} and Table \ref{BoolEq}).


\section{The transitive context $\mathbb{K}_{tt}$}\label{sec:transCxt}\index{Transitive context}\glossary{name={$\Ktt$}, description={Transitive context}, sort=$Ktt$}
It appears promising to consider the transitive closure $t(R) = \bigcup_{n\in \mathbb{N}} R^n$,\glossary{name={$t(R)$},description={Transitive closure of the relation $R$}}
i.e. $(s_0,s_1) \in t(R)$ for two elements $s_0$ and $s_1$ of $S$, if
there exist $\alpha_0, \alpha_1, ..., \alpha_n \in S$ with $\alpha_0=s_0, \alpha_n=s_1$ and $(\alpha_r, \alpha_{r+1}) \in R$ for all $0\leq r < n$. That means, the state $s_1$ emerges from $s_0$ by some transition sequence of arbitrary finite length. A transitive context contains explicit information regarding reachable states.
\begin{definition}\label{def:transCxt}
The \textsc{Transitive context} $\mathbb{K}_{tt}$ of a given transition context $\Kt := (R, M\times\{in,out\},\nabla)$ is the formal context with object set $t(R)$, the transitive closure of $R$. $\nabla$ is extended accordingly: 
\[\mathbb{K}_{tt} := (t(R), M\times\{in,out\},\nabla).\]
\end{definition}

\begin{table}
\centering
\begin{tabular}{|l||c|c|c|c||c|c|c|c|}
\hline
&\begin{sideways}\texttt{driver.linux.in}\end{sideways}
&\begin{sideways}\texttt{ndiswrapper.in}\end{sideways}
&\begin{sideways}\texttt{driver.windows.in}\end{sideways}
&\begin{sideways}\texttt{connection.in}\end{sideways}
&\begin{sideways}\texttt{driver.linux.out}\end{sideways}
&\begin{sideways}\texttt{ndiswrapper.out}\end{sideways}
&\begin{sideways}\texttt{driver.windows.out}\end{sideways}
&\begin{sideways}\texttt{connection.out}\end{sideways}\\
\hline \hline
$(s_0,s_{10})$ &&&&&$\times$&&&\\ \hline 
$(s_0,s_{20})$ &&&&&$\times$&&&$\times$\\ \hline
$(s_0,s_{11})$ &&&&&&$\times$&&\\ \hline
$(s_0,s_{21})$ &&&&&&$\times$&$\times$&\\ \hline
$(s_0,s_{31})$ &&&&&&$\times$&$\times$&$\times$\\ \hline
$(s_{10},s_{20})$ &$\times$&&&&$\times$&&&$\times$\\ \hline
$(s_{11},s_{21})$ &&$\times$&&&&$\times$&$\times$&\\ \hline
...&&&&&&&&\\ \hline
\end{tabular}
\caption{Transitive context derived from the transition context of Table \ref{tab:wlanTrCxt}. Now the objects are transitions, which relate an input to an output state succeeding after an arbitrary number of time steps.}
\label{tab:wlanTransCxt}
\end{table}


\section{The temporal context $\mathbb{K}_{tmp}$}\label{sec:tmpCxt}
A transition context represents a Kripke structure by which the semantics\index{Semantics} of a temporal logic is given. It thus generates a new \textit{temporal context}: the attributes of the underlying state context are extended by the set of atomar propositions formed with the original attributes and three operators from temporal logic. Using the corresponding transitive context for definitions will show to be more convenient in important cases like deterministic processes.

The language CTL is chosen since it is quite universal and includes nondeterminism. A major restriction, however, is that CTL does not provide operators related to the past. First, we do not want to enlarge the number of operators within this basic study and therefore will even consider a subset of CTL operators explicitly. More importantly, the principal aim of this thesis are applications related to user guidance, process control or predictions. Finally, in the transition and transitive contexts, output-input implications related to the past hold. Hence, in principle past operators could be defined analogously to the following introduction of future operators. In this case, the asymmetry of time should be kept in mind, usually assumed in temporal logics and highlighted by Prior. Since there is no \agrave retrospective nondeterminism", only the deterministic versions of the subsequent definitions should be adapted. In a nondeterministic transition context, this is not quite natural and needs some technical efforts.

In order to emphasise the structural difference, in CTL, of the path and temporal operators, I use Priors (as well as the modal logic) notation $\Diamond$ for the possibility operator (corresponding to $E$) and $\Box$ for the necessity operator (corresponding to $A$). As for the temporal operators, we focus on $F$ (``eventually''), $G$ (``always'') and $\neg F$ (``never''). 

\begin{definition}\label{def:path}\glossary{name={$Seq_R$}, description={Sequences generated by a relation $R$}}
Let $R \subseteq S \times S$ be a relation. Then a \textsc{Path} is a finite sequence $\pi := (s_0,\ldots,s_n) \in S^{n+1}\: (n \in \N)$ or an infinite sequence $\pi := (s_i)_{i \in\N_0}$, so that $(s_i,s_{i+1}) \in R$ for all $0 \leq i < n$ or $i \in\N_0$, respectively. If $\pi$ is finite, it is required to be maximal: $\nexists\, s_{n+1} \in S\colon (s_n,s_{n+1})\in R$. The set of all paths $\pi$ generated by $R$ is denoted by $Seq_R$.
\end{definition}

In CTL usually infinite paths are required, hence a total relation: $\forall {s \in S}\: \exists\, {s'\in S} \colon (s,s') \in R$ \cite[Section 4.1]{Cha04}. If there are final states $s_F$\index{State!final} like for an automaton, assuming $(s_F,s_F) \in R$ makes the relation total. In applications, however, observations or predictions are often incomplete. Then we accept that the relation $R$ is not total. 

\begin{definition}\label{def:tmpCxt}\index{Temporal context}\glossary{name={$\Ktmp$}, description={Temporal context}, sort=$Ktmp$}
\glossary{name={$T$}, description={Set of temporal attributes}}
\glossary{name={$\Diamond$}, description={Possibility operator}, sort=$.Pos$}
\glossary{name={$\Box$}, description={Necessity operator}, sort=$.Nec$}
\glossary{name={$F$}, description={Eventually}}
\glossary{name={$G$}, description={Always}}
\glossary{name={$\neg F$}, description={Never}, sort=$.Never$}
Given a state context $\mathbb{K}_s=(S,M,I)$ and a relation $R \subseteq S \times S$, a \textsc{temporal context} is defined as $\mathbb{K}_{tmp} :=(S^{in}, M \cup\, T, I^{in} \cup I_T)$. The state set is restricted to the set of input states\index{State!input} $S^{in} := \{s \in S \mid \exists s' \in S \colon (s,s') \in R\}$, $I$ to $I^{in}$ correspondingly, whereas the attribute set is extended by $T := \{\Diamond Fm \mid\: m \in M\}\, \cup\, \{\Box Fm \mid\: m \in M\}\, \cup\, \{\Diamond Gm \mid\: m \in M\}\, \cup\, \{\Box Gm \mid\: m \in M\}\, \cup\, \{\Diamond \neg Fm \mid\: m \in M\}\, \cup\, \{\Box \neg Fm \mid\: m \in M\}$. Let $s \in S^{in}$ and  $Seq_R^s := \{\pi \in Seq_R \mid s_0 = s\}$. The relation $I_T$ is defined as follows:
\begin{align}
s\: I_T\:\Diamond Fm :\Leftrightarrow\: &\exists\, \pi \in Seq^s_R\: \exists\, i \in \mathbb{N}:\:  (s_i, m) \in I\\
s\: I_T\:\Box Gm :\Leftrightarrow\: &\forall\, \pi \in Seq^s_R\: \forall\, i \in \mathbb{N}:\: (s_i, m) \in I\label{eq:allAlways}\\
s\: I_T\:\Box \neg Fm :\Leftrightarrow\: &\forall\, \pi \in Seq^s_R\: \forall\, i \in \mathbb{N}:\:  (s_i, m) \notin I\label{def:nev}\\
s\: I_T\:\Box Fm :\Leftrightarrow\: &\forall\, \pi \in Seq^s_R\: 
\exists\, i \in \mathbb{N}:\:  (s_i, m) \in I\label{eq:allEv}\\
s\: I_T\:\Diamond Gm :\Leftrightarrow\: &\exists\, \pi \in Seq^s_R\: \forall\, i \in \mathbb{N}:\:  (s_i, m) \in I\\
s\: I_T\:\Diamond \neg Fm :\Leftrightarrow\: &\exists\, \pi \in Seq^s_R\: \forall\, i \in \mathbb{N}:\: (s_i, m) \notin I
\end{align}
For $B \subseteq M$, set $\Diamond FB$ := $\{\Diamond Fm_1,..., \Diamond Fm_n \mid m_1,...,m_n \in B\}$, and so forth.
\end{definition}
\glossary{name={$\mathbb{K}_T$}, description={Part of $\Ktmp$ with attributes $T$}, sort=$KT$}
$\mathbb{K}_{tmp}$ is the apposition\index{Formal context!apposition} $\mathbb{K}^{in}_s \mid \mathbb{K}_T$ of a state context $\mathbb{K}^{in}_s := (S^{in}, M, I^{in})$ and $\mathbb{K}_T:= (S^{in}, T, I_T)$. It should be kept in mind that sets of temporal attributes may relate to different paths. We understand $0 \notin \N$,\glossary{name={$\mathbb{N}$}, description={Set of integers, $0 \notin \N$}, sort=$N$} hence the definitions refer to states subsequent to $s$ and have a clear dynamic meaning. This is in accordance to Definition \ref{def:path}, which does not admit sequences $(s_0)$.


\begin{table}
\begin{center}
\begin{tabular}{|l|c|c|c|c|c|c|c|c|c|c|c|c|c|c|c|c|c|}\hline
&\begin{sideways}\texttt{driver.linux}\end{sideways} &\begin{sideways}\texttt{ndiswrapper}\end{sideways} &\begin{sideways}\texttt{driver.windows}\end{sideways} &\begin{sideways}\texttt{connection}\end{sideways}
&\begin{sideways}$\Diamond F\,$\texttt{driver.linux}\end{sideways} 
&\begin{sideways}$\Box\, G\,$\texttt{driver.linux}\end{sideways}
&\begin{sideways}$\Box\, \neg F\,$\texttt{driver.linux}\end{sideways}
&\begin{sideways}$\Box\, F\,$\texttt{driver.linux}\end{sideways}
&\begin{sideways}$\Diamond\, G\,$\texttt{driver.linux}\end{sideways}
&\begin{sideways}$\Diamond\, \neg F\,$\texttt{driver.linux}\end{sideways}
&\begin{sideways}...\end{sideways}
&\begin{sideways}$\Diamond\, F\,$\texttt{connection}\end{sideways} 
&\begin{sideways}$\Box\, G\,$\texttt{connection}\end{sideways}
&\begin{sideways}$\Box\, \neg F\,$\texttt{connection}\end{sideways}
&\begin{sideways}$\Box\, F\,$\texttt{connnection}\end{sideways}
&\begin{sideways}$\Diamond\, G\,$\texttt{connection}\end{sideways}
&\begin{sideways}$\Diamond\, \neg F\,$\texttt{connnection}\end{sideways}
\\ \hline \hline
$s_0$ &&&&&$\times$&&&&$\times$&$\times$&&$\times$&&&$\times$&&\\ \hline
$s_{10}$ &$\times$&&&&$\times$&$\times$&&$\times$&$\times$&&&$\times$&$\times$&&$\times$&$\times$&\\ \hline
$s_{11}$ &&$\times$&&&&&$\times$&&&$\times$&&$\times$&&&$\times$&&\\ \hline
$s_{20}$ &$\times$&&&$\times$&$\times$&$\times$&&$\times$&$\times$&&&$\times$&$\times$&&$\times$&$\times$&\\ \hline
$s_{21}$ &&$\times$&$\times$&&&&$\times$&&&$\times$&&$\times$&$\times$&&$\times$&$\times$&\\ \hline
$s_{31}$ &&$\times$&$\times$&$\times$&&&$\times$&&&$\times$&&$\times$&$\times$&&$\times$&$\times$&\\ \hline
\end{tabular}
\end{center} 
\caption{Temporal context (Definition \ref{def:tmpCxt}) extending the state context of Table \ref{tab:wlanStateCxt} by attributes from temporal logic. $\Diamond\colon \text{for one path}$, $\Box\colon \text{for any path}$, $F\colon \text{eventually}$, $G\colon \text{always}$, $\neg F\colon \text{never}$. More specifically, the last three predicates will be used for $\Diamond F$, $\Box\, G$ or $\Box\, \neg F$, respectively.}
\label{tab:wlanTmpCxt}
\end{table}
As we admit finite paths and do not presuppose a total relation $R$, the objects of $\Ktmp$ are restricted to input states in order to have an exact correspondence to $\Ktt$. In Chapter \ref{ch:bg}, inferences between implications of the two formal contexts will be investigated. 

The temporal attributes may also be defined by $\Diamond$, $F$ and $\neg$ only. For all $s \in S$ and all $m\in M$ holds: 
\[s\, I_T\, \square \neg F m \Leftrightarrow s\, I_T\, \neg \Diamond F m\]
Thus, one attribute can be expressed by the absence of the other. Both are needed in the case of dichotomically scaling\index{Scaling!dichotomic} $\Diamond Fm$, i.e. if one is interested in implications with the attribute $\neg \Diamond Fm$, or vice versa. The same holds for $\Box Fm \Leftrightarrow \neg \Diamond \neg Fm$ and $\Box Gm \Leftrightarrow \neg \Diamond F \neg m.$

Since a path is defined by $R$, at least partial knowledge of $\mathbb{K}_t$ is required to decide if the temporal attributes hold. $\Kt$ cannot be reconstructed unambigously from a transitive context $\mathbb{K}_{tt}$ (whereas the inverse holds, of course). However, in important cases knowledge of the transitive context is sufficient. It contains information regarding reachable states, while the paths do not have to be known. 

\begin{remark}\label{rem:semanticsKtt}
\textnormal{The first three temporal attributes may be decided more conveniently  by the transitive context $\mathbb{K}_{tt} = (t(R), M\times\{in,out\},\nabla)$ generated by $\mathbb{K}_t$. For all $s \in S^{in}$ and all $m \in M$ holds:
\begin{align*}
s\: I_T\:\Diamond Fm \Leftrightarrow\: &\exists\, (s^{in}, s^{out}) \in
t(R)\colon s= s^{in} \wedge ((s^{in}, s^{out}), m^{out}) \in \nabla\\
s\: I_T\:\Box Gm \Leftrightarrow\: &\forall \,(s^{in}, s^{out}) \in t(R) \text{ with } s= s^{in}\colon ((s^{in}, s^{out}), m^{out}) \in \nabla\\
s\: I_T\: \Box \neg F m \Leftrightarrow\: &\forall\, (s^{in}, s^{out}) \in t(R)\text{ with } s= s^{in}\colon  ((s^{in}, s^{out}), m^{out}) \notin \nabla.
\end{align*}
As $s \in S^{in}$ is presupposed, $s\, I_T\,\Box Gm$ implies $s\, I_T\,\Diamond Fm$.}
\end{remark}
In the following, \agrave eventually'',\index{Eventually}\glossary{name={$\operatorname{ev}$}, description={Eventually, $\Diamond F$}, sort=$Ev$} \agrave always''\index{Always}\glossary{name={$\operatorname{alw}$}, description={Always, $\Box\, G$}, sort=$Alw$} and \agrave never''\index{Never}\glossary{name={$\operatorname{nev}$}, description={Never, $\Box\, \neg F$}, sort=$Nev$} will have the specific semantics given by the three properties, which only makes a difference for nondeterminism. In the deterministic case,\index{Process!deterministic} a single path exists starting from an arbitrary state $s_0 \in S^{in}$. Therefore, also the other attributes are definable by $\mathbb{K}_{tt}$, with $\Box Fm = \Diamond Fm, \Diamond Gm = \Box Gm$ and $\Diamond \neg Fm = \Box \neg Fm$.

$\Box G \neg m$ expresses a safety property\index{Safety}, $\Box Fm$ a liveness property\index{Liveness} (see Section \ref{sec:CTL}). In the following applications we focus on the properties definable by the transitive context, thus on $\Diamond Fm$ instead of $\Box Fm$. The property is adequate to stochastic data like in biology and interesting as negation of safety.

The remaining operators from CTL -- $X$ (\agrave next'')\index{Next} and $U$ (\agrave until'')\index{Until} -- are definable as follows, for $m\in M$, $B \subseteq M$\label{nextOp} and for all $s \in S^{in}$:
\begin{align*}
s\: I_T\:\Diamond\, Xm :\Leftrightarrow\: &\exists\, \pi \in Seq^s_R\colon  (s_1, m) \in I\\ \glossary{name={$X$}, description={Next}}
\Leftrightarrow\: &\exists\, (s_{in}, s_{out}) \in R\colon (s_{in} = s) \wedge  (s^{in}, s^{out})\, \nabla \, m^{out}\\
s\: I_T\:\Diamond\, m\, U B :\Leftrightarrow\: &\exists\, \pi \in Seq^s_R\: \exists\, j \in \mathbb{N}_0\colon (s_j \in B^I) \wedge (\forall\, 0 \leq i < j\colon (s_i, m) \in I).\glossary{name={$U$}, description={Until}}
\end{align*}
With the condition $i<j$, $m\, U B$ is satisfied if the start state of $\pi$ has all attributes in $B$, i.e. if those attributes are contained in the object intent $s'$. The definitions for the necessity operator $\Box$ are analogous.

All CTL operators may be constructed with the aid of $\Diamond$, $X$, $U$ and $\neg$, e.g. $\Diamond\, B = \Diamond\, \top\,U B$ (compare \cite[p. 2]{Art07}). However, the two operators will not be examined below. $X$ is expressed by a transition context and $U$ enlarges the set of temporal attributes $T$ excessively, since it connects two attribute sets $\{m\}, B \subseteq M$.

\chapter{Using attribute exploration of the defined formal contexts}\label{ch:useAttrEx}
\chaptermark{Using attribute exploration}

Attribute exploration of discrete temporal transitions aims at unfolding the dynamics of a process by investigating its rules. In the last chapter, four data structures were defined: A state context $\Ks$ describes by observable attributes the different states occuring during a temporal development. It is enlarged to a context $\Ktmp$ by attributes constructed with operators from temporal logic. Their semantics\index{Semantics} is given by a transition context $\Kt$. For deterministic processes the related transitive context $\Ktt$ is sufficient. It expresses knowledge about reachable states and their attributes.

In order to give an intuition of the methodic possibilities, I start with the wireless card example of Section \ref{sec:runEx}. In this example, the proposed implications are decidable by a human basically experienced with a Linux operating system. In Section \ref{sec:knowData}, more structured interactions between theory and observations will be presented. One purpose is to relate observations to existing knowledge by a scientist. Inversely, model predictions can be validated empirically. I propose a procedure which will be applied in the main biological example of Chapter \ref{ch:geneRegNets}. In Chapter \ref{ch:bSubtilis}, the dynamics of a biological model from literature are analysed by computing the stem base of the related transitive context. In Section \ref{sec:appAutomata}, the universal applicability of our approach to three important classes of process models is summarised in a theorem: LTSA, Kripke structures and automata are representable by a family of transition contexts.

Given a set of observations related to the input-output-behaviour, the attribute exploration algorithm asks if those observations are complete. \texttt{Next Closure}\index{Attribute exploration!Next Closure algorithm}, the core of the algorithm, finds the next pseudo-closed set with respect to the \textit{lectic order}\index{Lectic order} of the attributes \cite[p. 66f., 85]{GW99}; regarding the supplementary criterion of set inclusion it is the smallest one. If the corresponding implication is rejected, the counterexample represents a minimal missing observation in this sense. Furthermore, the number of newly introduced observations is generally small, in correspondence to the minimality of the stem base. 

\section{Example}\label{sec:exEx}
The stem base of the state context (Table \ref{tab:wlanStateCxt}) is simple:
\begin{verbatim}
1 < 2 > driver.windows ==> ndiswrapper;
2 < 1 > ndiswrapper, connection ==> driver.windows;
3 < 0 > driver.linux, ndiswrapper ==> driver.windows, connection;
\end{verbatim} 
The first two implications express that a Windows driver and the wrapper module have to be installed together, implicating temporal priority of the latter. For an open source driver, no further statement is possible. The attribute \texttt{driver.linux} may only occur together with \texttt{connection}, but there is no dependency in any direction. The third implication signifies the exclusion of the Windows and Linux procedures: all remaining attributes occur in the conclusion, and the extent of the premise as well as of the complete attribute set is empty. For implications, in contrast to less strict association rules, the cardinality of this extent is the support\index{Support} of the rule (see p. \pageref{support}). It is indicated in brackets \texttt{<$\,\cdot$\,>}. This type of implications is mostly noted with the $\bot$ symbol: \texttt{driver.linux, ndiswrapper ==> $\bot$}.

\glossary{name={$<\,\cdot\,>$}, description={Support of an implication}}
\glossary{name={$\bot$}, description={Complete attribute set}, sort=$.Complete$}The transition context describes immediately succeeding installation steps $(s^{in}, s^{out}) \in R$ by their respective attributes, i.e. by single performed procedures. Accordingly, static implications of $\Ks$ are found again between attributes referring to $s^{in}$ or $s^{out}$ solely. New information regarding dynamics is obtained by \agrave mixed" implications with input and output attributes. The exploration of the transition context (Table \ref{tab:wlanTrCxt}) leads to two counterexamples:
\begin{itemize}
 \item \texttt{driver.windows.out $\rightarrow$ ndiswrapper.in, ndiswrapper.out}: The proposed implication raises the question if the order can also be changed: A user might download a Windows driver first and then s/he learns that s/he needs the ndiswrapper module. This procedure is possible, since ndiswrapper copies the driver files into its own directory anyway. Therefore, the first counterexample in Table \ref{tab:wlanTrCxt} is introduced.
 \item It is important to correct the later implication \texttt{ndiswrapper.out, driver.windows.out $\rightarrow$ ndiswrapper.in} and to add counterexample 2 with \texttt{driver.windows.in} instead of \texttt{ndiswrapper.in}.
\end{itemize}

The stem base of the transition context is the following:
\begin{verbatim}
1 < 3 > ndiswrapper.in ==> ndiswrapper.out, driver.windows.out;
2 < 3 > driver.windows.in ==> ndiswrapper.out, driver.windows.out;
3 < 2 > driver.linux.out, connection.out ==> driver.linux.in;
4 < 2 > ndiswrapper.out, connection.out ==> ndiswrapper.in,
driver.windows.in, driver.windows.out;
5 < 2 > driver.windows.out, connection.out ==> ndiswrapper.in,
driver.windows.in, ndiswrapper.out;
6 < 2 > driver.linux.in ==> driver.linux.out, connection.out;
7 < 2 > connection.in ==> connection.out;
8 < 2 > ndiswrapper.in, driver.windows.in, ndiswrapper.out,
driver.windows.out ==> connection.out;
9 < 0 > driver.linux.out, ndiswrapper.out ==> driver.linux.in,
ndiswrapper.in, driver.windows.in, connection.in, driver.windows.out,
connection.out;
10 < 0 > driver.linux.out, driver.windows.out ==> driver.linux.in, 
ndiswrapper.in, driver.windows.in, connection.in, ndiswrapper.out,
connection.out;
\end{verbatim} 
The first two implications describe the alternative paths of the Windows driver installation, the third to fifth preconditions for entering the connection data, combined in the latter two with an (output) state implication. Implication 6 and 8 define the last step in the Linux and Windows driver installation process, respectively. Implication 7 marks the state with \texttt{connection} attribute as the final state, or a \textit{steady state}\index{State!steady} in engineering, physical or biological systems. The last two implications again express the exclusion of the alternative paths for a Linux and Windows driver.

A state $s_{12}$ with single attribute \texttt{driver.windows} could be found as counterexample during the exploration of the state context already (Table \ref{tab:wlanStateCxt}), if one considers the complete dynamics. Then, the counterexamples of the transition context will be introduced before its exploration as the transitions $(s_0,s_{12})$ and $(s_{12}, s_{21})$ with the new state. Consequently, no other counterexamples need to be added during the exploration of $\Kt$. With $s_{12}$, the stem base of $\Ks$ is:
\begin{verbatim}
1 < 1 > ndiswrapper, connection ==> driver.windows;
2 < 1 > driver.windows, connection ==> ndiswrapper;
3 < 0 > driver.linux, ndiswrapper ==> driver.windows, connection;
4 < 0 > driver.linux, driver.windows ==> ndiswrapper, connection;\end{verbatim}
\label{ex:pseudoclosed}Implication 4 is new, since now \texttt{\{driver.windows\}} is closed as object intent: only the driver may be installed. Then, all subsets of the premise (including $\emptyset$)\footnote{All objects (states) have the empty attribute set in common, but no other attribute; there is no implication $\emptyset \rightarrow ...$} are closed, and the condition of a pseudo-intent\index{Pseudo-intent} is trivially fulfilled. In the first version of the stem base, however, \texttt{\{driver.linux, driver.windows\}} does not contain the closure of the pseudo-intent \texttt{\{driver.windows\}}, indicated by implication 1. Hence, the attribute set is not pseudo-closed and there is no corresponding implication in the previous stem base.

If the enlarged stem base of $\Ks$ is entered as background knowledge prior to an exploration of $\Kt$ (compare Section \ref{sec:hierarchy}), pure input or output implications can be decided automatically, like $3 \text{ (in }\mathbb{K}_s) \models 9 \text{ (in }\mathbb{K}_t)$, $4\text{ (in }\mathbb{K}_s) \models 10\text{ (in }\mathbb{K}_t)$, or \texttt{ndiswrapper.out, connection.out ==> driver.windows.out} (output part of implication 4).

Without the implications derivable from those of the (corrected) state context, the stem base of the transitive context (Table \ref{tab:wlanTransCxt}) is:\label{baseTransCxt}
\begin{verbatim}
1 < 4 > ndiswrapper.in ==> ndiswrapper.out, driver.windows.out;
2 < 4 > driver.windows.in ==> ndiswrapper.out, driver.windows.out;
3 < 2 > driver.linux.in ==> driver.linux.out, connection.out;
4 < 2 > connection.in ==> connection.out;
5 < 2 > ndiswrapper.in, driver.windows.in, ndiswrapper.out,
driver.windows.out ==> connection.out;
6 < 1 > connection.in, driver.linux.out, connection.out
==> driver.linux.in;
7 < 1 > connection.in, ndiswrapper.out, driver.windows.out,
connection.out ==> ndiswrapper.in, driver.windows.in;
\end{verbatim}
It expresses the following new facts:
\begin{itemize}
\item Implications 1 to 4 are identical to the implications 1, 2, 6 and 7 of $\Kt$, but have different semantics: The attributes of a state remain those of all subsequent states, which signifies that a success cannot be destroyed by a further action. This seems to be realistic with the assumption of a sufficiently experienced user. For the same input and output attribute this meaning is already implicit in the transition context, e.g. \texttt{driver.linux.in} $\rightarrow$ \texttt{driver.linux.out}.
\item Implication 6 connects input and output states in a more complicated manner (\texttt{connection.out} follows from implication 4): If after an input state with attribute \texttt{connection} the Linux driver is installed, it must have been installed already at this input state. Thus, the temporal order of the two installation steps is fixed. Implication 6 replaces 3 in $\Kt$. $(s_{10}, s_{20})^\prime$ = \{\texttt{driver.linux.in, driver.linux.out, connection.out}\} is no more generated by the former premise \{\texttt{driver.linux.out, connection.out}\}, but this attribute set is closed as object intent of the new transition $(s_0, s_{20})$ with the start state $s_0$ where $s_0'=\emptyset$: 
\[(s_0, s_{20})^\prime = \text{\{\texttt{driver.linux.out, connection.out}\}} \varsubsetneq (s_{10}, s_{20})^\prime \varsubsetneq (s_{20}, s_{20})^\prime\]
With the supplementary condition \texttt{connection.in} implication 6 refers to $(s_{20}, s_{20})$ only and describes a property of the steady state\index{State!steady} $s_{20}$. Implication 7 is analogous for the Windows driver path.
\end{itemize}

The first of 18 implications in the stem base of the temporal context are, with \texttt{ev(entually)}$:= \Diamond F$, \texttt{alw(ays)}$:= \Box\, G$ and \texttt{nev(er)}$:= \Box\, \neg F$:
\begin{verbatim}
1 < 14 > { } ==> ev(connection);
2 < 12 > ev(driver.windows) ==> ev(ndiswrapper);
3 < 12 > ev(ndiswrapper) ==> ev(driver.windows);
4 < 2 > connection ==> alw(connection);
5 < 4 > driver.windows ==> ev(ndiswrapper), ev(driver.windows),
alw(ndiswrapper), alw(driver.windows), nev(driver.linux);
6 < 4 > ndiswrapper ==> ev(ndiswrapper), ev(driver.windows),
alw(ndiswrapper), alw(driver.windows), nev(driver.linux);
...
\end{verbatim}
\label{stemBaseKtmp}
Implications 2 and 3 indicate that the two steps \texttt{driver.windows} and \texttt{ndiswrapper} have to be taken on the Windows path in order to establish the connection, similarly implications 5 and 6. Those mean again: Once ndiswrapper or a windows driver is installed, it remains installed; also the Windows and Linux paths are exclusive. The first and fourth implication express a hopeful view: An internet connection will be established some day (and by some way) -- and then it will hold forever... 

Since users sometimes make bad experiences with the compatibility of hardware and software or with their internet provider, more realistic counterexamples could be entered into the context. However, they only make sense together with the introduction of new attributes, e.g. \texttt{line} for the existence or breakdown of the external internet connection. In this case, the previously explored implications cannot be used further. After all, even if several implications are removed, only a defined part of the stem base has to be computed again \cite[p.14]{Baa09a}. A better solution is to interpret \texttt{connection} as only entering the personal connection data and to define a new attribute \texttt{success}, corresponding to the final state\index{State!final} of an automaton. Then, no new transitions $\notin R$ with \texttt{connection.in} but not \texttt{connection.out} are introduced, contradicting the implication \texttt{connection.in $\rightarrow$ connection.out} in $\Ktt$, which corresponds to implication 4 in the stem base of $\Ktmp$.

In order to refine the process description, new attributes may be introduced, for example device recognition with \texttt{sudo lshw -C network} at the beginning, resulting in the alternative actions \texttt{device.on.out} (turning the wireless device on by a hardware switch or even by Windows), \texttt{ndiswrapper.out}, \texttt{driver.linux.out} or \texttt{connection.out}.\footnote{\texttt{https://help.ubuntu.com/9.04/internet/C/troubleshooting-wireless.html\# troubleshooting-wireless-device}} Furthermore, existing attributes may be splitted, for instance in \texttt{ndiswrapper.graphical} and \texttt{ndiswrapper.commandline}.\footnote{\texttt{https://help.ubuntu.com/community/WifiDocs/Driver/Ndiswrapper\#Installing Windows driver}} 


In general, newly added attributes
should be chosen carefully to avoid a change of the meaning of the first attributes. Then, the implications of all contexts hold in the enlarged contexts. For the reader familiar with FCA, the following remark expresses this demand more formally.
\begin{remark}\label{rem:subsets}
\textnormal{For a transition context $\Kt := (R, (M_1 \cup M_2) \times \{in,out\}, \nabla)$ (or a transitive context $\Ktt$), implications of the original attribute set $M_1$ are preserved, if the set of transitions $R$ is not changed and the restrictions of the incidence relation $\nabla$ of the enlarged context $\Kt$ to $R \times (M_1  \times \{in,out\})$ and $R \times (M_2 \times \{in,out\})$ yield the incidence relations of the partial contexts $\Kt^1 := (R, M_1  \times \{in,out\}, \nabla_1)$ and $\Kt^2 := (R, M_2 \times \{in,out\}, \nabla_2)$. Then, a supremum-preserving order embedding of the lattice $\mathfrak{B}(\Kt)$ into the direct product of $\mathfrak{B} (\Kt^1)$ and $\mathfrak{B} (\Kt^2)$ is defined. 
By reason of the supremum condition, an implication of $\Kt^1$ (and $\Kt^2$) is also an implication of $\Kt$, since the conclusion states that the respective attribute concepts and all infima are greater or equal to the concept generated by the premise. 
The concept lattice of $\Kt$ can be visualised by a \textit{nested line diagram}\index{Nested line diagram}, where copies of $\Kt^2$ and the embedding are represented in each node of $\Kt^1$, as a kind of zooming into more differentiated object descriptions. \cite[p. 77ff. and Theorem~7]{GW99}}
\end{remark}
An order embedding is also given, if an original context $\Kt^1:= (R', M_1  \times \{in,out\}, \nabla_1),\: R' \subseteq R$ is enlarged by all transitions in $R \setminus R'$ without changing the set of object intents, i.e. if clarifying the enlarged context results in $\Kt^1$. Then, new attributes from $\Kt^2$ are introduced that discriminate the supplementary from the previous transitions.

\section{Integration of knowledge and data}\label{sec:knowData}
The defined mathematical structures may be used in various ways. In Chapter \ref{ch:bSubtilis}, a transitive context is generated from a BN. Then, the computed stem base is searched for implications which make the temporal behaviour explicit and give new insight. Furthermore, experimental time series can be evaluated by comparison with existing knowledge, i.e. implications are generalised or rejected  supposing outliers or by reason of special conditions. Inversely, for the ECM application in Chapter \ref{ch:geneRegNets} we developed a procedure starting from knowledge:
\begin{enumerate}\label{protocol}
\item\label{item:Kobs} Discretise a set of time series of (gene expression) measurements and
transform it to an observed transition context $\mathbb{K}_t^{obs}$.\glossary{name={$\mathbb{K}_t^{obs}$}, description={Observed transition context}, sort=$Ktobs$}
\item\label{item:BoolNet} For a set of interesting genes translate interactions from biological
literature and databases into a Boolean network.\index{Boolean network} This step could be supported by text mining software.
\item\label{item:K} Construct the transition context $\mathbb{K}_t$ by a simulation
 starting from a set of  states,
 e.g. the initial states of $\mathbb{K}_t^{obs}$ or all states (for small networks).
\item\label{item:K_t} Derive the respective transitive contexts $\mathbb{K}_{tt}$ and
$\mathbb{K}^{obs}_{tt}$.
\item \label{item:compObs}Perform attribute exploration of $\mathbb{K}_{tt}$. Decide about an
implication $A\rightarrow B$, $A,B \subseteq M$, by checking its validity in $\mathbb{K}_{tt}^{obs}$
and/or by searching for supplementary knowledge. Possibly provide a counterexample from
$\mathbb{K}^{obs}_{tt}$.
\item\label{item:query} Answer logical queries from the modified context $\mathbb{K}_{tt}$ and from its stem base.
\end{enumerate}
In step \ref{item:compObs} automatic decision criteria could be thresholds of \textit{support} $q= |(A
\cup B)'|$\index{Support}\label{support} and \textit{confidence} $p= \frac{|(A \cup B)'|}{|A'|}$\index{Confidence} for an association rule\index{Association rule} in $\mathbb{K}^{obs}_{tt}$ (which coincides with the respective implication for $p=1$). A weak criterion is to reject only implications with support 0 (but if no
object in $\mathbb{K}^{obs}_{tt}$ has all attributes from A, the implication is not violated). In
\cite{Wol07} and the main part of Chapter \ref{ch:geneRegNets}, a strong criterion is applied: implications of $\mathbb{K}_{tt}$ have to be valid also in the observed context ($p=1$). This is equivalent to an exploration of the union of the two contexts. Moreover, in Section \ref{sec:expertExploration} attribute exploration is performed where the human expert can use support and confidence as two among several decision criteria.

I implemented the steps \ref{item:Kobs}, \ref{item:K}
and \ref{item:K_t} in the scripting language \texttt{R} designed for data analysis and statistical purposes \cite{R11}. For step \ref{item:compObs}, I used the Java tool 
\texttt{Concept Explorer} \cite{ConExp}. The output was translated with \texttt{R} into a \texttt{Prolog}\index{Prolog} knowledge base, which can be queried according to step \ref{item:query}.

\section{Transition contexts and automata}\label{sec:appAutomata}
As introduced in Section \ref{sec:trCxt}, a transition context represents a Kripke structure and a single context from the action context family of an LTSA. In Chapter \ref{ch:logic} the correspondence of an LTSA and an automaton was mentioned. The following theorem summarises these relationships:
\begin{theorem}\label{theorem:coAlg}
With sets $S$ (states), $M$ (attributes) and $A$ (actions), let $\mathcal{A} := (S, \alpha_S)$ be a universal coalgebra of type $\Omega$, where
\[\alpha_S\colon S \rightarrow \Omega(S) := \mathfrak{P}(M) \times (\mathfrak{P}(S))^A.\] 
The component maps of $\alpha_S$ are denoted by $\gamma\colon S \rightarrow \mathfrak{P}(M)$ and $\delta\colon S \rightarrow (\mathfrak{P}(S))^A$, thus for all $s \in S$
\[\alpha_S(s) = (\gamma(s), \delta(s)),\]
where $\delta(s)\colon A \rightarrow \mathfrak{P}(S)$.

Then $\mathcal{A}$ is representable by a family of transition contexts $(\Kt^a)_{a \in A}$, if and only if for all $s \in S$ one of the following conditions holds:
\begin{enumerate}
\item $\exists\, a \in A\colon \delta(s)(a) \neq \emptyset$.
\item $\exists\, a \in A\:\exists\, s' \in S\colon s \in \delta(s')(a).$
\end{enumerate}
\end{theorem}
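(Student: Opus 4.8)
The plan is to exhibit the explicit dictionary between the coalgebra datum $(S,\gamma,\delta)$ and the family $(\Kt^a)_{a\in A}$, and then to isolate the one piece of information the transition contexts can fail to carry, namely the attribute data attached to states that participate in no transition whatsoever. First I would fix, for each action $a\in A$, the relation $R_a := \{(s,s') \mid s' \in \delta(s)(a)\} \subseteq S\times S$, and the state context $\mathbb{K}_s := (S,M,I)$ with $sIm :\Leftrightarrow m\in\gamma(s)$. Then $\Kt^a := (R_a, M\times\{in,out\}, \nabla)$ is exactly the transition context of $\mathbb{K}_s$ with respect to $R_a$ in the sense of Definition \ref{def:trCxt}. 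This family is the proposed representation, and the whole argument reduces to deciding when it determines $\mathcal{A}$ back.

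Next I would write out the reconstruction in the opposite direction and check that it inverts the construction precisely on the states that occur in some transition. The index set returns $A$. For each $a$, the object set of $\Kt^a$ is literally $R_a$, so $\delta(s)(a)=\{s' \mid (s,s')\in R_a\}$ is read off directly, and in particular $\delta(s)(a)=\emptyset$ is recovered correctly for every state $s$ that is already known to belong to $S$. For the output function, if $s$ occurs as the input component of an object $(s,s')$ of some $\Kt^a$, then $\{m \mid (s,s')\,\nabla\,(m,in)\}=\gamma(s)$, and dually if $s$ occurs as an output component; consistency between several occurrences of the same state is automatic, since $\nabla$ always refers back to $s^iIm$. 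The single datum this procedure cannot produce is a state that is neither a first nor a second coordinate of any pair in any $R_a$, together with its attribute set.

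For the direction $(\Leftarrow)$ I would observe that conditions (1) and (2) say exactly that every $s\in S$ is a first coordinate of a pair in some $R_a$ (condition 1, i.e.\ $s$ is an input state) or a second coordinate (condition 2, i.e.\ $s$ is an output state). Hence every state appears in the object set of at least one $\Kt^a$, so $S$ is recovered as the union of all first and second coordinates, and by the previous paragraph $\gamma$ and $\delta$ are then recovered on all of $S$; the reconstruction is faithful and $\mathcal{A}$ is representable. For $(\Rightarrow)$ I would argue by contraposition: if some $s_0$ violates both conditions, then $\delta(s_0)(a)=\emptyset$ for every $a$ and $s_0\notin\delta(s')(a)$ for every $a$ and $s'$, so $s_0$ lies in no $R_a$ and hence in no $\Kt^a$. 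Two coalgebras on the same state set $S$ differing only in the value $\gamma(s_0)$ then yield the identical family $(\Kt^a)_{a\in A}$, so the family cannot determine $\mathcal{A}$, and $\mathcal{A}$ is not representable.

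The only genuine obstacle I anticipate is not computational but definitional: pinning down the intended meaning of \emph{representable} so that the biconditional is literally about the non-recoverability of such isolated states. Once representability is read as ``the coalgebra can be reconstructed up to isomorphism from the family,'' the remainder is a direct unwinding of Definition \ref{def:trCxt}, and the two conditions simply encode ``$s$ has an outgoing transition'' and ``$s$ has an incoming transition,'' whose disjunction is precisely ``$s$ is visible in some transition context.''
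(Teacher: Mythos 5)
Your proposal is correct and follows essentially the same route as the paper: you decompose $\delta$ into the family of binary relations $R_a$ (the paper obtains the same family via the LTSA transition relation $R \subseteq S \times A \times S$ and Proposition \ref{prop:LTSAAutomaton}), observe that conditions (1) and (2) say precisely that every state occurs as an input or an output state of some $\Kt^a$, and conclude that $S$ and $\gamma$ are reconstructable exactly in that case. Your explicit contraposition witness for necessity --- two coalgebras on the same $S$ differing only in $\gamma(s_0)$ for an isolated state $s_0$ (or, in the degenerate case $M=\emptyset$, deleting $s_0$ altogether) --- merely spells out what the paper compresses into its one-sentence remark that transition contexts only carry information about states occurring in a relational pair.
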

\begin{proof}
According to Proposition \ref{prop:LTSAAutomaton}, an LTSA is a universal coalgebra of type $\Omega$, and every such coalgebra can be represented as an appropriate LTSA. 

Its transition relation $R \subseteq S \times A \times S$ (Definition \ref{def:LTSA}) corresponds to the component map $\delta\colon S \rightarrow (\mathfrak{P}(S))^A$. It is representable by a family $(R_a)_{a\in A}$, where $R_a := \{(x,y) \mid (x,a,y) \in R\} \subseteq S \times S$. Each relation $R_a$ is the object set of a transition context (or action context) $\Kt^a$, so $R$ can be represented as a family $(\Kt^a)_{a\in A}$.

The supplementary conditions are equivalent to
\[\forall s \in S\: \exists\, a \in A\: \exists\, s' \in S: (s,a,s') \in R \vee (s',a,s) \in R.\]
Hence, each state is either an input or an output state for some transition context from $(\Kt^a)_{a \in A}$. Then the state context, i.e. the map $\gamma\colon S \rightarrow \mathfrak{P}(M)$ can be reconstructed. The condition is necessary, since transition contexts only contain information regarding states occuring in a relational pair.
\end{proof}

Besides LTSAs, Kripke structures -- with $A:=\{\text{update}\}$ -- are such types of coalgebras (Section \ref{sec:Kripke}). By Proposition \ref{prop:LTSAAutomaton} an automaton is identifiable with an LTSA, hence it can also be represented by a family of transition contexts.

It is more convenient to have explicit information regarding $\gamma\colon S \rightarrow \mathfrak{P}(M)$ by a state context $\Ks$. In this case, $\mathcal{A}$ will be represented by a pair $(\Ks, (\Kt)_{a \in A})$ and the supplementary conditions are not required. 

In principle, formal contexts may have an infinite attribute set $M$, like LTSAs or automata (set of data $D$). However, this has restricted practical bearing, since for example there are termination problems of attribute exploration. Nevertheless, for attributes defined as DL concepts, S. Rudolph \cite{Rud06} as well as F. Baader and F. Distel \cite{Baa08} proved by different approaches that the algorithm terminates, if a finite model\index{Model} exists (compare p. \pageref{infAttr}). It may be given as a transition context.

It could be theoretically interesting to define a category of transition contexts and to investigate if it is equivalent to a category of Kripke structures. If the morphisms are chosen accordingly, this should be obvious. However, there are some technical problems, since within a Kripke structure the state context is given explicitly, whereas states may occur only in the first or second component of the transition relation $R$, as indicated in the proof of the theorem.

In order to show categorial equivalence, it is sufficient to prove that there is a full, faithful and essentially surjective functor in one direction. Regarding automata, surely only one or two of these properties are given. Constructing the respective functors could permit to transfer mathematical results from one category (or theory) to the other. Yet, for our purposes it is adequate to state the above structural paralleles.

\begin{figure}
 \centering
 \includegraphics[width=15cm]{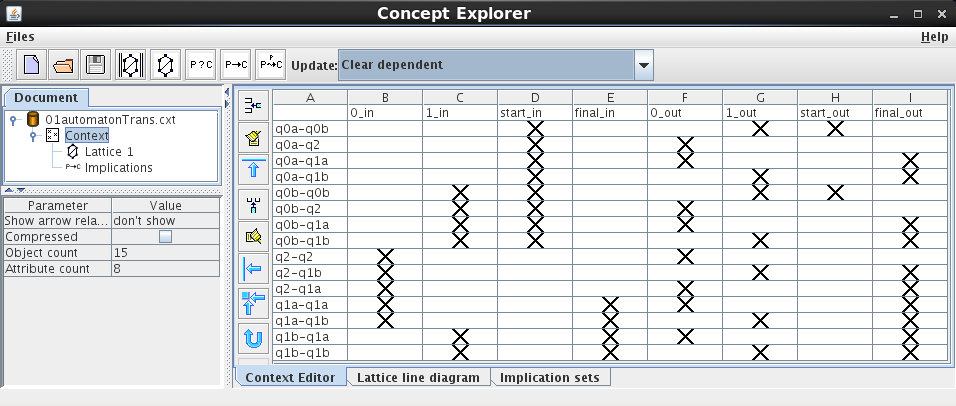}
 \caption{Transitive context of the automaton example \ref{ex:01automaton} shown in Figure \ref{fig:01automaton} (\texttt{Concept Explorer} screenshot).}
 \label{fig:01automatonTrans}
\end{figure}
Finally an example is given how attribute exploration might be used for the analysis of an automaton. The stem base of the transitive context (Figure \ref{fig:01automatonTrans}) according to example \ref{ex:01automaton} is rather simple:
\begin{small}\begin{verbatim}
1 < 4 > final_in ==> final_out;
2 < 2 > start_out ==> start_in 1_out;
3 < 2 > 0_in 1_out ==> final_out;
4 < 0 > start_in 1_out start_out final_out ==> 0_in 1_in final_in 0_out;
5 < 0 > 0_in 1_in ==> start_in final_in 0_out 1_out start_out final_out;
6 < 0 > 0_in start_in ==> 1_in final_in 0_out 1_out start_out final_out;
7 < 0 > 0_out 1_out ==> 0_in 1_in start_in final_in start_out final_out;
8 < 0 > start_in final_in final_out ==> 0_in 1_in 0_out 1_out start_out;
\end{verbatim}\end{small}
Implication 1 signifies that a string is recognised as soon as the first 01-substring is entered, 3 expresses the final condition, 2 that an input 1 is necessary to remain in a start state. In implications 4-8 all attributes occur and the support is 0. They express that states must not be start and final at the same time, 0 and 1 are exclusive, and 0 is not possible in the start state, but an empty input or 1. These implications could have been entered as background knowledge, but an attribute exploration helps not to forget one of the implications, e.g. 6. For this example they are rather obvious for a human, but in a computer program for automatic reasoning every piece of information has to be made explicit.

\chapter{Inference rules for the integration of background knowledge}\label{ch:bg}
\chaptermark{Inference rules for background knowledge}
This chapter presents a method to derive inference rules integrating already acquired knowledge into the successive exploration of the four formal contexts. This method uses attribute exploration again on a higher level by means of a test context defined in Section \ref{sec:ruleEx}.

A reasonable order is to first explore a state context $\Ks$, then the related transitive, transition and temporal contexts $\Ktt$, $\Kt$ and $\Ktmp$ (Section \ref{sec:hierarchy}). We concentrate on inference rules for implications of $\Ktt$, which defines the semantics of three temporal attributes, or of all in the case of a deterministic process. Nevertheless, the stem base of $\Ktmp$ is not completely derivable from the stem base of $\Ktt$; a subsequent exploration of $\Ktmp$ creates new information (Section \ref{sec:incompleteDet}). Inference rules are proven for two important classes of implications:
\begin{itemize}
\item Section \ref{sec:rules3Attr}: The premises are one or two \agrave homogeneous" attribute sets, e.g. $X^{in} \wedge\, Y^{out}$ (in $\Ktt$) or $\ev{Y}$ (in $\Ktmp$), $X,Y \subseteq M$. Due to computational limits, the test context had to be constructed with $|M|=3$, but most of the discovered inference rules are general.    
\item Section \ref{sec:rulesBooleAttr}: The many-valued attributes $e \in E$ take Boolean values $f \in F=\{0,1\}$ (compare Definition \ref{def:stateCxt1}). The implications have one homogeneous attribute set as premise and conclusion, respectively. 
\end{itemize}

\section{A hierarchy of formal contexts}\label{sec:hierarchy}
The four defined formal contexts are closely related. $\mathbb{K}_t$ is a subcontext of $\mathbb{K}_{tt}$, since transitions according to the transitive closure of $R$ are added. Here subcontext means that the attribute set is equal and the object set is a subset of the object set of the larger context (of course equality is possible). The same holds for the sets of intents. $\mathbb{K}_t^{in/out}$ and $\mathbb{K}_{tt}^{in/out}$, the input and output parts of a transition and a transitive context, are equal after object clarifying and identification of a transition $(s^{in}, s^{out})$ with the state $s^{in}$ and $s^{out}$, respectively. Modified that way and after identification of an attribute $a^{in}/a^{out}$ with $a$, $\mathbb{K}_t^{in/out}$ and $\mathbb{K}_{tt}^{in/out}$ are subcontexts of the scale $\mathbb{K}_s$.  

$\operatorname{Imp} \mathbb{K}$ denotes the set of attribute implications that hold in a context $\mathbb{K}$.\glossary{name={$\operatorname{Imp}$}, description={Implications valid in a formal context (for its set of intents)},sort=$Imp$} Since implications valid for a formal context are less restricted by a smaller set of intents, the inverse subset relations hold for the respective implication sets:
\begin{equation}
\operatorname{Imp} \mathbb{K}_s \subseteq \operatorname{Imp} \mathbb{K}_{tt}^{in/out} = \operatorname{Imp}\mathbb{K}_t^{in/out} 
\end{equation}
\begin{equation}\label{eq:impKttKt}
\operatorname{Imp} \mathbb{K}_{tt} \subseteq \operatorname{Imp} \mathbb{K}_t
\end{equation}
However, implications between sets of input and output attributes of a transition and a transitive context have different semantics. They relate to single transitions and sequences of transitions, respectively. 

The subset relations can be proven more formally before the background of model theory.

\subsection{Excursus: Model theory and Galois connections}
The two derivation operators $'$ of a formal context (Definiton \ref{def:context}) define maps $\sigma\colon \mathfrak{P}(G) \rightarrow \mathfrak{P}(M)$ and $\tau\colon \mathfrak{P}(M) \rightarrow \mathfrak{P}(G)$ between power sets ordered by set inclusion. They constitute a \textit{Galois connection}\index{Galois connection} between $G$ and $M$.
\begin{definition}\cite[Definition 3.1.3]{Ihr03}\label{def:galois}
A \textsc{Galois connection} between the sets $G$ and $M$ is a pair of maps 
\[\sigma\colon \mathfrak{P}(G) \rightarrow \mathfrak{P}(M) \qquad\text{and}\qquad \tau\colon \mathfrak{P}(M) \rightarrow \mathfrak{P}(G),\]
so that for all $X,X_1,X_2\subseteq G$ and all $Y,Y_1,Y_2 \subseteq M$ the subsequent conditions hold:
\begin{enumerate}
 \item $X_1 \subseteq X_2 \Rightarrow \sigma(X_2) \subseteq \sigma(X_1)$
 \item $Y_1 \subseteq Y_2 \Rightarrow \tau(Y_2) \subseteq \tau(Y_1)$
 \item $X \subseteq \tau \sigma(X) \text{ and } Y \subseteq \sigma \tau(Y)$.
\end{enumerate}
\end{definition}

Now this definition will be applied to the formal context
\[(\mathfrak{P}(M), \mathfrak{P}(M)\times \mathfrak{P}(M), \models).\]
The attributes in $\mathfrak{P}(M)\times \mathfrak{P}(M)$ are interpreted as implications $\alpha\colon A \rightarrow B,\: A,B \subseteq M$. The incidence means that $X \in \mathfrak{P}(M)$ respects the implication $\alpha$, i.e.\glossary{name={$\models$}, description={Semantic inference}, sort=$.Models$}\index{Semantic inference} $X \models \alpha \Leftrightarrow A \not\subseteq X \text{ or } B \subseteq X$ \cite[p. 110-112]{Gan05}. $X$ is called a \textit{model}\index{Model} of $\alpha$. 

For subsets $\mathcal{F} \subseteq \mathfrak{P}(M)\times \mathfrak{P}(M)$ and $\mathcal{S} \subseteq \mathfrak{P}(M)$ let
\[\operatorname{Mod} \mathcal{F} := \{X \subseteq M \mid X \models \alpha \text{ for all } \alpha \in \mathcal{F}\}
\glossary{name={$\operatorname{Mod}$}, description={Models of a set of implications},sort=$Mod$}\]
be the set of models of $\mathcal{F}$ and
\[\operatorname{Imp} \mathcal{S} := \{\alpha \in \mathfrak{P}(M)\times \mathfrak{P}(M) \mid X \models \alpha \text{ for all } X \in \mathcal{S}\}.
\glossary{name={$\operatorname{Imp}$}, description={Implications valid in a formal context (for its set of intents)},sort=$Imp$}\]
The operators $\operatorname{Imp}$ and $\operatorname{Mod}$ define a Galois connection between subsets of $M$ and implications over $M$, i.e. between $\mathfrak{P}(M)$ and $\mathfrak{P}(M)\times \mathfrak{P}(M)$ (compare \cite{Gan99}). 

For every Galois connection the compositions $\tau \sigma$ and $\sigma \tau$ define a closure operator\index{Closure operator} on $G$ and $M$, respectively (p. \pageref{def:concept}). Here, $G:= \mathfrak{P}(M)$, $M:= \mathfrak{P}(M) \times \mathfrak{P}(M)$, $\sigma:= \operatorname{Imp}$ and $\tau:=\operatorname{Mod}$. The extents and intents of $(\mathfrak{P}(M), \mathfrak{P}(M)\times \mathfrak{P}(M), \models)$ are Galois closed sets. This closure means that the extents are precisely the closure systems\index{Closure system} on $M$ (they are closed under arbitrary intersections), the intents are the respective implicational theories \cite[p. 110]{Gan05}. Thus, $\operatorname{Mod} \operatorname{Imp}\,\mathcal{S}$ is a closure system on $M$. It can be represented as the system of intents of a formal context $\mathbb{K}=(G,M,I)$. Then, $\operatorname{Imp} \mathcal{S}$ is the set of implications holding in $\mathbb{K}$, or the implicational theory generated by the stem base of $\mathbb{K}$.

Returning to $\Ks$ and $\Kt^{in/out}$ identified with a subcontext of $\Ks$, let $\mathcal{S}_1$ be the set of object intents of $\Ks$ and $\mathcal{S}_2$ the set of object intents of $\Kt^{in/out}$. Then $\operatorname{Imp} \mathcal{S}_1$ is the implicational theory of $\Ks$, $\operatorname{Imp} \mathcal{S}_2$ that of $\Kt^{in/out}$. $\operatorname{Mod} \operatorname{Imp} \mathcal{S}_i$ are the respective systems of all intents. Since $\mathcal{S}_2 \subseteq \mathcal{S}_1$, we obtain by virtue of the Galois connection between $\mathfrak{P}(M)$ and $\mathfrak{P}(M) \times \mathfrak{P}(M)$ (property 1.):
\[\operatorname{Imp}\Ks = \operatorname{Imp}\mathcal{S}_1 \subseteq \operatorname{Imp}\mathcal{S}_2 = \operatorname{Imp}\Kt^{in/out}.\] 
The inclusion $\eqref{eq:impKttKt}$ is proven analogously.

\subsection[Background knowledge for the exploration of $\Ks, \Ktt, \Kt$ and $\Ktmp$]{Background knowledge for the successive exploration of $\Ks, \Ktt, \Kt$ and $\Ktmp$}
By entering background knowledge (not necessarily implications) prior to an attribute exploration, the algorithm may be shortened considerably \cite{Gan05}.
Thus, a well structured analysis of a dynamic system should first explore the set of possible states, then the long-term dynamics expressed by the transitive context. If a more fine-grained investigation is desired, then also the transiton context may be explored. Finally, the temporal context should be computed out of $\mathbb{K}_{t}$, respectively $\mathbb{K}_{tt}$, if the transitive context is sufficient to determine the semantics of the interesting attributes (compare Remark \ref{rem:semanticsKtt}). Then the stem base of $\mathbb{K}_{s}$ serves as background knowledge for the exploration of $\mathbb{K}_{tt}$ and the resulting stem base can be used for the exploration of $\mathbb{K}_{t}$. 

Given a state context $\mathbb{K}_s$ as scale for a transition as well as a transitive context (p.~\pageref{scaleKsKt}), exploration means determining the attribute logic with respect to a set of background sequents \glossary{name={$\bigwedge$}, description={Infimum of an ordered set}, sort={$.Inf}}\glossary{name={$\bigvee$}, description={Supremum of an ordered set}, sort={$.Sup}}$\bigwedge A \rightarrow \bigvee B,\: A,B \subseteq M,$ valid in $\mathbb{K}_s$ \cite[104f.]{Gan05}. A formal context is determined by its stem base up to object reduction and by a set of sequents up to object clarification. The stem base of $\mathbb{K}_s$ excludes only object intents but does not determine which intents are object intents. It can be used as background knowledge instead of computing the sequents (possibly additionally to an already performed attribute exploration), if one is only interested in the implicational logic of the transition context and does not need to fix occurring states positively. However, if partially defined states should be excluded, in general sequents of the structure $\emptyset \multimap B, B \subseteq M$ are necessary. They can express that for each attribute a scale value exists, for instance $\top \rightarrow m.0 \vee m.1$ for a dichotomic attribute $m \in M$.

As will be shown in Section \ref{sec:incompleteDet}, generally the stem base of $\mathbb{K}_{tt}$ does not determine $\mathbb{K}_{tmp}$ uniquely. However, I will prove semantic inference rules in order to integrate respective background knowledge into the exploration of a temporal context. Since $\mathbb{K}_{tmp}$ extends the attribute set of $\mathbb{K}_{s}$ and restricts its object set to the set of input states, the implications of $\Ks$ remain valid but are possibly \agrave slightly" too restrictive. Nevertheless, they can serve as background knowledge for the exploration of $\Ktmp$. 

Finally, $\operatorname{Imp} \mathbb{K}_{tt}$ is partly determined by $\operatorname{Imp}\mathbb{K}_t$, e.g.
\[A_{in} \rightarrow B_{out},\: B_{in} \rightarrow C_{out} \:(\text{in }\mathbb{K}_t) \models A_{in} \rightarrow C_{out}\: (\text{in }\mathbb{K}_{tt}).\]
Hence, it could also make sense to first explore $\mathbb{K}_{t}$ and to integrate the information into the exploration of $\mathbb{K}_{tt}$ by inference rules. I will not investigate this rare case further. 

\section[Transitive and temporal contexts]{Transitive and temporal contexts: A calculus for implications between temporal atomic propositions}
\label{sec:calculus}
\sectionmark{Transitive and temporal contexts}

I searched for first order logic background formula in order to use the results of an attribute exploration of $\Ktt$ for the exploration of $\Ktmp$. Then the implications of the latter context are derivable from this background knowledge and a reduced set of new implications. The developed method also generates rules between implications of $\Ktmp$ alone. Therefore, during an exploration of $\Ktmp$ implications can be decided automatically based on the stem base of $\Ktt$ and on already accepted implications. The second type of inferences exploits implications between temporal attributes following from their definitions, like $\Box G m \rightarrow \Diamond F m$.

Prior to searching inference rules systematically by means of a test context, the subsequent proposition summarises inference rules between implications of $\Ktt$ and $\Ktmp$ that follow immediately from the definition of the temporal attributes by $\Ktt$ (Remark \ref{rem:semanticsKtt}):
\begin{proposition}\label{prop:KtKs}
Let $\mathbb{K}_{tmp}=(S^{in}, M \cup T, I^{in} \cup I_T)$ be a temporal context, $m,m_1$ and $m_2 \in M$ and $B \subseteq M$. Suppose the relation $t(R)
\subseteq S \times S$ is the object set of the related transitive context $\mathbb{K}_{tt} = (t(R),
M\times\{in,out\}, \nabla)$. Then the following entailments between implications of both
contexts are valid:
\begin{align}
B^{in} \rightarrow m^{out}\: (\text{in }\mathbb{K}_{tt}) &\equiv B \rightarrow \Box\, G m \: (\text{in }\mathbb{K}_{tmp})\label{prop:1}\\
\label{prop:2}B^{in} \rightarrow m^{out}\: (\text{in }\mathbb{K}_{tt}) &\models B \rightarrow \Diamond F m \: (\text{in }\mathbb{K}_{tmp})\\
\label{prop:3}B^{out} \rightarrow m^{out} \: (\text{in }\mathbb{K}_{tt}) &\models \Box\, G B \rightarrow \Box\, G m \: (\text{in }\mathbb{K}_{tmp})\\
\label{prop:4}m_1^{out} \rightarrow m_2^{out} \: (\text{in }\mathbb{K}_{tt}) &\models \Diamond F m_1 \rightarrow \Diamond F m_2 \: (\text{in}\: \mathbb{K}_{tmp})
\end{align}
If there is no transition $(s^{in},s^{out}) \in t(R)$ with extents $(s^{in})^I = (s^{out})^I = M$, the inference holds:
\begin{equation}\label{prop:5}
B^{in}\cup m^{out} \rightarrow \bot\: (\text{in }\mathbb{K}_{tt}) \models B \rightarrow
\Box\,\neg F m \: (\text{in }\mathbb{K}_{tmp})
\end{equation}
\end{proposition}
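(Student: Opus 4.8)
The plan is to reduce all five entailments to the objectwise characterisation of the temporal attributes supplied by Remark~\ref{rem:semanticsKtt}, combined with the elementary definition of when an implication holds in a formal context: $A \rightarrow B$ holds in $(G,M,I)$ exactly when every object whose intent contains $A$ also contains $B$. First I would rewrite each side of every entailment in this explicit quantifier form; once this is done, the equivalences and implications become short propositional manipulations. Throughout I would use the fact that the first coordinates of the pairs in $t(R)$ are exactly the input states $S^{in}$ (an input state has an $R$-successor, hence a $t(R)$-successor, and conversely a first coordinate of a $t(R)$-pair has a successor), so that quantification over $S^{in}$ and over the objects of $\mathbb{K}_{tt}$ matches up.

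For the equivalence \eqref{prop:1} I would argue as follows. Reading $B^{in} \rightarrow m^{out}$ in $\mathbb{K}_{tt}$ objectwise gives: for every $(s^{in},s^{out}) \in t(R)$ with $s^{in} \in B^I$ one has $(s^{out},m) \in I$. On the other side, $B \rightarrow \Box G m$ in $\mathbb{K}_{tmp}$ says that every input state $s \in B^I$ satisfies $s\,I_T\,\Box G m$, and by Remark~\ref{rem:semanticsKtt} the latter unfolds to: for all $(s,s^{out}) \in t(R)$ one has $(s^{out},m) \in I$. Quantifying the input states $s$ and their successors, the two unfolded statements are literally the same assertion, which yields the equivalence in both directions at once.

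The entailments \eqref{prop:2}, \eqref{prop:3} and \eqref{prop:4} each need only one direction and are one-line consequences of the corresponding clause of Remark~\ref{rem:semanticsKtt}. For \eqref{prop:2} I would use that an input state $s \in B^I$ has at least one successor, so the (always valid) premise $B^{in} \rightarrow m^{out}$ already supplies a witnessing transition with $(s^{out},m) \in I$, giving $s\,I_T\,\Diamond F m$; equivalently one combines \eqref{prop:1} with the observation $\Box G m \Rightarrow \Diamond F m$ recorded in the remark. For \eqref{prop:3} I would note that $s\,I_T\,\Box G b$ for all $b \in B$ forces every successor $s^{out}$ of $s$ to lie in $B^I$, whence the hypothesis $B^{out} \rightarrow m^{out}$ yields $(s^{out},m) \in I$ at every successor, i.e. $s\,I_T\,\Box G m$. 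For \eqref{prop:4} I would take the single witnessing successor with $(s^{out},m_1) \in I$ supplied by $\Diamond F m_1$ and apply $m_1^{out} \rightarrow m_2^{out}$ to obtain $(s^{out},m_2) \in I$, i.e. $\Diamond F m_2$.

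The step I expect to be most delicate is \eqref{prop:5}, because of the symbol $\bot$ (the complete attribute set) and the side condition. Here $B^{in}\cup m^{out} \rightarrow \bot$ in $\mathbb{K}_{tt}$ means that any transition with $s^{in} \in B^I$ and $(s^{out},m) \in I$ must have both coordinate intents equal to the full set $M$. The hypothesis that no pair in $t(R)$ satisfies $(s^{in})^I = (s^{out})^I = M$ makes this conclusion unsatisfiable, so the premise of the implication can never hold; hence no transition $(s^{in},s^{out}) \in t(R)$ has both $s^{in} \in B^I$ and $(s^{out},m) \in I$. Rephrased, whenever $s^{in} \in B^I$ we get $(s^{out},m) \notin I$ at every successor, which by the $\Box\neg F$-clause of Remark~\ref{rem:semanticsKtt} is exactly $s\,I_T\,\Box\neg F m$ for every input state $s \in B^I$, i.e. $B \rightarrow \Box\neg F m$ in $\mathbb{K}_{tmp}$. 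The only real care needed is to interpret the ``$\rightarrow \bot$'' implication correctly and to use the side condition to turn it into the negative statement about successors.
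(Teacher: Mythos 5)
Your proof is correct and follows essentially the same route as the paper's: unfolding both sides via the $t(R)$-characterisation of the temporal attributes (Remark~\ref{rem:semanticsKtt}) and performing the resulting quantifier manipulations, exactly as the paper does explicitly for \eqref{prop:3} and \eqref{prop:4} and leaves as ``analogous'' for the rest. Your explicit identification of $S^{in}$ with the first coordinates of $t(R)$, your use of the guaranteed successor for \eqref{prop:2} (which is precisely why the paper restricts $\Ktmp$ to $S^{in}$, as its counterexample after the proof shows), and your reading of the side condition in \eqref{prop:5} as forcing the premise extent in $\Ktt$ to be empty all match the paper's reasoning.
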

\begin{proof}
We remind the reader that for relations $R \subseteq A \times B$ and $a \in A$ the notation $[a]R$ means $[a]R := \{b \in B \mid (a,b) \in R\}$.\glossary{name={$[s]R$}, description={Set of output states for $s$ given $R$},sort=$SR$}
First we show item~(\ref{prop:3}).
\begin{equation*}
 \begin{aligned}
 &B^{out} \rightarrow m^{out}\: (\text{in }\mathbb{K}_{tt})\\
 \Leftrightarrow\: &\forall\,
 (s^{in},s^{out}) \in t(R)\colon s^{out} \in B^I \Rightarrow s^{out} I m\\
 \Rightarrow\: &\forall\, s^{in} \in S^{in}\colon \text{\large(}\forall\, (s^{in},s^{out}) \in t(R)\colon s^{out} \in B^I \text{\large)} \Rightarrow \text{\large(}\forall\, (s^{in},{s^{out}})
 \in t(R)\colon s^{out} I m \text{\large)}\\
 \Leftrightarrow\: &\Box\, G B \rightarrow \Box\, G m\: (\text{in }\mathbb{K}_{tmp}).
 \end{aligned} 
\end{equation*}

Now we show item~(\ref{prop:4}).
 \begin{align*}
 &m_1^{out} \rightarrow m_2^{out}\: (\text{in }\mathbb{K}_{tt})\\
 \Leftrightarrow\: &\forall\, (s^{in},s^{out}) \in t(R)\colon
 (s^{in},s^{out})\nabla m_1^{out} \Rightarrow (s^{in},s^{out}) \nabla m_2^{out}\\
 \Leftrightarrow\: &\forall\, (s^{in},s^{out}) \in t(R)\colon
 s^{out} I m_1 \Rightarrow s^{out} I m_2\: (\text{in $\Ks$})\\
 \Leftrightarrow\: &\forall\, s^{in} \in S^{in},\: \forall\,
 s^{out} \in [s^{in}]t(R)\colon s^{out} \in m_1^I \Rightarrow s^{out} I m_2\\
 \overset{{s^{out}}':= s^{out}}{\Rightarrow}\: &\forall\,
 s^{in} \in S^{in}\colon \text{\large(}\exists\, s^{out} \in [s^{in}]t(R)\colon s^{out} \in m_1^I \text{\large)} \Rightarrow \text{\large(}\exists\,
 {s^{out}}' \in [s^{in}]t(R)\colon {s^{out}}' I m_2 \text{\large)}\\
\Leftrightarrow\: & \Diamond F m_1 \rightarrow \Diamond F m_2\: (\text{in }\mathbb{K}_{tmp}).
\end{align*}

The remaining equivalences are proven analogously. The supplementary condition for $\eqref{prop:5}$ ensures that $\bot^I = M^I = \emptyset$. Hence, the presupposed implication signifies exclusion of $B^{in}$ and $m^{out}$, or $B^I=\emptyset$. In the latter case the inferred conclusion is trivially valid. 
\end{proof}

The restriction of $\Ktmp$ to an object set $S^{in}$ (Definition \ref{def:tmpCxt}) is necessary for the proofs. Regarding (\ref{prop:2}), consider e.g.\ $S= \{0,1\}$, $s =0$, $R = \{(1,1)\}$ and $M=\{m,n\}$, $B=\{n\}$ and $I = \{(0,n),(1,m)\}$. Then $B^{in} = \{n^{in}\} \not\subseteq (1,1)^{\nabla}$, so the left-hand side of the entailment is trivially true. However, $B \subseteq 0^I$ but $(0,\Diamond Fm)\notin I_T$, since $Seq^s_R =\emptyset$ for $s=0$.

\subsection{Incomplete determination of $\mathbb{K}_{tmp}$ by the stem base of $\mathbb{K}_{tt}$}\label{sec:incompleteDet}
For $\Diamond F$, $\Box\, G$ and $\Box\, \neg F$, the complete dynamics of the investigated system, thus $\mathbb{K}_{tmp}$, is determined by the respective transitive (or transition) context, but there is no one-to-one correspondence between the stem bases of $\mathbb{K}_{tmp}$ and $\mathbb{K}_{tt}$. Reducible transitions may exist that prevent a temporal implication, for nondeterministic processes or a data set consisting of different deterministic time series. Omitting them does not change the stem base of $\Ktt$. For an example, see Table \ref{counterex}.
\begin{table}\centering
\begin{tabular}{|l||c|c|c||c|c|c|}
\hline\textbf{Transition}
&$a^{in}$ &$b^{in}$ &$c^{in}$ &$a^{out}$ &$b^{out}$ &$c^{out}$\\
\hline \hline
$(s_0^{in}, s_1^{out})$ &$\times$&$\times$&&&$\times$&$\times$\\ \hline
$(s_0^{in}, s_2^{out})$ &$\times$&$\times$&&$\times$&&$\times$\\ \hline
$(s_1^{in}, s_2^{out})$ &&$\times$&$\times$&$\times$&&$\times$\\ \hline
$(s_3^{in}, s_4^{out})$ &&$\times$&&&&$\times$\\
\hline
\end{tabular}
\caption{The transitive context $\mathbb{K}_{tt}$ for two deterministic time series $s_0 - s_1 - s_2$  and $s_3 - s_4$.}
\label{counterex}
\end{table}

The stem base of this transitive context is:
\begin{center}
\begin{minipage}[t]{65mm}
\begin{enumerate}
 \item $\top \rightarrow b^{in}, c^{out}$
 \item $b^{in}, c^{in}, c^{out} \rightarrow a^{out}$
 \item $b^{in}, b^{out}, c^{out} \rightarrow a^{in}$
\end{enumerate}
\end{minipage}
\begin{minipage}[t]{65mm}
\begin{enumerate}\setcounter{enumi}{3} 
 \item $a^{in}, b^{in}, c^{in}, a^{out}, c^{out} \rightarrow b^{out}$ 
 \item $a^{in}, b^{in}, a^{out}, b^{out}, c^{out} \rightarrow c^{in}$
\end{enumerate}
\end{minipage}
\end{center}
$(s_3^{in}, s_4^{out})$ is a reducible object, because for example:
\begin{align*}
(s_3^{in}, s_4^{out})' &= \{b^{in}, c^{out}\} = \{a^{in}, b^{in}, b^{out}, c^{out}\} \cap \{b^{in}, c^{in}, a^{out}, c^{out}\}\\
&= (s_0^{in}, s_1^{out})' \cap (s_1^{in}, s_2^{out})'
\end{align*}
Therefore, the context without this transition has the same stem base. However, only without this observation $\top \rightarrow \Diamond F a$ holds in $\mathbb{K}_{tmp}$.

The following proposition concerns a linear order $R$ and specifies a special case where the temporal context is determined by the stem base of the transitive context: If it is known that only transitions with the first (or only with the last) transition are reducible, not all of them can be reduced and the order relation can be reconstructed.
\begin{proposition}
Let $\mathbb{K}_{tt} = (t(R), M\times\{in,out\},\nabla)$ be a transitive context with a linear order $R \subseteq S \times S$, $R \neq \emptyset$. Then the removal from $R$ of all transitions with the first (last) state changes the stem base of $\mathbb{K}_{tt}$. In general however, the following changes might not affect the stem base of $\Ktt$, but result in a different stem base of the temporal context $\mathbb{K}_{tmp}$ defined by $\mathbb{K}_{tt}$:
\begin{enumerate}
 \item The removal of a state and all related transitions that is not the first or last state.
 \item A switch of the order relation of two states.
\end{enumerate}
\end{proposition}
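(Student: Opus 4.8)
The plan is to reduce the whole statement to a single criterion that follows from the Duquenne--Guigues theorem (Theorem~\ref{theorem:dg}): since the stem base of a context determines and is determined by its system of intents, deleting objects from $\Ktt$ leaves the stem base fixed exactly when every deleted object is \emph{reducible}, i.e.\ its intent is an intersection of intents of surviving objects (as for $(s_3^{in},s_4^{out})$ in Table~\ref{counterex}). Writing $P(s)$ for the attribute set of a state $s$ and identifying the intent of a transition $(s^{in},s^{out})\in t(R)$ with $P(s^{in})^{in}\cup P(s^{out})^{out}$, each assertion of the proposition becomes a question of which of these intents survive as intersections after the objects are removed.

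For the first claim I would exploit the extremal position of $s_0$ and $s_n$ in the linear order. Because $s_0$ is the minimum, the second state $s_1$ occurs as an output coordinate in $t(R)$ only in the single transition $(s_0,s_1)$, which is among those deleted; dually, when the last state is removed, the penultimate state $s_{n-1}$ occurs as an input coordinate only in $(s_{n-1},s_n)$. I would then show that this unique transition is irreducible after the deletion, so that its intent drops out of the closure system and the stem base must change. The step I expect to be the main obstacle is precisely this irreducibility: it can fail when $s_1$ (resp.\ $s_{n-1}$) carries no attribute distinguishing it as an output (resp.\ input), so I would make explicit the mild non-degeneracy hypothesis -- a clarified state context in which the relevant extremal state contributes a locally unique attribute -- under which the argument goes through by tracking that attribute, which no surviving object can supply.

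The two changes in items~1 and~2 call for the opposite construction, and here I would generalise the mechanism of Table~\ref{counterex} to a single chain. Deleting a non-extremal state $s_k$ (and re-forming $t(R)$ on $S\setminus\{s_k\}$) removes the transitions $(s_i,s_k)$ and $(s_k,s_j)$, and I would choose the attribute assignment so that each such intent equals an intersection of surviving intents; by the reducibility criterion the system of intents, hence the stem base of $\Ktt$, is then unchanged. The order switch of two states is handled by the same template, since it only alters transitions whose intents can likewise be arranged to be reducible, while it reverses the direction of reachability between the two states.

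Finally I would argue that these changes nonetheless alter $\Ktmp$. The decisive point is that, unlike the intents of $\Ktt$, the temporal incidence $I_T$ of Definition~\ref{def:tmpCxt} is governed by genuine reachability (Remark~\ref{rem:semanticsKtt}): an entry $s\,I_T\,\Diamond F m$ asserts the existence of a transition from $s$ in $t(R)$ reaching an $m$-state, and it is also sensitive to which states remain in $S^{in}$. Removing the intermediate state $s_k$ (or reversing the order of two states) destroys or creates such a connection even when no intent of $\Ktt$ is lost -- exactly as in Table~\ref{counterex}, where $\top\rightarrow\Diamond F a$ holds only after the reducible transition is removed. Producing one instance in which $\Ktt$ is provably unchanged while some $\Diamond F$-column of $\Ktmp$ differs then forces a different stem base for $\Ktmp$ and establishes the contrast asserted by the proposition; assembling such a simultaneous instance is the remaining delicate part of the argument.
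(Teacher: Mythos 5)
Your reduction of everything to object reducibility is exactly the paper's starting point (its proof opens with ``the implication logic determines $\Ktt$ up to object reduction''), and your plan for items~1 and~2 follows the paper's template: make the affected transitions reducible, then separate the temporal contexts through the reachability semantics of $\Diamond F$ (Remark~\ref{rem:semanticsKtt}). The main divergence is in the first claim. The paper does not track a distinguished attribute of $s_1$ (resp.\ $s_{n-1}$); it argues abstractly that, since $s_0$ occurs only as an input state, the lost intents of the removed transitions $(s_0,s_j)$ could only be rebuilt through intersections of output parts, and these output intents form a finite lattice under intersection, which possesses $\bigvee$-irreducible elements --- hence some $(s_0,s_j)$ is irreducible, with no extra hypothesis (dually for the last state). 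Your alternative anchor --- that $s_1$ occurs as an output coordinate only in $(s_0,s_1)$ --- is structurally sound, but the non-degeneracy hypothesis you import (a ``locally unique attribute'' of $s_1$) is strictly stronger than clarification and is not in the statement. That said, your caution points at a real soft spot: if two states carry identical attribute sets (and the thesis explicitly works with unclarified contexts), a removed transition can simply duplicate a surviving intent and the stem base is unchanged, a degenerate case which the paper's terse lattice argument does not visibly exclude either. So your version trades generality for explicitness about what is actually used.

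The genuine gap is at the end. Items~1 and~2 are existence claims (``might not affect $\Ktt$ \ldots{} but result in a different stem base of $\Ktmp$''), so the proof must exhibit witnesses, and this is precisely where the paper's proof does its concrete work: for item~2 it produces the context of Table~\ref{tab:counterExLin}, in which $(s_5,s_6)$ is reducible, switching $s_5$ and $s_6$ yields a context whose extra transition $(s_6,s_5)$ reduces to the same $\Ktt$, and yet $\Diamond F a \rightarrow \Diamond F b$ holds only in the original and $\Diamond F b \rightarrow \Diamond F a$ only in the modified $\Ktmp$; for item~1 it relies on the reducible-transition mechanism of Table~\ref{counterex}, where $\top \rightarrow \Diamond F a$ holds only after the removal. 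You correctly describe this mechanism and even name the delicacy --- simultaneous reducibility in $\Ktt$ and a change in $\Ktmp$ --- but you stop short of assembling any instance, so as written the existential halves of the proposition remain unproven. Two smaller points: a differing $\Diamond F$-column of $\Ktmp$ does not by itself force a different stem base (distinct contexts can share an implication theory), so you must, as the paper does, exhibit a specific implication valid in one context and violated in the other; and you omit the edge case $|t(R)|=1$, which the paper disposes of separately, since there the removal amounts to attribute rather than object reduction.
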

\begin{proof}
The implication logic determines $\mathbb{K}_{tt}$ up to object reduction.
If $\mathbb{K}_{tt}$ consists of a single transition, it is reducible only if $(s_0^{in}, s_1^{out}) = (M \times\{in,out\})' = \emptyset'$. But since the resulting concept lattice is isomorphic to the one element lattice, $\mathfrak{B}(\emptyset, \emptyset, \emptyset) \cong \mathfrak{B}((s_0^{in}, s_1^{out}), \emptyset, \emptyset)$, this case could be considered as attribute reduction and therefore excluded (attribute reduction results in a different stem base).

A state $s_k$ can be removed from the linear order, if all transitions $(s_i, s_k)$ and $(s_k, s_j)$ are reducible. Then all $s_i$ have to be intersections of input, all $s_j$ intersections of output state intents, which cannot be generelly excluded. 

However, if $k=0$, only the output parts of the respective transitions are available for intersection. With the order relation of intersection, these intents build a lattice, and each nonempty lattice has $\bigvee$-irreducible elements. 
Hence, a corresponding transition $(s_0,s_j)$ is irreducible, and the linear order may be reconstructed. The same argument holds for a final output state $s_k$: There exist irreducible $(s_i, s_k)$.

Changing the order of two states may result in the same stem base of $\mathbb{K}_{tt}$, but generate a temporal context with different implicational logic: In the transitive context derivable from the transition context of Table \ref{tab:counterExLin}, the transition $(s_5,s_6)$ is reducible. Changing only the order of $s_5$ and $s_6$ results in the same transitive context, with the exception of transition $(s_6,s_5)$. It is also reducible to the same context. However, the corresponding temporal contexts are different. The implication $\Diamond F a \rightarrow \Diamond F b$ only holds in the original, $\Diamond F b \rightarrow \Diamond F a$ only in the modified temporal context.
\end{proof}

\begin{table}\centering
\begin{tabular}{|l||c|c|c|c||c|c|c|c|}
\hline\textbf{Transition}
&$a^{in}$ &$b^{in}$ &$c^{in}$ &$d^{in}$ &$a^{out}$ &$b^{out}$ &$c^{out}$ &$d^{out}$\\
\hline \hline
$(s_0^{in}, s_1^{out})$ &$\times$&&&$\times$&&$\times$&&$\times$\\ \hline
$(s_1^{in}, s_2^{out})$ &&$\times$&&$\times$&$\times$&&$\times$&\\ \hline
$(s_2^{in}, s_3^{out})$ &$\times$&&$\times$&&&$\times$&$\times$&\\ \hline
$(s_3^{in}, s_4^{out})$ &&$\times$&$\times$&&$\times$&$\times$&&$\times$\\ \hline
$(s_4^{in}, s_5^{out})$ &$\times$&$\times$&&$\times$&$\times$&&&\\ \hline
$(s_5^{in}, s_6^{out})$ &$\times$&&&&&$\times$&&\\ \hline
$(s_6^{in}, s_7^{out})$ &&$\times$&&&&&&\\
\hline
\end{tabular}
\caption{Transition context for a linearly ordered state set. If the order of $s_5$ and $s_6$ is exchanged, the stem base of the respective reduced $\mathbb{K}_{tt}$ remains the same, but a different $\mathbb{K}_{tmp}$ is generated.}
\label{tab:counterExLin}
\end{table}
Nevertheless, note that there exist cases 1. and 2. where $\Ktmp$ is reconstructable from the stem base of $\Ktt$. For instance, if the order is changed at an early point, this may result in the same temporal context. In Table \ref{tab:counterExLin2}, the object $(s_0, s_1)$ is reducible. Thus, supposing a linear order the relation of $s_0$ and  $s_1$ is not reflected by the stem base. In general, the inverse relation may induce different temporal attributes. In this example, however, the context with  $(s_1, s_0)$ as first transition has the same stem base, but also the same temporal context $\mathbb{K}_{tmp}$ (only the attributes $b^{in}$ and $b^{out}$ are changed in the first row). 
\begin{table}\centering
\begin{tabular}{|l||c|c|c||c|c|c|}
\hline\textbf{Transition}
&$a^{in}$ &$b^{in}$ &$c^{in}$ &$a^{out}$ &$b^{out}$ &$c^{out}$\\
\hline \hline
$(s_0^{in}, s_1^{out})$ &&$\times$&&&&\\ \hline
$(s_0^{in}, s_2^{out})$ &&$\times$&&$\times$&$\times$&\\ \hline
$(s_0^{in}, s_3^{out})$ &&$\times$&&&$\times$&$\times$\\ \hline
$(s_0^{in}, s_4^{out})$ &&$\times$&&&&$\times$\\ \hline
$(s_1^{in}, s_2^{out})$ &&&&$\times$&$\times$&\\ \hline
$(s_1^{in}, s_3^{out})$ &&&&&$\times$&$\times$\\ \hline
$(s_1^{in}, s_4^{out})$ &&&&&&$\times$\\ \hline
$(s_2^{in}, s_3^{out})$ &$\times$&$\times$&&&$\times$&$\times$\\ \hline
$(s_2^{in}, s_4^{out})$ &$\times$&$\times$&&&&$\times$\\ \hline
$(s_3^{in}, s_4^{out})$ &&$\times$&$\times$&&&$\times$\\
\hline
\end{tabular}
\caption{The transitive context for a linear order $s_0 \leq s_1 \leq s_2 \leq s_3 \leq s_4$. $(s_0,s_1)$ is reducible, as well as $(s_1,s_0)$ in a context with this transition instead, but both formal contexts generate the same temporal context.}
\label{tab:counterExLin2}
\end{table}
\begin{remark}
\textnormal{Implications of the form $B \rightarrow \Box \neg F m \text{ in }\mathbb{K}_{tmp}$ may be derived from implications $B^{in}\cup m^{out} \rightarrow \bot\text{ in }\mathbb{K}_{tt}$ (compare Proposition \ref{prop:KtKs}). However, implications with negated attributes in the premise can only be decided using the stem base of $\mathbb{K}_{tt}$ in the case of dichotomic scaling.\index{Scaling!dichotomic} Else, there is no explicit information regarding the non-occurrence of an attribute. In the example of Table \ref{tab:counterExLin}, the only implication in the stem base of $\mathbb{K}_{tt}$ with $a^{out}, b^{out}$ in the premise is $a^{out}, b^{out} \rightarrow d^{out}$. With $b^{out}, c^{out}, d^{out} \rightarrow \bot$, $\Box\, G a, \Box\, G b \rightarrow \Box\, \neg F c$ is derivable, but there is no inference for deriving from $\Ktt$ the valid implication $\Box\, F a, \Box\, F b, \Box\, \neg F c \rightarrow b$, which is supported by $\{s_3\}$ in $\Ktmp$. This can be easily seen, since there is no $\mathbb{K}_{tt}$ implication with $b^{in}$ in the conclusion. The same holds for the possibility operator $\Diamond$, since it is equivalent to the necessity operator in the case of deterministic time series.}
\end{remark}

\subsection{The test context}\label{sec:ruleEx}
In order to get a complete overview on valid \agrave pure'' or ``mixed" entailments between implications of $\mathbb{K}_{tt}$ and $\mathbb{K}_{tmp}$, we performed attribute exploration
of the following \textit{test context}: Given fixed sets $S$ and $M$, the attributes are implication forms $IF$\glossary{name={$IF$},description={Set of implication forms}, sort=$If$} in variables $V:=\{X,Y,Z\}$, the objects are all possible $\mathbb{K}_{tt}$ respectively the corresponding $\mathbb{K}_{tmp}$ (Table
\ref{tab:testCxt}) together with a variable assignment, and the incidence relation means: ``With the respective values of the set variables, the implication holds in the context.'' More precisely, the objects are appositions\index{Formal context!apposition} ($\mathbb{K}_{tt}|\mathbb{K}'_{tmp})$, where $\mathbb{K}'_{tmp}$ proceeds from $\mathbb{K}_{tmp}$ by replacing an object $s^{in}$ of the latter by all transitions
$(s^{in}, s^{out}) \in t(R)$. This state context $\mathbb{K}_{tmp}'$ is not object
clarified, i.e. redundant. The attributes  are given by $(s^{in}, s^{out})' =
s^{in}$$'$. According to Definition \ref{def:tmpCxt} of a temporal context, the last state of a finite path is omitted.


The investigated temporal operators are restricted to $\operatorname{ev}:=\Diamond F$, $\operatorname{alw}:=\Box G$ and $\operatorname{nev}:= \Box \neg F$ decidable by $\Ktt$, where the same quantifier $\exists$ or $\forall$ applies to paths and transitions (Definition \ref{def:tmpCxt}). 

Thus, we performed attribute exploration of the following test context:\index{Test context}
\glossary{name={$\mathbb{K}_{test}$}, description={Test context}, sort=$Ktest$}
\glossary{name={$\alpha$}, description={Variable assignment},sort=$Alpha$}
\[\mathbb{K}_{test} = (\{\alpha: \{X,Y,Z\} \rightarrow \mathfrak{P}(M)\setminus \emptyset\} / S_M \times \{\mathbb{K}_{tt}|\mathbb{K}_{tmp}'\}, IF, \models)\]
Variable assignments $\operatorname{mod} S_M$ (the symmetric group on $M$) are sufficient, because also $\{\mathbb{K}_{tt}|\mathbb{K}_{tmp}'\}$ is symmetric in the attributes $M$.
\begin{table}\centering
\begin{tabular}{|c||c|c|c||c|c|c||c|c|c|c|c|c|c|c|c|}
\hline
&\begin{sideways}$a^{in}$\end{sideways} &\begin{sideways}$b^{in}$\end{sideways} &\begin{sideways}$c^{in}$\end{sideways} &\begin{sideways}$a^{out}$\end{sideways} &\begin{sideways}$b^{out}$\end{sideways} &\begin{sideways}$c^{out}$\end{sideways}
&\begin{sideways}$\ev a$\end{sideways} &\begin{sideways}$\ev b$\end{sideways} &\begin{sideways}$\ev c$\end{sideways} &\begin{sideways}$\alw a\:$\end{sideways} &\begin{sideways}$\alw b$\end{sideways} &\begin{sideways}$\alw c$\end{sideways} &\begin{sideways}$\nev a$\end{sideways} &\begin{sideways}$\nev b$\end{sideways} &\begin{sideways}$\nev c$\end{sideways}\\  
\hline \hline
$(s_0^{in}, s_1^{out})$ &$\times$&&&$\times$&$\times$&&$\times$&$\times$&$\times$&&&&&&\\ \hline
$(s_0^{in}, s_2^{out})$ &$\times$&&&$\times$&$\times$&$\times$&$\times$&$\times$&$\times$&&&&&&\\ \hline
$(s_0^{in}, s_3^{out})$ &$\times$&&&&&$\times$&$\times$&$\times$&$\times$&&&&&&\\\hline 
$(s_1^{in}, s_2^{out})$ &$\times$&$\times$&&$\times$&$\times$&$\times$&$\times$&$\times$&$\times$&&&$\times$&&&\\ \hline
$(s_1^{in}, s_3^{out})$ &$\times$&$\times$&&&&$\times$&$\times$&$\times$&$\times$&&&$\times$&&&\\ \hline
$(s_2^{in}, s_3^{out})$ &$\times$&$\times$&$\times$&&&$\times$&&&$\times$&&&$\times$&$\times$&$\times$&\\
\hline
\end{tabular}
\caption{A single object of the test context: the apposition of the transitive and temporal contexts corresponding to the time series $a - ab - abc - c.$}
\label{tab:testCxt}
\end{table}

Since it was not feasible to investigate all possible implications, we chose the following classes with at most two premises (with different variables $X$, $Y$ or the constant $\top$) and a single conclusion. Many implications with more than two different attribute sets in the premise can be derived by the second and third Armstrong rule.
\begin{center}
\begin{tabular}{|l|l|l|}\hline
&$\mathbb{K}_{tt}$ &$\mathbb{K}_{tmp}$\\ \hline
1 or 2 premises& $X^{in}, Y^{out}, \top$ & $X,\, \Diamond FY,\, \Box GY,\, \Box \neg F Y, \top$\\
1 conclusion &$Z^{in}, Z^{out}, \bot$ &$Z,\, \Diamond FZ,\, \Box\, GZ,\, \Box \neg F Z, \bot$\\
\hline
\end{tabular} 
\end{center}
The sets were supposed to be nonempty. Thus, $\top:=\emptyset$\glossary{name={$\top$}, description={Empty attribute set},sort=$.Empty$} was considered explicitly, in order to avoid redundancies like $\Diamond F \emptyset = \Box G \emptyset$. Tautologic or unsatisfiable implication forms were not considered, like $\Box G Y \rightarrow \Diamond F Y$ or $\Diamond FY,\: \Box \neg F Y \rightarrow \bot$. 

\section{Inference rules for a three element attribute set $M$}\label{sec:rules3Attr}
Even with computational optimisation, it took 10 days, partly on two desktop computers, to generate the test context with $|M|=3$. Since computation time is exponential in $|M|$, it is not feasible to compute the test context for larger attribute sets. For $M=\{a,b,c\}$, we generated all deterministic time series\index{Process!deterministic} (linear orders) including cycles.

If the 81 inference rules of the stem base are proven, the hypothesis is confirmed that this attribute set is general enough. However, this was only possible for rules 1 to 56. For rule 57, 59 and 65, a supplementary condition is necessary restricting the allowed transitions (Propositions \ref{prop:nevEv} and \ref{prop:Yout1stStep}). Since during the computation of a stem base the generated rules depend on the previous, I proved only several further rules resulting in a large, but not complete set of inference rules for transitive and temporal context. Further research is needed: Expert centered attribute exploration should be performed by introducing counterexamples to rules that are not generalisable and by proving the remaining or new implications. Then the Theorem of Duquenne-Guigues \ref{theorem:dg}\index{Theorem!of Duquenne-Guigues} ensures the completeness of the stem base rules for drawing all rule-like conclusions between the implications investigated as attributes of the test context. 

A further purpose of the following proofs is to determine presuppostions as generally as possible. It turned out that most rules are also valid for nondeterminstic processes.

As indicated by the used sofware \texttt{ConImp} \cite{Bur03}, we have to prove only the parts of the premise and conclusion not following from previous implications. Several implications follow trivially, since $\bot$ means \agrave every attribute occurs", for instance $\alw Y \rightarrow \bot \: \models \alw Y \rightarrow Z^{in}$. 

In a first step, general propositions are proven for example inferences. The attached file \texttt{testCxt.pro} -- the output of the test context exploration -- contains all 81 inference rules. Their indices are listed in Table \ref{tab:allInfRules} together with the propositions necessary for the proofs (sometimes easy consequences are needed, or analogous applications of the proofs). 

For the subsequent propositions, the following presuppositions are made: $\Ktt = (t(R),M\times \{in,out\},\nabla)$ with $R \subseteq S \times S$ and $\Ktmp = (S^{in}, M \cup T, I^{in} \cup I_T)$ are formal contexts generated by the same process. $B$, $X$, $Y$ and $Z \subseteq M$ are their basic attribute sets. If the process is not qualified, it may be deterministic or nondeterministic.\index{Process!nondeterministic} $i,j, k \in \mathbb{N}_0$ are indices from sets according to the number of transitions of a path. Implications with attribute sets indexed by $.^{in}$ or $.^{out}$ refer to a transitive context and the left part of the apposition $\Ktt \mid \Ktmp'$, the remaining attributes to $\Ktmp'$ and therefore to the temporal context $\Ktmp$ generated by $\Ktt$.

\begin{proposition}\label{prop:2ndArmstrong}
For all $\Ktt$ and $\Ktmp$ generated by the same process, the following inference rules hold by the second Armstrong rule.\index{Armstrong rules} 
\begin{align*}
11:\: & \operatorname{alw} Y \rightarrow \bot  &\models\: &X \wedge \alw Y \rightarrow \bot,\: \ev Y \wedge \alw Y \rightarrow \bot,\\ 
&&&X \wedge \nev Y \rightarrow \nev Z\\
5: \: &\top \rightarrow Z^{in} &\models\: & X^{in} \rightarrow Z^{in},\: Y^{out} \rightarrow Z^{in},\: \ev Y \rightarrow Z,\: X^{in} \wedge Y^{out} \rightarrow Z^{in}, ...
\end{align*}
\end{proposition}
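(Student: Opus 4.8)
The plan is to show that every inference rule displayed is a sound instance of the second Armstrong rule, once it is read through the semantics of $\mathbb{K}_{test}$. Recall that an inference rule $\Phi \models \Psi$ is itself an implication whose premise $\Phi$ and conclusion $\Psi$ are implication \emph{forms} in the variables $X,Y,Z$; it holds in $\mathbb{K}_{test}$ iff for every object of the test context — that is, every apposition $\Ktt \mid \Ktmp'$ together with a variable assignment $\alpha\colon \{X,Y,Z\} \rightarrow \mathfrak{P}(M)$ — the concrete implication $\alpha(\Phi)$ forces the concrete implication $\alpha(\Psi)$. Since the second Armstrong rule of \cite[Proposition~21]{GW99} (augmentation of the premise: $A \rightarrow B$ entails $A \cup C \rightarrow B$) is sound in every formal context, it suffices to verify that each target form is obtained from its premise form by a single premise augmentation; soundness of that concrete derivation then lifts to validity of the implication form in $\mathbb{K}_{test}$.

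Two conventions make the ``mixed'' rules fall under this scheme. First, $\bot$ denotes the full attribute set, so an implication $A \rightarrow \bot$ is preserved under enlarging $A$, its conclusion already containing every attribute; thus from $\alw Y \rightarrow \bot$ augmenting the premise with $X$, respectively with $\ev Y$, yields $X \wedge \alw Y \rightarrow \bot$ and $\ev Y \wedge \alw Y \rightarrow \bot$. Second, in the apposition $\Ktt \mid \Ktmp'$ the non-temporal attributes of $\Ktmp'$ are exactly the input-state attributes, which coincide with the $.^{in}$ attributes of $\Ktt$; under this identification $Z^{in}$ and $Z$ are the same attribute. Hence $\top \rightarrow Z^{in}$ says that every (input) object carries $Z$, and augmenting the empty premise $\top$ with $X^{in}$, with $Y^{out}$, with $\ev Y$, or with $X^{in} \wedge Y^{out}$ gives in turn $X^{in} \rightarrow Z^{in}$, $Y^{out} \rightarrow Z^{in}$, $\ev Y \rightarrow Z$, and $X^{in} \wedge Y^{out} \rightarrow Z^{in}$, covering rule $5$ and the displayed continuation.

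First I would fix an arbitrary object $(\Ktt \mid \Ktmp', \alpha)$ of $\mathbb{K}_{test}$, assume the premise form holds there as a concrete implication, and apply augmentation once for each listed conclusion, invoking the two conventions wherever a $\bot$-conclusion or a switch between $Z^{in}$ and $Z$ occurs. Because augmentation is sound, each concrete conclusion holds in that object, and since the object was arbitrary the inference rule holds throughout $\mathbb{K}_{test}$. The remaining rule $X \wedge \nev Y \rightarrow \nev Z$ is obtained identically, by augmenting the base implication $\nev Y \rightarrow \nev Z$ with $X$.

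I do not expect a genuine obstacle, since the argument is bookkeeping: the only care needed is to confirm that ``holds by the second Armstrong rule'' is legitimate at the level of the test context, i.e. that soundness of augmentation for concrete implications transfers to validity of the corresponding implication form under \emph{every} assignment. The mildly delicate point will be checking this uniformly over all $\alpha$ — in particular that no instantiation collapsing a set to $\emptyset$ or realising the $\bot$-convention produces a spurious exception — so that the rule is valid for the whole object set of $\mathbb{K}_{test}$ rather than merely a typical instance.
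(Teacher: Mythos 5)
Your handling of rule 5 and of the first two conclusions of rule 11 coincides with the paper's own proof: the same augmentation schema (second Armstrong rule), the case $A_2 = \emptyset$ for rule 5, and the identification of $Z^{in}$ with $Z$ via equality of extents, $(Z^{in})^{I^{in}} = Z^{I^{in}}$, after restriction of $\Ktmp$ to $S^{in}$. Your extra care in lifting soundness of a concrete augmentation to validity of the implication form under every variable assignment of $\mathbb{K}_{test}$ is sound and is left implicit in the paper.

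The gap is your last step. You derive $X \wedge \nev Y \rightarrow \nev Z$ ``by augmenting the base implication $\nev Y \rightarrow \nev Z$ with $X$'', but no such base implication is available: it is not among the stated hypotheses (the displayed premise of rule 11 is $\alw Y \rightarrow \bot$ alone), it is not a tautology of temporal contexts, and it is not entailed by $\alw Y \rightarrow \bot$. Concretely, for $M = \{a,b,c\}$ take the deterministic series $s_0 = \{c\} \rightarrow s_1 = \{b\} \rightarrow s_2 = \emptyset$ with $X = \{c\}$, $Y = \{a\}$, $Z = \{b\}$: the extent of $\alw a$ is empty, so $\alw Y \rightarrow \bot$ holds vacuously, yet $s_0$ carries $c$ and $\nev a$ but not $\nev b$, so both $\nev Y \rightarrow \nev Z$ and $X \wedge \nev Y \rightarrow \nev Z$ fail in this object of the test context. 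What rescues the proposition --- and what the paper's equally terse proof silently relies on --- is the convention stated just before it: the displayed rules are \texttt{ConImp} short forms of stem-base rules, so only the parts of the premise and conclusion not following from previous implications are shown, and the actual pseudo-closed premise of rule 11 contains further implication forms, in particular ones from which $X \wedge \nev Y \rightarrow \nev Z$ does follow by a single premise expansion (e.g.\ $\nev Y \rightarrow \nev Z$ itself). With that convention made explicit your augmentation step becomes exactly the paper's argument; read literally, as you read it, the step fails, so you must either invoke the short-form convention or restate the rule with its full premise.
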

\begin{proof}
Rule 11 follows by the second Armstrong rule, i.e. an expansion of the premise $P$, with attributes $A_i \subseteq M \cup (M \times \{in,out\}) \cup T$ (where attributes of $\Ktmp$ are in $M$ or $T$):
\begin{equation}\label{eq:2ndArmstrong}
\begin{aligned}
P: \underline{\qquad A_2\:} &\underline{\:\rightarrow A_3}\\
C: A_1 \wedge A_2 &\rightarrow A_3
\end{aligned}
\end{equation}
In rule 5, $A_2 = \emptyset$. With the presupposed implication of $\Ktt$, all implications of $\Ktt$ with  the conclusion $Z^{in}$ and all implications of $\Ktmp$ with  the conclusion $Z$ can be inferred, since the extents are equal: $Z^{in} = Z \subseteq M \Rightarrow (Z^{in})^{I^{in}} = Z^{I^{in}}$ (remember that in $\Ktmp$ the object set is restricted to the set of input states $S^{in}$, consequently the relation $I$). By further applications of the second Armstrong rule, implications with arbitrarily enlarged premises are shown to be valid.
\end{proof}

\begin{proposition}\label{prop:definitions}
For all $\Ktt$ and $\Ktmp$ generated by the same process, the following inference rules hold by definition of the temporal operators $\operatorname{ev},\:\operatorname{alw}$ and $\operatorname{nev}$:
\begin{align*}
1.\: &26\colon\: &\operatorname{ev} Y \wedge \alw Y \rightarrow \bot 
\qquad &\models &&\alw Y \rightarrow \bot\\
&27\colon\:  &\ev Y \wedge \alw Y \rightarrow \nev Z\qquad &\models 
&&\alw Y \rightarrow \nev Z\\
2.\: &41:\: &X \wedge \ev Y \rightarrow \bot\qquad &\models &&X^{in} 
\wedge Y^{out} \rightarrow \bot\\
&44:\: &X \wedge \ev Y \rightarrow \ev Z\qquad &\models &&X \wedge \alw 
Y \rightarrow \ev Z\\
&46:\: &X \wedge \ev Y \rightarrow Z^{in}\qquad &\models &&X^{in} \wedge 
Y^{out} \rightarrow Z^{in},\:  X \wedge \alw Y \rightarrow Z\\
3.\: &32:\: &X \wedge \nev Y \rightarrow \alw Z\qquad &\models &&X 
\wedge \nev Y \rightarrow \ev Z\\
&37:\: &X \wedge \alw Y \rightarrow \alw Z\qquad &\models &&X \wedge 
\alw Y \rightarrow \ev Z\\
  &43:\: &X \wedge \ev Y \rightarrow \alw Z\qquad &\models &&X^{in} 
\wedge Y^{out} \rightarrow Z^{out},\: X \wedge \alw Y \rightarrow \alw Z.
\end{align*}
\end{proposition}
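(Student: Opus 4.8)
The plan is to reduce all seven listed entailments to three elementary facts and then dispatch the three groups (1), (2), (3) separately, since each group is organised around one of these facts, so that the bulk of the work is bookkeeping rather than a single hard argument.

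First I would record the three auxiliary observations. (i) The ``always implies eventually'' principle from the end of Remark~\ref{rem:semanticsKtt}: for every $s \in S^{in}$ and $m \in M$ one has $s\,I_T\,\alw m \Rightarrow s\,I_T\,\ev m$, because $s$ being an input state guarantees at least one transition $(s^{in},s^{out}) \in t(R)$ with $s = s^{in}$; consequently, read as attribute sets in $\Ktmp$, $\alw Y$ entails $\ev Y$. (ii) The $\Ktt$--$\Ktmp$ dictionary of Remark~\ref{rem:semanticsKtt}: a transition $(s^{in},s^{out})$ carrying $Y^{out}$ witnesses $s^{in}\,I_T\,\ev Y$, while $s^{in}\,I_T\,\alw Z$ forces $Z^{out}$ on every transition issuing from $s^{in}$; moreover the plain attribute set $X$ of $\Ktmp'$ is, on each object of the apposition $\Ktt \mid \Ktmp'$, literally the input part $X^{in}$ of $\Ktt$, since that object is a transition and its attributes are those of $s^{in}$. (iii) The purely syntactic monotonicity of $\models$: an implication stays valid if its premise is strengthened or its conclusion weakened, an instance of the Armstrong rules already invoked in Proposition~\ref{prop:2ndArmstrong}.

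For group~(1), rules 26 and 27, I would argue that $\ev Y \wedge \alw Y$ and $\alw Y$ have the same extent: by (i) every object satisfying $\alw Y$ already satisfies $\ev Y$. Hence the two implications are equivalent and the stated entailment holds with conclusion $\bot$ or $\nev Z$ unchanged. For group~(2) I would split into two patterns. Rules 44 and the second half of 46 strengthen the premise $\ev Y$ to $\alw Y$: any object with $X \wedge \alw Y$ has $X \wedge \ev Y$ by (i), so the conclusion ($\ev Z$, respectively $Z$) is inherited by (iii). Rule 41 and the first half of 46 instead translate through (ii): if a transition carries $X^{in} \wedge Y^{out}$, then its input state carries $X \wedge \ev Y$, so the $\Ktmp$-hypothesis forces the conclusion ($\bot$, respectively $Z^{in}$) on that transition. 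Here I would also note that $\ev m$ and $\nev m$ are jointly unsatisfiable on $S^{in}$, so a conclusion $\bot$ in $\Ktmp$ means exactly ``empty extent'', which transports verbatim to $\bot$ in $\Ktt$.

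For group~(3) the mechanism is weakening the conclusion or strengthening the premise via (i) and (iii): rules 32 and 37, and the second half of 43, follow because $\alw Z \models \ev Z$ (conclusion weakening) or $\alw Y \models \ev Y$ (premise strengthening). The only genuinely new argument is the first half of rule~43, namely $X \wedge \ev Y \rightarrow \alw Z \models X^{in} \wedge Y^{out} \rightarrow Z^{out}$: given a transition with $X^{in} \wedge Y^{out}$, its input state has $X \wedge \ev Y$ by (ii), hence $\alw Z$ by the hypothesis, and then (ii) again yields $Z^{out}$ on every transition out of that input state, in particular on the one at hand. I expect the main obstacle to be careful bookkeeping rather than depth: one must keep straight which attributes live in $\Ktt$ versus $\Ktmp'$ on the shared transition objects of the apposition, and must justify the degenerate reading of $\bot$ as ``empty extent'' so that no supplementary restriction on $t(R)$ (of the kind later needed for rules 57, 59, 65) is required for any of these seven rules.
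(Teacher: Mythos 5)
Your proof is correct and takes essentially the same route as the paper: your auxiliary facts (i) and (ii) are precisely the paper's definition implications (\ref{eq:alwEv}), (\ref{eq:alwBout}) and (\ref{eq:BoutEv}), and your three case groups reproduce the paper's three inference modi (\ref{eq:premPremise}), (\ref{eq:premConclusion}) and (\ref{eq:conPremise}), including the treatment of rule 43 via $\alw Z \rightarrow Z^{out}$ together with premise strengthening. The only difference is presentational---you verify the entailments semantically on the apposition $\Ktt \mid \Ktmp'$ (with the correct side remarks that $X$ and $X^{in}$ coincide on transition objects and that $\bot$ amounts to an empty extent by the $\ev$/$\nev$ exclusion), whereas the paper states the same derivations syntactically as instances of the Armstrong rules.
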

\begin{proof}
Let $D$ be an implication valid in the test context that follows immediately from the definitions. Examples are the subsequent implications of $\Ktmp$ (\ref{eq:alwEv}) or with attributes of $\Ktmp$ and $\Ktt$:
\begin{align}
\alw B &\rightarrow \ev B\label{eq:alwEv}\\
\alw B &\rightarrow B^{out}\label{eq:alwBout}\\
B^{out} &\rightarrow \ev B\label{eq:BoutEv}.
\end{align}
We distinguish the following inference modi, with attribute sets $A_i \subseteq M \cup (M \times \{in,out\}) \cup T$. They are special cases of the third Armstrong rule that expresses generalised transitivity.\index{Armstrong rules}
\begin{enumerate}
\item
\begin{equation}\label{eq:premPremise}
\begin{aligned}
P: A_1 \wedge A_2 &\rightarrow A_3\\
D: \underline{\qquad\:\, A_2 } &\underline{\:\rightarrow A_1}\\
C:  \qquad\:\, A_2 &\rightarrow A_3
\end{aligned}
\end{equation}
With $A_1:= \ev Y,\: A_2 := \alw Y$ and $A_3 := \bot$, $P \models C$ is inference rule 26. Rule 27 and others follow with different attribute sets $A_3$ (see Table \ref{tab:allInfRules}, where also the used implications $D$ are indicated in the column \agrave Definition").

\item In 41, 44 and 46, a part of the premise is replaced by a stronger assumption. Here D is either $\alw Y \rightarrow \ev Y$ or $Y^{out} \rightarrow \ev Y$:
\begin{equation}\label{eq:premConclusion}
\begin{aligned}
P: A_1 \wedge A_2 &\rightarrow A_3\\
D: \underline{\qquad\; A_4 } &\underline{\:\rightarrow A_2}\\
C:  A_1 \wedge A_4 &\rightarrow A_3
\end{aligned}
\end{equation}

\item In 32 and 37, $D\colon \alw Y \rightarrow \ev Y$ is applied to the conclusion of the presupposed rule (ordinary transitive inference):
\begin{equation}\label{eq:conPremise}
\begin{aligned}
P: A_1 \wedge A_2 &\rightarrow A_3\\
D: \underline{\qquad\; A_3 } &\underline{\:\rightarrow A_4}\\
C:  A_1 \wedge A_2 &\rightarrow A_4
\end{aligned}
\end{equation}
In 43, $D\colon \alw Z \rightarrow Z^{out}$ is used, additionally $\alw Y \rightarrow \ev Y$ and $Y^{out} \rightarrow \ev Y$ according to inference pattern (\ref{eq:premConclusion}).
\end{enumerate}
\end{proof}

If the inference rules are implemented as an extension of an attribute exploration software, it has to be decided if the rules according to this proposition are an immediate input to a reasoner. Alternatively, implications as \eqref{eq:alwEv}, \eqref{eq:alwBout} and  \eqref{eq:BoutEv} could be assembled, together with an implementation of the Armstrong rules for attribute exploration itself. Proposition \ref{prop:2ndArmstrong} may have little  practical relevance: If an implication $A_2 \rightarrow A_3$ is accepted, $A_1 \cup A_2$ does not contain the closure $A_3$ of the pseudo-intent $A_2$. Therefore, it is no pseudo-intent and cannot be proposed as implication of the stem base. At least the proposition could be useful for \agrave mixed'' implications: rule 5 translates the implication $\top \rightarrow Z^{in}$ of $\Ktt$ into implications of $\Ktmp$. However, I will not discuss such issues of practical applicability further. 

\begin{proposition}\label{prop:ZoutAlw}
Let $\mathbb{K}_{tmp}$ be a temporal context and $\Ktt$ the related transitive context.
Then the entailment between implications of $\mathbb{K}_{tmp}$ and $\Ktt$ is valid:
\[53\colon X^{in} \rightarrow Z^{out} \models X \rightarrow \alw Z.\]
\end{proposition}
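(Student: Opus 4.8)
The plan is to prove the entailment directly by unfolding both implications, using the characterisation of $\alw = \Box G$ through the transitive context supplied by Remark \ref{rem:semanticsKtt}. The governing observation is that the input attributes $X^{in}$ of a transition $(s^{in},s^{out})$ depend only on its first component $s^{in}$, so the hypothesis about $\Ktt$ can be applied transition by transition to every state reachable from a fixed start state.

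First I would rewrite the hypothesis in terms of state intents. By the definition of $\nabla$ (Definition \ref{def:transCxt}), an object $(s^{in},s^{out}) \in t(R)$ respects $X^{in}$ exactly when $X \subseteq (s^{in})^I$, and it carries $Z^{out}$ exactly when $Z \subseteq (s^{out})^I$, where $(\cdot)^I$ denotes the intent in the underlying state context $\Ks$. Hence $X^{in} \rightarrow Z^{out}$ holding in $\Ktt$ means precisely that for every $(s^{in},s^{out}) \in t(R)$, the inclusion $X \subseteq (s^{in})^I$ forces $Z \subseteq (s^{out})^I$.

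Next I would unfold the conclusion. Fix an object $s \in S^{in}$ of $\Ktmp$ with $X \subseteq s^I$; since $X,Z \subseteq M$, the base-attribute part of the derivation in $\Ktmp$ (via $I^{in}$) agrees with that in $\Ks$. Using the convention $\alw Z = \{\Box G z \mid z \in Z\}$ together with Remark \ref{rem:semanticsKtt}, it suffices to show that for every $z \in Z$ and every $(s,s^{out}) \in t(R)$ one has $z \in (s^{out})^I$. The verification is then immediate: for any such transition the first component is $s^{in}=s$, and $X \subseteq s^I = (s^{in})^I$, so the rewritten hypothesis yields $Z \subseteq (s^{out})^I$ and in particular $z \in (s^{out})^I$. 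This gives $s\, I_T\, \Box G z$ for all $z \in Z$, i.e.\ $s$ respects $X \rightarrow \alw Z$; as $s \in S^{in}$ was arbitrary, the implication holds in $\Ktmp$.

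I expect no essential obstacle, since this is effectively the set-valued form of the $\models$ direction of entailment \eqref{prop:1} in Proposition \ref{prop:KtKs}, decomposed over the elements $z \in Z$ (an implication with a set-valued conclusion being equivalent to the conjunction of its single-attribute instances). The only point requiring a little care is the restriction of $\Ktmp$ to the input states $S^{in}$: membership $s \in S^{in}$ guarantees at least one outgoing transition, so the universal condition defining $\Box G z$ is non-vacuous, and its quantifier range over $\{(s,s^{out}) \in t(R)\}$ matches the range over which the hypothesis is applied exactly.
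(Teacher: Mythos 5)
Your proof is correct and follows essentially the same route as the paper: the paper disposes of rule 53 in one line by applying the equivalence \eqref{prop:1} of Proposition \ref{prop:KtKs} to every $z \in Z$, which is exactly your decomposition of the set-valued conclusion $\alw Z$ into single-attribute instances. You merely inline the $\models$ direction of that equivalence (via Remark \ref{rem:semanticsKtt} and the definition of $\nabla$) instead of citing it, and your closing observations -- that the argument is elementwise prop~\eqref{prop:1}, and that the restriction to $S^{in}$ aligns the quantifier ranges -- match the paper's reasoning.
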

\begin{proof}
The rule follows from Proposition \ref{prop:KtKs} \eqref{prop:1} applied to every $z \in Z$.
\end{proof}
If the inferences of Proposition \ref{prop:KtKs} are entered as background knowledge for the exploration of the test context (with variants for the set variables $X,Y$ and $Z$), rule 53 and a few others can be decided automatically.

\begin{proposition}\label{prop:rule1}
For all $\Ktt$ and $\Ktmp$ generated by the same process, the following inference rules hold by exclusion of $\ev/\alw$ and $\nev$:
\begin{align}
&\label{eq:exclusion}\;1:\:\ \top \models \: \ev Y \wedge \nev Y \rightarrow \bot,\: \alw Y \wedge \nev Y \rightarrow \bot &\\
&38:\: X \wedge \alw Y \rightarrow \ev Z,\: X \wedge \alw Y \rightarrow \nev Z \models\: X \wedge \alw Y \rightarrow \bot &
\end{align}
\end{proposition}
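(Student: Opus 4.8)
The plan is to reduce both assertions to the single semantic fact that, for a fixed attribute $m$, the predicates $\ev m$ and $\nev m$ cannot hold simultaneously at any input state, and then to obtain rule $38$ purely formally from rule $1$ via the Armstrong rules. Throughout I work in $\Ktmp = (S^{in}, M \cup T, I^{in} \cup I_T)$ with the semantics of $\Diamond F$, $\Box G$ and $\Box \neg F$ fixed by Definition \ref{def:tmpCxt}, and I recall that an implication $A \rightarrow \bot$ holds precisely when the premise $A$ has empty extent, i.e.\ when no object satisfies all attributes of $A$. Since the variable sets are assumed nonempty, in each case I may pick a witness $m \in Y$ (respectively $m \in Z$).

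For rule $1$ I would argue directly from the definitions. Let $s \in S^{in}$ and $m \in Y$, and suppose $s\, I_T\, \ev m$ and $s\, I_T\, \nev m$ both held. By Definition \ref{def:tmpCxt} the first yields a path $\pi \in \Seq^s_R$ and an index $i \in \N$ with $(s_i, m) \in I$, while the second asserts $(s_j, m) \notin I$ for every $\pi \in \Seq^s_R$ and every $j \in \N$; applied to this very $\pi$ and $i$ these contradict each other. Hence $\{\ev m, \nev m\}$, a subset of the premise $\ev Y \wedge \nev Y$, already has empty extent, so $\ev Y \wedge \nev Y \rightarrow \bot$ holds in $\Ktmp$. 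The companion implication $\alw Y \wedge \nev Y \rightarrow \bot$ follows either by the same clash ($\alw m$ forces $(s_1, m) \in I$, which exists because $s \in S^{in}$ guarantees a transition out of $s$, hence $\Seq^s_R \neq \emptyset$ and, by Definition \ref{def:path}, a genuine $s_1$ with $1 \in \N$, whereas $\nev m$ forbids it) or, more economically, by noting $\alw m \models \ev m$ (Remark \ref{rem:semanticsKtt}) and reducing to the case already settled.

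Rule $38$ is then an inference, not a tautology, and I would derive it from rule $1$ together with the Armstrong calculus rather than re-examining the semantics. Writing $P := X \wedge \alw Y$ for the common premise, the two hypotheses $P \rightarrow \ev Z$ and $P \rightarrow \nev Z$ combine, by joining their conclusions, into $P \rightarrow \ev Z \wedge \nev Z$. Chaining this with the instance $\ev Z \wedge \nev Z \rightarrow \bot$ of rule $1$ (valid since $Z \neq \emptyset$) through ordinary transitive inference, i.e.\ the third Armstrong rule in the pattern \eqref{eq:conPremise}, yields $P \rightarrow \bot$, which is exactly $X \wedge \alw Y \rightarrow \bot$.

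The only points requiring care are the quantifier bookkeeping in rule $1$ and the convention $0 \notin \N$: because indices start at $1$, the predicates genuinely constrain states strictly after $s$, and the nonemptiness of $\Seq^s_R$ for $s \in S^{in}$ is what makes the $\alw$/$\nev$ clash bite, since a vacuous path set would let both hold. I expect this edge case — ensuring at least one future state exists so that $\alw m$ and $\nev m$ are truly incompatible — to be the main subtlety; everything else is a direct reading of Definition \ref{def:tmpCxt} and a two-line application of the Armstrong rules.
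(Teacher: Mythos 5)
Your proof is correct and follows essentially the same route as the paper: a direct semantic clash from Definition \ref{def:tmpCxt} for the $\ev$/$\nev$ exclusion, reduction of the $\alw$/$\nev$ case via $\alw m \models \ev m$ together with the Armstrong pattern \eqref{eq:premConclusion}, and rule 38 by transitive inference \eqref{eq:conPremise} with conclusion set $\ev Z \cup \nev Z$. Your explicit attention to $\Seq^s_R \neq \emptyset$ for $s \in S^{in}$ is a welcome detail the paper handles only implicitly (via Remark \ref{rem:semanticsKtt}), but it does not change the argument.
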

\begin{proof}
By definition, $\ev y$ and $\nev y$ cannot be both attributes of a state $s \in S^{in}$, for all $y \in Y$. By pairwise exclusion, the extent $(\ev Y \cup\, \nev Y)^{I_T}$ is empty; this means $\ev Y \wedge\, \nev Y \rightarrow \bot$. The second implication of rule 1 follows with $\eqref{eq:alwEv}$ and the third Armstrong rule \eqref{eq:premConclusion}. 38 is inferred by $\eqref{eq:conPremise}$ with the attribute set $A_3 = \ev Z\, \cup\, \nev Z$.
\end{proof}

A surprising rule expressing more complex temporal relationships is the following.
\begin{proposition}\label{prop:inEvAlw}
Let $\Ktmp$ be a temporal context of a deterministic process, i.e. $\forall s \in S^{in}\colon |[s]R| := |\{s^{out} \in S \mid \exists (s, s^{out}) \in R\}| = 1$. Then the subsequent entailments between implications of $\Ktmp$ are valid:
\begin{align*}
39\colon \alw Y \rightarrow Z,\: \alw Y \rightarrow \ev Z &\models \alw Y \rightarrow \alw Z\\
34\colon \nev Y \rightarrow Z,\: \nev Y \rightarrow \ev Z &\models \nev Y \rightarrow \alw Z.
\end{align*}
\end{proposition}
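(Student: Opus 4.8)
The plan is to exploit determinism to collapse the path quantifiers of Definition \ref{def:tmpCxt} and then to \emph{propagate} the premise attribute ($\alw Y$ for rule~39, $\nev Y$ for rule~34) along the unique maximal path issuing from each input state. First I would fix an $s \in S^{in}$ at which the premise attribute holds and use the hypothesis $|[s]R| = 1$ to obtain the single path $\pi = (s_0, s_1, \ldots) \in Seq^s_R$ with $s_0 = s$; under determinism the temporal attributes $\ev z$, $\alw z$, $\nev z$ reduce to assertions about this one $\pi$, so the goal $\alw Z$ at $s$ is simply $(s_i, z) \in I$ for all $i \geq 1$ and all $z \in Z$.

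The central step is a tail-propagation lemma: if $\alw Y$ (resp.\ $\nev Y$) holds at $s_0$, then it holds at every $s_i$ on $\pi$ that lies in $S^{in}$. This is immediate from Definition \ref{def:tmpCxt}, because the unique path out of $s_i$ is the tail $(s_i, s_{i+1}, \ldots)$ of $\pi$, and the defining condition of $\alw Y$ (every index in $\N$ satisfies $(s_k, y) \in I$; resp.\ $\notin I$ for $\nev Y$) read along this tail is just the restriction of the condition at $s_0$ to indices $\geq i+1$. The convention $0 \notin \N$ is exactly what makes these index ranges line up.

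Next I would apply the first premise $\alw Y \rightarrow Z$ (resp.\ $\nev Y \rightarrow Z$) at each $s_i \in S^{in}$ on $\pi$; propagation guarantees the premise attribute there, yielding $(s_i, z) \in I$ for every $z \in Z$. If $\pi$ is infinite (this includes any cyclic orbit, whose unfolding is an infinite path) I obtain $(s_i, z) \in I$ for all $i \geq 1$, i.e.\ $s\, I_T\, \alw z$ for each $z \in Z$, so $\alw Z$ holds at $s$ and the first premise alone suffices.

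The main obstacle, and the reason the second premise is indispensable, is the terminal state of a finite maximal path $\pi = (s_0, \ldots, s_n)$. Here $s_n \notin S^{in}$, so the first premise cannot be applied at $s_n$ and leaves a gap at index $n$ in the chain $(s_1, z), \ldots, (s_{n-1}, z) \in I$. I would close it by applying the second premise $\alw Y \rightarrow \ev Z$ (resp.\ $\nev Y \rightarrow \ev Z$) at $s_{n-1}$: the path out of $s_{n-1}$ is only $(s_{n-1}, s_n)$, so the sole index that can witness $\ev z$ is $1$, forcing $(s_n, z) \in I$ for all $z \in Z$. Combining gives $(s_i, z) \in I$ for all $1 \leq i \leq n$, which is precisely $\alw Z$ at $s$. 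Rule~34 runs verbatim with $\nev$ in place of $\alw$, the only change being in the propagation lemma, where one tracks $(s_k, y) \notin I$; and it is exactly determinism that prevents the $\ev Z$-witness at $s_{n-1}$ from escaping onto a sibling branch, so that this patch is valid.
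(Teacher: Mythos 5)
Your proof is correct and takes essentially the same route as the paper's: the paper likewise propagates $\alw Y$ (resp.\ $\nev Y$) to all subsequent input states, applies the first premise $\alw Y \rightarrow Z$ there, and closes the gap at a final state by applying $\alw Y \rightarrow \ev Z$ at the immediately antecedent state, using determinism to force the $\ev Z$-witness onto that final state. The only cosmetic difference is that the paper phrases the argument through the transitive context $t(R)$ (Remark \ref{rem:semanticsKtt}) and an explicit trivial case when $\alw Y$ fails at $s_0^{in}$, while you work directly with the unique path in $Seq^s_R$ and its tails.
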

\begin{proof}
We consider an arbitrary transitive context $\mathbb{K}_{tt} = (t(R), M\times\{in,out\}, \nabla)$ generating $\mathbb{K}_{tmp}$ and differentiate between two cases for a state $s_0^{in} \in S^{in}$:
\begin{enumerate}
\item $\alw Y \nsubseteq (s_0^{in})^{I_T}$.\\
$\Rightarrow$ The implications and the entailment trivially hold.
\item $\alw Y \subseteq (s_0^{in})^{I_T}$. Then\\
\begin{tabular}{lll}
&&$\alw Y$ $\rightarrow Z,\: \alw Y \rightarrow \ev Z$ is supported by $s_0^{in}$.\\
$\Leftrightarrow$ &&$\forall\, (s_0^{in}, s^{out}) \in t(R),\: \forall\, y \in Y\colon (s^{out}, y) \in I \qquad\qquad (*)$\\
&$\wedge$ &$\forall z \in Z\colon (s_0^{in}, z) \in I$\\
&$\wedge$ &$ \forall z \in Z\: \exists\, (s_0^{in}, s^{out}) \in t(R)\colon (s^{out},z) \in I$.\\
$\Rightarrow$ &&$\forall\, (s_0^{in}, s^{out}) \in t(R)\colon \alw Y \subseteq (s_0^{out})^{I_T}$,\\
&&if $s^{out} \in \Ktmp$, i.e. $\exists\, s^{out'}\colon (s^{out},s^{out'}) \in t(R)$.\\
$\Rightarrow$ &&$\forall\, (s_0^{in}, s^{out}) \in t(R),\: \forall z \in Z\colon (s^{out},z) \in I$:\\
&&If $s^{out}$ is a final state, $(s^{out},z) \in I$ follows with $\alw Y \rightarrow \ev Z$\\
&& for an immediately antecedent state and a deterministic process.\\
$\Leftrightarrow$ &&$\alw Y \rightarrow \alw Z$ holds for $s_0^{in}$. 
\end{tabular}
\end{enumerate} 
Like $\alw$, $\nev$ applies to every state subsequent to $s_0^{in}$, if defined and $\nev Y \subseteq (s_0^{in})^{I_T}$. Therefore, the entailment for $\nev Y$ is proven in the same way with $(s^{out},y)\notin I$ in statement $(*)$.
\end{proof}

\begin{table}\centering
\begin{tabular}{|c|l|l|}\hline
Rule &Inference modus &Definition\\ \hline
1. &Proposition \ref{prop:rule1}, (\ref{eq:premConclusion}) &(\ref{eq:alwEv})\\ \hline
2. &(\ref{eq:2ndArmstrong}) &\\ \hline
3. &(\ref{eq:conPremise}), (\ref{eq:2ndArmstrong})  &(\ref{eq:alwEv}), (\ref{eq:alwBout}), (\ref{eq:BoutEv})\\ \hline
4. &(\ref{eq:2ndArmstrong}) &\\ \hline
5. &(\ref{eq:2ndArmstrong}) &\\ \hline
6. &trivial, (\ref{eq:2ndArmstrong}) &\\ \hline
7. &(\ref{eq:2ndArmstrong}) &\\ \hline
8. &(\ref{eq:conPremise}), (\ref{eq:2ndArmstrong}) &(\ref{eq:alwEv})\\ \hline
9. &(\ref{eq:2ndArmstrong}) &\\ \hline
10. &(\ref{eq:2ndArmstrong}) &\\ \hline
11. &trivial, (\ref{eq:2ndArmstrong}) &\\ \hline
12. &(\ref{eq:2ndArmstrong}) &\\ \hline
13. &(\ref{eq:conPremise}), (\ref{eq:2ndArmstrong}) &(\ref{eq:alwEv})\\ \hline
14. &(\ref{eq:2ndArmstrong}) &\\ \hline
15. &(\ref{eq:2ndArmstrong}) &\\ \hline
16. &trivial, (\ref{eq:2ndArmstrong}) &\\ \hline
17. &(\ref{eq:premConclusion}), (\ref{eq:2ndArmstrong}) &(\ref{eq:alwEv})\\ \hline
18. &(\ref{eq:premConclusion}), (\ref{eq:conPremise}), (\ref{eq:2ndArmstrong}) &(\ref{eq:alwEv}), (\ref{eq:alwBout}), (\ref{eq:BoutEv})\\ \hline
19. &(\ref{eq:premConclusion}), (\ref{eq:2ndArmstrong}) &(\ref{eq:alwEv})\\ \hline
20. &(\ref{eq:premConclusion}), (\ref{eq:2ndArmstrong})  &(\ref{eq:alwEv}), (\ref{eq:BoutEv})\\\hline
21. &trivial, (\ref{eq:2ndArmstrong}) &\\ \hline
22. &(\ref{eq:2ndArmstrong}) &\\ \hline
23. &(\ref{eq:conPremise}), (\ref{eq:2ndArmstrong}) &(\ref{eq:alwEv}), (\ref{eq:alwBout})\\ \hline
24. &(\ref{eq:2ndArmstrong}) &\\ \hline
25. &(\ref{eq:2ndArmstrong}) &\\ \hline
26. &(\ref{eq:premPremise}) &(\ref{eq:alwEv})\\ \hline
27. &(\ref{eq:premPremise}) &(\ref{eq:alwEv})\\ \hline
28. &(\ref{eq:premPremise}) &(\ref{eq:alwEv})\\ \hline
29. &(\ref{eq:premPremise}) &(\ref{eq:alwEv})\\ \hline
30. &(\ref{eq:premPremise}) &(\ref{eq:alwEv})\\ \hline
31. &trivial &\\ \hline
32. &(\ref{eq:conPremise}) &(\ref{eq:alwEv})\\ \hline
33. &(\ref{eq:conPremise}), (\ref{eq:exclusion}) & \\ \hline
34. &Proposition \ref{prop:inEvAlw} & \\ \hline
35. &(\ref{eq:conPremise}), (\ref{eq:exclusion}) & \\ \hline
36. &trivial &\\ \hline
37. &(\ref{eq:conPremise}) &(\ref{eq:alwEv})\\ \hline
38. &Proposition \ref{prop:rule1}& \\ \hline
39. &Proposition \ref{prop:inEvAlw} &(\ref{eq:alwEv})\\ \hline
40. &(\ref{eq:conPremise}), (\ref{eq:exclusion}) & \\ \hline
41. &trivial, (\ref{eq:premConclusion})  &(\ref{eq:alwEv}), (\ref{eq:BoutEv})\\ \hline
42. &(\ref{eq:premConclusion}) &(\ref{eq:alwEv})\\ \hline
43. &(\ref{eq:premConclusion}), (\ref{eq:conPremise}) &(\ref{eq:alwEv}), (\ref{eq:alwBout}), (\ref{eq:BoutEv})\\ \hline
44. &(\ref{eq:premConclusion}) &(\ref{eq:alwEv})\\ \hline
45. &(\ref{eq:conPremise}), (\ref{eq:exclusion}) & \\ \hline
46. &(\ref{eq:premConclusion}) &(\ref{eq:alwEv}), (\ref{eq:BoutEv})\\ \hline
47. &Proposition \ref{prop:ZoutAlw} &\\ \hline
\end{tabular}
\end{table}
\begin{table}\centering
\begin{tabular}{|c|l|l|}\hline
Rule &Inference modus &Definition\\ \hline
48. &identical &\\ \hline
49. &trivial, (\ref{eq:2ndArmstrong}) & \\ \hline
50. &(\ref{eq:premConclusion}), (\ref{eq:2ndArmstrong}), Proposition \ref{prop:ZoutAlw} &(\ref{eq:alwBout})\\ \hline
51. &(\ref{eq:2ndArmstrong}) & \\ \hline
52. &identical & \\ \hline
53. &Proposition \ref{prop:ZoutAlw} &\\ \hline
54. &identical &\\ \hline 
55. &trivial &\\ \hline
56. &(\ref{eq:premConclusion}), Proposition \ref{prop:ZoutAlw} &(\ref{eq:alwBout})\\ \hline
57. &Proposition \ref{prop:nevEv} with supplementary condition &\\ \hline
58. &(\ref{eq:premConclusion}) &(\ref{eq:alwBout})\\ \hline
59. &\ref{prop:Yout1stStep} with supplementary condition &\\ \hline
65. &\ref{prop:Yout1stStep} with supplementary condition &\\ \hline
68. & \ref{prop:rule68} &\\ \hline
\end{tabular}\label{tab:allInfRules}
\caption{Inference rules for implications of $\Ktt$ and $\Ktmp$. The second column lists the inference modi or propositions required for the proofs, the third column implications following from the definitions of a temporal context.}
\end{table}

\begin{proposition}\label{prop:nevEv} 
For all $\Ktt$ and $\Ktmp$ generated by the same process, the following inference rule holds, if $\forall (s^{in},s^{out}) \in R, \forall y \in Y\colon y \in (s^{out})^I \Rightarrow Y \subseteq (s^{out})^I$, especially for $|Y|=1\!:$
\[57\colon X^{in} \wedge Y^{out} \rightarrow Z^{out},\: X \wedge \operatorname{ev} Y \rightarrow \operatorname{nev}Z \models X \wedge \operatorname{ev} Y \rightarrow \bot.\]
\end{proposition}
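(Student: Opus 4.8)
The plan is to argue by contradiction, unfolding the transitive-context characterisation of the temporal operators from Remark~\ref{rem:semanticsKtt}. Suppose the conclusion fails, i.e.\ the premise $X \wedge \ev Y$ has a nonempty extent in $\Ktmp$: there is an input state $s_0 \in S^{in}$ with $X \subseteq (s_0)^I$ and $\ev y \in (s_0)^{I_T}$ for every $y \in Y$. I would first feed $s_0$ into the second hypothesis $X \wedge \ev Y \rightarrow \nev Z$, which yields $\nev z \in (s_0)^{I_T}$ for all $z \in Z$; by the definition of $\nev$ in \eqref{def:nev} (in its transitive form) this means that no reachable state carries any attribute of $Z$, i.e.\ $z \notin (s^{out})^I$ for every $(s_0, s^{out}) \in t(R)$ and every $z \in Z$.

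Next I would exploit $\ev Y \subseteq (s_0)^{I_T}$. Since the attribute sets are assumed nonempty, pick any $y_0 \in Y$; by the transitive-context form of $\ev$ there is a transition $(s_0, s^{out}) \in t(R)$ with $y_0 \in (s^{out})^I$. The key use of the supplementary hypothesis enters here: because every pair of $t(R)$ arises from a path of length at least one (recall $0 \notin \N$, so $\ev$ refers to proper successors), the last step $(\alpha_{n-1}, s^{out})$ of such a path lies in $R$, and the condition $\forall (s^{in}, s^{out}) \in R\, \forall y \in Y\colon y \in (s^{out})^I \Rightarrow Y \subseteq (s^{out})^I$ upgrades $y_0 \in (s^{out})^I$ to $Y \subseteq (s^{out})^I$. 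Thus all of $Y$ holds at the \emph{single} reachable state $s^{out}$.

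Now both hypotheses of the first implication $X^{in} \wedge Y^{out} \rightarrow Z^{out}$, read in $\Ktt$, are met by the object $(s_0, s^{out}) \in t(R)$: we have $X \subseteq (s_0)^I$ (so the $in$-part holds) and $Y \subseteq (s^{out})^I$ (so the $out$-part holds), whence the implication forces $Z \subseteq (s^{out})^I$. Since $Z \neq \emptyset$, some $z \in Z$ then holds at the reachable state $s^{out}$, contradicting $\nev z \in (s_0)^{I_T}$ established above. Hence no such $s_0$ exists, and $X \wedge \ev Y \rightarrow \bot$ holds in $\Ktmp$.

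I expect the main obstacle to be the bridging step of the second paragraph: $\ev Y$ only guarantees that each $y \in Y$ is witnessed at \emph{some}, possibly different, reachable state, whereas the first implication needs all of $Y$ at one output state. The supplementary condition is precisely what collapses these potentially distinct witnesses onto a common state; the delicate points are to verify that it legitimately transfers from a single-step transition in $R$ to the last step of an arbitrary $t(R)$-path, and to check the degenerate reading (the case $|Y|=1$, where the condition is vacuous and the argument is immediate, together with the role of the convention $0 \notin \N$). Everything else is routine unfolding of Definition~\ref{def:tmpCxt} and Remark~\ref{rem:semanticsKtt}.
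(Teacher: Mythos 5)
Your proposal is correct and follows essentially the same route as the paper's proof: a proof by contradiction in which the supplementary condition collapses the (potentially distinct) witnesses of $\ev Y$ onto a single transition $(s_0, s^{out}) \in t(R)$ with $Y \subseteq (s^{out})^I$, after which the first implication forces $Z \subseteq (s^{out})^I$, contradicting $\nev Z$ at $s_0$. You merely spell out more carefully than the paper the step transferring the condition from a one-step transition in $R$ to the last step of a $t(R)$-path, which the paper only gestures at.
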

\begin{proof}


Suppose there is a state $s_0^{in} \in S^{in}$ with attributes $X$ and $\operatorname{ev} Y$. Then a transition $(s_0^{in},s^{out})$ exists in $\Ktt$ with $Y \subseteq (s^{out})^I$. For this step the condition of the proposition is important: A single transition with the property has to exist, the $y \in Y$ must not be attributes of subsequent states or of states belonging to different paths. By the first presupposed implication, then also $Z \subseteq (s^{out})^I$. This contradicts the second implication. Hence, there is no state with attributes $X$ and $\operatorname{ev} Y$ and the second implication trivially holds, i.e. $X \wedge \ev Y \rightarrow \bot$.
\end{proof}

Without the condition for the transitions $(s^{in},s^{out}) \in R$, the inference is valid:
\[X^{in} \wedge Y^{out} \rightarrow Z^{out},\: X \wedge \operatorname{ev} Y \rightarrow \operatorname{nev}Z \models X^{in} \wedge Y^{out} \rightarrow \bot.\]

This is the first inference rule which could not be generally proven during the attribute exploration of the test context restricted to a three element attribute set. The supplementary condition mostly restricts $Y$ to a one element set. This contradicts the aim of proving rules for implications with set variables, so the condition was not presupposed generally. As mentioned in the introduction of this section, an alternative would be an expert centered exploration, where counterexamples are introduced. However, this is out of the scope of this thesis. Nevertheless, I proved three further rules which express interesting dynamic dependencies.

\begin{proposition}\label{prop:Yout1stStep}
For all $\Ktt$ and $\Ktmp$ generated by the same process, the following inference rules holds, if $\forall\, (s^{in},s^{out}) \in R, \forall\, y \in Y\colon y \in (s^{out})^I \Rightarrow Y \subseteq (s^{out})^I$, especially for $|Y|=1\!:$
\[65: Y^{out} \rightarrow Z^{in},\: \ev Y \rightarrow \nev Z \models \ev Y \rightarrow Z \]
\[59: X^{in} \wedge Y^{out} \rightarrow Z^{in},\: X \wedge \ev Y \rightarrow \nev Z
\models\: X \wedge  \ev Y \rightarrow Z.\]
\end{proposition}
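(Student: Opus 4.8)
The plan is to treat Rule 59 as the general case and to obtain Rule 65 as its specialization in which the left-hand attribute set $X$ is absent; the two arguments are identical except for carrying the extra conjunct $X$. First I would unfold all temporal attributes through Remark \ref{rem:semanticsKtt}, so that $\ev$, $\nev$ and the plain attributes all become statements about the transitive relation $t(R)$ and the incidence of $\Ks$. Then fix an input state $s_0^{in} \in S^{in}$ and assume it carries the premise of the target conclusion, i.e. $X \subseteq (s_0^{in})^I$ and $\ev Y \subseteq (s_0^{in})^{I_T}$; if no such state exists the conclusion is vacuously valid, so this is the only case that needs work.

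The heart of the argument is to manufacture a \emph{single} transition out of $s_0^{in}$ whose output state carries the whole set $Y$. I would pick one fixed $y \in Y$. Since $\ev y \in (s_0^{in})^{I_T}$, Remark \ref{rem:semanticsKtt} yields a pair $(s_0^{in}, s^{out}) \in t(R)$ with $y \in (s^{out})^I$. Because $(s_0^{in}, s^{out})$ lies in the transitive closure, it is realized by a path whose final edge is an ordinary one-step transition of the form $(\,\cdot\,, s^{out}) \in R$. At this point the supplementary hypothesis $\forall (s^{in}, s^{out}) \in R\,\forall y \in Y\colon y \in (s^{out})^I \Rightarrow Y \subseteq (s^{out})^I$ applies to that last edge and upgrades $y \in (s^{out})^I$ to $Y \subseteq (s^{out})^I$. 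Consequently the object $(s_0^{in}, s^{out})$ of $\Ktt$ has an intent containing $X^{in} \cup Y^{out}$.

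It then remains to feed this transition into the transitive-context premise. Since $X^{in} \wedge Y^{out} \rightarrow Z^{in}$ holds in $\Ktt$ and the intent of $(s_0^{in}, s^{out})$ contains $X^{in} \cup Y^{out}$, it must contain $Z^{in}$ as well, that is $Z \subseteq (s_0^{in})^I$. This is exactly the assertion $X \wedge \ev Y \rightarrow Z$ evaluated at $s_0^{in}$, proving Rule 59, and the case without $X$ gives Rule 65.

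I expect the bundling step to be the only genuine obstacle: without the supplementary condition the witnesses for the various $\ev y$, $y \in Y$, may be scattered over distinct reachable states, so no single object of $\Ktt$ need carry $Y^{out}$ at once, and this is precisely why the rule resisted a general proof from the $|M|=3$ test context. I would finally note the contrast with Proposition \ref{prop:nevEv}: there the first premise deposits $Z$ on the \emph{output} side ($Z^{out}$), producing a conflict with $\nev Z$ and hence the empty conclusion $\bot$, whereas here the first premise deposits $Z$ on the \emph{input} side ($Z^{in}$), so the conclusion $Z$ is obtained directly and the $\nev Z$-premise is not actually consumed in the derivation.
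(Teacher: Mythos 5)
Your proof is correct, but it takes a genuinely more direct route than the paper's. The paper argues via the global shape of paths: from $Y^{out} \rightarrow Z^{in}$ it deduces that every path containing a $Y$-state has the form $Z - \cdots - Z - Y - \cdots$, then invokes the second premise $\ev Y \rightarrow \nev Z$ to collapse this to $Z - Y\overline{Z} - \overline{Z} - \cdots$, and reads the conclusion off from the resulting picture. You instead evaluate the conclusion pointwise: fix a state $s_0$ carrying $X$ and $\ev Y$, extract a witness for a single $y \in Y$, apply the supplementary condition to the \emph{last edge} of the path realizing $(s_0, s^{out}) \in t(R)$ to upgrade $y \in (s^{out})^I$ to $Y \subseteq (s^{out})^I$, and then feed the single $\Ktt$-object $(s_0, s^{out})$ into the premise implication. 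Both proofs hinge on the same use of the supplementary condition — bundling the scattered witnesses for the various $\ev y$ into one output state, which is precisely the failure mode without the condition — but your version buys two things: it works verbatim for nondeterministic processes, whereas the paper's linear path pictures implicitly reflect the deterministic test-context setting from which the rule was extracted; and it exposes a fact the paper's proof obscures, namely that under the supplementary condition the premise $\ev Y \rightarrow \nev Z$ (resp.\ $X \wedge \ev Y \rightarrow \nev Z$) is logically redundant, so $Y^{out} \rightarrow Z^{in}$ alone already entails $\ev Y \rightarrow Z$. That observation is accurate — your one-witness argument never touches $\nev Z$ — and it is consistent with the rule's provenance: in the test context, where the condition is not assumed, the second premise is not superfluous, which is why the exploration produced the rule in this form. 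Your closing contrast with Proposition \ref{prop:nevEv} is also correct: there the first premise deposits $Z$ on the output side, colliding with $\nev Z$ and forcing the conclusion $\bot$, while here $Z$ lands on the input side of the witnessing transition and is simply read off.
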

\begin{proof}
If $Y^{out} \rightarrow Z^{in}$ is supported by any, then by the immediate preceding transition $(s_k,s_{k+1})$ and by the transitions $(s_{k-i},s_{k+1})$ with all previous input states. All underlying paths have the structure:
\[ Z - \cdots- Z - Y - \cdots\]
Since $\ev Y$ is an attribute of the first state of a path, $\ev Y \rightarrow \nev Z$ requires the structure:
\[ Z - Y\overline{Z} - \overline{Z} - \cdots\]
Thus, only the first state has the attribute $\ev Y$ and also the attributes in $Z=Z^{in}$ as demonstrated.\\
The condition makes sure that there is no other path with a state $s \in (\ev Y)^{I_T}$, but there is no subsequent state $s' \in Y^I$. If there is no transition with $Y^{out}$ at all, the condition implies that $\ev Y$ does not apply to any state, and the inferred implication trivially holds.

Rule 59 is proven analogously.
\end{proof}

\begin{proposition}\label{prop:rule68}
For all $\Ktt$ and $\Ktmp$ generated by the same deterministic process, the inference rule holds:
\[68: \top \rightarrow Z^{in},\: X^{in} \wedge\, \alw Y \rightarrow \ev Z \models X^{in} \wedge\, \ev Y \rightarrow \ev Z.\]
\end{proposition}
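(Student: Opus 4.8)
The plan is to reduce the rule to an elementary statement about the unique forward orbit of a deterministic process, and then to split into two cases according to whether that orbit terminates immediately after the start state.

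First I would translate all three implications from the apposition $\Ktt \mid \Ktmp'$ into conditions on individual states of $S^{in}$. Since for a transition $(s^{in},s^{out})$ the $\Ktmp'$-attributes are read off from $s^{in}$, and since the input attribute $x^{in}$ of $\Ktt$ coincides with the plain attribute $x$ of $\Ktmp$ on input states (cf.\ Proposition \ref{prop:2ndArmstrong}), the hypothesis $\top \rightarrow Z^{in}$ says precisely that $Z \subseteq s^{I}$ for every $s \in S^{in}$, while $X^{in} \wedge \alw Y \rightarrow \ev Z$ says: for every $s \in S^{in}$ with $X \subseteq s^{I}$ and $\alw Y \subseteq s^{I_T}$ one has $\ev Z \subseteq s^{I_T}$. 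The goal then becomes: for every $s_0 \in S^{in}$ with $X \subseteq s_0^{I}$ and $\ev Y \subseteq s_0^{I_T}$, show $\ev Z \subseteq s_0^{I_T}$. Throughout I would use the $\Ktt$-semantics of the operators from Remark \ref{rem:semanticsKtt} and Definition \ref{def:tmpCxt}, keeping in mind the convention $0 \notin \N$, so that $\ev$ refers to states strictly after $s_0$.

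Next, fixing such an $s_0$, I would invoke determinism ($\forall s \in S^{in}\colon |[s]R| = 1$, as in Proposition \ref{prop:inEvAlw}) to obtain the unique successor $s_1$ with $(s_0,s_1) \in R$, and distinguish two cases. If $s_1 \in S^{in}$, so the orbit does not stop at $s_1$, then the first hypothesis applied to $s_1$ gives $Z \subseteq s_1^{I}$; as $s_1$ sits at position $1$ on the unique path issuing from $s_0$, every $z \in Z$ is witnessed there, hence $\ev Z \subseteq s_0^{I_T}$. Notably this case needs neither $X \subseteq s_0^{I}$ nor $\ev Y \subseteq s_0^{I_T}$.

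If instead $s_1 \notin S^{in}$, then $s_1$ is terminal and the unique maximal path from $s_0$ is $(s_0,s_1)$, so $s_1$ is the only future state of $s_0$. Here the key observation is that $\ev$ and $\alw$ collapse: $\ev Y \subseteq s_0^{I_T}$ forces $Y \subseteq s_1^{I}$, which, because $s_1$ is the sole future state, is exactly $\alw Y \subseteq s_0^{I_T}$. Combined with $X \subseteq s_0^{I}$, the second hypothesis then delivers $\ev Z \subseteq s_0^{I_T}$, completing the argument. I expect this terminal-successor case to be the crux: it is precisely where determinism is indispensable, since for a branching process $\ev Y$ need not imply $\alw Y$ (different $y \in Y$ could be realised on distinct terminal successors), and the rule indeed fails without it. The remaining care is bookkeeping — the identification of attributes across $\Ktt$, $\Ktmp'$ and $\Ktmp$, and the handling of the $0 \notin \N$ indexing — rather than any deep difficulty.
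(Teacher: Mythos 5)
Your proof is correct and takes essentially the same route as the paper's: a case split on whether the unique successor of $s_0$ is terminal, with $\top \rightarrow Z^{in}$ yielding $\ev Z$ directly when the successor is again an input state, and the collapse of $\ev Y$ into $\alw Y$ at the last input state of a path (the only place determinism is used) feeding the second hypothesis in the terminal case. Your write-up simply makes explicit the semantic bookkeeping that the paper's three-line proof leaves implicit.
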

\begin{proof}
By $\top \rightarrow Z^{in}$, all but the final states have attributes $Z$. The same holds for the corresponding output states of a transition. For the last input state of a path and a deterministic process, $\ev Y$ and $\alw Y$ have the same meaning, thus the inferred implication holds, too.
\end{proof}

\section{Inference rules for Boolean attributes}\label{sec:rulesBooleAttr}
In order to get a complete overview on valid entailments for Boolean attribute values, we performed manual attribute exploration of a test context similar to Section \ref{sec:ruleEx} with attributes $IF$ as listed below. Formal contexts are considered where the attributes are dichotomically scaled, i.e. $\Ktmp$ is an extension of a state context according to Definition \ref{def:stateCxt1}: $\Ks = (S,M,I)$ and $M\subseteq E \times F$. Now $F:=\{0,1\}$. The attributes of $\Ktt$ and $\Ktmp$ are constructed as usual from $M$. In the subsequent implications, the sets $B_0,B_1,C_0,C_1$ and $C$ are nonempty subsets of $M$. The indices express which of the fluents is assigned to the entities: $m=(e,f) \in B_0 \text{ or } C_0 \Rightarrow \gamma(s)(e) =0$ and $m \in B_1 \text{ or } C_1 \Rightarrow \gamma(s)(e) =1$.\label{imp} We suppose that all states and transitions are completely defined.
\begin{center}
\begin{minipage}[t]{60mm}
\begin{enumerate}
	\item $B^{in}\rightarrow C^{in}$
  \item $B^{in}\rightarrow C_0^{out} \equiv B^{in}\rightarrow$ $\nev C_1$
  \item $B^{in}\rightarrow C_1^{out} \equiv B^{in}\rightarrow$ $\alw C_1$
  \item $B^{in}\rightarrow$ $\ev C_1$
  \item $B_0^{out}\rightarrow C^{in}$
  \item $B_1^{out}\rightarrow C^{in}$
  \item $\ev B_1$ $\rightarrow C$
  \item $\alw B_1$ $\rightarrow C$
  \item $\nev B_1$ $\rightarrow C$
  \item $B_0^{out}\rightarrow C_0^{out}$
  \item $B_0^{out}\rightarrow C_1^{out}$
\end{enumerate}
\end{minipage}
\begin{minipage}[t]{45mm}
 \begin{enumerate}\setcounter{enumi}{11}
 \item $B_1^{out}\rightarrow C_0^{out}$
  \item $B_1^{out}\rightarrow C_1^{out}$
  \item $\ev B_1$ $\rightarrow$ $\ev C_1$
  \item $\ev B_1$ $\rightarrow$ $\alw C_1$
  \item $\ev B_1$ $\rightarrow$ $\nev C_1$
  \item $\alw B_1$ $\rightarrow$ $\ev C_1$
  \item $\alw B_1$ $\rightarrow$ $\alw C_1$
  \item $\alw B_1$ $\rightarrow$ $\nev C_1$
  \item $\nev B_1$ $\rightarrow$ $\ev C_1$
  \item $\nev B_1$ $\rightarrow$ $\alw C_1$
  \item $\nev B_1$ $\rightarrow$ $\nev C_1$
\end{enumerate}
\end{minipage}
\end{center}

The equivalences in 2. and 3. follow from Proposition \ref{prop:KtKs} (\ref{prop:1}). Since the
implications comprising input attributes are independent from those related only to output
attributes, attribute exploration was performed separately for the first 9 and the remaining
13 implications. Results for the second part are shown here.

The exploration with \texttt{ConImp}\index{ConImp} started from a hypothetical context as single object of the test context, where no implications are valid. Before, 25 known entailments were added as background rules (BR) like
those of Proposition \ref{prop:KtKs} or other rules following from the definitions like $\alw B_1 \rightarrow \ev B_1$. A counterexample had to be chosen carefully, since an object not having its maximal attribute set might preclude a valid entailment (compare p. \pageref{maxCounterEx}). The exploration resulted in the following stem base of only 14 entailments.
Most of them are background rules (they are accepted automatically during the exploration), but not
all of these are needed in order to derive all valid entailments between the chosen implications.
This demonstrates the effectivity and minimality of the algorithm. Entailments 5., 6., 7. and 10. were new findings.
\begin{enumerate}
  \item $\nev B_1$ $\rightarrow$ $\alw C_1$
  $\models$ $\nev B_1$ $\rightarrow$ $\ev C_1$ (BR 1)
  \item $\nev B_1$ $\rightarrow$ $\ev C_1$, $\nev B_1$ $\rightarrow$ $\nev C_1$
  $\models \bot$ (BR 11)
  \item $\alw B_1$ $\rightarrow$ $\alw C_1$
  $\models$ $\alw B_1$ $\rightarrow$ $\ev C_1$ (BR 2)
  \item $\alw B_1$ $\rightarrow$ $\ev C_1$, $\alw B_1$ $\rightarrow$ $\nev C_1$
  $\models \bot$ (BR 14)
  \item $\ev B_1$ $\rightarrow$ $\nev C_1$, $\nev B_1$ $\rightarrow$ $\nev C_1$
  $\models$ $B_0^{out}$ $\rightarrow$ $C_0^{out}$, $B_1^{out}$ $\rightarrow$ $C_0^{out}$
  \item $\ev B_1$ $\rightarrow$ $\nev C_1$, $\alw B_1$ $\rightarrow$ $\nev C_1$
  $\models$ $B_1^{out}$ $\rightarrow$ $C_0^{out}$
  \item $\ev B_1$ $\rightarrow$ $\nev C_1$, $\alw B_1$ $\rightarrow$ $\ev C_1$
  $\models \bot$
  \item $\ev B_1$ $\rightarrow$ $\alw C_1$ $\models$
  $\ev B_1$ $\rightarrow$ $\ev C_1$, $\alw B_1$ $\rightarrow$ $\ev C_1$,
  $\alw B_1$ $\rightarrow$ $\alw C_1$ (BR 3)
  \item $\ev B_1$ $\rightarrow$ $\ev C_1$
  $\models$ $\alw B_1$ $\rightarrow$ $\ev C_1$ (BR 4)
  \item $\ev B_1$ $\rightarrow$ $\ev C_1$, $\nev B_1$ $\rightarrow$ $\ev C_1$
  $\models$ $B_0^{out}$ $\rightarrow$ $C_1^{out}$,
  $B_1^{out}$ $\rightarrow$ $C_1^{out}$, $\ev B_1$ $\rightarrow$ $\alw C_1$
  \item $B_1^{out}$ $\rightarrow$ $C_1^{out}$  $\models$ $\ev B_1$ $\rightarrow$ $\ev C_1$,
  $\alw B_1$ $\rightarrow$ $\ev C_1$, $\alw B_1$ $\rightarrow$ $\alw C_1$\\
  (BR 4, BR 5 $\Leftarrow$ Proposition \ref{prop:KtKs} (\ref{prop:3}) (\ref{prop:4}))
  \item $B_1^{out}$ $\rightarrow$ $C_0^{out}$
  $\models$ $\alw B_1$ $\rightarrow$ $\nev C_1$ (BR 9 $\Leftarrow$ Proposition \ref{prop:KtKs} (\ref{prop:3}))
  \item $B_0^{out}$ $\rightarrow$ $C_1^{out}$ $\models$
  $\nev B_1$ $\rightarrow$ $\ev C_1$, $\nev B_1$ $\rightarrow$ $\alw C_1$\\
  (BR 1, 10 $\Leftarrow$ Proposition \ref{prop:KtKs} (\ref{prop:3}))
  \item $B_0^{out}$ $\rightarrow$ $C_0^{out}$
  $\models$ $\nev B_1$ $\rightarrow$ $\nev C_1$ (BR 6 $\Leftarrow$ Proposition \ref{prop:KtKs} (\ref{prop:3}))
\end{enumerate}

It remains to prove the rules of this stem base, which is straightforward from the definitions. I am giving some hints.

BR 1, 2, 3 and 4 are based on $\alw A \rightarrow \ev A$, $A \subseteq M$, and BR 11 and 14 on $\nev C_1 \wedge \ev C_1 \rightarrow \bot$. Hence, in BR 14 $\alw B_1 \rightarrow \bot$ follows in the underlying contexts. This implication has not been considered explicitly, but it is presupposed that other implications with $\alw B_1$ do not hold in this case. Their common extent is empty in the test context and $\emptyset' = \bot$ follows.

7.: By the third Armstrong rule \eqref{eq:premConclusion} and $\alw B_1 \rightarrow \ev B_1$, $\alw B_1 \rightarrow (\nev C_1 \wedge \ev C_1)$ follows from the presupposed implications. This is precisely the premise of rule 4 (BR 14).

10.: Inversely, in all possible cases the states / transitions have the attribute $\ev C_1$ and therefore also $\alw C_1$ and $C_1^{out}$. This means explicitly: $\top \rightarrow \ev C_1$, $\top \rightarrow  \alw C_1$, $\top \rightarrow C_1^{out}$. 5. is a parallel rule concerning $\nev C_1$. Rules 7. and 10. suggest that implications with empty premise $\top$ or with conclusion $\bot$ should be considered explicitly. 

11.: This rule has to be modified slightly for the first inference, according to a restriction of Proposition \ref{prop:KtKs} after this exploration had been finished:
\[m_1^{out} \rightarrow C_1^{out} \models \ev m_1 \rightarrow \ev C_1\: (m_1 = (e,1) \in M).\]

As a conclusion it can be stated that we have derived another set of rules (for Boolean attributes), which can shorten attribute exploration and narrow the decisions of an expert to really new implications. This rule set represents a sound and complete entailment calculus for the selected class of implications for transition and state contexts under the condition of Proposition \ref{prop:nevEv}:
\[ \forall (s^{in},s^{out}) \in R, \forall  m \in B_1\colon m \in (s^{out})^I \Rightarrow B_1 \subseteq (s^{out})^I,\]
in particular if $|B_1|=1$.



\section{Overview of the implemented R scripts}\label{sec:scripts}
\texttt{Concept Explorer} \cite{ConExp}\index{Concept Explorer} and \texttt{ConImp} \cite{Bur03} \index{ConImp} were used for attribute exploration. R scripts were programmed for the following tasks:
\begin{itemize}
 \item First, thanks to Mike Behrisch for generating all deterministic time series (including cycles) for 1 to 3 attributes, by his C++ program \texttt{AllinOne1.1.cpp}. The output files are stored as \texttt{inferenceRules/ergebnis2-3/ergebnis1.xt, .../ergebnis2.txt, .../ergebnis.3.txt}.
 \item \texttt{generateTestContext3.2.r} uses:
    \begin{itemize}
       \item The file \texttt{generateKttKtmpFct1.0.r} contains a function generating an apposition of $\Ktt$ (or $\Kt$) and $\Ktmp'$.
	\item The function \texttt{testImpFct2.0.r} decides about he validity of an implication in an arbitrary one-valued formal context.
  \end{itemize}
 Felix Steinbeck optimised the code so that the script terminated in yet acceptable time.
  \item \texttt{startPartContext.r}: R script for the serial computation of the test context.
	\item The output files were object reduced with \texttt{ConExp} and concatenated to \texttt{testContext3AttrImpForms\_noEmptyset3.2\_complete.txt}, a tabulator separated text file. \agrave noEmptyset" indicates that the empty premise or conclusion were considered explicitly and only nonempty sets were assigned to the variables $X, Y$ and $Z$.
  \item \texttt{testCxt.cxt}: The test context converted by \texttt{ConExp} to the Burmeister format readable by \texttt{ConImp}.
  \item \texttt{testContext3AttrImpForms\_noEmptyset3.2\_imp.txt}: Stem base (inference rules)
    as computed by \texttt{ConExp}.
  \item \texttt{testCxt.pro} (text file): Stem base as computed by \texttt{ConImp}, with short form of rules highlighting attributes (implicaton forms) not following from previous rules.
  \item \texttt{tCxtImp.duq}: Stem base stored in \texttt{ConImp} format.
\end{itemize}
For details, see the comments within the R files on the attached CD.

%

\chapter[Gene regulatory networks I: Analysis of a Boolean network\\from literature]{Gene regulatory networks I: Analysis of a Boolean network from literature}\label{ch:bSubtilis}
\section{Gene regulatory networks}\label{sec:geneRegProc}
\chaptermark{Analysis of a Boolean network from literature}
\begin{figure} 
\centering 
\includegraphics[width=8cm]{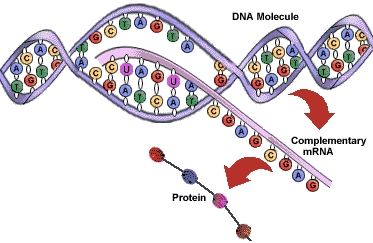} 
\caption{Gene expression: the two phases of transcription and translation  [www.scientificpsychic.com/fitness/aminoacids1.html].}\label{fig:geneExpr} 
\end{figure} 
To understand normal and destructive cellular reactions, systems biology has developed models describing processes at the molecular level. A fundamental process is \textit{gene expression},\index{Process!gene regulatory} which is a sequence of two phases, \textit{transcription} and \textit{translation} (Figure \ref{fig:geneExpr}): i) during 
\textit{transcription}, messenger ribonucleic acid (mRNA) is produced according to the genetic DNA template (i.e., a DNA sequence coding for a single protein); and ii) during \textit{translation}, proteins are built from amino acids using these mRNA templates. \cite{Alb08}

Proteins are the main regulators of living organisms; they activate almost every chemical reaction 
within an organism as enzymes, build new structures during cell division or transduce biochemical 
signals from the cell surface to the cytoplasm and the nucleus, e.g., by phosphorylation of target 
proteins. These signals can activate another class of proteins, the \textit{transcription factors}, which 
bind to the DNA and thus are able to initiate, enhance or repress transcription. \cite{Alb08} 

Mathematical and computational models may assist biologists in further research activities by generating predictions and hypotheses that can be experimentally tested. Network models,\index{Gene regulatory network} generated on the basis of extracted information and/or experimental data, facilitate the analysis of interactions among different key molecules and provide new insight into complex biological pathways and interactions (for an overview of methods see \cite{Schl07} and \cite{Hec09}).

In the present chapter, the proposed method is applied to the analysis of a gene regulatory network assembled from literature and database information in \cite{deJ04} and transformed to a Petri net\label{Petri net}\index{Petri net} as well as a Boolean network in \cite{Ste07}.

\section{Sporulation in \textit{Bacillus subtilis}}\label{sec:sporulation}
%
\textit{B. subtilis} is a gram positive soil bacterium. Under extreme environmental stress, it
produces a single endospore, which can survive ultraviolet or gamma radiation, acid, hours of
boiling or long periods of starvation. The bacterium leaves the vegetative growth phase in favour
of a dramatically changed and highly energy consuming behaviour, and it dies at the end of the
sporulation process. This corresponds to a switch between two clearly distinguished genetic
programs, which are complex but comparatively well understood.

By literature and database search, de Jong et al.~\cite{deJ04} identified 12 main regulators,
constructed a model of piecewise linear differential equations and obtained realistic simulation
results. An exogenous signal (starvation) triggers the phosphorylation of the transcription factor
Spo0A to Spo0AP by the kinase KinA. This process is reversible by the phosphatase Spo0E. Spo0AP is
necessary to transcribe SigF, which regulates genes initiating sporulation and therefore is an
output of the model. The interplay with other transcription factors AbrB, Hpr, SigA, SigF, SigH and
SinR is graphically represented in \cite[Figure 3]{deJ04}. SinI inactivates SinR by binding to it.
SigA and Signal are considered as an input to the model and are always on. Table \ref{BoolEq} lists
the Boolean equations building the model in \cite{Ste07}. They exhibit a certain degree of nondeterminism,\index{Process!nondeterministic} since the functions for the \textit{off} fluents sometimes are
not the negation of the \textit{on} functions. This accounts for incomplete or inconsistent
knowledge. In the case of state transitions determined by $k$ conflicting function pairs, $2^k$ output states were generated.\index{State!output}

\begin{table}\index{Boolean network}
\centering
\begin{tabular}{|lcl|}
\hline
&\tiny{ }&\\[-3mm]
AbrB &= &SigA $\overline{\text{AbrB}}$ $\overline{\text{Spo0AP}}$\\
$\overline{\text{AbrB}}$ &= &$\overline{\text{SigA}}$ + AbrB + Spo0AP\\
SigF &= &(SigH Spo0AP $\overline{\text{SinR}})$ + (SigH Spo0AP SinI)\\
$\overline{\text{SigF}}$ &= &(SinR $\overline{\text{SinI}}$) + $\overline{\text{SigH}}$ + $\overline{\text{Spo0AP}}$\\
KinA &= &SigH $\overline{\text{Spo0AP}}$\\
$\overline{\text{KinA}}$ &= &$\overline{\text{SigH}}$ + Spo0AP\\
Spo0A &= &(SigH $\overline{\text{Spo0AP}}$) + (SigA $\overline{\text{Spo0AP}}$)\\
$\overline{\text{Spo0A}}$ &= &($\overline{\text{SigA}}$ SinR $\overline{\text{SinI}}$) +
($\overline{\text{SigH}}$ $\overline{\text{SigA}}$ ) + Spo0AP\\
Spo0AP &= &Signal Spo0A $\overline{\text{Spo0E}}$ KinA\\
$\overline{\text{Spo0AP}}$ &= &$\overline{\text{Signal}}$ + $\overline{\text{Spo0A}}$ + Spo0E + $\overline{\text{KinA}}$\\
Spo0E &= &SigA $\overline{\text{AbrB}}$\\
$\overline{\text{Spo0E}}$ &= &$\overline{\text{SigA}}$ + AbrB\\
SigH &= &SigA $\overline{\text{AbrB}}$\\
$\overline{\text{SigH}}$ &= &$\overline{\text{SigA}}$ + AbrB\\
Hpr &= &SigA AbrB $\overline{\text{Spo0AP}}$\\
$\overline{\text{Hpr}}$ &= &$\overline{\text{SigA}}$ + $\overline{\text{AbrB}}$ + Spo0AP\\
SinR &= &(SigA $\overline{\text{AbrB}}$ $\overline{\text{Hpr}}$ $\overline{\text{SinR}}$ $\overline{\text{SinI}}$ Spo0AP) +\\
&&(SigA $\overline{\text{AbrB}}$ $\overline{\text{Hpr}}$ SinR SinI Spo0AP)\\
$\overline{\text{SinR}}$ &= &$\overline{\text{SigA}}$ + AbrB + Hpr + (SinR $\overline{\text{SinI}}$) + ($\overline{\text{SinR}}$ SinI) + $\overline{\text{Spo0AP}}$)\\
SinI &= &SinR\\
$\overline{\text{SinI}}$ &= &$\overline{\text{SinR}}$\\
SigA &= &TRUE (input to the model)\\
Signal &= &TRUE or FALSE (constant, depending on the initial state)\\
\hline
\end{tabular}
\caption{Boolean rules for the nutritional stress response regulatory network, derived in
\cite{Ste07} from \cite{deJ04}. $\overline{x} \hat{=} \neg x,\: x+y\, \hat{=}\, x \vee y,\: xy\, \hat{=}\,x \wedge y$.}
\label{BoolEq}
\glossary{name={$\overline{x}$},description={Negation of the proposition $x$},sort=$.Neg$}
\end{table}

Since the validation of the model by data and experimental literature has been done before in \cite{deJ04}, we analysed pure knowledge-based simulations. For that reason, in step \ref{item:compObs} of the protocol (p.~\pageref{protocol}), the stem base is computed automatically without further confirmation by an expert. The developed \texttt{R} \cite{R11} scripts for simulation, for generating the transitive contexts and for converting a stem base into \texttt{Prolog}\index{Prolog} format are indicated in Section \ref{sec:RgeneNetsII}.

\section{Simulation starting from a state typical for the vegetative growth phase}\label{simNoStress}
\sectionmark{Simulation of the vegetative growth phase}
We performed supplementary analyses of the transitions starting from a typical state without the
starvation signal \cite[Table 4]{Ste07}. 
\begin{table}\centering
\begin{tabular}{|c||c|c|c|c|c|c|c||c|c|c|c|c|c|c|}
\hline
\textbf{Transition} &\begin{sideways}KinA$^{in}$\end{sideways}
&\begin{sideways}Spo0A$^{in}$\end{sideways}
&\begin{sideways}Spo0AP$^{in}$\end{sideways}
&\begin{sideways}AbrB$^{in}$\end{sideways}
&\begin{sideways}Spo0E$^{in}$\end{sideways}
&\begin{sideways}SigH$^{in}$\end{sideways}
&\begin{sideways}Hpr$^{in}$\end{sideways}
&\begin{sideways}KinA$^{out}$\end{sideways}
&\begin{sideways}Spo0A$^{out}$\end{sideways}
&\begin{sideways}Spo0AP$^{out}$\end{sideways}
&\begin{sideways}AbrB$^{out}$\end{sideways}
&\begin{sideways}Spo0E$^{out}$\end{sideways}
&\begin{sideways}SigH$^{out}$\end{sideways}
&\begin{sideways}Hpr$^{out}$\end{sideways}\\
\hline \hline
$(s_0^{in}, s_1^{out})$ &-&+&-&-&-&-&+&-&+&-&+&+&+&-\\ \hline
$(s_0^{in}, s_2^{out})$ &-&+&-&-&-&-&+&+&+&-&-&-&-&+\\ \hline
$(s_1^{in}, s_1^{out})$ &-&+&-&+&+&+&-&-&+&-&+&+&+&-\\ \hline
$(s_1^{in}, s_2^{out})$ &-&+&-&+&+&+&-&+&+&-&-&-&-&+\\ \hline
$(s_2^{in}, s_1^{out})$ &+&+&-&-&-&-&+&-&+&-&+&+&+&-\\ \hline
$(s_2^{in}, s_2^{out})$ &+&+&-&-&-&-&+&+&+&-&-&-&-&+\\
\hline
\end{tabular}
\caption{The (many-valued) transitive context corresponding to a simulation starting from a \textit{B.
subtilis} state without nutritional stress \cite[Table 4]{Ste07}. +:~on, -:~off. Only the attributes are listed that are changing during the simulation as well as Spo0A and Spo0AP.}
\label{tab:subtilisTransCxt}
\end{table}

The concept lattice for the resulting transitive context (Table \ref{tab:subtilisTransCxt}, with only a part of the attribute set only) is shown in Figure \ref{fig:noStress}. 
\begin{figure}
  \centering
  \includegraphics[width=14cm]{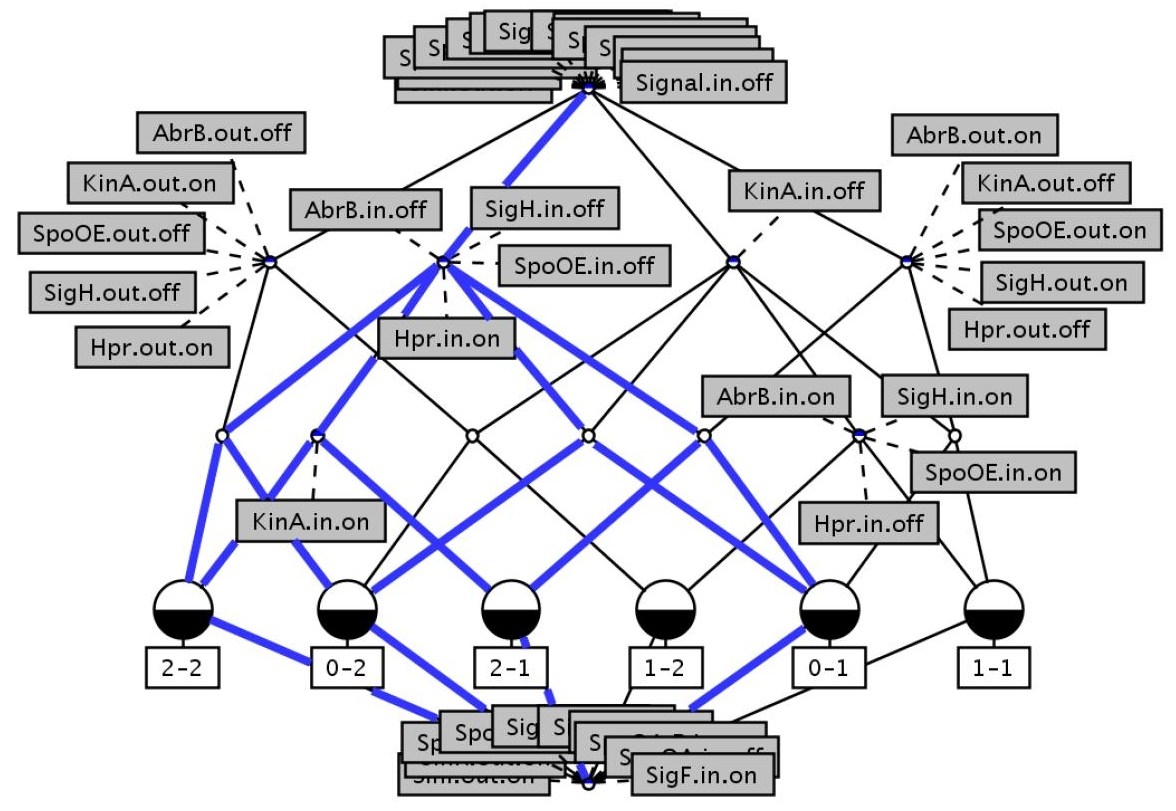}
  \caption{Concept lattice (computed and drawn with \texttt{Concept Explorer}\index{Concept Explorer} \cite{ConExp}) representing a simulation without nutritional stress.
\textit{Signal}: starvation; \textit{AbrB, Hpr, SigA, SigF, SigH, SinR, Spo0A} (phosporylated form
\textit{Spo0AP}): transcription factors -- \textit{SigF} initiates sporulation; \textit{KinA}: kinase; \textit{Spo0E}: phosphatase;
\textit{SinI} inactivates \textit{SinR} by binding to it. $i$-$j$ indicates a transition
($s_i^{in}, s_j^{out})$. \textit{Bold / blue lines}: Filter (superconcepts)\index{Filter (lattice)} and ideal
(subconcepts)\index{Ideal (lattice)} of the concept (\{($s_0^{in}, s_1^{out}), (s_0^{in},
s_2^{out}), (s_2^{in}, s_1^{out}), (s_2^{in}, s_2^{out})$\},
\{AbrB.in.off, SigH.in.off, Spo0E.in.off, Hpr.in.on\})} \label{fig:noStress}
\end{figure}
The larger circles at the bottom represent object concepts; their
extents are the four single transitions with the input state at $t=0$ or $t=2$, and the intents are
all attributes above a concept. Thus for instance, the two latter transitions have the attribute
KinA.in.on in common, designating the respective concept. Moreover, they are distinguished
unambigously from other sets of transitions by this  attribute -- the concept is generated
by \agrave KinA.in.on".

Implications\label{impSubtilisGrowth} of the stem base can be read from the lattice. For instance there are implications
between the generators of a concept:
\begin{equation}
<4> \text{ AbrB.in.off $\rightarrow$ SigH.in.off, Spo0E.in.off, Hpr.in.on}
\end{equation}
Analogous implications hold for the attributes of the conclusion, and there are implications
between attributes of sub- and superconcepts. $< 4>$ indicates that the rule is supported by four
transitions.

The bottom concept has an empty extent. Its intent is the set of attributes never occuring during
this simulation comprising only three time steps. The top concept does not have an empty intent -- as it is often the case --,
but it consists of 10 attributes common to all 6 transitions. The corresponding rule has an empty
body ($\top$):
\begin{equation}
\begin{aligned}
< 6 >\: \top \rightarrow\: &\text{Signal.in.off, SigA.in.on, SigF.in.off, Spo0A.in.on, Spo0AP.in.off,}\\
&\text{SinR.in.off, SinI.in.off, Signal.out.off, SigA.out.on, SigF.out.off, }\\
&\text{Spo0A.out.on, Spo0AP.out.off, SinR.out.off, SinI.out.off}
\end{aligned}
\end{equation}

The transitive context generated by a simulation following nutritional stress has about
20 transitions, 500 concepts and 50 implications. In such a case it is more convenient to query the
implicational knowledge base. But also for the visualisation of large concept hierarchies, there
exist more sophisticated tools like the ToscanaJ suite \cite{Tos}.

\section{Analysis of all possible transitions}\label{subseq:allTrans}
In order to analyse the dynamics of the \textit{B. subtilis} network exhaustively, I generated
4224 transitions from all possible $2^{12}=4096$ initial states (thus the rules are nearly
deterministic). There were 11.700 transitions in the transitive context, from which the
stem base was computed containing 524 implications with support $>$ 0, but $11.023.494 \approx 2^{24}$ concepts.

It was not feasible to provide biological evidence for a larger part of the implications, within
the scope of this methodological study. This could be done by literature search, especially
automatic text mining, by new specific experiments or by comparison with observed time series (\cite[3.2]{Wol07}, Chapter \ref{ch:geneRegNets}). Instead examples will be given for classes of implications that can be validated or falsified during attribute
exploration in specific ways. I start with the examples of \cite[4.3]{Ste07}.
\begin{itemize}
  \item \agrave For example, we know that in the absence of nutritional stress, sporulation should never be initiated \cite{deJ04}. We can use \textit{model checking}\index{Model!checking} to show this holds in our model by proving that no reachable state exists with SigF = 1 starting from any initial state in which Signal = 0,
SigF = 0 and Spo0AP = 0." \cite[p. 341]{Ste07} This is equivalent to the rule following from the stem
base:
\begin{equation}\label{imp:SigF.off}
  \text{Signal.in.off, SigF.in.off, Spo0AP.in.off} \rightarrow \text{SigF.out.off}
\end{equation}
  \item SigF.out.on $\rightarrow$ KinA.out.off, Spo0A.out.off, Hpr.out.off, AbrB.out.off:\\
Spo0AP is reported to activate the production of SigF but also to repress its own production (mutual
exclusion). \cite{deJ04}
  \item SigH.out.off $\rightarrow$ AbrB.out.off, Spo0E.out.off, SinR.out.off, SinI.out.off\\
All these genes are regulated $\overline{gene.out} =  \overline{\text{SigA.in}} + \text{AbrB.in}$ (+ ...).\\
\end{itemize}
\vspace{-4mm}

In our approach, such dependencies and mutual exclusions can be checked systematically. We searched
the stem base for further interesting and simple implications:
\begin{align}
&< 4500 > \text{Spo0AP.in.on, KinA.out.off} \rightarrow \text{Hpr.out.off}\\
&< 4212 > \text{SigH.in.on. KinA.out.off} \rightarrow \text{Hpr.out.off}\\
&< 3972 > \text{AbrB.in.off, KinA.out.off} \rightarrow \text{Hpr.out.off}
\end{align}
$\overline{\text{Hpr}}$ and $\overline{\text{KinA}}$ are determined by different Boolean functions,
but they are coregulated in all states emerging from any input state characterised by the single
attributes Spo0AP.on, SigH.on or AbrB.on.
\begin{equation}\label{eq:mainRule}
\begin{aligned}
<3904> \text{AbrB.out.on } \rightarrow &\text{ SigA.in.on,
SigA.out.on, SigF.out.off,}\\
&\text{ Spo0A.out.on, Spo0E.out.on, SigH.out.on,}\\
&\text{ Hpr.out.off, SinR.out.off, SinI.out.off}
\end{aligned}
\end{equation}
AbrB is an important \agrave marker" for the regulation of many genes, which is understandable from the
Boolean rules with hindsight. By a PubMed query, a confirmation was found for the downregulation of
SigF together with the upregulation of AbrB
\cite{Tom03}.

Finally, we entered sets of interesting attributes into the Prolog\index{Prolog} knowledge base, such
that a derived implication was computed, and accordingly the closure of the attribute set. In order to avoid infinite regresses, tabled resolution has been applied, which is implemented in the Prolog extension XSB [\texttt{http://xsb.sourceforge.net}]. Complementary to (\ref{imp:SigF.off}), we searched after conditions for an eventual switch towards sporulation (SigF.out.on). If the premise is entered as set of facts \texttt{off(sigF.in), off(Spo0AP.in)} and \texttt{on(sigF.out)}, then with the implications of the stem base the conclusion is returned as a set of derived facts. For the analysed complete simulation, the conditions Signal = SigA = TRUE had been dropped, but they were supposed to be constant (see the first four implications, which in Prolog are read from right to left).
\begin{verbatim}
      :- table off/1.       
      :- table on/1.        

      off(sigF.in).      
      off(Spo0AP.in).
      on(sigF.out).       

      on(signal.in)   :- on(signal.out).
      on(signal.out)  :- on(signal.in).
      on(sigA.in)     :- on(sigA.out).
      on(sigA.out)    :- on(sigA.in).
      off(abrB.out)   :- on(sigF.out).
      off(kinA.out)   :- on(sigF.out).
      off(Spo0AP.out) :- off(spo0A.in), on(sigF.out).
      on(signal)      :- off(sigH.in), on(sigF.out).
      ...
\end{verbatim}

The subsequent implication was found:
\begin{equation}
\begin{aligned}
&\text{SigF.in.off, Spo0AP.in.off, SigF.out.on }\\
\rightarrow\: &\text{Signal.in.on, Signal.out.on, SigA.in.on, SigA.out.on, Spo0AP.out.off,}\\
&\text{Spo0A.out.off, AbrB.out.off, KinA.out.off, Hpr.out.off.}
\end{aligned}
\end{equation}
The latter four attributes follow immediately from the Boolean rules, but Spo0AP.out.off depends in
a more complex manner on the premises. It is also noteworthy that the class of input states
developing to a final state with attribute SigF.out.on is characterised only by the attributes
Signal.in.on and the ubiquitous transcription factor SigA.in.on, i.e. the initial presence or absence of no other gene is necessary for the initiation of sporulation.

\chapter{Gene regulatory networks II: Adapted Boolean network models for extracellular matrix formation}\label{ch:geneRegNets}
\chaptermark{Boolean network models for ECM}
\section{Biomedical and bioinformatics background}\label{sec:ECMBg}
Rheumatoid arthritis (RA)\glossary{name=RA,description={Rheumatoid arthritis}} is a systemic (whole body) autoimmune disease, the causes of which are incompletely elucidated so far. Autoantibodies are directed against citrullinated proteins, where one or several arginine amino acids are post-translationally modified. Susceptibility factors are the genotype of the antigen presenting major histocompatibility complex HLA-DRB1, smoking, and there is a higher risk for women pointing at hormonal factors. The immune disorder might also be triggered by a viral or bacterial infection. \cite{Weg10}

Our focus is a better understanding of the symptom generating processes and the therapy against the cytokine TNF$\alpha$. RA is characterised by chronic inflammation and destruction of multiple joints perpetuated by the synovial membrane (SM)\glossary{name={SM}, description={Synovial membrane}}. A major component of the inflamed SM (also called pannus tissue) are activated, semi-transformed synovial fibroblasts (SFB, or synoviocytes)\glossary{name={SFB}, description={Synovial fibroblast cell}} \cite{Abe06} \cite{Hub06} \cite{Kar06} \cite{Kin95} \cite{Smo03}. In normal joints, SFB show a balanced expression of proteins, regulating the formation and degradation of the extracellular matrix (ECM)\glossary{name={ECM}, description={Extracellular matrix}}, a fibrous structure providing support to the cells (besides other functions). In RA, however, SFB are known for predominant expression and secretion of pro-inflammatory cytokines and tissue-degrading enzymes \cite{Kar06} \cite{Kin95}, thus maintaining joint inflammation and degradation of ECM components of cartilage and bone, which are also invaded by SFB. In addition, enhanced formation of soft ECM components such as collagens in the affected joints (an attempt of wound healing resulting in fibrosis) is also driven by SFB \cite{Pos92}. 

\begin{figure} 
\begin{center} 
\includegraphics[width=11cm]{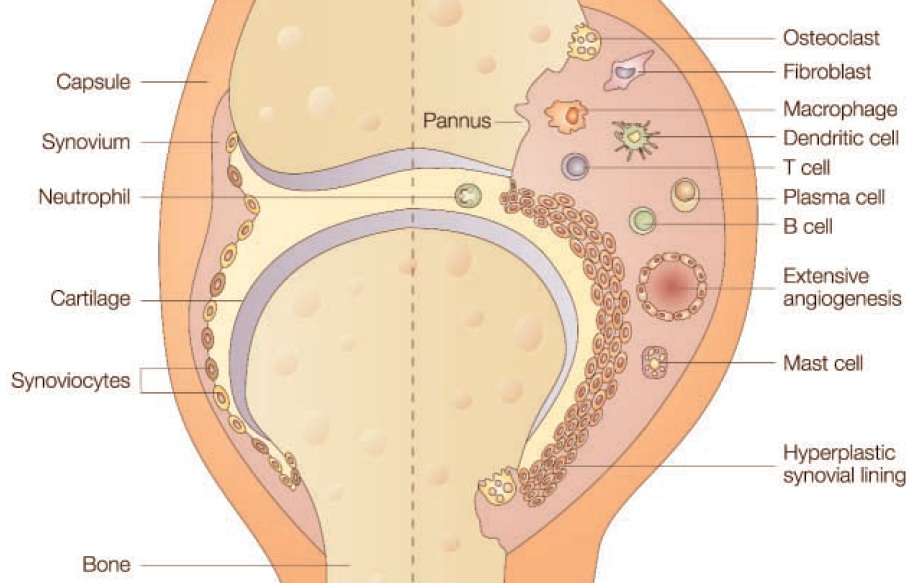} 
\end{center}\caption{The knee joint: normal morphology (left) and schematic representation of rheumatoid arthritis features (right) \cite[Figure 1]{Smo03}.}\label{fig:joint} 
\end{figure} 
Central transcription factors (TFs)\glossary{name={TF}, description={Transcription factor}} involved as key players in RA pathogenesis \cite{Fir99} \cite{Yam01} and in the activation of SFB in RA patients are AP‑1, NF-$\kappa$B, ETS‑1, and SMADs \cite{Ang91} \cite{Asa97} \cite{Fir99} \cite{Han98}. These TFs show binding activity for their cognate recognition sites in the promoters of inflammation-related cytokines (e.g., TNF$\alpha$,\glossary{name={TNF$\alpha$}, description={Tumor necrosis factor alpha}} IL1$\beta$, IL6 \cite{Abe06}) and matrix-degrading target genes \cite{Fir99} \cite{Han01} \cite{New97} \cite{Whi95} \cite{Yam01}, e.g., collagenase (MMP‑1) \cite{Abe06} and stromelysin1 (MMP‑3) \cite{New97}. The latter two show high expression levels in RA \cite{Fir91} \cite{Gra91} \cite{McC91} and contribute to tissue degradation \cite{Wer89} by destruction of ECM components, including aggrecan or collagen types IV, X, and XI \cite{Wel81} \cite{WuJ91}. 

Mathematical and computational models are of particular importance in the context of rheumatic diseases and cartilage/bone metabolism, since the development of new and/or adapted molecular therapies depends on the understanding of superordinate pathway interrelationships in the pro-inflammatory microenvironment of the joint \cite{Kar06}. Therefore, we developed a method for simulating the temporal behaviour of regulatory and signalling networks. It was used to create two gene regulatory networks emulating ECM formation and destruction, based on literature information about SFB on the one hand and on experimental data on the other, which we obtained from TGF$\beta$1 or TNF$\alpha$ stimulated SFB. As motivated in Section \ref{sec:BooleNets}, we applied Boolean network architecture for modelling. Using attribute exploration, the simulation results and the observed time series were further integrated in a fine-tuned and automated manner resulting in sets of rules that determine system dynamics.

For our analysis we used a collection of 18 genes, which can be classified into five functional groups, sufficient to create a self-contained regulatory network of ECM maintenance: (1) structural proteins which are the target molecules (i.e., the collagen-forming subunits COL1A1 and COL1A2); (2) enzymes degrading them (i.e., the matrix metalloproteinases MMP1, -3, ‑9, and ‑13); (3) molecules that inhibit these proteases (tissue inhibitor of metalloproteinases TIMP1); (4) TFs (i.e., ETS1, FOS, JUN, JUNB, JUND, NKFB1) and modulators acting as TFs (i.e., SMAD3, SMAD4, SMAD7) which are regulated by (5) the external signalling molecules TNF$\alpha$ (TNF)\glossary{name={TNF$\alpha$}, description={Tumor necrosis factor alpha}} and TGF$\beta$1 (TGFB1).\glossary{name={TGF$\beta$1}, description={Transforming growth factor beta I}} These genes (Figure \ref{fig:geneList}) are known to be expressed by SFB, except for TNF and MMP9, for which the expression is still under question (compare p. \pageref{noTNF}, \pageref{noMMP9}).
\begin{figure}
\begin{center}
\includegraphics[width=151mm]{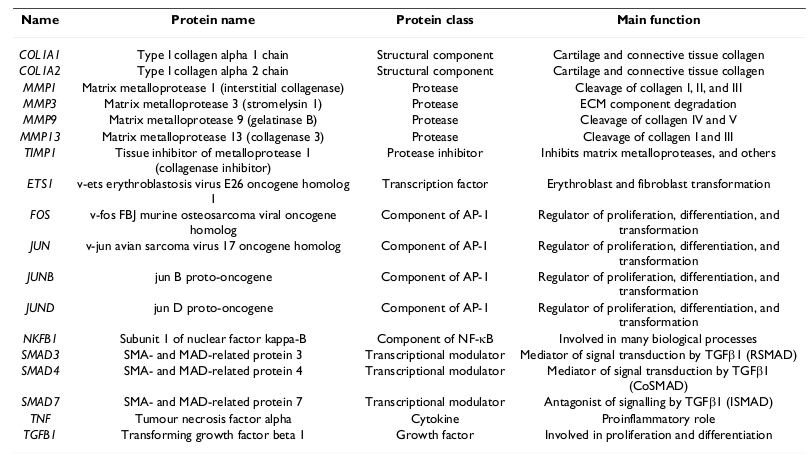}
\end{center}
\vspace{-5mm}
\caption{List of genes used in this analysis.}
\label{fig:geneList}
\end{figure}

\section{Methods}\label{sec:methods}
\subsection{Clinical data} 
\subsubsection{Patients and samples} 
Synovial membrane samples were obtained following tissue excision upon joint replacement/synovectomy from RA and osteoarthritis (OA)\glossary{name=OA,description={Osteoarthritis}} patients ($n$ = 3 each; Figure \ref{fig:clinicalChar}). Informed patient consent was obtained and the study was approved by the ethics committees of the respective university. RA patients were classified according to the American College of Rheumatology (ACR) criteria \cite[p. 234]{Arn88}, OA patients according to the respective criteria for osteoarthritis \cite[p. 237]{Alt86}. 

The preparation of primary SFB from RA and OA patients was performed as previously described \cite[p. 235]{Zim01}. Briefly, the tissue samples were minced and digested with trypsin/collagenase~P. The resulting single cell suspension was cultured for seven days. Non-adherent cells were removed by medium exchange. SFB were then negatively clarified using Dynabeads$^\text{\textregistered}$ M‑450 CD14 and subsequently cultured over 4 passages in DMEM containing 100 $\mu$g/ml gentamycin, 100 $\mu$g/ml penicillin/streptomycin, 20 mM HEPES (all from PAA Laboratories, Coelbe, Germany), and 10 \% FCS (BioWhittaker-Lonza, Basel, Switzerland).
\begin{figure}[t]
 \centering
 \includegraphics[width=151mm]{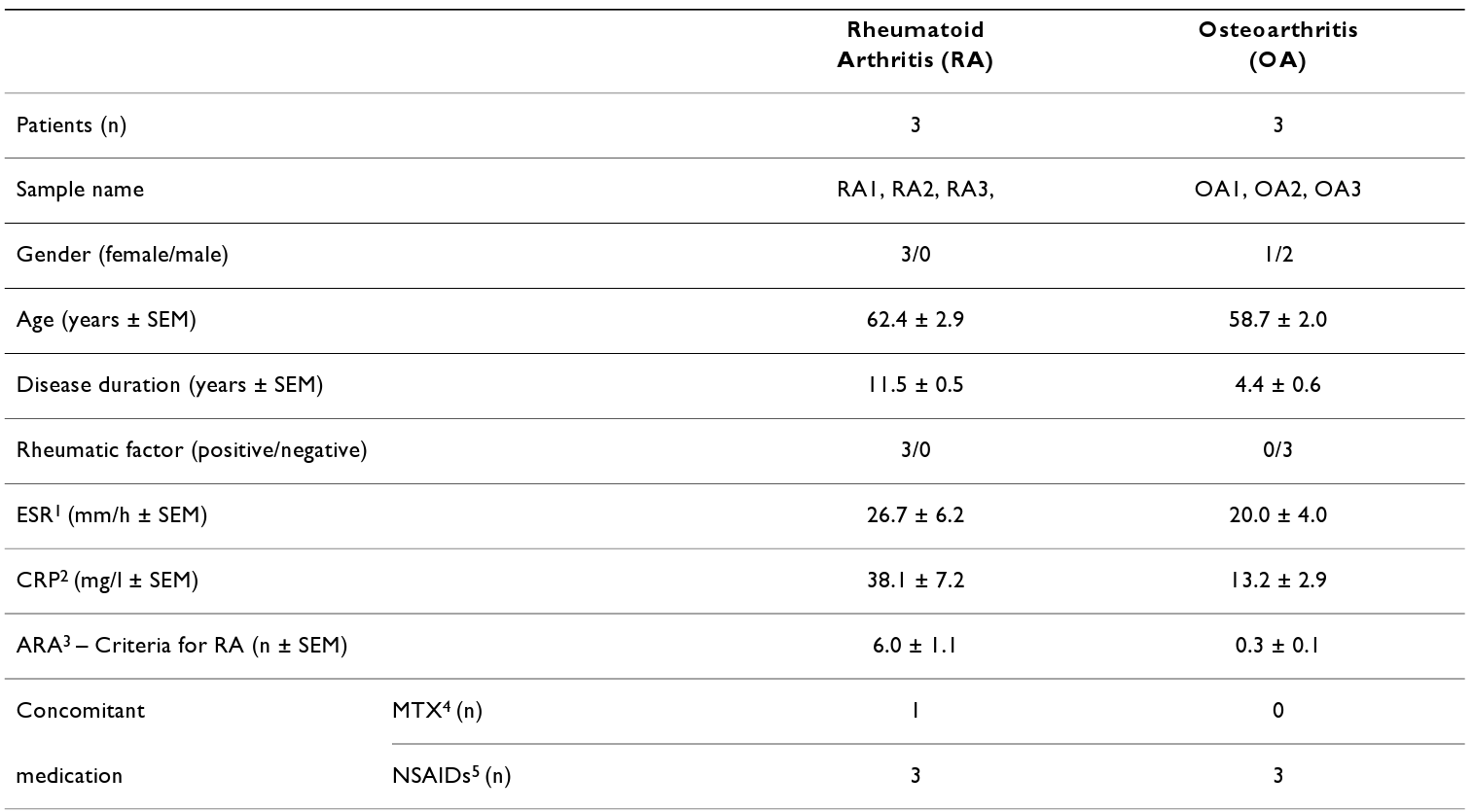}
\caption{Clinical characteristics of the patients at the time of synovectomy/sampling.
$^1$Erythrocyte sedimentation rate, 
$^2$C-reactive protein, normal range: $<$ 5 mg/l, 
$^3$American Rheumatism Association (now: American College of Rheumatology) 
$^4$Methotrexate 
$^5$Non steroidal anti-inflammatory drugs.}
 \label{fig:clinicalChar}
\end{figure}

\subsubsection{Cell stimulation and isolation of total RNA} 
At the end of the fourth passage, SFB were stimulated by 10 ng/ml TGF$\beta$1 or TNF$\alpha$ (PeproTech, Hamburg, Germany) in serum-free DMEM for 0, 1, 2, 4, and 12 h. At each time point, the medium was removed and the cells were harvested after treatment with trypsin (0.25\% in versene; Invitrogen, Karlsruhe, Germany). After washing with PBS, they were lysed with RLT buffer (Qiagen, Hilden, Germany) and frozen at ‑70°C. Total RNA was isolated using the RNeasy Kit (Qiagen) according to the supplier's recommendation. 

\subsubsection{Microarray data analysis} 
Analysis of gene expression was performed using U133 Plus 2.0 RNA microarrays (Affymetrix$^\text{\textregistered}$, Santa Clara, CA, USA). Labelling of RNA probes, hybridisation, and washing were carried out according to the supplier's instructions. Microarrays were analysed by laser scanning (Hewlett-Packard Gene Scanner). Background-corrected signal intensities were determined and normalised using the MAS 5.0 software (Affymetrix$^\text{\textregistered}$). For this purpose, arrays were grouped according to the respective stimulus (TGF$\beta$1 and TNF$\alpha$, $n=6$ each). The arrays in each group were normalised using quantile normalisation \cite[p. 236]{Bol03}. Original data from microarray analysis have been deposited in NCBI Gene Expression Omnibus \cite{GEO} and are accessible through GEO series accession number GSE13837. A list of probe sets and all expression time courses are provided in additional file 4. 

\subsection{Creating network and Boolean functions}
For the selection of genes and proteins involved in ECM maintenance and for network generation,\index{Gene regulatory network} Boolean queries were performed in PubMed \cite{PubMed}. Articles were selected containing information about relevant genes expressed in SFB and involved in ECM maintenance. For information extraction, the abstracts were screened and filtered manually for statements on healthy conditions only. This knowledge-based collection yielded the set of gene candidates analysed in detail. The final gene list is presented in Figure \ref{fig:geneList}.

The genes were also analysed using Bibliosphere \cite{Bib} and literature not extracted from PubMed was added. Subsequently, information concerning regulatory relationships was collected and transformed into short statements serving as input relations (edges) for the network building program Cytoscape, version 2.6.0 \cite{Sha03}. Contradictory literature information was resolved by preferring facts applying to the target cell type (human fibroblasts) and/or by comparison with experimental gene expression results from our and other microarrray data (GSE1742 and GSE2624, see additional file 5). The complete list of used statements and the respective literature basis can be found in additional file 1. In a further step, simulation results were iteratively compared to the experimental data in the present study, resulting in two adapted Boolean networks\index{Boolean network} which represent hypotheses about regulatory processes initiated by TGF$\beta$1 and TNF$\alpha$. 

\subsection{Data discretisation}\label{sec:discretisation}\index{Discretisation}
Since we were interested in regulatory interactions, the fold-change of the expression values was more important than absolute levels. Hence, we discretised individual time series separately. The discretised data served to verify or falsify the temporal dependencies predicted from the extracted literature knowledge. For that reason, we wanted to conserve as many effects on gene expression as possible and set weak criteria for up-regulation: if the highest fold-change (i.e., the difference of $\operatorname{log}_2$ values) between two arbitrary time points was larger than 1, then the time profile was discretised to 0 or 1 by k-means clustering (100 iterations, vote of 25 restarts). We set the constant value 0 if: (i) the highest fold-change between two arbitrary points in a time series was less than 1; (ii) the absolute expression value was below the threshold of 100 for one probe set; or (iii) the Affymetrix detection value $p$ indicating the reliability of the measurement exceeded 0.05. In all other cases, the constant was set to 1. Applying these criteria, also individual values were set to 0 (i.e., off) following clustering. 

\subsection{Principles of simulation}\label{sec:principlesSim}
Using the deterministic Boolean network, simulations were generated using an
asynchronous update scheme based on the subsequent biologically-founded assumptions.
In order to simulate the time courses more realistically, transcription and translation were
separated, i.e., the left side of a Boolean function (output) was considered as mRNA and the right side as TF and/or stimulus (input). Unfortunately, time-resolved data for gene expression events, mRNA, or protein half-life are scarce in the literature. Therefore, time steps were selected based on general expert knowledge and comparison of literature and experimental data, if available. For example, the duration of transcription was generally set to 1 time unit, for NF-$\kappa$B it was set to a doubled time period, reflecting its markedly
prolonged response time before expression compared to the immediate early response transcription factors AP-1 and ETS1 \cite{Bou90}.

In summary, we selected the time steps as follows: transcription 1 (NFKB1: 2),
translation: 1, mRNA lifespan: 1, and protein lifespan: 2. Since TGF$\beta$1 and TNF$\alpha$ have to
be released into the extracellular medium after translation, they were assumed to take
effect three time units after induction. The starting conditions of the simulations were characterised by the initially observed, discretised states, and an initial state was introduced, for which the TFs were set to on.
Supposing a steady state\index{State!steady} situation before starting the stimulation with TGF$\beta$1 and TNF$\alpha$,
the protein levels at step 0 and 1 were defined according to that of the corresponding
mRNA, and, in addition, the respective stimulating protein was set to on. The simulations
were performed over twelve time units, roughly corresponding to the twelve hour duration
of the gene expression experiments.

\subsection{Creation of a temporal rule knowledge base}\label{sec:KBMethods}
The sets of observed and simulated states $S^{obs}$ and $S^{sim}$ were characterised by the expression levels of each gene, i.e., by a subset of attributes $M = E \times F$, with entities or genes $E$, and the corresponding values $F = \{\text{off, on}\}$. Hence, they were assembled into state contexts (Definition \ref{def:stateCxt1}) $\Ks^{obs}$ and $\Ks^{sim}$. A state can be considered as a tuple $(f_1,..., f_n)$ with $f_i \in F, n = |E|$. 

The transitions after one time step define relations $R^{obs} \subseteq S^{obs} \times S^{obs}$ and $R^{sim} \subseteq S^{sim} \times S^{sim}$ on the states. Thus, in general multiple output states $s^{out}$\index{State!output} following an input state $s^{in}$ are possible. However, this case rarely occurred, justifying the use of a deterministic simulation procedure.\index{Process!deterministic}
 
We computed the transitive closure of these relations, since we were interested in all states emerging from a given one, within the observation or simulation time. The data of all time series related to one stimulus was assembled in the transitive contexts (Definition \ref{def:transCxt}) $\Ktt^{obs}$ and $\Ktt^{sim}$. These define relations $I$ between objects (the transitions) and attributes (the discretised gene expression levels in input and output states).

By attribute exploration, we compared the literature-based implications with those merely derived from the data and applied a strong criterion: implications of $\Ktt^{sim}$ had to be valid for all transitions of the observed context $\Ktt^{obs}$. This is equivalent to an exploration of the union of the two contexts, where every proposed implication is accepted. Thus, the resulting stem base was computed automatically with the Java tool \texttt{Concept Explorer}, which supports also expert centered attribute exploration \cite{ConExp}. The other calculations were made with my own \texttt{R} \cite{R11} programs (Section \ref{sec:RgeneNetsII}). Computing the
2713 (8785) TGF$\beta$1 (TNF$\alpha$) rules, Concept Explorer ran 21 (30) minutes on a 2.66 GHz/2 GB computer.

\subsection{Expert analysis of transition rules}
The calculated transition rules were screened manually, focussing on the appearance and the temporal behaviour
of the following features: (i) constitutive vs. induced gene expression; (ii) co-expression vs. divergent expression of mediators, TFs, and target genes; (iii) expression of mediators/transcription factors vs. expression of target genes; (iv) regulation of target gene expression based on the expression of different transcription factors; (v) expression of individual genes vs. expression of their functional groups; and (vi) discrepancies to the literature. Subsequently, the extracted rules were assessed with respect to
biological coherence and relevance.

\subsection{Overview of the implemented R scripts}\label{sec:RgeneNetsII}
The following \texttt{R} scripts have been developed; most of them were also utilised for the \textit{B. Subtilis} network analysis of Chapter \ref{ch:bSubtilis}.
\begin{itemize}
 \item \texttt{discretise4.1.r}.\\
Data discretisation as described above.
 \item \texttt{simulationTnf.r, simulationTgf.r}.\\
  Simulations according to the Boolean networks.\\
  (\texttt{allStates.r}: Simulations for the nondeterministic \textit{B. subtilis} network, starting from all possible initial states.)
 \item \texttt{aposterioriTrans2.0.r}.\\
Generate a transition and a transitive context from observed or simulated data. 
 \item \texttt{selectGenes2.1.r}.\\
Compute support and confidence of a rule in an (observed) transition context, show transitions in the observed and simulated context with the premise attributes.\\
Provides a decision criterion for expert attribute exploration.
 \item \texttt{convert2Prolog.r}.\\
Convert a stem base into Prolog format for queries.
\end{itemize}

\subsection{Additional files}
In the data CD, the \texttt{R} scripts are included as well as the additional files published with \cite{Wol09}, at \texttt{http://www.biomedcentral.com/content/supplementary/}:
\begin{itemize}
\item \texttt{1752-0509-3-77-S1.doc}.\\
 Literature used for the network construction. Each citation corresponds
 to one edge in the regulatory network.
\item \texttt{7521-0509-3-77-S2.zip}.\\
     Cytoscape import file. 
     Import this file into Cytoscape \cite{Sha03}
     to analyse the gene regulatory network in more detail.
     It also includes external links for the genes and references
     cited to GenBank, Uniprot, and PubMed.
\item \texttt{7521-0509-3-77-S3.zip}.\\
     Cytoscape import file. 
     Open this file after importing the CYS file (provided by Additional file 2)
     into Cytoscape \cite{Sha03}
     if the layout of the CYS file cannot be displayed correctly
     with your Cytoscape version.
\item \texttt{7521-0509-3-77-S4.xls}.\\
     List of probe sets used, processed microarray data
     and visualisation of expression time courses for the genes analysed.
     Raw data are deposited under accession number GSE13837
     at GEO \cite{GEO}.
\item \texttt{7521-0509-3-77-S5.xls}.\\
     Processed and visualisation of GEO Data.
     Data were extracted from GSE1742 (TGF$\beta$1) and GSE2624 (TNF$\alpha$)
     at GEO \cite{GEO}.
\item \texttt{7521-0509-3-77-S6.xls}.\\
     Discretised gene expression time courses.
\item \texttt{7521-0509-3-77-S7.xls}.\\
     Histograms of gene expression simulation.
     The simulations for TGF$\beta$1 (blue) and TNF$\alpha$ (red) were run
     for 12 time steps (x-axis)
     and for each initial state derived from the patients' data separately.
     A simulated expression of 100\% (y-axis) means that
     in all six cases the gene was on.
\item \texttt{7521-0509-3-77-S8.xls}.\\
     List of the top 500 knowledge base rules
     valid for the simulations as well as for the data
     from stimulations with TGF$\beta$1 and TNF$\alpha$.
\item \texttt{ECMData.xls}\\
Measured values for probesets, after normalization of mean and variance for all TGFB1 and TNF chips, respectively.
Selection of probesets, logarithmic values of the geometric mean for one gene.
\end{itemize}
Moreover, the attached CD contains the discretised measured and simulated data as formal contexts (\texttt{*.txt} files readable also with \texttt{Concept Explorer} \cite{ConExp}), as well as the complete stem bases in \texttt{*.txt} and \texttt{PROLOG}\index{Prolog} format (see contents at \texttt{readme.txt}). 

\section{Results and discussion}
\subsection[Creating a regulatory network from literature information]{Creating a regulatory network by information extraction from literature}
The available literature was screened for genes and proteins involved in ECM maintenance and expressed in the lining layer SFB of the SM. In order to derive a regulatory network, we comprehensively collected literature knowledge related to the formation and degradation of ECM in human fibroblasts and analysed it manually. We chose collagen type I, which is formed by the COL1A1 and COL1A2 gene products, as a connective tissue representative, several MMPs as ECM-degrading enzymes, their inhibitors, and TFs regulating them. Finally, we selected 18 genes (Figure \ref{fig:geneList}) and the literature was screened again for gene regulatory relations and interactions connecting them (see additional file~1 for a complete list). Some contradictory literature findings were resolved manually (see Section \ref{sec:BFAdapted}).
 
The resulting regulatory network\index{Gene regulatory network} is almost closed and represents the most important ECM network functions. Here, we imply that the receptors for the external signalling molecules are always available and functional in SFB. Note, that TGF$\beta$1 (TGFB1) and TNF$\alpha$ (TNF) are the only entities playing a dual role as both external signal molecules and target genes because of their introduction into the simulation as starting effectors. 

It turned out that the knowledge about gene regulatory events is limited and that, to the best of our knowledge, the regulation of SMAD and SMAD expression has not been fully characterised so far. The SMAD gene products seem to be available in sufficient amounts and we were unable to find reports about their regulated expression. In addition, not all influences of TGF$\beta$1 and TNF$\alpha$ on gene expression could be described as direct effects of transcription factors at the mRNA level because the important SMAD family members act as regulators on the protein-protein interaction level. All influences were included in the network at this point to avoid premature loss of information. 

\begin{sidewaysfigure}
 \centering
 \includegraphics[width=225mm]{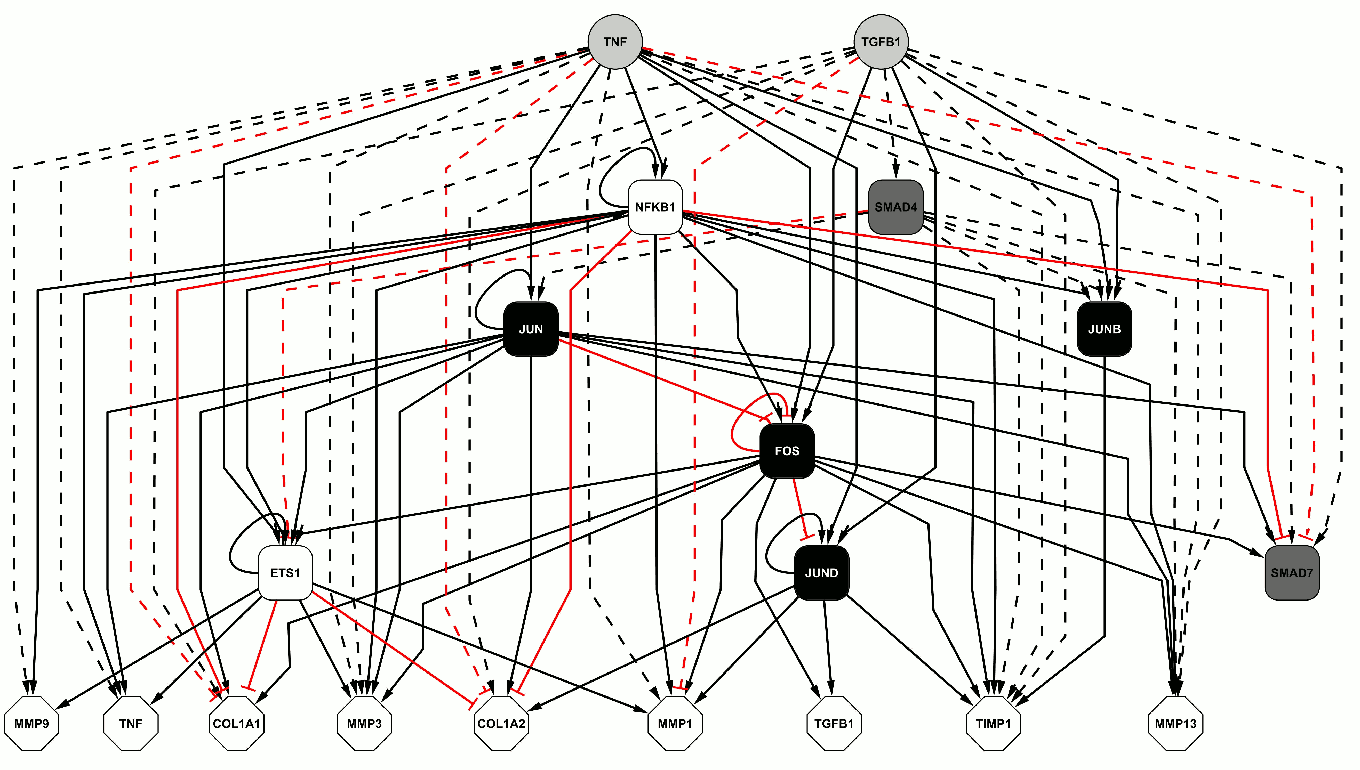}	
 \caption{Overview of the ECM network in hierarchical order.\index{Gene regulatory network}
\label{fig:ECMNetwork}
Regulatory effects via TFs are shown as continuous lines, others as indirect effects as dashed lines. Inhibition is marked by a red T-arrow, induction is illustrated by black arrows. The external signals TGF$\beta$1 and TNF$\alpha$ are shown as light grey circles, the internal SMAD signalling molecules as dark grey squares, TFs are depicted as black (AP‑1 components) or white squares, and the target genes are shown as white octagons. This picture was generated using Cytoscape 2.6.0.}
\end{sidewaysfigure}
Although many TFs such as AP‑1 are also regulated at the protein level (e.g., by phosphorylation), those effects can be reflected simplistically by regulatory processes at the transcriptional level. However, activating SMADs as SMAD3 and SMAD4 are also regulated by inhibitory members of the SMAD family (SMAD6 and SMAD7), which may counteract transcriptional activation and add an extra level of complexity \cite{Ros08}. Therefore, SMAD7 was introduced into the network as a TGF$\beta$1-dependent repressor of SMAD-dependent transcription.

In the case of SMAD3, we decided to subsume its influence under the SMAD4 effects because both are described to have nearly identical effects and act in concert. Moreover, we could not find well-defined information about SMAD3 regulation. Hence, we added an inducing influence of SMAD4 on MMP13 (at present only known for SMAD3) for keeping all the SMAD effects in the network. 

The subunits of the homo- or heterodimer TF AP‑1, i.e., Jun, JunB, JunD, and Fos (JUN, JUNB, JUND, FOS), determine its different regulatory activities (for AP‑1 components see \cite{Her08} and references therein). Therefore, we decided to disassemble the transcriptional active entity AP‑1 into its subunits. In contrast, for the dimeric TF NF‑$\kappa$B, which can be composed of the gene products of NFKB1, NFKB2, RELA, RELB, and/or REL \cite{Per08}, we selected NFKB1 as the representative gene with respect to our signalling framework. All the genes and their interrelations were transferred into the program Cytoscape \cite{Sha03} to visualise our network containing 19 nodes and 79 edges, respectively, as shown in Figure \ref{fig:ECMNetwork}. Detailed network examination is available through the network description files (additional files 2 and 3), also providing external links to GenBank, Uniprot, and PubMed for all edges and nodes. 

Available tools for automatic text mining decide schematically, e.g., by pre-built rules like co-occurrence of gene names and interaction verbs or pattern matching, whereas a human expert is able to integrate unanticipated types of information and to decide whether the paper confirms the investigated situation. However, we used the tools Bibliosphere \cite{Bib} and Pathway Studio \cite{Pat} in order to verify completeness and consistency of the assembled network (data not shown).

\subsection{Boolean functions}
Due to its capability for displaying dynamic dependencies between individual parameters, a Boolean network is more specific than the graphical network in Figure \ref{fig:ECMNetwork}, which summarises isolated literature facts. In order to decide about the connectives OR/AND, which represent causally determined relations between different genes, cellular signalling processes were also considered. 

In the case of a known transcriptional activation of any gene by the stimuli TNF$\alpha$ or TGF$\beta$1 via a specific TF, this activation was represented in the network using the term GENE.out = STIMULUS AND TF. Without such evidence, these influences were connected by GENE.out = STIMULUS OR TF. Since it is well known that the so-called SMAD pathway is activated by TGF$\beta$1 but not influenced by TNF$\alpha$ \cite{Ver02}, we used the AND connection for SMAD3/4 and TGF$\beta$1, even if there was no explicit literature evidence for an impact of TGF$\beta$1 onto the respective gene. 

Another example for setting up the functions is the integration of: (i) the known auto-regulatory transcriptional activation of JUN by TNF$\alpha$ via JUN, and (ii) the activation of JUN via SMAD4 (TNF$\alpha$-independent) into the single Boolean function 5 (compare Table \ref{fig:BFLit} with Tables \ref{fig:BFTgf} and \ref{fig:BFTnf}): JUN.out = (TNF AND JUN) OR SMAD4. Based on the illustrated principles, the Boolean functions characterising formation and remodelling of the ECM were generated (Table \ref{fig:BFLit}).
\begin{figure}[t]
 \centering
 \includegraphics[width=151mm]{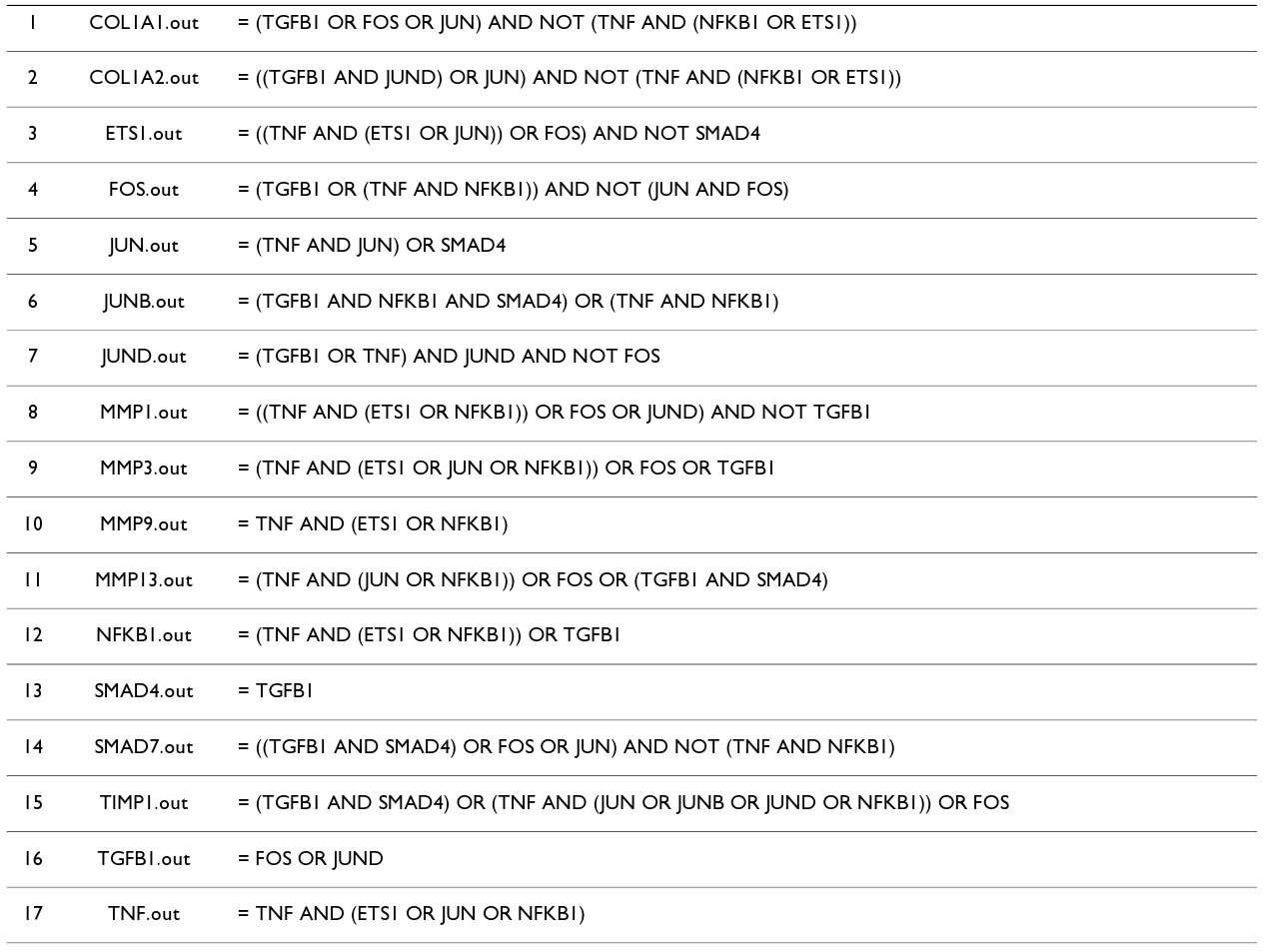}
  \caption{Boolean rules based on literature information.}
\label{fig:BFLit}
\end{figure}

\subsection{Gene expression time courses following TGF$\beta$1 and TNF$\alpha$ stimulation}
\begin{sidewaysfigure}
 \centering
 \includegraphics[width=238mm]{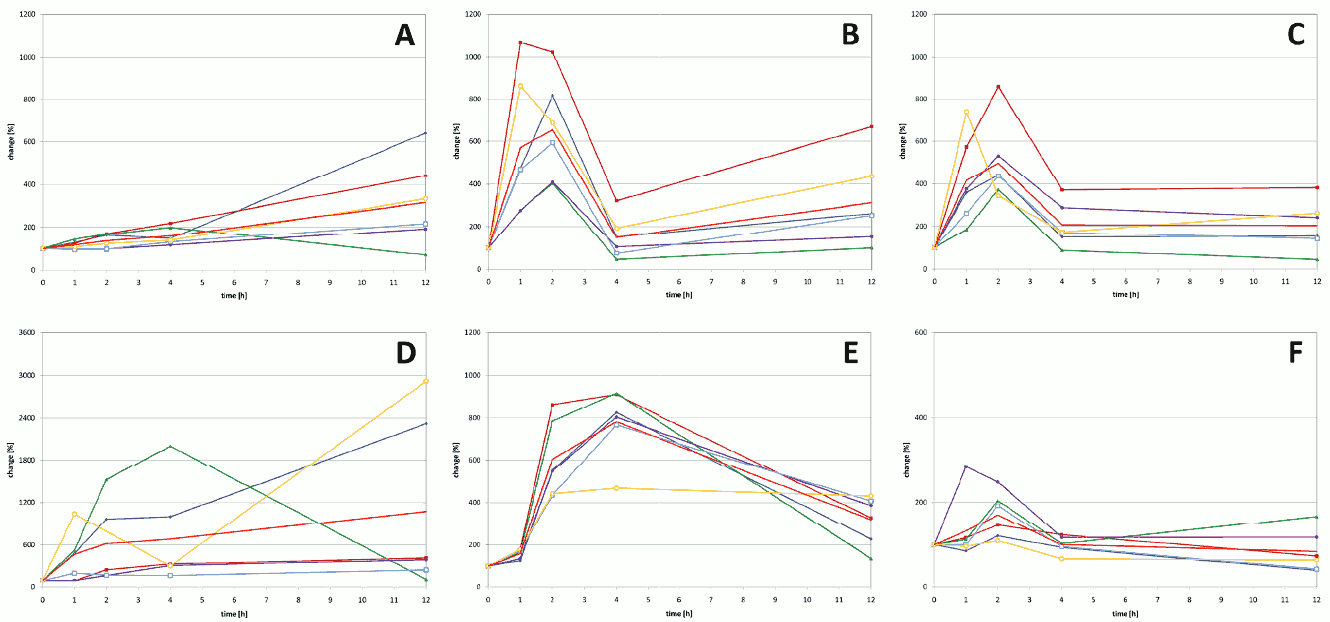}
\caption{Gene expression time courses following TGF$\beta$1 or TNF$\alpha$ treatment.
COL1A1 (A), JUNB (B), and SMAD7 (C) gene expression in response to TGF$\beta$1 treatment (upper row); TNF$\alpha$ response (lower row) of NFKB1 (D), MMP1 (E), and SMAD7 (F). The average time course is shown as light red curve without symbols, the data for individual samples are depicted in other colours (OA1: blue, filled symbol; OA2: red, filled symbol; OA3: green, filled symbol; RA1: purple, filled symbol; RA2: blue, open symbol; RA3: yellow, open symbol). The time courses and the values calculated from the microarray experiments for all analysed genes are included in additional file 4.}
\label{fig:timeCourses}
\end{sidewaysfigure} 
We analysed gene expression changes of SFB from patients with RA (3 patients) or OA (3 patients) following TGF$\beta$1 and TNF$\alpha$ stimulation (Figure \ref{fig:clinicalChar}). Due to the strong stimuli, both groups of cells reacted in an almost identical way, and we did not differentiate among them. In another study, for example, OA cells were considered to be a disease control group \cite{Wol11}.
 
Following pre-processing of the microarray data gained from U133 Plus 2.0 arrays, we extracted the data for probe sets related to our genes of interest (see Methods). The data are available in the GEO database (GSE13837 at \cite{GEO}). For the following analyses we excluded values which exceeded the reliability threshold of $p \leq  0.05$ for any patient at any time point (0, 1, 2, 4, 6, and 12 hours). In Figure \ref{fig:timeCourses}, some selected examples for the influence of TGF$\beta$1 and TNF$\alpha$ on gene expression are presented. The time courses of COL1A1 and JUNB expression are shown to illustrate the TGF$\beta$1 response in SFB, and the TNF$\alpha$ response is illustrated by NFKB1 and MMP1 expression. SMAD7 expression data are also included for both treatments. The data and the respective diagrams for all genes and both treatments can be found in additional file 4.
 
For comparative purposes, we also analysed public data from the GEO database, first, TGF$\beta$1 treated murine embryonic fibroblasts (GSE1742) and second, TNF$\alpha$ stimulation of endothelial cells (HeLa, GSE2624). Following prolonged TGF$\beta$1 treatment in murine cells, COL1A2, JUN and TIMP1 gene expression increased, whereas FOS decreased. In contrast, FOS, JUN and JUNB expression in HeLa cells rapidly increased following TNF$\alpha$ stimulation. Unfortunately, no data about the protease genes were available in this dataset (additional file 5). Even though cell type, experimental design and duration of treatment differ from our experiments, they reflect the two general trends: a positive effect on ECM formation by TGF$\beta$1 and a degradative influence on ECM by TNF$\alpha$ (mediated at least in part by FOS and JUN), which is consistent with our data. However, the evaluation of the complete data sets revealed discrepancies between the expected expression profile of individual genes and their time courses following stimulation in the experiment. 

\subsection{Data discretisation}\index{Discretisation}
We developed a data discretisation method which appropriately captures biologically relevant differences in gene expression levels. The individual time profiles for each gene were separately discretised to the values 0 or 1 by $k$-means clustering \cite{Har79}, a method which is often applied for gene expression time series. No improvements were observed when applying Ward’s hierarchical clustering \cite{War63} or single linkage clustering as proposed in \cite{Dim05} (data not shown). We introduced several supplementary criteria (see Methods), e.g., the values of a time series were all discretised to the constant value 0 or 1, if the differences of all $\operatorname{log}_2$ values (fold-changes) were less than 1 \cite{Mar07}. For the discretised data see additional file 6. 

\subsection{Boolean functions adapted to the data}\label{sec:BFAdapted}
\begin{figure}[t]
 \centering
 \includegraphics[width=151mm]{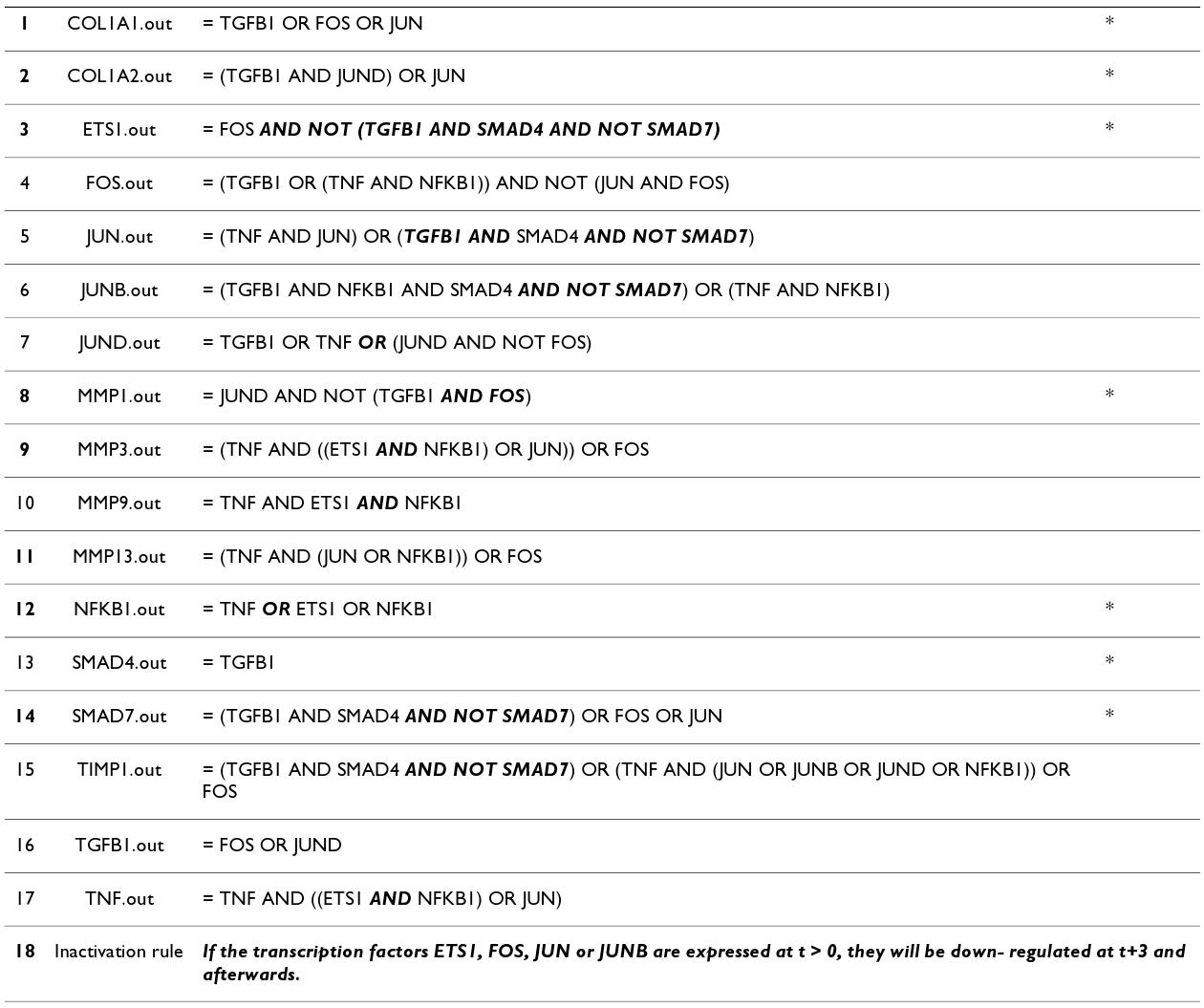}
  \caption{Revised Boolean functions for the simulation of TGF$\beta$1 stimulation. 
Rows marked by an asterisk indicate differences of the functions for TGF$\beta$1 and TNF$\alpha$ stimulation, changes to Table \ref{fig:BFLit} are italicised and bold, and function numbers in bold indicate omitted (1, 2, 3, 8, 9, 11, 12, 14) or inserted (18) terms.}
\label{fig:BFTgf}
\end{figure}
Simulations were generated using an asynchronous update scheme, assuming time intervals -- approximately equal to 1 h time steps -- as follows: transcription 1 (NFKB1: 2), translation: 1, RNA lifespan: 1, and protein lifespan: 2. The Boolean functions generated the transcriptional states according to the functional influence of proteins (stimuli or TF); translation and mRNA/protein degradation were computed from this output state according to the predefined intervals (see Section \ref{sec:principlesSim}).
 
As starting conditions of the simulations we chose the discretised initial states derived from our experimental data. An additional initial state was introduced in which solely the transcription factors were set to on, which enables the model system to respond to the external stimulators TNF$\alpha$ or TGF$\beta$1 immediately. The simulations were performed over twelve time steps; however, we did not aim at an exact correspondence to the experimental observation time of twelve hours, but tried to adjust the simulated time courses to qualitative features such as early, intermediate or late up-regulation. Improving the Boolean functions accordingly, the initially applied literature-based information was completed by: (i) valid and specific experimental information; (ii) knowledge and experience of biological experts; and (iii) in some cases, a more focused and precise literature query. For a comparison of the discretised observed time series and the final simulations see the additional files 6 and 7. We developed several biologically interesting and plausible data-independent hypotheses, for example, we modelled the regulation of SMAD3/SMAD4 effects by a protein-protein interaction with SMAD7.

\begin{figure}[t]
 \centering
 \includegraphics[width=151mm]{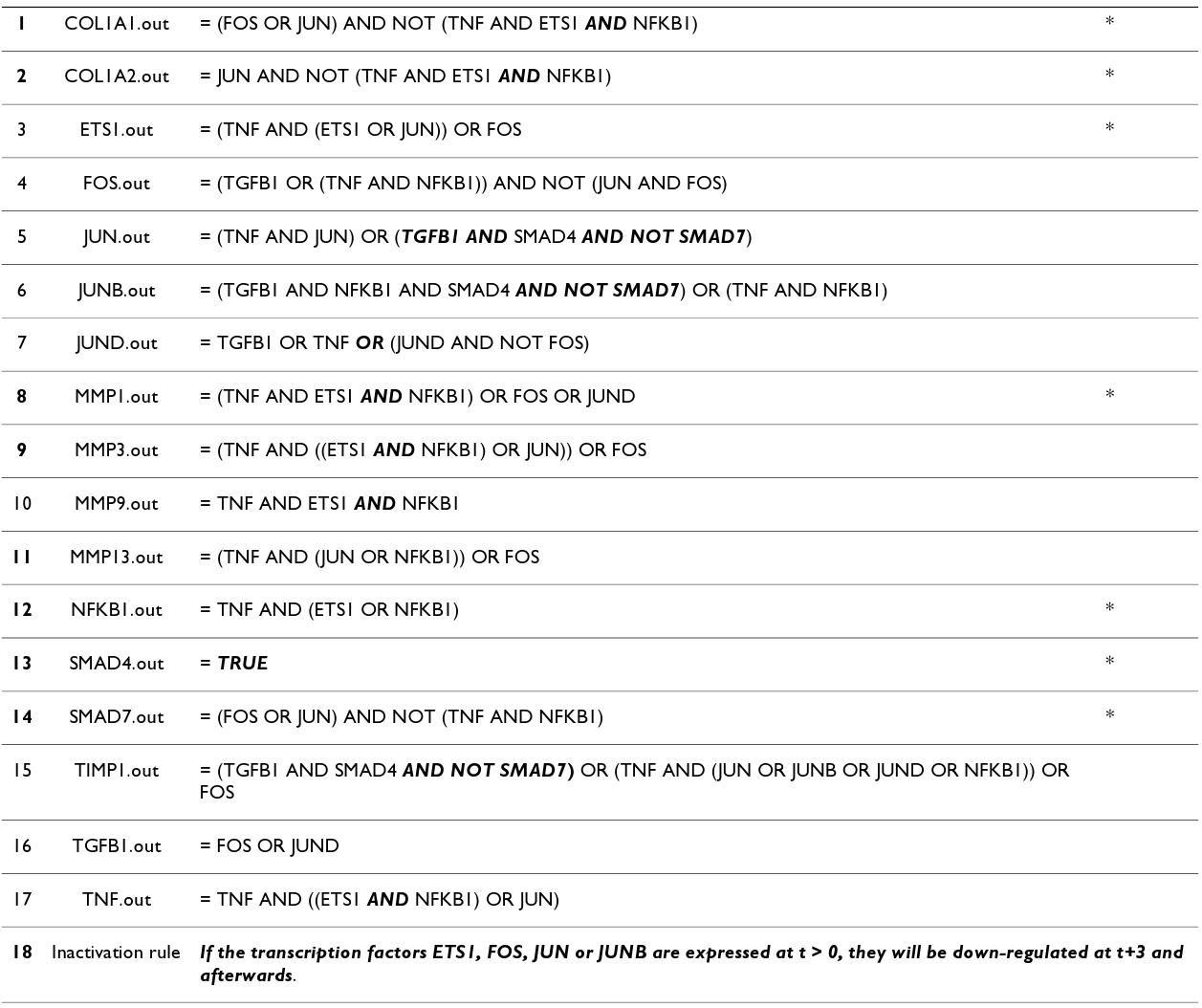}
  \caption{Revised Boolean functions for the simulation of TNF$\alpha$ stimulation.
Rows marked by an asterisk indicate differences of the functions for TGF$\beta$1 and TNF$\alpha$ stimulation, changes to Table \ref{fig:BFLit} are italicised and bold, and function numbers in bold indicate omitted (1, 2, 8, 9, 11, 12, 13, 14) or inserted (18) terms.}
\label{fig:BFTnf}
\end{figure} 
The resulting optimised Boolean network with the revised Boolean functions (Tables \ref{fig:BFTgf} and \ref{fig:BFTnf}) represents an enhanced ECM model, roughly matching the given biological conditions and extensively exceeding the present possibilities of automatic methods such as text mining, symbolic computation or machine learning. Considering the additional information available, we accepted these biologically reasonable changes:
\begin{enumerate} 
\item In the case of TNF$\alpha$ (or TGF$\beta$1) stimulation, the production and secretion of TGF$\beta$1 (or TNF$\alpha$) by SFB should not contradict the influence of the abundant stimulating protein TNF$\alpha$ (or TGF$\beta$1). In these cases (e.g., for the COL1A1.out function in Table \ref{fig:BFTnf}, and for the MMP1.out function in Table \ref{fig:BFTgf}) we removed TGFB1 (TNF) AND (...) from the Boolean function term. This adjustment did not always change the simulation result, since, for example, TNF was always off following TGF$\beta$1 stimulation (numbers of the Boolean functions (BF)\glossary{name={BF}, description={Boolean function}} affected: 1, 2, 3, 8, 9, 11, 12 and 14). 

\item	Down-regulation of gene expression is an essential biological principle. For that reason we had to introduce a time-limited inactivation mechanism which could not be derived from the literature because information regarding down-regulatory mechanisms is very restricted. Moreover, complex and variable mechanisms were hard to model, e.g., JUN down-regulation which is driven by: (i) inactivation of the TF protein itself; (ii) a general shift in the composition of the TF AP‑1, resulting in a reduced amount of TF enhancing JUN transcription; and (iii) binding/inactivation of JUN by other proteins. Therefore, a time-limited mRNA inactivation was introduced for JUN, JUNB, FOS and ETS1. Accordingly, an inactivating rule was created: if these TFs are expressed at $t > 0$, they will be set to off at $t+3$ and afterwards (no. of BF affected: 18). In addition, at that step we included an inhibition of TGFB1/SMAD4 signalling-based target gene expression by integrating a SMAD4-inhibiting signal (i.e., SMAD7, included as AND NOT SMAD7) guaranteeing the subsequent inactivation of TGF$\beta$1-related gene expression (BF affected: 3, 5, 6, 14 and 15). JUND is constitutively expressed at an intermediate level, which is consistent with GEO (GSE1742 and GSE2624) and our own data, as well as with the literature \cite{Hir89}. For NFKB1 transcription, an inhibitory effect was not implemented, since the activity of NF‑$\kappa$B at the protein level is controlled by interaction with several IKB proteins \cite{Hac06} which were not included in our ECM network model. 

\item	SMAD4 induction is not dependent on TGF$\beta$1 stimulation, because it is constitutively expressed (i.e., always TRUE, BF affected: 13). However, without TGF$\beta$1-mediated phosphorylation, SMAD4 is not activated at the protein level and shows no transcriptional activity, even though constitutively expressed. Therefore, we amended the term SMAD4 to TGF AND SMAD4 in order to represent the necessity of TGF$\beta$1 for SMAD4 activation (BF affected: 3, 5). 

\item We considered the relation ETS1 AND NFKB1 for the target genes \cite{Lam97} instead of assuming alternative pathways by ETS1 OR NFKB1 because regulation by NF‑$\kappa$B seems to be dependent on ETS1, and the MMPs, for example, require both factors \cite{Miz06} (BF for TGF stimulation affected: 9, 10 and 17, Table \ref{fig:BFTgf}; BF for TNF stimulation affected: 2, 8, 9, 10 and 17, Table \ref{fig:BFTnf}).
 
\item Since the inhibition of JUND expression by FOS is only observed in the case of a concomitant JUND-based positive feedback, the inhibitory effect of FOS has been restricted to this case \cite{Ber98} (BF affected: 7). 

\item Since a TF should not necessarily be required for its own expression (positive feedback), in the case of JUND (and also NFKB1) the AND connective was changed to OR. The revision of this function prevents the absence of JUND expression following TGF$\beta$1 stimulation (BF affected: 7, 12).

\item Concerning the regulation of MMP1 expression by FOS, there were contradictory findings in the literature \cite{Bar00}, \cite{Whi00}. We decided for an inhibitory influence of FOS following TGF$\beta$1 stimulation, because otherwise MMP1 would have been permanently down-regulated by TGF$\beta$1 during the simulation (BF affected: 8, Table \ref{fig:BFTgf}).
 
\item The Boolean function MMP3.out = (...) OR TGFB1 was in obvious contradiction to the data of the present study, thus, the term OR TGFB1 was deleted. The same was done for the MMP13 function (BF affected: 9, 11).
 
\item In the case of NFKB1, the absence of TNF$\alpha$ following TGF$\beta$1 stimulation had no decisive influence (NFKB1 was not always off). For that reason, we changed NFKB1.out = TNF AND (ETS1 OR NFKB1) to NFKB1.out = TNF OR ETS1 OR NFKB1 (BF affected: 12, Table \ref{fig:BFTgf}).
 
\item \label{noTNF}However, concerning the expression of TNF itself, the necessity of a positive feedback could explain its complete absence following TGF$\beta$1 stimulation. On the other hand, TNF was expressed at some time points following TNF$\alpha$ stimulation, whereas it is commonly assumed that fibroblasts do not express TNF (BF not changed: 17). 
\end{enumerate}

In summary, we adjusted the set of BF obtained by adaptation to the gene expression data measured under two experimental conditions (TNF$\alpha$ and TGF$\beta$1 stimulation), in order to create an appropriate set of BF representing the existing knowledge about naturally occurring interrelationships as accurately as possible.

\subsection{Computing temporal rules by attribute exploration}

For each stimulus, the observed and the final simulated time series were translated and merged into a single transitive context $\Ktt$ (see Definition \ref{def:transCxt} and the Methods Section \ref{sec:KBMethods}). States are defined by the value \textit{on} or \textit{off} for each gene, and transitions were computed by linking an occurring input state to an arbitrary future (output) state of the simulation or observation. The set of all these transitions, represented by $\Ktt$, was analysed by the automatic, non-interactive version of the attribute exploration algorithm. 

The implications of the resulting stem base are temporal rules expressing hypotheses about attributes of states (e.g., co-regulation or mutual exclusion of gene expression) or system dynamics, which are supported by pre-existing knowledge and by the analysed data. Since an implication holds for the transitions between all temporally related states, a rule such as GENE1.in.out $\rightarrow$ GENE2.out.on means: if gene1 is expressed, gene2 always will be up-regulated in the future, at all subsequent observation time points and simulation steps. Due to this semantics, the implications neither depend on the correspondence of a simulation time step to a specific observation interval, nor on prior knowledge about time periods of direct or indirect transcriptional effects. Within the large knowledge bases for TGF$\beta$1 (2713 rules) and TNF$\alpha$ (8785 rules) stimulation, the most frequent and simple temporal rules were considered and analysed for dependencies between stimuli, induced TFs and their target genes. 

\subsection{Results of the attribute exploration}
\subsubsection{Stimulation with TNF$\alpha$}

Regarding stimulation with TNF$\alpha$, a coordinated down-regulation of the two TF SMAD7 (inhibitor of TGF$\beta$1/SMAD4 signalling) and ETS1 emerges, as indicated by the rules 33, 114, 135, 144, 157, and 186 (see \texttt{combTransTnf.txt} or additional file 8). For example, rule 186: 
\[< 90 > \text{COL1A1.out.off, ETS1.out.off} \rightarrow \text{SMAD7.out.off}\]
has the meaning: in all simulated and observed states characterised by the absence of COL1A1 and ETS1, SMAD7 is also off. $<90>$ stands for the support of the rule, i.e., the number of transitions (90 out of 294) that actually have the attributes of the premise. Rules 114, 135, 144, 157 and 186 indicate: if the TNF$\alpha$-dependent genes are not induced (ETS1 as mediator), then simultaneously the expression of TGF$\beta$1-dependent genes is enabled (SMAD7 is off). This suggests that TNF$\alpha$ and TGF$\beta$1 may act as antagonists in SFB, as described in \cite{Rog98}, \cite{Ver00}. 

The expression of NFKB1, which is also induced by TNF$\alpha$, proceeds conversely to that of ETS1 and SMAD7 (rules 34, 45, 70, 71, 134, 144, 154, 157 and 173) reflecting the different targets of NF-$\kappa$B and SMAD7. The antagonistic expression pattern of NFKB1 and SMAD7 appears indirectly in rule 33, where the two genes show up in the premise of a rule with high support: 
\[<150> (...)\: \text{NFKB1.out.on, SMAD7.out.off} \rightarrow \text{ETS1.out.off}\]
Regarding this rule, it is interesting that ETS1 always acts in the same direction as NF-$\kappa$B, according to the network derived from the literature (Figure \ref{fig:ECMNetwork}). In the adapted network (Table \ref{fig:BFTnf}), we assumed a necessary cooperation (i.e., an AND connective) for the positive regulation of ETS1, MMP1, MMP3, MMP9, and TNF, as well as for the inhibition of COL1A1 and COL1A2. Thus, rule 33 further suggests that the coordinated action of NF-$\kappa$B and ETS1 is turned off in states which are characterised by supplementary conditions as SMAD7.out.off. 

The generated rules adequately reflect the major influence of the TF AP‑1 in the TNF$\alpha$ system: the expression of prominent targets, such as COL1A1, MMP1, and MMP3, depends on JUN (rules 211 and 258) and/or FOS (rule 204), with JUN as the key player. These rules connect input and output states and thus their semantics is directly related to dynamics, as seen in rule 211: 
\[<87> \text{TGFB1.in.on, TIMP1.in.on, ETS1.in.on, JUN.in.on} \rightarrow \text{MMP1.out.on}\] 
making this strong statement: if ETS1 and JUN are on, MMP1 will always be up-regulated in the future (at least within the time frame of 12 hours).

Sometimes, MMP1 is expressed simultaneously or before ETS1 and JUN. In the simulation, MMP1 was always on in the output state and from time point 2 h in the data. An exception can be found for the experimental results from OA sample OA3 (Figure \ref{fig:clinicalChar}), where MMP1 is off after 12 h. This is the reason for the computation of the auxiliary conditions TGFB1.in.on and TIMP1.in.on in rule 211.
 
Concerning the regulation of target genes, the expression of MMP1, MMP3, and MMP13 is co-regulated (rules 35, 63, 82, 86 and 176), while MMP9 is expressed independently (rules 24 and 35). There is a contradiction between the simulation and the data: in the observed experimental time series, MMP13 is always off, whereas the Boolean network predicts an up-regulation similar to MMP1 and MMP3. This unexpected absence of predicted MMP13 expression may be an indication for a more complex regulation of MMP13 transcription, exceeding the already known regulatory interrelations. Therefore, the MMP13 promoter and further enhancer/repressor sequences should be targeted for a more pronounced structural and functional analysis. For MMP9, the simulation and the experimental data are in good agreement: the gene is off in most, but not all states.\label{noMMP9} However, since the expression of MMP9 by (S)FB is discussed controversially in the literature (see \cite{Stu03} and \cite{Xue07} vs. \cite{Gau03}), the calculated expression of MMP9 by fibroblasts -- at least at a limited number of time points -- supports the majority of studies, reporting detectable MMP9 mRNA amounts in (S)FB. 

Several rules unanimously indicate the co-expression of the ECM-forming genes COL1A1 and COL1A2 (rules 87, 88 and 95), but contradictory rules occur concerning their expression profile in comparison to the MMPs. COL1A1 and COL1A2 seem to be co-expressed with MMP1 (rules 90 and 176), for COL1A2, however, a certain co-expression with MMP9 is calculated as well (rules 76 and 77), which conflicts with the opposing expression of MMP1 and MMP9 (see above). Therefore, the expression of collagens does often, but not necessarily always correlate with the expression of MMPs. This reflects the imbalance between MMP-dependent destruction and collagen-driven regeneration/fibrosis of ECM in the joints in inflammatory RA. 

The calculated knowledge base also contains a further unexpected correlation. According to rule 166:
\begin{align*}
<94>\quad\: &\text{FOS.in.off, TIMP1.in.on, SMAD7.out.off}\\
\quad \rightarrow\: &\text{TGFB1.in.on, MMP1.out.on, TGFB1.out.on}
\end{align*}
and rule 188, the expression of MMP1 may also be induced in the absence of FOS (e.g., by JUN-containing AP-1 complexes), indicating that the regulation of MMP1 does not predominantly depend on FOS as proposed in the literature \cite{Whi95}, \cite{Sun02}. This result may point to the influence of other TFs, e.g., NF-$\kappa$B, ETS1, or AP-1 complexes containing JUN, which may indeed be able to induce target gene expression in the absence of FOS. 

\subsubsection{Stimulation with TGF$\beta$1}
For the stimulation with TGF$\beta$1, we had a total number of 341 transitions. The SMADs play a major role for the expression of TGF$\beta$1-dependent target genes, as reflected by various classes of rules containing SMAD4 and/or SMAD7 (see \texttt{combTransTgf.txt} or additional file 8). For example, SMAD4 can be involved in the expression of COL1A1, see rule 15 (and also rules 21, 26 and 30):
\[<239> \text{ETS1.out.off} \rightarrow \text{SMAD4.in.on, COL1A1.out.on, SMAD4.out.on}\]
This also suggests an antagonistic behaviour of ETS1 and SMAD4: if ETS1 was off, then SMAD4 was on, as well as in all previous states. Rules 52 and 57 suggest a dependency of MMP1 on SMAD4. However, this seems to be one amongst many other influences (or could be a non-influencing coincidence), since SMAD4 was permanently on during simulation and experimental stimulation with TGF$\beta$1 (exception: sample RA3 at time point 2h). 

The expression of MMP9 is neither induced by SMAD4 (rules 7, 24 and 41) nor by any other TF, indicating that MMP9 is not influenced by TGF$\beta$1. The fact that TGF$\beta$1 obviously does not induce MMP9 (but other MMPs) agrees with findings reported previously \cite{Gau03} and represents a clear contrast to the MMP expression profiles following TNF$\alpha$ stimulation.

A further case of an antagonistic expression pattern was calculated for MMPs and COL1A1 (rules 21, 30, 36, 41, 54, and 60), for example, in rule 54: 
\begin{align*}
<170>\quad\: &\text{SMAD4.in.on, MMP3.out.off, MMP9.out.off, MMP13.out.off, (...)}\\
\quad \rightarrow\: & \text{COL1A1.out.on}
\end{align*}
Antagonistic expression profiles also can be observed for SMAD4 and other TFs, e.g., JUN and JUNB (rules 12, 39) or ETS1 (rule 15, see above). The variety of TF combinations found, even following the same stimulus, exceeds the possibilities of conventional TF studies because stimulation experiments are generally restricted to a selected set of read-out parameters (e.g., the expression of single TFs or target genes) which are not able to reflect the multiplicity of different effects in the cell.

Following stimulation with TGF$\beta$1, interestingly COL1A2 appears to be constitutively expressed since its status is always calculated as on (rule 1). Therefore, for the formation of collagen I, which contains COL1A1 and COL1A2 chains, COL1A1 expression seems to be the critical switch. 

\subsubsection{TGF$\beta$1 versus TNF$\alpha$ effects}
The calculated results impressively illustrate that TGF$\beta$1 and TNF$\alpha$ stimulation are mediated via separate signal transduction pathways, leading to the expression and activation of different TFs. In general, ETS1 and NFKB1 are induced predominantly by TNF$\alpha$, whereas SMAD expression depends on TGF$\beta$1 (represented by differential expression profiles of ETS1 and SMAD4). JUN and FOS, however, strikingly respond to both stimuli. This defined pattern results in the expression of target genes with opposing roles. TGF$\beta$1 positively regulates the enhanced formation of ECM components, whereas TNF$\alpha$ is strongly involved in the expression of ECM-degrading enzymes. This was the main reason for a discriminative revision of the BF for TNF$\alpha$ and TGF$\beta$1 (Tables \ref{fig:BFTgf} and \ref{fig:BFTnf}). Six BF were found to be differently adjusted (BF 1, 2, 3, 8, 12 and 14) which concern either the key players for ECM destruction (MMP1; BF 8), ECM formation (COL1A1 and COL1A2; BF 1 and 2) or important regulatory genes (ETS1, NFKB1, SMAD7). This may indicate that the differential effects on ECM induced by TNF$\alpha$ or TGF$\beta$1 are mainly mediated via ETS1 (BF 3), NFKB1 (BF 12, especially in the TNF$\alpha$ pathway), or SMAD7 (BF 14, especially in the TGF$\beta$1 pathway) identifying ETS1- and NFKB1-associated pathways as the major TNF$\alpha$-induced pro-inflammatory/pro-destructive signalling modules in rheumatic diseases, whereas TGF$\beta$1-driven and SMAD7-related signalling appears prominently involved in fibrosis.

\subsection{Querying the knowledge base}
The minimal rule set gave many new insights, and further queries can be addressed by accessing the TNF$\alpha$ and TGF$\beta$1 knowledge bases in one of two ways: (i) the Excel file containing the transition rules for structured searches within the rule sets (see additional file 8 containing the top 500 transition rules, \texttt{impCombTransTgf.txt} and \texttt{impCombTransTnf.txt} for complete lists); and (ii) the stem base in PROLOG format for queries concerning logically implied rules as in Section \ref{subseq:allTrans} for the \textit{B. subtilis} simulations (\texttt{impCombTransTgf.P} and \texttt{impCombTransTnf.P}).

\subsection{Expert centered attribute exploration}\label{sec:expertExploration}

Resuming the presented study published in \cite{Wol09}, I investigated in detail four genes showing strong changes over the time steps of the simulation modelling TGF$\beta$1 stimulation, by interactive attribute exploration: 
\begin{enumerate}
\item MMP13 is up- and downregulated at one or two time points during the simulation, but it was never expressed significantly in the microarry experiments. A human expert can partly solve this contradiction and decide for each relative implication to which of the two knowledge sources more plausibility is attributed.
\item I was also interested in the interplay of three TF belonging to different pathways, but connected in the Boolean formula for JUNB (BF 6 of Table \ref{fig:BFTgf}):
\[\text{JUNB.out = (TGFB1 $\wedge$ NFKB1 $\wedge$ SMAD4 $\wedge\: \neg$SMAD7) $\vee$ (TNF $\wedge$ NFKB1)}\]
JUNB represents the AP1 complex playing an important role in the investigated biological context.
\item NFKB1 is known to activate MMP13.
\item Whereas JUNB and NFKB1 generally belong to TNF$\alpha$ pathways, SMAD7 inhibits TGF$\beta$1 signaling via SMAD4. The inhibition of SMAD7 expression by TNF$\alpha$ connects both pathways. 
\end{enumerate}

During attribute exploration of the simulated transitive context, the following implications were accepted, or an observed transition was introduced as counterexample, which differed little from a simulated transition. Counterexamples are indicated as binary numbers designating the values of the attributes JUNB.in, MMP13.in, NFKB1.in, SMAD7.in, JUNB.out, MMP13.out, NFKB1.out and SMAD7.out. Important decision criteria were support (number of transitions with the premise attributes / 72 overall transitions) and confidence (transitions with conclusion attributes / premise transitions) of the implication in the observed transitive context. For this purpose, the R script \texttt{selectGenes2.1.r} was used. In the protocol of the exploration, I omit background implications expressing dichotomic scaling\index{Scaling!dichotomic} like \agrave gene.in.on, gene.in.out $\rightarrow \bot$". The resulting context is \texttt{exploredTransTgf\_regulationMMP13.txt}, the complete stem base is documented at \texttt{exploredTransTgf\_regulationMMP13\_inf.txt}.

\begin{enumerate}
\item SMAD7.out.off $\rightarrow$ JUNB.out.off, MMP13.out.off, NFKB1.out.on.\\
The implication was rejected by reason of a week confidence 13/30. The counterexample 1001 0000 was introduced, i.e. a transition where JUNB.in and SMAD7.in are on and the other attributes off.

\item SMAD7.out.off $\rightarrow$ JUNB.out.off, MMP13.out.off.\\
Accepted with support 30/72 and confidence 23/30.

\item MMP13.out.on $\rightarrow$ SMAD7.out.on.\\
The implication was accepted, since there is no information in the observed context: MMP13 is always off, hence the support is 0.

\item JUNB.out.on $\rightarrow$ SMAD7.out.on.\\
Accepted with confidence 29/36. The implication is biologically interesting: If TNF$\alpha$ signaling via JUNB is enabled, TGF$\beta$1 signaling is inhibited via SMAD7 -- in spite of its reported inhibition, in turn, by TNF$\alpha$ (BF 14 of Table \ref{fig:BFLit}). However, TNF$\alpha$ was supposed to have a minor influence compared to the stimulus TGF$\beta$1 (BF 14 of Table \ref{fig:BFTgf}), which points at more complex regulations.

\item JUNB.out.on, SMAD7.out.on, NFKB1.out.off $\rightarrow$ NFKB1.in.on, JUNB.in.off, MMP13.in.off, SMAD7.in.off.\\
Despite a confidence 0/4, the complicated, little expressive implication has been accepted. The latter fact is reflected in the small support of 4/72.

\item JUNB.out.off, MMP13.out.off, NFKB1.out.off, SMAD7.out.off $\rightarrow$ JUNB.in.on, MMP13.in.off, NFKB1.in.off, SMAD7.in.on.\\
Accepted as before.

\item SMAD7.in.on, NFKB1.out.off $\rightarrow$ JUNB.out.off.\\
Accepted with confidence 6/7.

\item SMAD7.in.on, JUNB.out.on, SMAD7.out.on $\rightarrow$ NFKB1.out.on.\\
Accepted with confidence 12/13.

\item SMAD7.in.off $\rightarrow$ JUNB.in.off, MMP13.in.off.\\
The rule was accepted with confidence 32/32, i.e.~it holds as a strict implication also in the data. Like temporal rule 4, it underlines the interdependency of TNF$\alpha$ and TGF$\beta$1 signaling. 

\item NFKB1.in.on, NFKB1.out.off $\rightarrow$ SMAD7.out.on.\\
The rule should be rejected by reason of a small support (5/72) and confidence (0/5).
As counterexample 0010 1000 was chosen. However, as several other counterexamples, it violates previously accepted implications, thus the rule has been accepted in this first run of exploration.

\item NFKB1.in.on, JUNB.out.off, MMP13.out.off, SMAD7.out.off $\rightarrow$ NFKB1.out.on.\\
Accepted with 7/12.

\item NFKB1.in.off, NFKB1.out.off $\rightarrow$ JUNB.out.off.\\
The confidence is 5/9, but the implication is accepted as plausible: JUNB depends on NFKB1 (BF 6); if it remains off, then also JUNB.

\item NFKB1.in.off, JUNB.out.on, SMAD7.out.on $\rightarrow$ NFKB1.out.on.\\
The implication is rejected with confidence 2/6 and due to an almost inverted premise compared to 11, but the same conclusion. Again, counterexample 3 (1001 1001) and others violate implication 12.
\end{enumerate}

At this point, I restarted attribute exploration again with counterexamples 1-3, since the last two examples were judged to be important and should not be omitted. They only had become obvious at this later point of exploration, but the result should not depend on the order of the attributes. 

\begin{enumerate}
\item SMAD7.out.off $\rightarrow$ MMP13.out.off.\\
Subsumed by implication 2 of the first run and accepted with confidence 30/30 (remember that MMP13 was always off in the observations).

\item MMP13.out.on $\rightarrow$ SMAD7.out.on.\\
Accepted with support 0 (identical to implication 3 of the first exploration run).

\item NFKB1.out.on, MMP13.out.off, SMAD7.out.off $\rightarrow$ JUNB.out.off.\\
The confidence is not high (13/20) and the implication contradicts BF 6, if a stable expression of NFKB1 and SMAD7 is given over several time steps. Hence counterexample 4 (0000 1010) was introduced.

\item MMP13.out.off, NFKB1.out.off, SMAD7.out.off $\rightarrow$ MMP13.in.off.\\
Accepted by reason of the MMP13 measurements.

\item JUNB.out.on, NFKB1.out.off $\rightarrow$ MMP13.in.off.\\
Ditto.

\item JUNB.out.on, MMP13.out.off, SMAD7.out.off $\rightarrow$ JUNB.in.off, MMP13.in.off, SMAD7.in.off.\\
Accepted with confidence 3/7, since I did not want to contradict such a complicated, possibly artificial implication. The complexity should not be augmented by an unreliable counterexample.

\item SMAD7.in.on, MMP13.out.on, NFKB1.out.off, SMAD7.out.on $\rightarrow$ JUNB.out.off.\\
Complicated and accepted.

\item SMAD7.in.on, MMP13.out.off, SMAD7.out.off $\rightarrow$ JUNB.out.off.\\
Accepted with confidence 14/18.

\item SMAD7.in.on, JUNB.out.on $\rightarrow$ SMAD7.out.on.\\
The implication is a restriction of rule 4 in the first run and was accepted with confidence 13/17. It is unanticipated by the additional reason that JUNB and SMAD7 are regulated differently. In the adapted network (Table \ref{fig:BFTgf}), the only common influencing genes are SMAD4 and SMAD7. Possibly, the inhibition of both by SMAD7, suggested by SMAD7.in.on, could be the relevant influence on the transitions supporting the implication, since SMAD4 is always on in the simulations and the data, except for a single measurement.\label{impJUNB.on}

\item SMAD7.in.on, JUNB.out.on, MMP13.out.on, SMAD7.out.on $\rightarrow$ NFKB1.out.on.\\
Complicated and accepted.

\item SMAD7.in.off $\rightarrow$ JUNB.in.off, MMP13.in.off.\\
The implication is identical to rule 9 of the first run and relates to the inhibition of TGF$\beta$1 signaling via SMAD7 and the activation of TNF$\alpha$ signaling via JUNB.\label{impJUNB.off}

\item NFKB1.in.on, MMP13.out.off, NFKB1.out.on, SMAD7.out.off $\rightarrow$ JUNB.out.off.\\
Confidence 7/12, accepted as complicated.

\item NFKB1.in.on, JUNB.out.on, NFKB1.out.on $\rightarrow$ SMAD7.out.on.\\
Accepted with confidence 23/28, similar to implication \ref{impJUNB.on}.

\item NFKB1.in.on, JUNB.out.off, NFKB1.out.off $\rightarrow$ SMAD7.out.on.\\
The restriction of implication 10 in the first run was accepted despite confidence 0/5. There are few contradictory observations, and an antagonistic expression is plausible by the BF 6, 12 and 14: JUNB and NFKB1 are activated by NFKB1, SMAD7 is not. The implication constrains the coregulation of JUNB and SMAD7 found in the previous and other implications. However, it remains insecure and should be investigated by further experiments.\label{impNoCoReg}

\item 3 implications related to NFKB1 are accepted as complicated. 
\addtocounter{enumi}{2}

\item MMP13.in.on $\rightarrow$ SMAD7.in.on.\\
No support, accepted, similarly the next implication with MMP13.
\addtocounter{enumi}{1}

\item 7 further implications related to MMP13 are accepted, since they are less expressive and a data control is not possible.
\addtocounter{enumi}{6}

\item JUNB.in.on $\rightarrow$ SMAD7.in.on.\\
This again is a noteworthy, clear implication with highest confidence 38/38. It underlines the discovered coregulation of the two genes (compare \ref{impJUNB.on}, \ref{impJUNB.off} and 13, but also \ref{impNoCoReg}).

\item JUNB.in.off, JUNB.out.off, NFKB1.out.off $\rightarrow$ SMAD7.out.on.\\
Confidence 0/4, but the number of counterexamples in the data is not sufficient. The biological meaning is similar to implication \ref{impNoCoReg}, but this time it might be considered as a hint on a still unknown influence of JUNB on its own expression and that of NFKB1 and SMAD7 (where BF 14, however, reflects inhibition).

\item SMAD7.in.on, JUNB.in.off, NFKB1.out.off $\rightarrow$ SMAD7.out.on, JUNB.out.off.\\
No support in the data, accepted.

\item The 7 last implications concerning JUNB, together with other genes, are accepted as too complicated or because of a small support in the set of observated transitions.
\end{enumerate}

The two rules with highest support in the resulting stem base relate MMP13 expression to SMAD7 upregulation, in the input and the output state:
\begin{align*}
<104>\: &\:\text{MMP13.out.on} \rightarrow\: \text{SMAD7.out.on}\\
<89>\: &\:\text{MMP13.in.on} \rightarrow\: \text{SMAD7.in.on}
\end{align*}
In contrast, there is no implication with MMP13.on in the conclusion (already for the simulated transitions). This reflects the weak empirical evidence for MMP13 expression after Tgf$\beta$1 stimulation, which in turn is the reason why there were no counterexamples to the two rules in the observed data. The high support originates from the literature based prediction of expression after Tgf$\beta$1 stimulation. It was no decision criterion during the exploration aiming at qualitative, not quantitative relations. The rules can be interpreted as follows: In the rare cases of MMP13 upregulation, it is expected to be coregulated with SMAD7, but no positive conditions for its expression are found.

\chapter{Conclusion and outlook}\label{ch:outlook}
In this thesis, an FCA framework for the description and analysis of discrete processes was developed and investigated. The knowlege bases generated by attribute exploration suppport automatic reasoning. So it is worthwhile to sketch connections to DL, where fast reasoning software exists. The second section of the present chapter outlines possibilities of solving open mathematical and logical questions. Finally, the biological applications will be discussed, in particular the study of the ECM network related to rheumatic diseases.

\section{Transfer to Description Logics}\label{sec:compDL}\index{Description logics (DL)}
First I will discuss how our approach may be expressed in a standard DL and in $\mathcal{TDL}-Lite_{Bool}$, a temporal extension by \cite{Art07}. I will point at advantages of remaining within the general framework of FCA, instead of using the temporal expressivity of DL and attribute exploration adapted to the construction of DL knowledge bases. Furthermore, reasons will be given to concentrate on specific, practically relevant parts of temporal logic. I will give hints to the embedding of the syntax and semantics given by the defined formal contexts and their stem bases into $\mathcal{EL}$\glossary{name={$\mathcal{EL}$}, description={Weak DL with tractable subsumption algorithms},sort=$EL$} (a DL designed for reasoning about ontologies) and $\mathcal{TDL}-Lite_{Bool}$. It is out of the coverage of this study to search for a DL corresponding to the expressivity of the proposed formal contexts, not to mention its definition. However, we see a potential of the integration of our ideas into a DL and of adapting attribute exploration accordingly. 

A transition or transitive context (analogously and easier a state context) may be translated into a DL like $\mathcal{EL}$ as follows:
\begin{itemize}
\item The objects are elements of the model domain $\Delta:=S$ (states).\glossary{name={$\Delta$}, description={Domain (DL)}, sort=$Delta$}
\item Attributes $m \in M$ of the input states\index{State!input} are considered as concept names $C_m \in N_C$.
\item Attributes of the output states\index{State!output} are given by concept descriptions with the role\index{Role (DL)} $n \in \mathcal{N}_r$ (transition to the next state of a path) or $t \in \mathcal{N}_r$ (transition to an arbitrary subsequent state).
\end{itemize} 
Then a formal context similar to \cite[p.~159]{Baa09b} will be defined. It corresponds to a deterministic transition context $\Kt$ (compare Table \ref{tab:DLKt} with Table \ref{tab:wlanTrCxt}). Nondeterminism cannot be distinguished syntactically from a deterministic transition to a state with all alternative attributes (the first 3 lines in Table \ref{tab:DLKt} collapse), only semantically by the formal context. As in Table \ref{tab:wlanTrCxt}, the states are labelled by the path they belong to, e.g. $s_{00}, s_{01}, s_{02}$ are the initial states of the Linux driver installation and the Windows process with \texttt{ndiswrapper} or \texttt{driver.windows} as next steps, respectively. (At $s_{21}$ the two Windows paths coincide.)
\begin{table}\index{Transition context!DL}
\centering
\begin{tabular}{|l||c|c|c|c||c|c|c|c|}
\hline
&\begin{sideways}\texttt{$driverLinux$}\end{sideways}
&\begin{sideways}\texttt{$ndiswrapper$}\end{sideways}
&\begin{sideways}\texttt{$driverWindows$}\end{sideways}
&\begin{sideways}\texttt{$connection$}\end{sideways}
&\begin{sideways}\texttt{$\exists\, n.driverLinux$}\end{sideways}
&\begin{sideways}\texttt{$\exists\, n.ndiswrapper$}\end{sideways}
&\begin{sideways}\texttt{$\exists\, n.driverWindows$}\end{sideways}
&\begin{sideways}\texttt{$\exists\, n.connection$}\end{sideways}\\
\hline \hline
$s_{00}$ &&&&&$\times$&&&\\ \hline 
$s_{01}$ &&&&&&$\times$&&\\ \hline
$s_{02}$ &&&&&&&$\times$&\\ \hline
$s_{10}$ &$\times$&&&&$\times$&&&$\times$\\ \hline
$s_{11}$ &&$\times$&&&&$\times$&$\times$&\\ \hline
$s_{12}$ &&&$\times$&&&$\times$&$\times$&\\ \hline
$s_{20}$ &$\times$&&&$\times$&$\times$&&&$\times$\\ \hline
$s_{21}$ &&$\times$&$\times$&&&$\times$&$\times$&$\times$\\ \hline
$s_{31}$ &&$\times$&$\times$&$\times$&&$\times$&$\times$&$\times$\\ \hline
\end{tabular}
\caption{A transition context $\Kt$ in the language of description logics.}\label{tab:DLKt}
\end{table}

A transitive context $\Ktt$ might be translated by means of the role $t$. Then the information -- expressed by $n \in \mathcal{N}_r$ -- related to individual subsequent states is lost. Instead, $\exists t.C_m,\, m \in M$ indicates if a reachable state is in the extent of $\{m\}$ in $\Ks$ or, in the language of DL, if $C_m$ is interpreted by the respective reachable state. Thus, it makes sense to combine a DL-$\Kt$ and a DL-$\Ktt$ (Table \ref{tab:DLKtKtt}).
\begin{table}\index{Transitive context!DL}
\centering
\begin{tabular}{|l||c|c|c|c||c|c|c|c||c|c|c|c|}
\hline
&\begin{sideways}\texttt{$driverLinux$}\end{sideways}
&\begin{sideways}\texttt{$ndiswrapper$}\end{sideways}
&\begin{sideways}\texttt{$driverWindows$}\end{sideways}
&\begin{sideways}\texttt{$connection$}\end{sideways}
&\begin{sideways}\texttt{$\exists\, n.driverLinux$}\end{sideways}
&\begin{sideways}\texttt{$\exists\, n.ndiswrapper$}\end{sideways}
&\begin{sideways}\texttt{$\exists\, n.driverWindows$}\end{sideways}
&\begin{sideways}\texttt{$\exists\, n.connection$}\end{sideways}
&\begin{sideways}\texttt{$\exists\, t.driverLinux$}\end{sideways}
&\begin{sideways}\texttt{$\exists\, t.ndiswrapper$}\end{sideways}
&\begin{sideways}\texttt{$\exists\, t.driverWindows$}\end{sideways}
&\begin{sideways}\texttt{$\exists\, t.connection$}\end{sideways}\\
\hline \hline
$s_{00}$ &&&&&x&&&&x&&&x\\ \hline 
$s_{01}$ &&&&&&x&&&&x&x&x\\ \hline
$s_{02}$ &&&&&&&x&&&x&x&x\\ \hline
$s_{10}$ &x&&&&x&&&x&x&&&x\\ \hline
$s_{11}$ &&x&&&&x&x&&&x&x&x\\ \hline
$s_{12}$ &&&x&&&x&x&&&x&x&x\\ \hline
$s_{20}$ &x&&&x&x&&&x&x&&&x\\ \hline
$s_{21}$ &&x&x&&&x&x&x&&x&x&x\\ \hline
$s_{31}$ &&x&x&x&&x&x&x&&x&x&x\\ \hline
\end{tabular}
\caption{The apposition\index{Formal context!apposition} $\Kt \mid \Ktt^{out}$ of a transition context and the output part of a transitive context in the language of DL.}\label{tab:DLKtKtt}
\end{table}

A transitive context is more expressive than its DL equivalent. Because the objects of the context \ref{tab:DLKtKtt} are states, not transitions, only the information regarding the next and any reachable state is kept. 
The following implication of $\Ktt$ (p. \pageref{baseTransCxt}) expresses: If the wrapper module and the driver are and remain installed (at a second time point), then the connection data has to be (is) entered: 
\begin{center}
\texttt{ndiswrapper.in, driver.windows.in, ndiswrapper.out,}\\
\texttt{driver.windows.out $\rightarrow$ connection.out}
\end{center}
Yet in the DL context of Table \ref{tab:DLKtKtt}, already the less meaningful implication holds $\top \rightarrow \exists t.connection$. It turns out that the role $t$ corresponds better to the operator $\ev$ of a temporal context.

The semantics for the $\Ktmp$ attributes from CTL is given by $\Kt$, for $\Diamond Fm$, $\Box Gm$ and $\Box \neg F m$ also by $\Ktt$ (Remark \ref{rem:semanticsKtt}). Two of the $\Ktmp$ attributes are equivalent to the following concept descriptions:
\begin{align*}
\Diamond Fm &\cong \exists\,t.C_m\\\glossary{name={$\Diamond$}, description={Possibility operator},sort=$.Pos$}
\Box \neg F m &\cong \neg \exists\,t.C_m\glossary{name={$\Box$}, description={Necessity operator}, sort=$.Nec$}
\end{align*}
If concept descriptions relate to states indexed by the path they belong to, as in Table \ref{tab:DLKtKtt}, the following temporal operator may also be expressed:\glossary{name={$\neg F$}, description={Never}, sort=$.Never$}
\[\Diamond \neg F m \cong \neg \exists\,t.C_m.\]


S. Rudolph \cite{Rud06} and F. Baader / F. Distel \cite{Baa09b} deal with a potentially infinite attribute set,\label{infAttr} caused by nesting of roles\index{Role (DL)} like $\exists r.\exists r.\cdots$ . In the case of cyclic concept descriptions, e.g., the potential \textit{role depth}\index{Role (DL)!depth} (number of nested roles) is infinite. If such DL concepts are interpreted temporally, they express properties of cyclic time developments. S. Rudolph investigated attribute exploration of formal concepts with increasing role depth (in the more expressive DL $\mathcal{FLE}$) and showed the existence of a termination condition for finite models. However, there are computational problems, since the number of attributes grows exponentially with the depth. F. Distel \cite{Baa08} proved that a finite basis exists for the set of all $\mathcal{EL}$-implications (or GCIs) holding in a finite model. In \cite{Baa09b}, a single formal context with relational attributes has been defined; its attribute exploration provides an algorithm to explicitly compute the implicational base.

As mentioned before, the DL transition context of Table \ref{tab:DLKt} is derived from such a context. However, we restricted our framework to simple relational attributes (role depth one). Since temporal attributes can also refer to infinite paths, we do not need nested roles in this case. Roles of depth two or three may still be meaningful to an expert, but the purpose of the present work was not to develop a general framework for the conceptual exploration of relations. If the number of relational attributes remains restricted and almost all attributes occur in at least one of the output states, the set of attributes $\exists r.C_m, C_m \in \mathcal{N}_C$ may be fixed in advance. Then general attribute exploration can be applied, and no algorithm handling a growing set of attributes is needed as designed in \cite{Baa09b}.

Thus, standard FCA algorithms could be used. Nevertheless, our approach is generalisable to more complicated relational attributes; also \textit{Relational Concept Analysis (RCA)}\index{Relational Concept Analysis (RCA)} \cite{Hac07} may be applied. Inversely, attribute exploration software using fast DL reasoners could be used for our special formal contexts.

In the approach of \cite{Art07} (see Section \ref{sec:tempDL}), a flow of time is represented by $\N_0$, and the interpretation consists in a family of succeeding situations $(\mathcal{I}_n)_{n \in \N_0}$ defined by the interpretation function $\mathcal{I}$. According to \cite[p.~48 f.]{Baa08},
the family can be seen as a set of temporally changing models with a constant domain $\Delta$ for the concept description language $(\mathcal{L}, \mathcal{I})$ given by $\mathcal{TDL}-Lite_{Bool}$, and each model\index{Model} correponds to a formal context. This is a more expressive framework
than ours and may not  easily be translated into the defined formal contexts. On the other hand, $\mathcal{TDL}-Lite_{Bool}$ is based on the linear time logic LTL, whereas attributes of the temporal context are definable by the nondeterministic,\index{Process!nondeterministic} branching time logic CTL.


A state context $\Ks$ can be translated into $\mathcal{TDL}-Lite_{Bool}$ as follows:
\begin{itemize}\index{State context!$\mathcal{TDL}-Lite_{Bool}$}
\item The domain $\Delta$ does not correspond to the object set, but to the attribute set $M$ of $\Ks$.
\item The object names $a_i$ are in one-to-one correspondence to the attributes by $a_i^{\mathcal{I}(n)},\: i \in \{0,...,|M|-1\}, n\in \N_0$.
\item The object intent of a state $s \in S$ could be identified with a single concept name $A_0$. Then $A_0^{\mathcal{I}(n)} := s'$, but the whole syntactic information concerning attributes holding in a state would be lost. A better alternative is to define $|M|$ DL concepts $A_i$ by $A_i^{\mathcal{I}(n)} := a_i^{\mathcal{I}(n)}$, if $a_i^{\mathcal{I}(n)} \in s'$, else $A_i^{\mathcal{I}(n)} := \emptyset$. Then the assertion $A_i(a_j)$ holds, if $i = j$ and $a_j^{\mathcal{I}(n)}$ is contained in the intent of $s$.
\item Then every line of a state context corresponds to an interpretation $\mathcal{I}(n)_C$ without role names:
\[\mathcal{I}(n)_C := (\Delta, a_0^{\mathcal{I}(n)},..., A_0^{\mathcal{I}(n)},...),\]
where the object intent is given by the current interpretation of the DL concepts $A_i$. This is a bijection only if the state context is not clarified.
\end{itemize}

This translation defines a concept description language $(\mathcal{L}, \mathcal{I})$ of which $\Ks$ is a model. GCIs $C_1 \sqsubseteq C_2$ hold iff $C_1^{\mathcal{I}(n)} \rightarrow C_2^{\mathcal{I}(n)}$ holds in $\Ks$ for all $n \in \N_0$,
with $C_1:= A_i \sqcap ... \sqcap A_j,\: C_2:= A_i \sqcap ... \sqcap A_j \sqcap ... \sqcap A_k$, for $i<j<k \leq |M-1|$ (compare \cite[p.~48-50]{Baa08}).

It is only mentioned here that the interpretation $P_i^{\mathcal{I}(n)} \in \Delta \times \Delta$\index{Role (DL)!local} of local roles $P_i$ may be given by a transition context.


\cite{Art07} investigate also a weak temporalisation of $\mathcal{EL}$. In $\mathcal{TL}_\mathcal{EL}$\glossary{name={$\mathcal{TL}_\mathcal{EL}$}, description={Temporal extension of the DL $\mathcal{EL}$}, sort=$TLEL$} existential role restrictions are admitted, but no local roles\index{Role (DL)!local} nor negation, and temporal operators are restricted to $F$ (eventually) for concept constructors and $G$ (always) for formula.
\[C := \top \mid A_i \mid C_1 \sqcap C_2 \mid \exists T_i.C \mid F\,C\]
In this logic GCIs are satisfiable in the most general model (where all concepts and roles are interpreted by the whole domain at every time point), but it is undecidable whether a GCI is a consequence 
of a finite set of GCIs \cite[Theorem 11]{Art07}. In our \agrave pure" FCA framework however, we are safe that such problems do not occur: For each formal context the specific implications are decidable, even in linear time relative to the size of the stem base.

A detailed investigation of decidability and complexity issues of the presented approach in comparison to different DLs is out of reach of this thesis.


\section{Open mathematical and logical questions}
We have established the foundation to exploit manifold mathematical results of
FCA for the analysis of gene expression dynamics and of discrete temporal transitions in general. In Chapter \ref{ch:bg} mathematical questions related to background knowledge and to the mutual dependency of the defined formal contexts and their stem bases were solved. While it was not feasible to develop a complete rule calculus for a large class of temporal implications, the method was applied to special cases and possibilities of tackling the general problem were discussed. 

A related question should be further analysed: How can attribute exploration be split into partial problems for these special contexts? For instance, one could focus on a specific set of genes (Remark \ref{rem:subsets}) or temporal attributes first, which is understandable as a scaling (p. \pageref{subCxtScaling}).\index{Scaling} Then, the decomposition theory of concept lattices will be useful, which permits an elegant description by means of the corresponding formal contexts \cite[Chapter 4]{GW99}.

The price of the logical completeness of attribute exploration is its computational complexity.\index{Attribute exploration!complexity} Computation time strongly depends on the
logical structure of the context, and there exist cases where the size of the stem base is
exponential in the size of the input \cite{Kuz06}. It was proved recently that the basic step -- recognising whether a subset of attributes is a pseudo-intent -- is coNP-complete \cite{Bab10}. However, deriving an implication from the stem
base is possible in linear time, related to the size of the base, and the Prolog queries in Section
\ref{subseq:allTrans} were very fast. 

In addition to the integration of background knowledge, attribute
exploration can be simplified and shortened, if implications are decided without the necessity to generate all possible transitions. For that purpose, model
checking\index{Model!checking} could be a promising approach \cite{Esp94}, \cite{Cha04}, or the structural and functional analysis of
Boolean networks\index{Boolean network} by an adaptation of metabolic network methods in \cite{Kla06}. In this study, determining
activators or inhibitors corresponds to the kind of rules found by our method, and logical steady
state analysis indicates which species can be produced from the input set and which not. 
Another direction of research would be to conclude dynamical properties of Boolean networks
from their structure and the transition functions, for instance by regarding them as
polynomial dynamical systems over finite fields and by exploiting
theoretical work in the context of cellular automata \cite[Section 4 and 6]{Lau05}.

\section{Assessment of the biological applications}\label{sec:conclusionECM}
The analyses in the ECM study (Chapter \ref{ch:geneRegNets}) were based on literature data refering to healthy human SFB. These findings were fine-tuned and adapted to gene expression time course data triggered by TGF$\beta$1 and TNF$\alpha$ in SFB from RA and OA patients. Both the assembly of previous knowledge and the adaptation of the Boolean functions gave detailed insights into disease-related regulatory processes. To the best of our knowledge, this is the first dynamical model of ECM formation and degradation by human SFB. 

Manual adaptation of the network may be superseded by algorithms of network inference \cite{Hec09} in order to achieve a scalability to larger networks.\index{Gene regulatory network} Then, however, knowledge and data are integrated by criteria fixed in advance and hardly controllable in their effects, not by flexible and open expert decisions. Moreover, the relation of the validation by network inference and by computing the stem base should be clarified, or an expert centered attribute exploration should be applied.

The fidelity of the obtained temporal rules was reinforced by the comparison of simulated and observed time series data, first manually, then automatically by computing the stem base of the combined transitive context. One of the strengths of the FCA method applied here is its ability to give a complete, but minimal representation of a data set. This complete overview of temporal rules enabled us to find new relationships. The most unexpected result is the expression of TNF at some time points following TNF$\alpha$ stimulation, whereas it is commonly assumed that SFB do not express TNF \cite{Bur97}, \cite{Rit00}. Similarly, our experimental data as well as our simulation results support MMP9 expression in SFB thus corroborating the majority of the literature regarding expression of this protease \cite{Stu03}, \cite{Xue07}. Here, it is important to note that a contamination of the SFB population with macrophages (potentially contributing to MMP9 production) can be excluded due to the SFB isolation protocol, resulting in a pure SFB population \cite{Zim01}. We also found that MMP1 was induced in the absence of FOS after TNF$\alpha$ stimulation, whereas MMP13 was not expressed despite reports about its induction by NF-$\kappa$B, JUN or FOS. These facts indicate that the regulation of MMP expression may be more diverse than presently known and that it still represents a relevant research target to elucidate the role of SFB in the pathophysiology of rheumatic diseases.

Concerning the formation of collagen type I fibres by COL1A1 and COL1A2 proteins following the stimulation with TGF$\beta$1, a constitutive expression of COL1A2 was calculated. Based on these data, COL1A1 has to be regarded as the critical switch for the formation of collagen I. In contrast, the corresponding literature generally postulates a co-regulation of both genes, due to similarities in their promoters \cite{Kar95}, \cite{But04}. This difference suggests that the regulation of COL1A1 and COL1A2 may not have been fully elucidated so far possibly pointing at COL1A1 as a more promising target for the exploration of fibrosis.

Our analyses also show that TNF$\alpha$-induced signalling predominantly results in the activation of ETS1 and NFKB1, whereas TGF$\beta$1-related signal transduction is ultimately regulated via proteins of the SMAD family. Defined intervention addressing these signalling modules, alone or in combination with established therapies targeting TNF$\alpha$ (e.g. etanercept), may therefore improve the efficiency and outcome of current anti-rheumatic treatment \cite{And02}. Alternatively, the present results may be employed to define subpopulations of RA patients in characteristic phases of RA (active inflammatory early versus burnt-out/fibrotic late) and tailor anti-rheumatic treatment to the particular needs of the respective phase \cite{Con99}.

The semantics of a transitive context implies rules to be valid for the complete simulated and observed time interval. Thus, besides on co-regulation and contrary regulation, the focus is on (positive or negative) attributes of gene expression processes which will always hold after a given class of states. Since in the ECM study \texttt{always} was restricted to the observation time of 12 h, meaningful results were obtaned. This type of rules is even more appropriate in cases of a dramatic, permanent switch of the cellular behaviour, like for sporulation of \textit{B. subtilis} (Chapter \ref{ch:bSubtilis}). An exploration of a temporal context was sketched in Section \ref{sec:exEx} for the computer example. Also in biological applications it will give supplementary insight, in particular by the \texttt{eventually} attributes.

Exact logical rules (implications) are a precondition for the uniqueness and minimality of the stem base, in contrast to association rule mining. Data discretisation is one possibility of handling biological imprecion and flexibility and to filter out noise resulting from measurements \cite{Lau04}. The presented method could be developed further by integration of \agrave blurred" methods like fuzzy FCA \cite{Alc10} or rough sets \cite{GMe09}. Clustering methods could be applied for data preprocessing.

Data discretisation\index{Discretisation} inevitably causes loss of information. Carefully evaluating the method, we tried to keep as much important information as possible and set low thresholds for differential gene expression. A recently developed FCA-based method avoids predefined discretisation but computes an ordered set of \textit{interval pattern structures} depending on the observed values \cite{Kay11}. Thus, a data set may be described without loss of information or by means of any desired granularity. 

Because the mathematical framework of FCA is very general and open, many adaptations of the presented methods are possible. According to the mainstream of applied FCA research, they should aim at the reduction of complexity as indicated. Then FCA methods can demonstrate their strength best: a clear account and visualisation of a conceptual logic. Beside this, multifarious refinements are possible, according to current approaches of modelling dynamics within systems
biology. For instance, the introduction of more fine-grained expression levels is possible, e.g. in the sense of \textit{qualitative reasoning} \cite{KGC05}. 

However, despite the rough discretisation to two levels off and on, the complexity of even relatively small networks such as our ECM network and the completeness of the attribute exploration algorithm led to a large number of temporal rules. High support\index{Support} of a rule (often correlated to its simplicity) was used as an indicator for the most meaningful hypotheses about co-regulation, mutual exclusion, and/or temporal dependencies not only between single genes, but also between small sets of (functionally related) genes. The inspection of the rule sets by experts should be aided by automatic preselection. \textit{Iceberg concept lattices} lead to a set of \agrave important" rules with high support, by taking advantage of the duality between the stem base and the concept lattice of a formal contexts \cite{Stu02}. 

We applied also complete and in-depth investigation of a small set of interesting genes by interactive attribute exploration. Using this procedure for the knowledge base construction, single rules can be validated manually or by a supporting computer program, or even new experiments can be suggested. While for the automatic stem base computation we applied a strong validation criterion requiring rules to hold for all simulated and observed transitions, the expert can evaluate thresholds of support and confidence. This may reduce noise or eliminate measurement errors. Importantly, these decision criteria can be freely combined with relevant knowledge, and the method does not depend on mathematical problems of association rule mining, since the expert decides about strict implications. As a main result, several temporal rules were found and confirmed, which express a coregulation of JUNB and SMAD7, i.e. an activation of TNF$\alpha$ signalling together with an inhibition of TGF$\beta$1 signalling (partly also for the inverse case). This is biologically plausible in general, but was surprising regarding the Boolean functions. Thus, by simulation, data and expert validation we discovered a temporal invariant of the network. 

Expert decisions are not always obvious. During the first run of the exploration it turned out that too restrictive implications had been accepted and therefore different plausible counterexamples could not be introduced. In the small example, restarting the exploration with the counterexamples was no major problem, but it would be helpful to keep correct implications. This is possible if the implications accepted after the first error are checked again and used as background knowledge for the further exploration \cite[p. 14]{Baa09a}.

Corresponding to the stem base, but with the supplementary information of extents (states or transitions), Hasse diagrams\index{Hasse diagram} of the investigated formal contexts like in Figure \ref{fig:noStress} give detailed (albeit less compact) insight into gene regulatory processes. They should be generated for small subsets of genes. Nested line diagrams\index{Nested line diagram} as implemented in ToscanaJ \cite{Tos} offer the possibility to investigate achievable subsets of genes and to combine two of them at a time according to a chain of questions. 

Combining two well-developed algebraic, discrete and logical methods -- Boolean network construction and FCA -- it was possible to include human expert knowledge in all different phases (assembly of the network, adjustment to the data, choice of relevant temporal rules and interactive attribute exploration), with the exception of the challenging data discretisation step. \agrave Digital" and \agrave analog" thinking are combined by our approach. Bioinformatic algorithms are often complicated and -- even if carefully evaluated -- can produce hardly interpretable results. Against this tendency, we wanted to avoid uncritical trust in the objectivity of a method and let the main task of interpretation with human experts. Hence, the main biological results are not only single findings, but within the interdisciplinary collaboration the medical, biological and bioinformatic scientists as well as myself were inspired to complex reflections about regulatory processes. We got new insights regarding existing knowledge and data, which are verifiable by further experiments.


\newpage
\fancyhf{}
\cleardoublepage
\fancyhf{}

\fancyhead[RE]{\fancyplain{}{\leftmark}}
\fancyhead[LO]{\fancyplain{}{\rightmark}}
\fancyhead[LE,RO]{\fancyplain{}{\thepage}}
\fancypagestyle{plain}{
\fancyhf{}
\renewcommand{\headrulewidth}{0pt} 
\renewcommand{\footrulewidth}{0pt}}

\addcontentsline{toc}{chapter}{\numberline{}Bibliography}
\bibliographystyle{plain}
\bibliography{models}

\addcontentsline{toc}{chapter}{\numberline{}Index}
\printindex

\chapter*{Attachments}
\chaptermark{Attachments}
\addcontentsline{toc}{chapter}{\numberline{}Attachments}
\begin{enumerate}
 \item Ehrenw\"ortliche Erkl\"arung
 \item CD with supplementary files and the developed R scripts
\end{enumerate}
\newpage

\fancyhf{}
\fancyhead[LE,RO]{\fancyplain{}{\thepage}}
\cleardoublepage

\chapter*{Ehrenw\"ortliche Erkl\"arung} 

Hiermit erkl\"are ich ehrenw\"ortlich, dass mir die geltende Promotionsordnung der Fakult\"at f\"ur Mathematik und Informatik der Friedrich-Schiller-Universit\"at Jena bekannt ist. Ich
versichere, dass ich die vorliegende Arbeit selbst\"andig, ohne unzul\"assige Hilfe Dritter
und ohne Benutzung anderer als der angegebenen Hilfsmittel angefertigt habe. Die aus
fremden Quellen direkt oder indirekt \"ubernommenen Gedanken sowie pers\"onliche Mitteilungen sind als solche kenntlich
gemacht. Eigene Pr\"ufungsarbeiten wurden nicht verwendet. Bei der Auswahl und Auswertung des Materials sowie bei der Herstellung des Manuskripts haben mich die in der Danksagung
genannten Personen unterst\"utzt. Ich habe nicht die Hilfe von Vermittlungs- bzw.
Beratungsdiensten in Anspruch genommen. Niemand hat von mir unmittelbar oder mittelbar
geldwerte Leistungen f\"ur Arbeiten erhalten, die im Zusammenhang mit dem Inhalt der
vorgelegten Arbeit stehen. 

Weiterhin erkl\"are ich, dass ich diese Dissertation noch nicht als Pr\"ufungsarbeit f\"ur eine staatliche oder andere wissenschaftliche Pr\"ufung eingereicht habe. Auch habe ich noch keine Dissertation bei einer anderen Hochschule eingereicht.

\vspace*{3cm}
Erfurt, 9. Mai 2011 \hspace{4cm} Johannes Wollbold



\end{document}